\newtheorem{Example}{Example}[section]
\newtheorem{Definition}{Definition}[section]
\newtheorem{Theorem}{Theorem}[section]
\newtheorem{Theorem/Definition}{Theorem/Definition}[section]
\newtheorem{Lemma}{Lemma}[section]
\newtheorem{Corollary}{Corollary}[section]
\newcommand{\pd}{\partial}
\newcommand{\bC}{{\mathbb C}}
\newcommand{\bQ}{{\mathbb Q}}
\newcommand{\cA}{{\mathcal A}}
\newcommand{\cD}{{\mathcal D}}
\newcommand{\cE}{{\mathcal E}}
\newcommand{\cF}{{\mathcal F}}
\newcommand{\cG}{{\mathcal G}}
\newcommand{\cM}{{\mathcal M}}
 \newcommand{\cS}{{\mathcal S}}
\newcommand{\half}{\frac{1}{2}}
\newcommand{\cV}{{\mathcal V}}
\newcommand{\cW}{{\mathcal W}}
\newcommand{\Mbar}{\overline{\cM}}
\newcommand{\wF}{{\widehat F}}
\newcommand{\wcF}{{\widehat{\mathcal{F}}}}
\newcommand{\tF}{{\widetilde F}}
\newcommand{\tpd}{{\tilde{\pd}}}
\newcommand{\tD}{{\widetilde{D}}}
\newcommand{\be}{\begin{equation}}
\newcommand{\ee}{\end{equation}}
\newcommand{\bea}{\begin{eqnarray}}
\newcommand{\ben}{\begin{eqnarray*}}
\newcommand{\een}{\end{eqnarray*}}
\newcommand{\eea}{\end{eqnarray}}
\DeclareMathOperator{\Aut}{Aut}
\DeclareMathOperator{\Id}{id}
\DeclareMathOperator{\val}{val}
\definecolor{yellow}{rgb}{1,1,0}
\definecolor{orange}{rgb}{1,.7,0}
\definecolor{red}{rgb}{1,0,0}
\definecolor{green}{rgb}{0,1,1}
\definecolor{white}{rgb}{1,1,1}
\definecolor{A}{rgb}{.75,1,.75}
\theoremstyle{remark}
\newtheorem{Remark}{Remark}[section]
\begin{document}

\newtheorem{myDef}{Definition}
\newtheorem{thm}{Theorem}
\newtheorem{eqn}{equation}

\title[Fourier-Like Transforms of Stable Graphs and HAE]
{Fourier-Like Transforms of Stable Graphs and Holomorphic Anomaly Equations}

\author{Zhiyuan Wang}
\address{Department of Mathematical Sciences\\
Tsinghua University\\Beijing, 100084, China}
\email{zhiyuan-14@mails.tsinghua.edu.cn}

\author{Jian Zhou}
\address{Department of Mathematical Sciences\\
Tsinghua University\\Beijing, 100084, China}
\email{jianzhou@mail.tsinghua.edu.cn}

\begin{abstract}
In this paper we develop a theory of Fourier-like transforms on the space of stable graphs.
In particular,
we introduce a duality theory of stable graphs.
As an application,
we derive the holomorphic anomaly equations for general propagators
in the work of Aganagic, Bouchard and Klemm.
\end{abstract}

\maketitle

\tableofcontents

\section{Introduction}

Inspired by the physics literatures \cite{abk, bcov1, bcov2, emo, ey, gkmw}
on holomorphic anomaly equations,
the authors developed a formalism called {\em the abstract quantum field theory
and its realizations} in a previous work \cite{wz}.
In that work, our point of view was different with the physicists'.
In the physics literatures, physicists solved the holomorphic
anomaly equation recursively and reformulated the solutions
in terms of Feynman sums over stable graphs and Feynman rules.
We took a different strategy that reversed the steps.
Step 1,
we study the summations over stable graphs from an abstract point of view,
ignoring the Feynman rules and focusing on the combinatorial aspects
of the diagrammatics of stable graphs.
This part of the theory is called the {\em abstract quantum field theory}
and we derive natural quadratic recursion relations in this setting.
Step 2, we consider the following way to obtain Feynman rules
that assign the contributions of a stable graph.
The input is a sequence of holomorphic functions $F_{g,n}(t)$ and a formal variable $\kappa$.
For the vertices we take $F_{g,n}(t)$,
and for the propagator we take $\kappa$.
The Feynman rules enable us to construct a new sequence of functions $\wF_{g,n}(t,\kappa)$
as summations over stable graphs.
We call this a {\em realization of the abstract QFT}.
Step 3,
we proved that there are natural quadratic recursions for $\pd_\kappa\wF_{g,n}$
when $F_{g,n}(t) = \pd_t^n F_{g,0}(t)$ and
$\kappa$ is taken to be a function in $t$ that satisfies suitable conditions.
This aspect of the theory will be called the {\em realization of the quadratic recursion relations}.

One advantage of our formalism is that since we have taken a general treatment in Step 1,
we have the freedom of many different choices in Step 2 and Step 3.
This makes it possible to apply our formalism to various problems that
involve summation over stable graphs.
We presented some examples in \cite{wz},
including a construction of the nonholomorphic free energy and
a derivation of the holomorphic anomaly equation as an example.
We also indicated some relationship with Eynard-Orantin topological recursion \cite{eo}
and quantum spectral curves \cite{gs}.
Another application was given in \cite{wz2}.
In that work we took the Harer-Zagier formula \cite{hz, pe} as input data,
and solved the problems of computations of
the orbifold Euler characteristics of $\Mbar_{g,n}$.

In particular,
using Step 1 and Step 2 in our formalism,
if we are given a field theory with free energy $F_g(t)$
together with a propagator $\kappa(t)$,
we can construct a new field theory with free energy $\wF_g(t,\kappa)$.
If we want to understand this as constructing  a transformation on the `moduli space of theories',
we need consider the reversed procedure of reconstruct $F_g(t)$ from $\wF_g(t,\kappa)$
in a similar fashion.
In other words,
we want to express $F_{g,n}(t)$ as a Feynman sum over some contributions over
some stable graphs.
We solve this problem in this work.
The corresponding Feynman rules for the contribution of a vertex should be
defined by some derivatives of $\wF_g(t,\kappa)$.
One does not expect the derivatives are simply $\pd_t^n \wF_g(t,\kappa)$,
because even though $t$ is the ``flat coordinate" for the theory with free energies given by
$F_g(t)$,
it may not be also the flat coordinate for the new theory whose free energies are given by
$\wF_g(t,\kappa)$.
The correct replacement for $\pd_t$  should be some kind of ``covariant derivative" $D_t$.
It turns out that this $D_t$ can be taken to be the operator $D_t$ that has already appeared
in the physics literatures and in \cite{wz}.
See also \S \ref{sec:KleZas}.
A priori, the propagator for this dual Feynman rule should be determined by $F_g(t)$ and $\kappa$.
We find it to be simply $-\kappa$.

Parallel to our original formalism in \cite{wz},
we develop a dual formalism in this work.
It follow the same steps.

Step $1^\vee$.
We first develop what we call the {\em dual abstract QFT} and its realizations.
For any stable graph $\Gamma$,
we draw another stable graph $\Gamma^\vee$.
It is the same graph as $\Gamma$ but drawn in dotted lines.
We will call such a graph a dotted stable graph.
Next, we will develop a way to express each dotted stable graph
as a linear combination of ordinary stable graphs.
For a dotted stable vertex $\Gamma^\vee$ of genus $g$ with $n$ external edges,
we assign a formal sum:
\be
\Gamma^\vee = n!\cdot \sum_{\Gamma' \in \cG_{g,n}^c} \frac{1}{|\Aut (\Gamma')|} \Gamma',
\ee
where $\cG_{g,n}^c$ is the set of connected ordinary stable graphs of genus $g$ with
$n$ external edges.
For a general dotted stable graph $\Gamma^\vee$,
similar expressions are derived by  cutting $\Gamma^\vee$ into dotted stable vertices,
writing them into sums of ordinary stable graphs,
and then suitably glue back.
(For details, see \S \ref{sec:Dotted}.)
In this way,
we express every dotted stable graph in terms of a linear combination of ordinary
stable graph.
The linear map defined by $\Gamma \to \Gamma^\vee$ is called the {\em duality map}.
Our duality theorem (Theorem \ref{thm-dual-FE} ) shows that this can be done reversely in a particular simple fashion:
The formulas for both of the expressions are exactly the same up to some signs determined by
the numbers of internal edges.
(This is the underlying reason why the propagator for the   dual theory should be $-\kappa$.)
Furthermore,
the duality map is an involution (cf. Theorem \ref{thm:Involution}).
We also define operators $K^\vee$, $\pd^\vee$ and $\cD^\vee$ acting on the space
spanned by the dotted stable graphs.
They are dual version of the operators $K$, $\pd$ and $\cD$ acting on the space
spanned by the ordinary stable graphs correspondingly.
We show in Theorem \ref{thm1} that
\be
\cD^\vee=\pd,\qquad \pd^\vee=\cD.
\ee
This explains that why $\pd_t$ in the original theory corresponds to $D_t$ in the dual theory.

In Step $2^\vee$,
we develop the duality between realizations of abstract QFT and realizations of
dual abstract QFT.
Since $F_{g,n}$ and $F_{g,n}^\vee$ are both realized as formal Gaussian integrals,
we formulate this duality as analog of the Fourier transform.
In Step $3^\vee$,
we use the duality derived in Step $2^\vee$ to derive an assumption (Independence Assumption)
that guarantees a derivation of the quadratic  recursion relations satisfied by $\wF_{g,n}$.

As an application, we will solve the problem of constructing
a nonholomorphic free energy and deriving
the holomorphic anomaly equation in the general case of
\be
\kappa = -\frac{1}{2\sqrt{-1}}\big(Im(\tau)\big)^{-1}+\cE(t),
\ee
with $\cE\not=0$.
For the $\cE = 0$ case, see \cite{wz}.
We also make generalizations that treats the case of higher dimensional state-space
when  the propagators $\Delta^{IJ}$ is of the form
\be \label{intro-eqn:Delta}
\Delta^{IJ} = -\frac{1}{2\sqrt{-1}} ((\tau-\bar{\tau})^{-1})^{IJ} +\cE^{IJ},
\ee
where $\cE^{IJ}$ is a holomorphic function.

Even though our ultimate goal is to derive the holomorphic anomaly equations,
our approach through the dual diagrammatics of stable graphs 
and the idea of transformations on the moduli space of field theory 
lead us to a theory of transformations at the level of abstract quantum field theories.
More precisely,
we introduce some transformations on the space spanned by 
stable graphs with suitable coefficient ring
such that our duality  theory is just a special case,
and furthermore, 
such transformations naturally form a commutative group.
As a consequence,
at the level of realizations of the abstract QFT, 
we obtain a commutative group of transformations.
Using the representations by formal Gaussian integrals,
such transformations by are realized by Fourier-like transforms.
Therefore,
we call the transformations at the level of stable graphs
Fourier-like transforms on the space of stable graphs.
The duality that we use to derive the holomorphic anomaly equations
is just a special case of such more general transforms.
We will look for their applications in future investigations.

Many results in \cite{wz} and this work can be 
generalized to fat graphs. 
We will report on such generalizations in a separate paper.

The rest of this paper is arranged as follows.
We recall in \S \ref{sec-pre} our formalism of abstract QFT, its realizations
and the corresponding quadratic recursion relations.
The duality theory for the abstract QFT is developed in \S \ref{sec:dual-diagrammatics}.
The realizations of dual abstract QFT is studied in \S \ref{sec:realization}.
We make generalizations   to multi-labelled graphs in  \S \ref{sec:higher-dim}.
The Fourier-like transforms on the space of stable graphs and their realizations
are discussed in \S \ref{sec:Fourier}.
In the final \S \ref{sec:HAE}
we present the application of the duality theory to holomorphic anomaly equations.

\section{Preliminaries}
\label{sec-pre}

In this section, we recall the formalism of abstract quantum field theory
and its realizations developed in \cite{wz}.

\subsection{Stable graphs and edge-cutting operators}
\label{sec-pre-def}

A stable graph $\Gamma$ is defined to be a graph
whose vertices are marked by nonnegative integers $g_v \geq 0$,
called the genus of the vertex,
such that all vertices are stable.
I.e.,
the valence of each vertex of genus $0$ is at least three,
and the valence of each vertex of genus $1$ is at least one.
The genus of a stable graph $\Gamma$ is defined to be
\ben
g(\Gamma):=h^1(\Gamma)+\sum_{v\in V(\Gamma)}g_v,
\een
where $h^1(\Gamma)$ is the number of loops in $\Gamma$,
and $V(\Gamma)$ is the set of all vertices of $\Gamma$.

Stable graphs can be used to describe the natural stratification
of the Deligne-Mumford moduli space
of stable curves $\overline\cM_{g,n}$ (cf. \cite{dm, kn}).
Let the group $S_n$ act on this moduli space by permutations of the marked points,
then the stratification of $\overline\cM_{g,n}/S_n$ is given by
\ben
\overline\cM_{g,n}/S_n=\bigsqcup_{\Gamma\in\cG_{g,n}^c}\cM_\Gamma,
\een
where $\cG_{g,n}^c$ is the set of all connected stable graphs of genus $g$
with $n$ external edges,
and $\cM_\Gamma$ is the moduli space of stable curves
whose dual graph is $\Gamma$.

Denote by $\cG_{g,n}$ be the set of all stable graphs of genus $g$
with $n$ external edges (not necessarily connected).
In \cite[\S 2]{wz},
we introduced some operators acting on the vector spaces
\ben
\cV := \bigoplus_{\substack{\Gamma \in \cG_{g,n}\\ 2g-2+n > 0}} \bQ\Gamma,
\een
namely the `edge-cutting operator' and `edge-adding operators'.

The edge-cutting operator $K$ is defined to be
\ben
K:\cV \to \cV, \qquad \Gamma \in \cG_{g,n}\mapsto
K(\Gamma)= \sum_{e\in E(\Gamma)} \Gamma_e,
\een
where $E(\Gamma)$ is the set of internal edges of $\Gamma$,
and $\Gamma_e$ is obtained from $\Gamma$ by cutting the internal edge $e$.
If $\Gamma$ is a stable graph without internal edges,
then $K(\Gamma)$ is assigned to be $0$.

The edge-adding operator $\pd$ is defined to be
\ben
\pd:\cV \to \cV,  \qquad \Gamma \in \cG_{g,n}\mapsto
\pd(\Gamma)= \sum_{e\in E(\Gamma)} \Gamma_e '+
\sum_{v\in V(\Gamma)}\Gamma_v,
\een
where $\Gamma_e '$ is obtained from $\Gamma$ by breaking up an internal edge and
inserting a trivalent vertex of genus $0$,
and $\Gamma_v$ is obtained from $\Gamma$ by
attaching an additional external edge to the vertex $v\in V(\Gamma)$.

Another edge-adding operator $\gamma$ is defined to be
\ben
\pd:\cV \to \cV, \qquad \Gamma \in \cG_{g,n}\mapsto
\pd(\Gamma)= \sum_{e\in E^{ext}(\Gamma)} \Gamma_e '',
\een
where $E^{ext}(\Gamma)$ is the set of external edges of $\Gamma$,
and $\Gamma_e ''$ is obtained form $\Gamma$ by attaching
a trivalent vertex of genus $0$ to the external edge $e$.
If $\Gamma$ is a stable graph $\Gamma$ without external edges,
then we assign $\gamma(\Gamma)=0$.

Define $\cD:=\pd+\gamma$, then $\cD$ preserves the subspace
\ben
\cV^c := \bigoplus_{\substack{\Gamma \in \cG^c_{g,n}\\ 2g-2+n > 0}} \bQ\Gamma
\quad \subset \cV.
\een

\begin{Example}
Here we give some examples of these operators.

\begin{flalign*}
\begin{tikzpicture}
\node [align=center,align=center] at (-0.4,0) {$K$};
\draw (1,0) circle [radius=0.2];
\draw (0.4,0) circle [radius=0.2];
\draw (0.6,0)--(0.8,0);
\draw (1.16,0.1) .. controls (1.5,0.2) and (1.5,-0.2) ..  (1.16,-0.1);
\draw (0.24,0.1) .. controls (-0.1,0.2) and (-0.1,-0.2) ..  (0.24,-0.1);
\node [align=center,align=center] at (1,0) {$0$};
\node [align=center,align=center] at (0.4,0) {$0$};
\node [align=center,align=center] at (1.8,0) {$=2$};
\draw (3.3,0) circle [radius=0.2];
\draw (2.7,0) circle [radius=0.2];
\draw (2.9,0)--(3.1,0);
\draw (3.46,0.1)--(3.7,0.15);
\draw (3.46,-0.1)--(3.7,-0.15);
\draw (2.54,0.1) .. controls (2.2,0.2) and (2.2,-0.2) ..  (2.54,-0.1);
\node [align=center,align=center] at (3.3,0) {$0$};
\node [align=center,align=center] at (2.7,0) {$0$};
\node [align=center,align=center] at (4,0) {$+$};
\draw (5.7,0) circle [radius=0.2];
\draw (4.7,0) circle [radius=0.2];
\draw (4.9,0)--(5.1,0);
\draw (5.3,0)--(5.5,0);
\draw (5.86,0.1) .. controls (6.2,0.2) and (6.2,-0.2) ..  (5.86,-0.1);
\draw (4.54,0.1) .. controls (4.2,0.2) and (4.2,-0.2) ..  (4.54,-0.1);
\node [align=center,align=center] at (5.7,0) {$0$};
\node [align=center,align=center] at (4.7,0) {$0$};
\end{tikzpicture},&&
\end{flalign*}

\begin{flalign*}
\begin{tikzpicture}
\node [align=center,align=center] at (0.4,0) {$K$};
\draw (1,0) circle [radius=0.2];
\draw (1.2,0)--(1.4,0);
\draw (1.16,0.1)--(1.44,0.1);
\draw (1.16,-0.1)--(1.44,-0.1);
\draw (1.6,0) circle [radius=0.2];
\node [align=center,align=center] at (1,0) {$0$};
\node [align=center,align=center] at (1.6,0) {$1$};
\node [align=center,align=center] at (2.2,0) {$=3$};
\draw (3.1,0) circle [radius=0.2];
\draw (3.28,0.07)--(3.52,0.07);
\draw (3.28,-0.07)--(3.52,-0.07);
\draw (2.6,0)--(2.9,0);
\draw (3.9,0)--(4.2,0);
\draw (3.7,0) circle [radius=0.2];
\node [align=center,align=center] at (3.1,0) {$0$};
\node [align=center,align=center] at (3.7,0) {$1$};
\end{tikzpicture};&&
\end{flalign*}

\begin{flalign*}
\begin{tikzpicture}
\node [align=center,align=center] at (-0.4,0) {$\pd$};
\draw (1,0) circle [radius=0.2];
\draw (0.4,0) circle [radius=0.2];
\draw (0.6,0)--(0.8,0);
\draw (1.16,0.1) .. controls (1.5,0.2) and (1.5,-0.2) ..  (1.16,-0.1);
\draw (0.24,0.1) .. controls (-0.1,0.2) and (-0.1,-0.2) ..  (0.24,-0.1);
\node [align=center,align=center] at (1,0) {$0$};
\node [align=center,align=center] at (0.4,0) {$0$};
\node [align=center,align=center] at (1.8,0) {$=2$};
\draw (3.2,0) circle [radius=0.2];
\draw (2.6,0) circle [radius=0.2];
\draw (2.8,0)--(3,0);
\draw (2.6,0.2)--(2.6,0.4);
\draw (3.36,0.1) .. controls (3.7,0.2) and (3.7,-0.2) ..  (3.36,-0.1);
\draw (2.44,0.1) .. controls (2.1,0.2) and (2.1,-0.2) ..  (2.44,-0.1);
\node [align=center,align=center] at (2.6,0) {$0$};
\node [align=center,align=center] at (3.2,0) {$0$};
\node [align=center,align=center] at (4,0) {$+2$};
\draw (5.4,0) circle [radius=0.2];
\draw (4.8,0) circle [radius=0.2];
\draw (5,0)--(5.2,0);
\draw (4.64,0.1) .. controls (4.3,0.2) and (4.3,-0.2) ..  (4.64,-0.1);
\draw (5.58,0.07)--(5.82,0.07);
\draw (5.58,-0.07)--(5.82,-0.07);
\draw (6.2,-0)--(6.5,0);
\draw (6,0) circle [radius=0.2];
\node [align=center,align=center] at (5.4,0) {$0$};
\node [align=center,align=center] at (4.8,0) {$0$};
\node [align=center,align=center] at (6,0) {$0$};
\node [align=center,align=center] at (7,0) {$+$};
\draw (8.4,0) circle [radius=0.2];
\draw (7.8,0) circle [radius=0.2];
\draw (9,0) circle [radius=0.2];
\draw (8,0)--(8.2,0);
\draw (8.6,0)--(8.8,0);
\draw (8.4,0.2)--(8.4,0.4);
\draw (9.16,0.1) .. controls (9.5,0.2) and (9.5,-0.2) ..  (9.16,-0.1);
\draw (7.64,0.1) .. controls (7.3,0.2) and (7.3,-0.2) ..  (7.64,-0.1);
\node [align=center,align=center] at (8.4,0) {$0$};
\node [align=center,align=center] at (7.8,0) {$0$};
\node [align=center,align=center] at (9,0) {$0$};
\end{tikzpicture},&&
\end{flalign*}

\begin{flalign*}
\begin{tikzpicture}
\node [align=center,align=center] at (0.4,0) {$\pd$};
\draw (1,0) circle [radius=0.2];
\draw (1.2,0)--(1.4,0);
\draw (1.16,0.1)--(1.44,0.1);
\draw (1.16,-0.1)--(1.44,-0.1);
\draw (1.6,0) circle [radius=0.2];
\node [align=center,align=center] at (1,0) {$0$};
\node [align=center,align=center] at (1.6,0) {$1$};
\node [align=center,align=center] at (2.1,0) {$=$};
\draw (3,0) circle [radius=0.2];
\draw (3.2,0)--(3.4,0);
\draw (2.5,0)--(2.8,0);
\draw (3.16,0.1)--(3.44,0.1);
\draw (3.16,-0.1)--(3.44,-0.1);
\draw (3.6,0) circle [radius=0.2];
\node [align=center,align=center] at (3,0) {$0$};
\node [align=center,align=center] at (3.6,0) {$1$};
\node [align=center,align=center] at (4.1,0) {$+$};
\draw (4.7,0) circle [radius=0.2];
\draw (4.9,0)--(5.1,0);
\draw (5.5,0)--(5.8,0);
\draw (4.86,0.1)--(5.14,0.1);
\draw (4.86,-0.1)--(5.14,-0.1);
\draw (5.3,0) circle [radius=0.2];
\node [align=center,align=center] at (4.7,0) {$0$};
\node [align=center,align=center] at (5.3,0) {$1$};
\node [align=center,align=center] at (6.2,0) {$+3$};
\draw (6.8,-0.2) circle [radius=0.2];
\draw (6.98,-0.13)--(7.42,-0.13);
\draw (6.98,-0.27)--(7.42,-0.27);
\draw (7.6,-0.2) circle [radius=0.2];
\draw (7.2,0.25) circle [radius=0.2];
\draw (6.94,-0.06)--(7.06,0.11);
\draw (7.46,-0.06)--(7.34,0.11);
\node [align=center,align=center] at (6.8,-0.2) {$0$};
\node [align=center,align=center] at (7.6,-0.2) {$1$};
\node [align=center,align=center] at (7.2,0.25) {$0$};
\draw (7.2,0.45)--(7.2,0.65);
\end{tikzpicture};&&
\end{flalign*}

\begin{flalign*}
\begin{tikzpicture}
\node [align=center,align=center] at (0.2,0) {$\gamma$};
\draw (0.9,0) circle [radius=0.2];
\draw (1.06,0.1)--(1.3,0.15);
\draw (1.06,-0.1)--(1.3,-0.15);
\draw (0.74,0.1) .. controls (0.4,0.2) and (0.4,-0.2) ..  (0.74,-0.1);
\node [align=center,align=center] at (0.9,0) {$0$};
\node [align=center,align=center] at (1.8,0) {$=2$};
\draw (2.7,0) circle [radius=0.2];
\draw (3.3,0) circle [radius=0.2];
\draw (2.9,0)--(3.1,0);
\draw (2.7,0.2)--(2.7,0.4);
\draw (3.47,0.1)--(3.7,0.15);
\draw (3.47,-0.1)--(3.7,-0.15);
\draw (2.54,0.1) .. controls (2.2,0.2) and (2.2,-0.2) ..  (2.54,-0.1);
\node [align=center,align=center] at (2.7,0) {$0$};
\node [align=center,align=center] at (3.3,0) {$0$};
\end{tikzpicture}.&&
\end{flalign*}

\end{Example}

\subsection{Quadratic recursion relation for the abstract free energy}

In this subsection let us recall the abstract quantum field theory
and various types of recursion relations developed in \cite[\S 2]{wz}.

The abstract free energy of genus $g$ for the abstract quantum field theory
is defined to be the linear combination
\be\label{pre-abs-FE}
\widehat{\cF}_g:=\sum_{\Gamma\in\cG_{g,0}^c}\frac{1}{|\Aut(\Gamma)|}\Gamma,
\quad g\geq 2,
\ee
and the abstract $n$-point function of genus $g$ is defined to be
\be\label{pre-abs-npt}
\widehat{\cF}_{g,n}:=\sum_{\Gamma\in\cG_{g,n}^c}\frac{1}{|\Aut(\Gamma)|}\Gamma,
\quad 2g-2+n>0.
\ee
For example,
\begin{flalign*}
\begin{tikzpicture}
\node [align=center,align=center] at (0.3,0) {$\widehat{\cF}_{0,3}=\frac{1}{6}$};
\draw (1.6,0) circle [radius=0.2];
\draw (1.1,0)--(1.4,0);
\draw (1.76,0.1)--(2,0.15);
\draw (1.76,-0.1)--(2,-0.15);
\node [align=center,align=center] at (1.6,0) {$0$};
\end{tikzpicture},&&
\end{flalign*}

\begin{flalign*}
\begin{tikzpicture}
\node [align=center,align=center] at (0.3-0.2,0) {$\widehat{\cF}_{1,1}=$};
\draw (1,0) circle [radius=0.2];
\draw (1.2,0)--(1.5,0);
\node [align=center,align=center] at (1,0) {$1$};
\node [align=center,align=center] at (2,0) {$+\frac{1}{2}$};
\draw (1+1.8,0) circle [radius=0.2];
\draw (1.2+1.8,0)--(1.5+1.8,0);
\draw (0.84+1.8,0.1) .. controls (0.5+1.8,0.2) and (0.5+1.8,-0.2) ..  (0.84+1.8,-0.1);
\node [align=center,align=center] at (1+1.8,0) {$0$};
\end{tikzpicture},&&
\end{flalign*}

\begin{flalign*}
\begin{tikzpicture}
\node [align=center,align=center] at (0.1+0.4,0) {$\widehat{\cF}_{2}=$};
\draw (1+0.3,0) circle [radius=0.2];
\node [align=center,align=center] at (1+0.3,0) {$2$};
\node [align=center,align=center] at (1.6+0.2,0) {$+\frac{1}{2}$};
\draw (1+1.4+0.2,0) circle [radius=0.2];
\draw (0.84+1.4+0.2,0.1) .. controls (0.5+1.4+0.2,0.2) and (0.5+1.4+0.2,-0.2) ..  (0.84+1.4+0.2,-0.1);
\node [align=center,align=center] at (1+1.4+0.2,0) {$1$};
\node [align=center,align=center] at (3+0.2,0) {$+\frac{1}{2}$};
\draw (1+2.6+0.2,0) circle [radius=0.2];
\draw (1.2+2.6+0.2,0)--(1.4+2.6+0.2,0);
\draw (1.6+2.6+0.2,0) circle [radius=0.2];
\node [align=center,align=center] at (1+2.6+0.2,0) {$1$};
\node [align=center,align=center] at (1.6+2.6+0.2,0) {$1$};
\node [align=center,align=center] at (5,0) {$+\frac{1}{8}$};
\draw (1+4.8,0) circle [radius=0.2];
\draw (0.84+4.8,0.1) .. controls (0.5+4.8,0.2) and (0.5+4.8,-0.2) ..  (0.84+4.8,-0.1);
\draw (1.16+4.8,0.1) .. controls (1.5+4.8,0.2) and (1.5+4.8,-0.2) ..  (1.16+4.8,-0.1);
\node [align=center,align=center] at (1+4.8,0) {$0$};
\node [align=center,align=center] at (6.6,0) {$+\frac{1}{2}$};
\draw (1+6.8,0) circle [radius=0.2];
\draw (0.4+6.8,0) circle [radius=0.2];
\draw (0.6+6.8,0)--(0.8+6.8,0);
\draw (1.16+6.8,0.1) .. controls (1.5+6.8,0.2) and (1.5+6.8,-0.2) ..  (1.16+6.8,-0.1);
\node [align=center,align=center] at (1+6.8,0) {$0$};
\node [align=center,align=center] at (0.4+6.8,0) {$1$};
\node [align=center,align=center] at (8.6,0) {$+\frac{1}{8}$};
\draw (1+9,0) circle [radius=0.2];
\draw (0.4+9,0) circle [radius=0.2];
\draw (0.6+9,0)--(0.8+9,0);
\draw (1.16+9,0.1) .. controls (1.5+9,0.2) and (1.5+9,-0.2) ..  (1.16+9,-0.1);
\draw (0.24+9,0.1) .. controls (-0.1+9,0.2) and (-0.1+9,-0.2) ..  (0.24+9,-0.1);
\node [align=center,align=center] at (1+9,0) {$0$};
\node [align=center,align=center] at (0.4+9,0) {$0$};
\node [align=center,align=center] at (10.6+0.2,0) {$+\frac{1}{12}$};
\draw (1+10.2+0.2,0) circle [radius=0.2];
\draw (1.2+10.2+0.2,0)--(1.4+10.2+0.2,0);
\draw (1.16+10.2+0.2,0.1)--(1.44+10.2+0.2,0.1);
\draw (1.16+10.2+0.2,-0.1)--(1.44+10.2+0.2,-0.1);
\draw (1.6+10.2+0.2,0) circle [radius=0.2];
\node [align=center,align=center] at (1+10.2+0.2,0) {$0$};
\node [align=center,align=center] at (1.6+10.2+0.2,0) {$0$};
\end{tikzpicture}.&&
\end{flalign*}

Using the operators $\cD$ and $K$,
we are able to derive the following recursion relations for
$\wcF_g$ and $\wcF_{g,n}$:

\begin{Lemma}\label{lem-original-D}(\cite[Lemma 2.1]{wz})
For $2g-2+n>0$, we have
\be
\cD\widehat{\cF}_{g,n}=(n+1)\widehat{\cF}_{g,n+1}.
\ee
\end{Lemma}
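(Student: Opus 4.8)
The plan is to prove the identity $\cD\wcF_{g,n}=(n+1)\wcF_{g,n+1}$ by comparing the coefficient of an arbitrary connected stable graph $\Gamma_0 \in \cG^c_{g,n+1}$ on both sides. Since $\wcF_{g,n}$ is the generating series $\sum_{\Gamma}\frac{1}{|\Aut(\Gamma)|}\Gamma$ over $\cG^c_{g,n}$, and $\cD = \pd + \gamma$ adds exactly one external edge in every way (either by breaking an internal edge and inserting a trivalent genus-$0$ vertex, by attaching an external edge directly to an existing vertex, or by capping an external edge with a trivalent genus-$0$ vertex), the left side is a $\bQ$-linear combination of graphs in $\cG^c_{g,n+1}$. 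So the content of the lemma is the combinatorial claim that for each $\Gamma_0 \in \cG^c_{g,n+1}$,
\be
\sum_{\substack{\Gamma \in \cG^c_{g,n} \\ \Gamma_0 \text{ appears in } \cD\Gamma}} \frac{m(\Gamma,\Gamma_0)}{|\Aut(\Gamma)|} \;=\; \frac{n+1}{|\Aut(\Gamma_0)|},
\ee
where $m(\Gamma,\Gamma_0)$ is the multiplicity with which $\Gamma_0$ arises from applying $\cD$ to $\Gamma$.

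First I would set up a bijective/weighted-counting framework: fix $\Gamma_0 \in \cG^c_{g,n+1}$ with its $n+1$ labelled external edges (or, working $S_{n+1}$-equivariantly, with the external edges regarded as distinguishable half-edges), and consider the set of pairs $(\Gamma, \varepsilon)$ where $\varepsilon$ is a "distinguished external edge" of $\Gamma_0$ together with the data of how deleting/contracting a neighborhood of $\varepsilon$ produces a graph $\Gamma \in \cG^c_{g,n}$ lying in the support of $\cD$. Concretely, given $\Gamma_0$ and a choice of one of its external edges $\varepsilon$: if $\varepsilon$ is attached to a vertex $v$ of genus $g_v$ and valence $\geq 2$ after removing $\varepsilon$ (so that $\Gamma$ remains stable), then removing $\varepsilon$ gives a graph that $\pd$ maps back to $\Gamma_0$ via the "attach external edge to $v$" term; if $\varepsilon$ is attached to a trivalent genus-$0$ vertex $v$ whose other two half-edges are both internal, contracting one of them recovers the "break an internal edge" operation of $\pd$; and if the other two half-edges of such a $v$ include an external one, we are in the $\gamma$ case. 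The key bookkeeping step is to check that each such local configuration at $\varepsilon$ corresponds to exactly one term of $\cD$ applied to the resulting $\Gamma$, and conversely every term of $\cD\Gamma$ producing $\Gamma_0$ arises this way from a unique choice of $\varepsilon$ (namely the newly added external edge).

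Next I would handle the automorphism weights via the orbit–stabilizer principle. Summing $\frac{1}{|\Aut(\Gamma)|}$ over $\Gamma$'s with the appropriate multiplicities is equivalent, after the standard "graphs with labelled external legs" reformulation, to counting the $|\Aut(\Gamma_0)|$-orbit of the set of distinguished external edges of $\Gamma_0$; since $\Gamma_0$ has $n+1$ external edges and $\Aut(\Gamma_0)$ permutes them, the total weighted count of pairs is $(n+1)/|\Aut(\Gamma_0)|$, which is exactly the right-hand side coefficient. The one subtlety deserving care — and the step I expect to be the main obstacle — is the stability bookkeeping at the boundary cases: when removing the chosen external edge $\varepsilon$ would destabilize its vertex $v$ (genus $0$ valence dropping to $2$, or genus $1$ valence dropping to $0$), one must verify that $v$ then gets absorbed correctly so that the corresponding term is instead produced by the edge-breaking or edge-capping parts of $\cD$ rather than the vertex-attaching part, and that no configuration is double-counted or missed. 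Organizing this case analysis cleanly — probably by classifying the external edge $\varepsilon$ of $\Gamma_0$ by the local type of the vertex it meets — is where the real work lies; once that dictionary is in place, the automorphism-weight identity follows formally. Alternatively, one may cite that this is precisely \cite[Lemma 2.1]{wz} and that the argument there is the one just sketched.
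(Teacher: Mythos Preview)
The paper does not actually prove this lemma: it is recalled from the authors' earlier work \cite{wz} and stated here without argument. So there is no ``paper's own proof'' to compare against beyond the citation. Your sketch is the standard argument and is essentially what one finds in \cite{wz}: fix $\Gamma_0\in\cG^c_{g,n+1}$, note that every occurrence of $\Gamma_0$ in $\cD\Gamma$ is recorded by the choice of the newly created external edge, and then use orbit--stabilizer to convert the count of the $n+1$ external edges of $\Gamma_0$ (up to $\Aut(\Gamma_0)$) into the coefficient identity. Your case analysis of the local picture at the distinguished external edge $\varepsilon$ is correct, and your observation that the destabilizing boundary cases (genus-$0$ trivalent with two external neighbours, genus-$1$ univalent) are excluded by the hypothesis $2g-2+n>0$ is exactly the point. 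One small wording fix: in the edge-breaking case you should say ``smooth out the resulting unstable bivalent vertex'' (merge its two incident internal edges into one) rather than ``contract one of them''; contracting an edge would merge $v$ with a neighbouring vertex, which is not the inverse of the $\pd$ edge-breaking operation.
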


\begin{Theorem}\label{thm-original-rec}(\cite[Theorem 2.2]{wz})
For $2g-2+n>0$, we have
\be\label{eq-thm2}
K\widehat{\cF}_{g,n}=\frac{1}{2}(\cD \cD\widehat{\cF}_{g-1,n}
+\sum_{\substack{g_1+g_2=g,\\n_1+n_2=n}}
\cD\widehat{\cF}_{g_1,n_1}\cD\widehat{\cF}_{g_2,n_2}).
\ee
In particular, by taking $n=0$ we get a quadratic recursion relation
for the abstract free energy for $g\geq 2$:
\be\label{thm-free}
K\widehat{\cF}_g=\frac{1}{2}(\cD \pd\widehat{\cF}_{g-1}
+\sum_{r=1}^{g-1}\pd\widehat{\cF}_{r}\pd\widehat{\cF}_{g-r}).
\ee
Here we use the convention
\be\label{convention-1}
\begin{split}
&\pd\wcF_1=\cD\widehat{\cF}_{1}:=\widehat{\cF}_{1,1},\\
&\cD\widehat{\cF}_{0,2}:=3\widehat{\cF}_{0,3},\\
&\cD \cD\widehat{\cF}_{0,1}:=6\widehat{\cF}_{0,3},
\end{split}
\ee
by formally applying Lemma \ref{lem-original-D}.

\end{Theorem}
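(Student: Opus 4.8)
The plan is to establish \eqref{eq-thm2} in $\cV$ by comparing, term by term, the coefficient of each connected stable graph $\Gamma'$ with $n+2$ external edges on the two sides; the case $n=0$, i.e.\ \eqref{thm-free}, will then follow by specialization together with the conventions \eqref{convention-1}. The starting point is to reorganize the left-hand side: grouping the contributions to $K\widehat{\cF}_{g,n}=\sum_{\Gamma\in\cG_{g,n}^c}\frac1{|\Aut\Gamma|}\sum_{e\in E(\Gamma)}\Gamma_e$ by isomorphism classes of pairs $(\Gamma,e)$, the orbit-stabilizer theorem rewrites the inner double sum as $\sum_{[(\Gamma,e)]}\frac1{|\Aut(\Gamma,e)|}\,\Gamma_e$, where $\Aut(\Gamma,e)\le\Aut\Gamma$ is the stabilizer of $e$, allowed to exchange its two half-edges. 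The combinatorial heart of the argument is then a bijection: isomorphism classes of pairs $(\Gamma,e)$ with $\Gamma\in\cG_{g,n}^c$ correspond to isomorphism classes of pairs $(\Gamma',\{a,b\})$, where $\Gamma'=\Gamma_e$ and $\{a,b\}$ is the unordered pair of external edges produced by the cut, and under this bijection $|\Aut(\Gamma,e)|=|\Aut(\Gamma',\{a,b\})|$, the latter being the stabilizer of the set $\{a,b\}$ in $\Aut\Gamma'$. Reconstructing $(\Gamma,e)$ from $(\Gamma',\{a,b\})$ is simply gluing the external edge $a$ to $b$, and since stability is inherited under cutting, every unordered pair of external edges of $\Gamma'$ is admissible.

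Next I would split according to whether $e$ is separating. If $e$ is non-separating then $\Gamma'=\Gamma_e$ is connected of genus $g-1$ with $n+2$ external edges; summing $\frac1{|\Aut(\Gamma',\{a,b\})|}$ over $\Aut\Gamma'$-orbits of unordered pairs and using orbit-stabilizer once more collapses the sum to $\frac1{|\Aut\Gamma'|}\binom{n+2}2$, so this part of $K\widehat{\cF}_{g,n}$ equals $\tfrac12(n+1)(n+2)\,\widehat{\cF}_{g-1,n+2}$. If $e$ is separating then $\Gamma'=A\sqcup B$ with $A$ of genus $g_1$ and $n_1+1$ external edges and $B$ of genus $g_2$ and $n_2+1$ external edges, $g_1+g_2=g$ and $n_1+n_2=n$; the admissible pairs are the $(n_1+1)(n_2+1)$ cross-pairs with one external edge on each component, and the same computation gives $\tfrac12\sum_{g_1+g_2=g,\,n_1+n_2=n}(n_1+1)(n_2+1)\,\widehat{\cF}_{g_1,n_1+1}\widehat{\cF}_{g_2,n_2+1}$, where the prefactor $\tfrac12$ together with the ordered index sum exactly matches the symmetry factor $2$ in $|\Aut(A\sqcup B)|$ when $A\cong B$, against the diagonal monomial $A\cdot A$ of the product. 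Finally I would apply Lemma \ref{lem-original-D} to rewrite $(n+1)(n+2)\widehat{\cF}_{g-1,n+2}=\cD\cD\widehat{\cF}_{g-1,n}$ and $(n_i+1)\widehat{\cF}_{g_i,n_i+1}=\cD\widehat{\cF}_{g_i,n_i}$, which turns the two displayed sums into \eqref{eq-thm2}; taking $n=0$ and using $\cD=\pd$ on graphs without external edges, along with \eqref{convention-1}, yields \eqref{thm-free}.

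I expect the main obstacle to be the automorphism and symmetry-factor bookkeeping. One delicate point is verifying $|\Aut(\Gamma,e)|=|\Aut(\Gamma',\{a,b\})|$ and the bijection above in all configurations, in particular when $e$ is a self-loop, so $a,b$ sit at the same vertex, and when an automorphism of $\Gamma$ fixing $e$ swaps the half-edges of $e$, equivalently swaps $a\leftrightarrow b$. The second delicate point is the factor of $2$ in the separating case when $A\cong B$: one must check that the unordered count of cross-pairs on $A\sqcup A$ reproduces exactly the coefficient $\tfrac12(n_1+1)^2|\Aut A|^{-2}$ coming from the single term $\cD\widehat{\cF}_{g_1,n_1}\cD\widehat{\cF}_{g_1,n_1}$ of the quadratic sum, and, symmetrically, that distinct components of equal type are neither over- nor under-counted. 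A last point needing care is the low $(g,n)$ edge cases, where $\widehat{\cF}_{0,1}$, $\widehat{\cF}_{0,2}$, $\widehat{\cF}_{1,0}$ are not genuine stable graphs: there the left-hand counting still returns only honest stable graphs with the multiplicities predicted by the formal identities \eqref{convention-1}, so the recursion remains valid as stated.
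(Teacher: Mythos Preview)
Your argument is correct and is essentially the standard proof: you first establish the explicit form
\[
K\widehat{\cF}_{g,n}=\tfrac12\Bigl[(n+2)(n+1)\widehat{\cF}_{g-1,n+2}
+\sum_{\substack{g_1+g_2=g\\ n_1+n_2=n+2,\ n_i\ge 1}} n_1n_2\,\widehat{\cF}_{g_1,n_1}\widehat{\cF}_{g_2,n_2}\Bigr]
\]
via the cut/glue bijection $[(\Gamma,e)]\leftrightarrow[(\Gamma',\{a,b\})]$ together with orbit--stabilizer, and then apply Lemma~\ref{lem-original-D} to rewrite the multiplicities as $\cD$'s. This is exactly how the paper packages the result: the present paper does not reprove Theorem~\ref{thm-original-rec} but cites it from \cite{wz}, and immediately afterwards records the equivalent form above as Theorem~\ref{original-rec-2} (``This recursion can also be written in the following way''), so your route coincides with the intended one.

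The bookkeeping concerns you flag are real but all resolve as you indicate. Gluing two external edges never destroys stability (valences are unchanged), so every unordered pair $\{a,b\}$ is admissible; the identification $\Aut(\Gamma,e)\cong\Aut(\Gamma',\{a,b\})$ holds also when $e$ is a loop, since swapping the two half-edges of $e$ corresponds to swapping $a\leftrightarrow b$; and in the separating case with $A\cong B$ the extra factor $2$ in $|\Aut(A\sqcup B)|=2|\Aut A|^2$ is matched by the single diagonal term in the ordered sum combined with the global prefactor $\tfrac12$. The unstable boundary terms are absorbed by the conventions~\eqref{convention-1}, exactly as you say.
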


This recursion can also be written in the following way:
\begin{Theorem}\label{original-rec-2}
(\cite[Theorem 2.1]{wz})
For $2g-2+n>0$, we have
\be
K\widehat{\cF}_{g,n}=\frac{1}{2}\biggl[(n+2)(n+1)\widehat{\cF}_{g-1,n+2}
+\sum_{\substack{g_1+g_2=g,\\n_1+n_2=n+2\\n_1\geq 1,n_2\geq 1}}
n_1n_2\widehat{\cF}_{g_1,n_1}\widehat{\cF}_{g_2,n_2}\biggr],
\ee
where the sum in the right-hand-side is over all stable cases.
\end{Theorem}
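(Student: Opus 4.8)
The plan is to deduce this reformulation directly from Theorem \ref{thm-original-rec} together with Lemma \ref{lem-original-D}; essentially no new combinatorics is needed. The key observation is that the operator $\cD$ acts on the abstract $n$-point functions by $\cD\widehat{\cF}_{g,n}=(n+1)\widehat{\cF}_{g,n+1}$, so every occurrence of $\cD$ on the right-hand side of \eqref{eq-thm2} can be traded for a numerical coefficient, after which one only has to rename summation indices.

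Concretely, I would first rewrite the two terms on the right of \eqref{eq-thm2}. Applying Lemma \ref{lem-original-D} twice to the first term gives
\[
\cD\cD\widehat{\cF}_{g-1,n}=\cD\big((n+1)\widehat{\cF}_{g-1,n+1}\big)=(n+1)(n+2)\,\widehat{\cF}_{g-1,n+2}.
\]
Applying Lemma \ref{lem-original-D} to each factor in the quadratic term gives
\[
\cD\widehat{\cF}_{g_1,n_1}\,\cD\widehat{\cF}_{g_2,n_2}=(n_1+1)(n_2+1)\,\widehat{\cF}_{g_1,n_1+1}\,\widehat{\cF}_{g_2,n_2+1}.
\]
Substituting both back into \eqref{eq-thm2} and then performing the change of summation variables $n_i\mapsto n_i+1$, so that the condition $n_1+n_2=n$ becomes $n_1+n_2=n+2$ with each $n_i\geq1$, produces exactly the formula in the statement, with the predicted weights $(n+2)(n+1)$ and $n_1n_2$.

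The only step that genuinely requires care — and the one I expect to be the main obstacle — is the bookkeeping of the unstable and boundary cases, i.e. checking that ``the sum over all stable cases'' in the claim matches term by term, under $n_i\mapsto n_i+1$, the sum in \eqref{eq-thm2} once the conventions \eqref{convention-1} are invoked. One must verify that the boundary contributions $\cD\widehat{\cF}_{0,2}=3\widehat{\cF}_{0,3}$, $\cD\widehat{\cF}_1=\widehat{\cF}_{1,1}$ and $\cD\cD\widehat{\cF}_{0,1}=6\widehat{\cF}_{0,3}$ map under the reindexing precisely to the correspondingly weighted stable terms, and that no spurious object such as $\widehat{\cF}_{0,0}$ or an unpaired $\widehat{\cF}_{0,2}$ survives. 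Alternatively, one could bypass Theorem \ref{thm-original-rec} entirely and argue directly as in \cite[\S 2]{wz}: expand $K\widehat{\cF}_{g,n}$ as a weighted sum over pairs $(\Gamma,e)$ with $\Gamma\in\cG^c_{g,n}$ and $e$ an internal edge, split according to whether cutting $e$ disconnects $\Gamma$ (yielding two connected graphs with genera summing to $g$ and $n+2$ external edges in total) or not (yielding a connected genus $g-1$ graph with $n+2$ external edges), and then compare $|\Aut(\Gamma)|$ before and after the cut to recover the factors $\tfrac12 n_1n_2$ and $\tfrac12(n+2)(n+1)$; I would nevertheless prefer the first route, as it renders the claim a short consequence of results already in hand.
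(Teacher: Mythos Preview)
Your proposal is correct and matches precisely how the paper presents this result: the sentence ``This recursion can also be written in the following way'' introducing Theorem~\ref{original-rec-2} signals exactly the rewriting you describe, namely applying Lemma~\ref{lem-original-D} inside \eqref{eq-thm2} and reindexing $n_i\mapsto n_i+1$. Your alternative direct edge-cutting argument is in fact closer to how the cited source \cite{wz} proceeds (there Theorem~2.1 is proved first and Theorem~2.2 deduced from it), so either route is fine; your boundary-case bookkeeping is also accurate.
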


\begin{Example}
We have
\begin{flalign*}
\begin{split}
\begin{tikzpicture}
\node [align=center,align=center] at (-1.1,0) {$K\wcF_2=\frac{1}{2}$};
\draw (0,0) circle [radius=0.2];
\node [align=center,align=center] at (0,0) {$1$};
\draw (0.16,0.1)--(0.5,0.15);
\draw (0.16,-0.1)--(0.5,-0.15);
\node [align=center,align=center] at (1,0) {$+\frac{1}{2}\biggl($};
\draw (1.7,0) circle [radius=0.2];
\node [align=center,align=center] at (1.7,0) {$1$};
\draw (3,0) circle [radius=0.2];
\node [align=center,align=center] at (3,0) {$1$};
\draw (1.9,0)--(2.2,0);
\draw (2.5,0)--(2.8,0);
\node [align=center,align=center] at (3.8,0) {$\biggr)+\frac{1}{4}$};
\draw (4.9,0) circle [radius=0.2];
\node [align=center,align=center] at (4.9,0) {$0$};
\draw (4.74,0.1) .. controls (4.4,0.2) and (4.4,-0.2) ..  (4.74,-0.1);
\draw (5.06,0.1)--(5.4,0.15);
\draw (5.06,-0.1)--(5.4,-0.15);
\node [align=center,align=center] at (6,0) {$+\frac{1}{2}\biggl($};
\draw (7.1-0.4,0) circle [radius=0.2];
\node [align=center,align=center] at (7.1-0.4,0) {$1$};
\draw (7.3-0.4,0)--(7.6-0.4,0);
\draw (8.4-0.4,0) circle [radius=0.2];
\node [align=center,align=center] at (8.4-0.4,0) {$0$};
\draw (7.9-0.4,0)--(8.2-0.4,0);
\draw (8.56-0.4,0.1) .. controls (8.9-0.4,0.2) and (8.9-0.4,-0.2) ..  (8.56-0.4,-0.1);
\node [align=center,align=center] at (8.6,0) {$\biggr)$};
\end{tikzpicture}
\\
\begin{tikzpicture}
\node [align=center,align=center] at (-0.6,0) {$+\frac{1}{2}$};
\draw (0,0) circle [radius=0.2];
\node [align=center,align=center] at (0,0) {$1$};
\draw (0.6,0) circle [radius=0.2];
\node [align=center,align=center] at (0.6,0) {$0$};
\draw (0.2,0)--(0.4,0);
\draw (0.76,0.1)--(1.1,0.15);
\draw (0.76,-0.1)--(1.1,-0.15);
\node [align=center,align=center] at (1.6,0) {$+\frac{1}{4}$};
\draw (2.4,0) circle [radius=0.2];
\node [align=center,align=center] at (2.4,0) {$0$};
\draw (3,0) circle [radius=0.2];
\node [align=center,align=center] at (3,0) {$0$};
\draw (2.24,0.1) .. controls (1.9,0.2) and (1.9,-0.2) ..  (2.24,-0.1);
\draw (2.6,0)--(2.8,0);
\draw (3.16,0.1)--(3.5,0.15);
\draw (3.16,-0.1)--(3.5,-0.15);
\node [align=center,align=center] at (4.2,0) {$+\frac{1}{8}\biggl($};
\draw (5.1,0) circle [radius=0.2];
\node [align=center,align=center] at (5.1,0) {$0$};
\draw (6.4,0) circle [radius=0.2];
\node [align=center,align=center] at (6.4,0) {$0$};
\draw (5.3,0)--(5.6,0);
\draw (5.9,0)--(6.2,0);
\draw (4.94,0.1) .. controls (4.6,0.2) and (4.6,-0.2) ..  (4.94,-0.1);
\draw (6.56,0.1) .. controls (6.9,0.2) and (6.9,-0.2) ..  (6.56,-0.1);
\node [align=center,align=center] at (7.4,0) {$\biggr)+\frac{1}{4}$};
\draw (8.3+0.3,0) circle [radius=0.2];
\node [align=center,align=center] at (8.3+0.3,0) {$0$};
\draw (8.9+0.3,0) circle [radius=0.2];
\node [align=center,align=center] at (8.9+0.3,0) {$0$};
\draw (7.8+0.3,0)--(8.1+0.3,0);
\draw (9.1+0.3,0)--(9.4+0.3,0);
\draw (8.78,0.07)--(9.02,0.07);
\draw (8.78,-0.07)--(9.02,-0.07);
\end{tikzpicture}.
\end{split}
\end{flalign*}
Then using
\begin{flalign*}
\begin{tikzpicture}
\node [align=center,align=center] at (-0.3,0) {$\widehat{\cF}_{1,2}=\frac{1}{2}$};
\draw (1,0) circle [radius=0.2];
\draw (1.2,0)--(1.5,0);
\draw (0.5,0)--(0.8,0);
\node [align=center,align=center] at (1,0) {$1$};
\node [align=center,align=center] at (2,0) {$+\frac{1}{4}$};
\draw (1+1.8,0) circle [radius=0.2];
\draw (1.17+1.8,0.1)--(1.4+1.8,0.15);
\draw (1.17+1.8,-0.1)--(1.4+1.8,-0.15);
\draw (0.84+1.8,0.1) .. controls (0.5+1.8,0.2) and (0.5+1.8,-0.2) ..  (0.84+1.8,-0.1);
\node [align=center,align=center] at (1+1.8,0) {$0$};
\node [align=center,align=center] at (3.7,0) {$+\frac{1}{2}$};
\draw (1+3.9,0) circle [radius=0.2];
\draw (0.4+3.9,0) circle [radius=0.2];
\draw (1.17+3.9,0.1)--(1.4+3.9,0.15);
\draw (1.17+3.9,-0.1)--(1.4+3.9,-0.15);
\draw (0.6+3.9,0)--(0.8+3.9,0);
\node [align=center,align=center] at (1+3.9,0) {$0$};
\node [align=center,align=center] at (0.4+3.9,0) {$1$};
\node [align=center,align=center] at (5.8,0) {$+\frac{1}{4}$};
\draw (1+6.2,0) circle [radius=0.2];
\draw (0.4+6.2,0) circle [radius=0.2];
\draw (1.17+6.2,0.1)--(1.4+6.2,0.15);
\draw (1.17+6.2,-0.1)--(1.4+6.2,-0.15);
\draw (0.6+6.2,0)--(0.8+6.2,0);
\node [align=center,align=center] at (1+6.3-0.1,0) {$0$};
\node [align=center,align=center] at (0.4+6.3-0.1,0) {$0$};
\draw (0.24+6.3-0.1,0.1) .. controls (-0.1+6.3-0.1,0.2) and (-0.1+6.3-0.1,-0.2) ..  (0.24+6.3-0.1,-0.1);
\node [align=center,align=center] at (8.2-0.1,0) {$+\frac{1}{4}$};
\draw (1+8.1-0.1,0) circle [radius=0.2];
\draw (0.5+8.1-0.1,0)--(0.8+8.1-0.1,0);
\draw (1.18+8.1-0.1,0.07)--(1.42+8.1-0.1,0.07);
\draw (1.18+8.1-0.1,-0.07)--(1.42+8.1-0.1,-0.07);
\draw (1.8+8.1-0.1,-0)--(2.1+8.1-0.1,0);
\draw (1.6+8.1-0.1,0) circle [radius=0.2];
\node [align=center,align=center] at (1+8.1-0.1,0) {$0$};
\node [align=center,align=center] at (1.6+8.1-0.1,0) {$0$};
\end{tikzpicture},&&
\end{flalign*}
we can check:
\ben
K\wcF_2=\wcF_{1,2}+\frac{1}{2}\wcF_{1,1}\wcF_{1,1}
=\frac{1}{2}\biggl(\cD\cD\wcF_1+\cD\wcF_1\cD\wcF_1\biggr).
\een
\end{Example}

\subsection{A generalization to labelled graphs}
\label{sec-pre-label}

In this subsection, we recall the labelled graphs defined in \cite[\S 3]{wz}.
The abstract quantum field theory can be generalized to the case of labelled graphs.

First we fix a positive integer $N$.
For a stable graph $\Gamma\in\cG_{g,n}^c$, we label indices in
$\{1,2,\cdots,N\}$ on all `half edges' of $\Gamma$ as follows.
For an external edge, we label an index in $\{1,2,\cdots,N\}$;
and for an internal edge,
we label an index in $\{1,2,\cdots,N\}$ at each of its end points.

Then we generalize the operators $K$, $\pd$, $\cD$ as follows.
We define the edge-cutting operators $K_{ij}$ as the operator
that cuts off an internal edges with two labels $i$ and $j$,
then sums over all such internal edges.
The operator $\partial_{i}$ has two parts,
one is to add an external edge labelled by $i$,
and the other is to break up an internal edge and
attach a trivalent vertex of genus $0$
such that the new external edge has label $i$.
The operator $\gamma_{i}$ is to attach a trivalent vertex of genus $0$
to an external edge $e$,
and move the label of $e$ to be the label of one of the new external edge,
while the other new external edge has label $i$.
The two labels of the new internal edge can be chosen arbitrarily.
Also we define $\cD_i=\pd_i+\gamma_i$.

\begin{Example}
For the case $N=2$, we present some examples of the operators
$K_{ij}$, $\pd_i$, and $\gamma_i$.

\begin{flalign*}
\begin{tikzpicture}
\node [align=center,align=center] at (-0.5,0) {$K_{11}$};
\draw (1.2,0) circle [radius=0.2];
\draw (0.4,0) circle [radius=0.2];
\draw (0.6,0)--(1,0);
\draw (1.36,0.1) .. controls (1.7,0.2) and (1.7,-0.2) ..  (1.36,-0.1);
\draw (0.24,0.1) .. controls (-0.1,0.2) and (-0.1,-0.2) ..  (0.24,-0.1);
\node [align=center,align=center] at (1.2,0) {$0$};
\node [align=center,align=center] at (0.4,0) {$0$};
\node [above,align=center] at (0.6,0) {$1$};
\node [above,align=center] at (1,0) {$2$};
\node [above,align=center] at (0.24,0.1) {$1$};
\node [below,align=center] at (0.24,-0.1) {$1$};
\node [above,align=center] at (1.36,0.1) {$1$};
\node [below,align=center] at (1.36,-0.1) {$1$};
\draw (3.7,0) circle [radius=0.2];
\draw (2.9,0) circle [radius=0.2];
\draw (3.1,0)--(3.5,0);
\draw (2.5,0.15)--(2.74,0.1);
\draw (2.5,-0.15)--(2.74,-0.1);
\draw (3.86,0.1) .. controls (4.2,0.2) and (4.2,-0.2) ..  (3.86,-0.1);
\node [align=center,align=center] at (3.7,0) {$0$};
\node [align=center,align=center] at (2.9,0) {$0$};
\node [above,align=center] at (3.1,0) {$1$};
\node [above,align=center] at (3.5,0) {$2$};
\node [above,align=center] at (2.6,0.1) {$1$};
\node [below,align=center] at (2.6,-0.1) {$1$};
\node [above,align=center] at (3.86,0.1) {$1$};
\node [below,align=center] at (3.86,-0.1) {$1$};
\node [align=center,align=center] at (2,0) {$=$};
\node [align=center,align=center] at (4.5,0) {$+$};
\draw (3.7+2.4,0) circle [radius=0.2];
\draw (2.9+2.4,0) circle [radius=0.2];
\draw (3.1+2.4,0)--(3.5+2.4,0);
\draw (3.86+2.4,0.1)--(4.1+2.4,0.15);
\draw (3.86+2.4,-0.1)--(4.1+2.4,-0.15);
\draw (2.74+2.4,0.1) .. controls (2.4+2.4,0.2) and (2.4+2.4,-0.2) ..  (2.74+2.4,-0.1);
\node [align=center,align=center] at (3.7+2.4,0) {$0$};
\node [align=center,align=center] at (2.9+2.4,0) {$0$};
\node [above,align=center] at (3.1+2.4,0) {$1$};
\node [above,align=center] at (3.5+2.4,0) {$2$};
\node [above,align=center] at (2.7+2.4,0.1) {$1$};
\node [below,align=center] at (2.7+2.4,-0.1) {$1$};
\node [above,align=center] at (3.96+2.4,0.1) {$1$};
\node [below,align=center] at (3.96+2.4,-0.1) {$1$};
\end{tikzpicture},&&
\end{flalign*}

\begin{flalign*}
\begin{tikzpicture}
\node [align=center,align=center] at (-0.5,0) {$K_{12}$};
\draw (1.2,0) circle [radius=0.2];
\draw (0.4,0) circle [radius=0.2];
\draw (0.6,0)--(1,0);
\draw (1.36,0.1) .. controls (1.7,0.2) and (1.7,-0.2) ..  (1.36,-0.1);
\draw (0.24,0.1) .. controls (-0.1,0.2) and (-0.1,-0.2) ..  (0.24,-0.1);
\node [align=center,align=center] at (1.2,0) {$0$};
\node [align=center,align=center] at (0.4,0) {$0$};
\node [above,align=center] at (0.6,0) {$1$};
\node [above,align=center] at (1,0) {$2$};
\node [above,align=center] at (0.24,0.1) {$1$};
\node [below,align=center] at (0.24,-0.1) {$1$};
\node [above,align=center] at (1.36,0.1) {$1$};
\node [below,align=center] at (1.36,-0.1) {$1$};
\draw (3.8,0) circle [radius=0.2];
\draw (2.8,0) circle [radius=0.2];
\draw (3.0,0)--(3.2,0);
\draw (3.4,0)--(3.6,0);
\draw (3.96,0.1) .. controls (4.3,0.2) and (4.3,-0.2) ..  (3.96,-0.1);
\draw (2.64,0.1) .. controls (2.3,0.2) and (2.3,-0.2) ..  (2.64,-0.1);
\node [align=center,align=center] at (3.8,0) {$0$};
\node [align=center,align=center] at (2.8,0) {$0$};
\node [above,align=center] at (3.1,0) {$1$};
\node [above,align=center] at (3.5,0) {$2$};
\node [above,align=center] at (2.64,0.1) {$1$};
\node [below,align=center] at (2.64,-0.1) {$1$};
\node [above,align=center] at (3.96,0.1) {$1$};
\node [below,align=center] at (3.96,-0.1) {$1$};
\node [align=center,align=center] at (2,0) {$=$};
\end{tikzpicture};&&
\end{flalign*}

\begin{flalign*}
\begin{tikzpicture}
\node [align=center,align=center] at (0.4,0) {$\partial_{1}$};
\draw (1.2,0) circle [radius=0.2];
\draw (1.4,0)--(1.7,0);
\draw (1.04,0.1) .. controls (0.7,0.2) and (0.7,-0.2) ..  (1.04,-0.1);
\node [align=center,align=center] at (1.2,0) {$0$};
\node [above,align=center] at (1.04,0.1) {$2$};
\node [below,align=center] at (1.04,-0.1) {$2$};
\node [align=center,above] at (1.7,0) {$1$};
\node [align=center,right] at (1.9,0) {$=$};
\draw (3,0) circle [radius=0.2];
\draw (3.17,0.1)--(3.4,0.15);
\draw (3.17,-0.1)--(3.4,-0.15);
\draw (2.84,0.1) .. controls (2.5,0.2) and (2.5,-0.2) ..  (2.84,-0.1);
\node [align=center,align=center] at (3,0) {$0$};
\node [above,align=center] at (2.84,0.1) {$2$};
\node [below,align=center] at (2.84,-0.1) {$2$};
\node [align=center,right] at (3.4,0.15) {$1$};
\node [align=center,right] at (3.4,-0.15) {$1$};
\node [align=center,right] at (3.8,0) {$+$};
\draw (5,0) circle [radius=0.2];
\draw (4.5,0)--(4.8,0);
\draw (5.18,0.07)--(5.62,0.07);
\draw (5.18,-0.07)--(5.62,-0.07);
\draw (6,0)--(6.3,0);
\draw (5.8,0) circle [radius=0.2];
\node [align=center,align=center] at (5,0) {$0$};
\node [align=center,align=center] at (5.8,0) {$0$};
\node [above,align=center] at (4.5,0) {$1$};
\node [above,align=center] at (6.3,0) {$1$};
\node [above,align=center] at (5.18,0.07) {$1$};
\node [above,align=center] at (5.62,0.07) {$2$};
\node [below,align=center] at (5.18,-0.07) {$1$};
\node [below,align=center] at (5.62,-0.07) {$2$};
\node [align=center,align=center] at (6.7,0) {$+$};
\draw (7.5,0) circle [radius=0.2];
\draw (7,0)--(7.3,0);
\draw (7.68,0.07)--(8.12,0.07);
\draw (7.68,-0.07)--(8.12,-0.07);
\draw (8.5,0)--(8.8,0);
\draw (8.3,0) circle [radius=0.2];
\node [align=center,align=center] at (7.5,0) {$0$};
\node [align=center,align=center] at (8.3,0) {$0$};
\node [above,align=center] at (7,0) {$1$};
\node [above,align=center] at (8.8,0) {$1$};
\node [above,align=center] at (7.68,0.07) {$2$};
\node [above,align=center] at (8.12,0.07) {$2$};
\node [below,align=center] at (7.68,-0.07) {$2$};
\node [below,align=center] at (8.12,-0.07) {$2$};
\node [align=center,align=center] at (9.2,0) {$+$};
\node [align=center,align=center] at (9.5,0) {$2$};
\draw (10.3,0) circle [radius=0.2];
\draw (9.8,0)--(10.1,0);
\draw (10.48,0.07)--(10.92,0.07);
\draw (10.48,-0.07)--(10.92,-0.07);
\draw (11.3,0)--(11.6,0);
\draw (11.1,0) circle [radius=0.2];
\node [align=center,align=center] at (10.3,0) {$0$};
\node [align=center,align=center] at (11.1,0) {$0$};
\node [above,align=center] at (9.8,0) {$1$};
\node [above,align=center] at (11.6,0) {$1$};
\node [above,align=center] at (10.48,0.07) {$1$};
\node [above,align=center] at (10.92,0.07) {$2$};
\node [below,align=center] at (10.48,-0.07) {$2$};
\node [below,align=center] at (10.92,-0.07) {$2$};
\end{tikzpicture};&&
\end{flalign*}

\begin{flalign*}
\begin{tikzpicture}
\node [align=center,align=center] at (0.5,0) {$\gamma_1$};
\draw (1.2,0) circle [radius=0.2];
\draw (1.4,0)--(1.7,0);
\draw (1.04,0.1) .. controls (0.7,0.2) and (0.7,-0.2) ..  (1.04,-0.1);
\node [align=center,align=center] at (1.2,0) {$0$};
\node [above,align=center] at (1.04,0.1) {$2$};
\node [below,align=center] at (1.04,-0.1) {$2$};
\node [align=center,above] at (1.7,0) {$1$};
\node [align=center,right] at (1.8,0) {$=$};
\draw (3.5,0) circle [radius=0.2];
\draw (2.8,0) circle [radius=0.2];
\draw (3,0)--(3.3,0);
\draw (3.66,0.1)--(3.9,0.15);
\draw (3.66,-0.1)--(3.9,-0.15);
\draw (2.64,0.1) .. controls (2.3,0.2) and (2.3,-0.2) ..  (2.64,-0.1);
\node [align=center,align=center] at (3.5,0) {$0$};
\node [align=center,align=center] at (2.8,0) {$0$};
\node [above,align=center] at (2.64,0.1) {$2$};
\node [below,align=center] at (2.64,-0.1) {$2$};
\node [above,align=center] at (3,0) {$1$};
\node [above,align=center] at (3.3,0) {$1$};
\node [align=center,right] at (3.9,0.15) {$1$};
\node [align=center,right] at (3.9,-0.15) {$1$};
\node [align=center,align=center] at (4.5,0) {$+$};
\draw (6,0) circle [radius=0.2];
\draw (5.3,0) circle [radius=0.2];
\draw (5.5,0)--(5.8,0);
\draw (6.16,0.1)--(6.4,0.15);
\draw (6.16,-0.1)--(6.4,-0.15);
\draw (5.14,0.1) .. controls (4.8,0.2) and (4.8,-0.2) ..  (5.14,-0.1);
\node [align=center,align=center] at (6,0) {$0$};
\node [align=center,align=center] at (5.3,0) {$0$};
\node [above,align=center] at (5.14,0.1) {$2$};
\node [below,align=center] at (5.14,-0.1) {$2$};
\node [above,align=center] at (5.5,0) {$1$};
\node [above,align=center] at (5.8,0) {$2$};
\node [align=center,right] at (6.4,0.15) {$1$};
\node [align=center,right] at (6.4,-0.15) {$1$};
\node [align=center,align=center] at (7,0) {$+$};
\draw (8.5,0) circle [radius=0.2];
\draw (7.8,0) circle [radius=0.2];
\draw (8,0)--(8.3,0);
\draw (8.66,0.1)--(8.9,0.15);
\draw (8.66,-0.1)--(8.9,-0.15);
\draw (7.64,0.1) .. controls (7.3,0.2) and (7.3,-0.2) ..  (7.64,-0.1);
\node [align=center,align=center] at (8.5,0) {$0$};
\node [align=center,align=center] at (7.8,0) {$0$};
\node [above,align=center] at (7.64,0.1) {$2$};
\node [below,align=center] at (7.64,-0.1) {$2$};
\node [above,align=center] at (8,0) {$2$};
\node [above,align=center] at (8.3,0) {$1$};
\node [align=center,right] at (8.9,0.15) {$1$};
\node [align=center,right] at (8.9,-0.15) {$1$};
\node [align=center,align=center] at (9.5,0) {$+$};
\draw (11,0) circle [radius=0.2];
\draw (10.3,0) circle [radius=0.2];
\draw (10.5,0)--(10.8,0);
\draw (11.16,0.1)--(11.4,0.15);
\draw (11.16,-0.1)--(11.4,-0.15);
\draw (10.14,0.1) .. controls (9.8,0.2) and (9.8,-0.2) ..  (10.14,-0.1);
\node [align=center,align=center] at (11,0) {$0$};
\node [align=center,align=center] at (10.3,0) {$0$};
\node [above,align=center] at (10.14,0.1) {$2$};
\node [below,align=center] at (10.14,-0.1) {$2$};
\node [above,align=center] at (10.5,0) {$2$};
\node [above,align=center] at (10.8,0) {$2$};
\node [align=center,right] at (11.4,0.15) {$1$};
\node [align=center,right] at (11.4,-0.15) {$1$};
\end{tikzpicture}.&&
\end{flalign*}

\end{Example}

For a fixed $N$,
let $\cG_{g,n}(N)$ be the set of all stable graphs
of genus $g$ with $n$ external edges,
labelled by $\{1,2,\cdots,N\}$;
and $\cG_{g,n}^c(N)$ be the subset consisting of connected graphs.
Then the abstract free energy can be generalized to
\be
\widehat{\cF}_g:=\sum_{\Gamma\in\cG_{g,0}^c(N)}\frac{1}{|\Aut(\Gamma)|}\Gamma.
\ee

Denote by $\cG_{g;l_1,\cdots,l_N}^c(N)\subset\cG_{g,n}^c(N)$ the subset
consisting of all connected labelled
stable graphs of genus $g$, with $l_j$ external edges
labelled by $j$ for every $j\in\{1,\cdots,N\}$
(we have $l_1+\cdots+\l_N=n$).
Define
\be
\wcF_{g;l_1,\cdots,l_N}:=
\sum_{\Gamma\in \cG_{g;l_1,\cdots,l_N}^c(N)}
\frac{1}{|\Aut(\Gamma)|}\Gamma,
\ee
for $2g-2+\sum\limits_{j=1}^N l_j>0$,
then Lemma \ref{lem-original-D} and Theorem \ref{thm-original-rec}
can be generalized to:

\begin{Lemma}\label{lem-original-N-rec}
(\cite[Lemma 3.1]{wz})
For every $2g-2+\sum\limits_{j=1}^N l_j>0$,
we have
\be
\cD_j\wcF_{g;l_1,\cdots,l_N}=(l_j+1)\cdot\wcF_{g;l_1,\cdots,l_j+1,\cdots,l_N}.
\ee
\end{Lemma}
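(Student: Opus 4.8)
The plan is to prove Lemma \ref{lem-original-N-rec} as the label-decorated version of Lemma \ref{lem-original-D} (\cite[Lemma 2.1]{wz}): the argument there goes through verbatim once one keeps track of the indices carried by the half-edges, with $\cD_j=\pd_j+\gamma_j$ playing the role of $\cD$ but inserting a new external edge labelled $j$. Both sides are formal $\bQ$-linear combinations of graphs in $\cG^c_{g;l_1,\dots,l_j+1,\dots,l_N}(N)$, so it suffices to fix such a graph $\Gamma'$ and match the coefficient of $\Gamma'$ on the two sides. On the right it is $(l_j+1)/|\Aut(\Gamma')|$; on the left, since $\cD_j\wcF_{g;l_1,\dots,l_N}=\sum_{\Gamma}\frac{1}{|\Aut(\Gamma)|}\cD_j\Gamma$ with $\Gamma$ ranging over $\cG^c_{g;l_1,\dots,l_N}(N)$, it is $\sum_{\Gamma}\frac{1}{|\Aut(\Gamma)|}\cdot\#\{\text{insertion data on }\Gamma\text{ producing }\Gamma'\}$, where an \emph{insertion datum} means an insertion site for $\cD_j$ together with the free choice of labels on the internal edge(s) newly created by the $\gamma_j$-term or the second part of $\pd_j$ (so that each insertion datum produces a single graph).

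First I would set up an explicit inverse to the insertion operations. Given $\Gamma'$ and a choice $\ell$ of one of its $l_j+1$ external edges labelled $j$, attached to a vertex $v$, define $\Gamma'/\ell$ as follows: if $v$ is not a genus-$0$ trivalent vertex, delete $\ell$ (reversing the first part of $\pd_j$); if $v$ is a genus-$0$ trivalent vertex, one is forced to \emph{contract} $v$ instead, since deleting $\ell$ would leave a bivalent genus-$0$ vertex — if the other two half-edges at $v$ are both external, merge them by re-attaching to the neighbour of $v$ a single external edge carrying their common non-$j$ label (reversing $\gamma_j$); if at least one is internal, merge the two corresponding internal edges into one (reversing the second part of $\pd_j$). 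Because $2g-2+\sum_i l_i>0$, the graph $\Gamma'/\ell$ is again connected, of genus $g$, stable, with $l_i$ legs labelled $i$ for $i\neq j$ and $l_j$ legs labelled $j$: the only configurations where the reduction would be unstable are a bare genus-$0$ trivalent vertex with three legs and a genus-$1$ univalent vertex with one leg, i.e.\ $(g,\sum_i l_i)\in\{(0,2),(1,0)\}$ after reduction, precisely the cases excluded by the hypothesis. Conversely, each insertion datum on a graph $\Gamma\in\cG^c_{g;l_1,\dots,l_N}(N)$ produces a graph with a distinguished new $j$-leg, and the two constructions are mutually inverse; this identification is $\Aut$-equivariant, so $|\Aut(\Gamma',\ell)|=|\mathrm{Stab}_{\Aut(\Gamma)}(\text{datum})|$ where $\Gamma=\Gamma'/\ell$.

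The coefficient matching is then orbit counting. Summing over the $l_j+1$ legs of $\Gamma'$ labelled $j$, grouped into $\Aut(\Gamma')$-orbits and using $|\mathrm{Stab}_{\Aut(\Gamma')}(\ell)|=|\Aut(\Gamma')|/|\mathrm{orbit}(\ell)|$, gives
\[
\textstyle\sum_{\ell}\frac{1}{|\Aut(\Gamma',\ell)|}=\sum_{\text{orbits }O}\frac{|O|}{|\Aut(\Gamma')|}=\frac{l_j+1}{|\Aut(\Gamma')|}.
\]
Re-indexing this same sum by the bijection above over pairs (graph, insertion datum) and grouping into $\Aut(\Gamma)$-orbits yields $\sum_{\Gamma}\frac{1}{|\Aut(\Gamma)|}\cdot\#\{\text{insertion data on }\Gamma\text{ producing }\Gamma'\}$, which is exactly the coefficient of $\Gamma'$ in $\cD_j\wcF_{g;l_1,\dots,l_N}$. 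Comparing the two expressions proves the identity for every $\Gamma'$, hence the Lemma.

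I expect the only genuinely delicate point to be checking that $\Gamma'\mapsto\Gamma'/\ell$ is well defined and bijective in all cases: that a trivalent genus-$0$ carrier vertex always forces the contraction (so there is no ambiguity between the $\pd_j$- and $\gamma_j$-type reductions), that no unstable graph is ever produced — which, as noted, reduces to the two excluded index values — and that the free labels on the internal edges created by the $\gamma_j$- and second $\pd_j$-operations are correctly accounted for, so that distinct insertion \emph{data}, not merely distinct sites, correspond to distinct pairs $(\Gamma',\ell)$. This is routine but must be carried out case by case, exactly as in the proof of Lemma \ref{lem-original-D} in \cite{wz}.
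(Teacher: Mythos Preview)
The paper does not prove this lemma; it is quoted from \cite[Lemma 3.1]{wz} as a preliminary, so there is no in-paper argument to compare against. Your orbit-counting strategy is the natural one and is presumably what \cite{wz} carries out for the unlabelled Lemma~\ref{lem-original-D}.

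That said, your case analysis for the inverse map $(\Gamma',\ell)\mapsto\Gamma'/\ell$ is mis-stated. When the carrier vertex $v$ is trivalent of genus $0$, the three possibilities for the two half-edges at $v$ other than $\ell$ are: (i) both external --- then $\Gamma'$ consists of the single vertex $v$, which forces $(g,\sum_i l_i)=(0,2)$ after removing $\ell$ and is excluded by hypothesis; (ii) one external with some label $k$ and one internal to a neighbour $w$ --- this is the $\gamma_j$ case, and the reversal deletes $v$, $\ell$, and the internal edge and re-attaches a $k$-leg directly to $w$; (iii) both internal --- this is the second-$\pd_j$ case, and the reversal merges the two internal edges into one. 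You described case (i) as the $\gamma_j$ reversal (but then $v$ has no neighbour, and there is no ``common non-$j$ label'' to speak of), and lumped (ii) and (iii) together under ``at least one is internal'' as the $\pd_j$ reversal (but in (ii) there is only one internal edge, so ``merge the two internal edges'' is meaningless). Once the cases are relabelled as above, your bijection and the stability check go through.

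A minor point on the counting: in your displayed identity $\sum_{\ell}1/|\Aut(\Gamma',\ell)|=\sum_{O}|O|/|\Aut(\Gamma')|$, the left-hand sum must be taken over $\Aut(\Gamma')$-orbit \emph{representatives}, not over all $l_j+1$ legs; summing literally over all legs would give $\sum_{O}|O|^{2}/|\Aut(\Gamma')|$ instead. Your prose (``grouped into orbits'') suggests you intend the former, but writing $\sum_\ell$ after announcing a sum over the $l_j+1$ legs is misleading and should be fixed.
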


\begin{Theorem}(\cite[Theorem 3.1, Theorem 3.2]{wz})
\label{thm-original-N-rec}
For every $2g-2+\sum\limits_{j=1}^N l_j>0$, we have
\begin{equation*}
\begin{split}
&K_{ij}\widehat{\cF}_{g;l_1,\cdots,l_N}=\cD_i\cD_j\widehat{\cF}_{g-1;l_1,\cdots,l_N}+
\sum_{\substack{g_1+g_2=g\\p_k+q_k=l_k}}
\cD_i\widehat{\cF}_{g_1;p_1,\cdots,p_N}\cD_j\widehat{\cF}_{g_2;q_1,\cdots,q_N},
\quad i\not=j;\\
&K_{ii}\widehat{\cF}_{g;l_1,\cdots,l_N}=\half\biggl(
\cD_i\cD_i\widehat{\cF}_{g-1;l_1,\cdots,l_N}+
\sum_{\substack{g_1+g_2=g\\p_k+q_k=l_k}}
\cD_i\widehat{\cF}_{g_1;p_1,\cdots,p_N}\cD_i\widehat{\cF}_{g_2;q_1,\cdots,q_N}\biggr).
\end{split}
\end{equation*}
In particular,
for $l_1=\cdots=l_N=0$,
we have the following recursions for the free energy
of genus $g$ ($g\geq 2$):
\be
\begin{split}
&K_{ij}\widehat{\cF}_g=\cD_i\partial_j\widehat{\cF}_{g-1}+
\sum_{r=1}^{g-1}\partial_i\widehat{\cF}_r\partial_j\widehat{\cF}_{g-r},
\quad i\not=j;\\
&K_{ii}\widehat{\cF}_g=\frac{1}{2}\biggl(\cD_i\partial_i\widehat{\cF}_{g-1}
+\sum_{r=1}^{g-1}\partial_i\widehat{\cF}_r\partial_i\widehat{\cF}_{g-r}\biggr).
\end{split}
\ee
Here we use the convention
\ben
&&\pd_j \wcF_{1;l_1,\cdots,l_N}:=\wcF_{1;l_1,\cdots,l_j+1,\cdots,l_N},
\quad \text{for } l_1=\cdots=l_N=0;\\
&&\cD_j \wcF_{0;l_1,\cdots,l_N}= (l_j+1)\wcF_{0;l_1,\cdots,l_j+1,\cdots,l_N},
\quad \text{for } l_1+\cdots+l_N=2;\\
&&\cD_i\cD_j \wcF_{0;l_1,\cdots,l_N}:=(l_j+1)\cD_i \wcF_{0;l_1,\cdots,l_j+1,\cdots,l_N},
\quad \text{for } l_1+\cdots+l_N=1.\\
\een
\end{Theorem}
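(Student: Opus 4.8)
Parallel to the unlabelled recursion of Theorem \ref{thm-original-rec}, the plan is to first prove the ``unrolled'' form of the labelled recursion --- the analog for $N$ labels of Theorem \ref{original-rec-2} --- and then translate it into the stated $\cD$-form using Lemma \ref{lem-original-N-rec}. Concretely, I claim that for $2g-2+\sum_k l_k>0$ and $i\neq j$,
\begin{equation*}
\begin{split}
K_{ij}\wcF_{g;l_1,\cdots,l_N}={}&(l_i+1)(l_j+1)\,\wcF_{g-1;\cdots,l_i+1,\cdots,l_j+1,\cdots}\\
&+\sum_{\substack{g_1+g_2=g\\ p_k+q_k=l_k}}(p_i+1)(q_j+1)\,\wcF_{g_1;\cdots,p_i+1,\cdots}\;\wcF_{g_2;\cdots,q_j+1,\cdots},
\end{split}
\end{equation*}
while for $i=j$ the same identity holds with an overall factor $\tfrac12$, with $(l_i+1)(l_j+1)$ replaced by $(l_i+1)(l_i+2)$, and with the product ranging over ordered pairs and coefficient $(p_i+1)(q_i+1)$. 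Granting this, Lemma \ref{lem-original-N-rec} applied twice to the first term (using that incrementing $l_j$ leaves $l_i$ unchanged when $i\neq j$) and once to each factor of each product term rewrites the right-hand side precisely as $\cD_i\cD_j\wcF_{g-1;l}+\sum\cD_i\wcF_{g_1;p}\,\cD_j\wcF_{g_2;q}$ (resp.\ the $i=j$ version), which is the assertion of the theorem.

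The heart of the matter is a cutting/gluing correspondence. By definition $K_{ij}\wcF_{g;l}=\sum_{[\Gamma]}\frac{1}{|\Aut\Gamma|}\sum_e\Gamma_e$, the inner sum over internal edges $e$ of $\Gamma$ carrying the labels $\{i,j\}$ and $\Gamma_e$ being $\Gamma$ with $e$ cut open into one external edge labelled $i$ and one labelled $j$. Such a $\Gamma_e$ is either connected of genus $g-1$ (when $e$ is non-separating) or a disjoint union of two connected stable graphs whose genera and label-counts add up as required (when $e$ is separating). Conversely, from a connected graph $\Delta$ of genus $g-1$ with a distinguished external $i$-edge $h_i$ and a distinguished external $j$-edge $h_j$, or from a product $\Delta_1\sqcup\Delta_2$ with $h_i$ in the first factor and $h_j$ in the second, gluing $h_i$ to $h_j$ produces a pair $(\Gamma,e)$ of the type summed over on the left, and cutting inverts gluing. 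Gluing changes no vertex valence or genus, so stability is preserved in both directions.

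Coefficients are matched by organizing the sums according to automorphism groups: collecting the inner sum over $\Aut\Gamma$-orbits of edges gives $K_{ij}\wcF_{g;l}=\sum_{[(\Gamma,e)]}\frac{1}{|\Aut(\Gamma,e)|}\Gamma_e$, where $\Aut(\Gamma,e)$ is the group of automorphisms of $\Gamma$ fixing $e$; likewise $(l_i+1)(l_j+1)\wcF_{g-1;l+e_i+e_j}$ and $(p_i+1)(q_j+1)\wcF_{g_1;p+e_i}\wcF_{g_2;q+e_j}$ are the sums over isomorphism classes of the corresponding pointed graphs $(\Delta,h_i,h_j)$ and $((\Delta_1,h_i),(\Delta_2,h_j))$ with weights $1/|\Aut(\cdot)|$, the numerical factors being the numbers of ways to place the distinguished external edges. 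Since the correspondence of the previous paragraph identifies these pointed objects compatibly with their automorphism groups --- in the separating case $\Aut(\Gamma,e)=\Aut(\Delta_1,h_i)\times\Aut(\Delta_2,h_j)$, since the two half-edges of $e$ bear distinct labels and so cannot be interchanged --- each graph $\Gamma_e$ occurs with the same coefficient on both sides, which proves the unrolled identity for $i\neq j$.

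I expect the main obstacle to be the automorphism bookkeeping in two special situations. First, when $i=j$ the two half-edges of a cut edge both carry the label $i$, so an automorphism of $\Gamma$ may swap them; correspondingly one must distinguish an \emph{unordered} pair of distinct $i$-external edges on the target, yielding $\binom{l_i+2}{2}=\tfrac12(l_i+1)(l_i+2)$ choices, matching the coefficient of $\tfrac12\cD_i\cD_i\wcF_{g-1;l}$. Second, a separating cut with $i=j$ can create two isomorphic pointed components, so $\Aut(\Gamma,e)$ acquires an extra swap; this is exactly offset by the factor $\tfrac12$ in front of the sum over \emph{ordered} pairs $(g_1,p),(g_2,q)$. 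Both cases are handled uniformly by the orbit-counting identity above. Finally, the free energy recursions are the specialization $l_1=\cdots=l_N=0$; one substitutes $l_k=0$ and checks that the conventions stated in the theorem for $\pd_j\wcF_{1;0,\cdots,0}$, for $\cD_j\wcF_{0;\cdots}$ of total valence $2$, and for $\cD_i\cD_j\wcF_{0;\cdots}$ of total valence $1$ are exactly the values forced if the cutting/gluing correspondence is to remain valid when one of the resulting pieces would otherwise be an unstable genus-$0$ graph with one or two half-edges, or a would-be empty genus-$1$ component; these finitely many boundary configurations are verified by hand.
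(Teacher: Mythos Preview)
The paper does not actually prove this theorem; it is quoted from \cite[Theorem 3.1, Theorem 3.2]{wz} as a recalled result in the Preliminaries section, so there is no in-paper argument to compare against directly. That said, your approach is the natural labelled extension of the unlabelled proof strategy behind Theorems \ref{original-rec-2} and \ref{thm-original-rec}: establish the unrolled form by a cutting/gluing bijection with orbit--stabilizer bookkeeping, then invoke Lemma \ref{lem-original-N-rec} to repackage the numerical prefactors as $\cD$-operators. This is exactly the route taken in \cite{wz}, and your handling of the two delicate points --- the potential half-edge swap when $i=j$, and the extra $\bZ/2$ when a separating $\{i,i\}$-edge yields isomorphic pointed components --- is correct. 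The one place I would tighten is your final sentence: rather than saying the boundary cases are ``verified by hand,'' note explicitly that the stated conventions are \emph{definitions} chosen precisely so that Lemma \ref{lem-original-N-rec} extends formally to the unstable inputs, which is all that is needed for the translation step; no separate verification of the cutting/gluing bijection is required there, since stability of both sides is automatic once $2g-2+\sum l_k>0$.
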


\subsection{Realization of the abstract QFT by Feynman rules}
\label{sec-pre-realization}

In this subsection we recall the realization of this abstract quantum
field theory introduced in \cite[\S 4]{wz}.
Given a sequence of functions $\{F_{g;l_1,\cdots,l_N}(t)\}_{2g-2+\sum l_j>0}$
and a choice of propagator
$\kappa=(\kappa_{ij})_{1\leq i,j\leq N}$,
we assign Feynman rules to our stable graphs, and in such a way
we are able to construct a new sequence $\wF_{g;l_1,\cdots,l_N}(t,\kappa)$.

Fix a positive integer $N$.
and let $\kappa=(\kappa_{ij}(t))$ be a non-degenerate symmetric matrix
of size $N\times N$, where $t=(t_1,\cdots,t_N)$.
Let $F_{g;l_1,\cdots,l_N}(t)$ be a sequence of holomorphic functions
in $t$ for every $2g-2+\sum\limits_{j=1}^N l_j>0$.
We assign the following Feynman rule to a stable graph $\Gamma\in\cG_{g,n}^c(N)$:
\be\label{eqn:FR}
\Gamma \mapsto \omega_\Gamma = \prod_{v\in V(\Gamma)} \omega_v \cdot
\prod_{e\in E(\Gamma)} \omega_e,
\ee
where $V(\Gamma)$ is the set of vertices of $\Gamma$,
and $E(\Gamma)$ is the set of internal edges of $\Gamma$.
The weight of a vertex $v\in V(\Gamma)$ is defined to be
\be
\omega_v:=F_{g_v;\val_1(v),\cdots,\val_N(v)}(t),
\ee
where $g_v$ is the genus associated to $v$,
and $\val_j(v)$ is the number of half-edges labelled by $j$ incident at $v$.
The weight of an internal edge $e$ is defined to be
\ben
\omega_e=\kappa_{j_1(e)j_2(e)},
\een
where $j_1(e)$, $j_2(e)$ are the two labels of the internal edge $e$.

Then the abstract free energy $\wcF_g(t,\kappa)$
of genus $g$ ($g\geq 2$) is realized by
\be\label{realization-FE}
\wF_g(t,\kappa)=\sum_{\Gamma\in\cG_{g,0}^c(N)}\frac{1}{|\Aut(\Gamma)|}\omega_\Gamma,
\ee
and the abstract $n$-point function $\wcF_{g;l_1,\cdots,l_N}$
($2g-2+\sum\limits_{j=1}^N l_j>0$) is realized by
\be\label{realization-npt-real-N}
\wF_{g;l_1,\cdots,l_N}(t,\kappa)=\sum_{\Gamma\in\cG_{g;l_1,\cdots,l_N}^c(N)}
\frac{1}{|\Aut(\Gamma)|}\omega_\Gamma.
\ee

This realization of the abstract free energies
can be represented by a formal Gaussian integral as follows:

\begin{Theorem}\label{thm-Gaussian}
(\cite[Theorem 4.1]{wz})
Define a partition function $\widehat{Z}(t,\kappa)$ to be
\be\label{eq-thm-Gaussian}
\begin{split}
\widehat{Z}(t,\kappa)=\frac{1}{(2\pi \lambda^2)^\frac{N}{2}}
\int \exp\biggl\{&
\sum_{2g-2+\sum\limits_{j=1}^N l_j>0}\lambda^{2g-2}\cdot
\frac{F_{g;l_1,\cdots,l_N}(t)}{l_1!\cdots l_N!}
\cdot\prod_{j=1}^{N}(\eta_j-t_j)^{l_j}\\
&\qquad-\frac{\lambda^{-2}}{2}\cdot(\eta-t)^T\kappa^{-1}
(\eta-t)\biggr\}d\eta,
\end{split}
\ee
then we have
\ben
\frac{\widehat{Z}(t,\kappa)}{\sqrt{\det(\kappa)}}
=\exp\biggl(\sum_{g=2}^{\infty}\lambda^{2g-2}\wF_g(t,\kappa)\biggr).
\een
\end{Theorem}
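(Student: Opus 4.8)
The plan is to carry out the standard expansion of the formal Gaussian integral \eqref{eq-thm-Gaussian} into a sum over Feynman graphs, keeping careful track of the overall normalization and of the stability conditions; the one genuinely delicate point is the bookkeeping of automorphism factors.

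First I would translate the integration variable by setting $\xi=\eta-t$, so that \eqref{eq-thm-Gaussian} becomes $\widehat Z(t,\kappa)=\frac{1}{(2\pi\lambda^2)^{N/2}}\int\exp\{S_{\mathrm{int}}(\xi)\}\exp\{-\tfrac{\lambda^{-2}}{2}\xi^{T}\kappa^{-1}\xi\}\,d\xi$, where $S_{\mathrm{int}}(\xi)=\sum_{2g-2+\sum l_j>0}\lambda^{2g-2}\frac{F_{g;l_1,\dots,l_N}(t)}{l_1!\cdots l_N!}\prod_{j=1}^{N}\xi_j^{l_j}$. By definition this formal Gaussian integral is the value on $\exp S_{\mathrm{int}}$ of the $\bQ[[\lambda]]$-linear functional $\langle\,\cdot\,\rangle$ computed by Wick's theorem with two-point function $\langle\xi_i\xi_j\rangle=\lambda^{2}\kappa_{ij}$, and $\langle 1\rangle=\sqrt{\det\kappa}$ is the usual Gaussian normalization, which is precisely the factor divided out on the left-hand side. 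So it suffices to prove
\[
\langle\exp S_{\mathrm{int}}\rangle \;=\; \sqrt{\det\kappa}\cdot\exp\Bigl(\sum_{g=2}^{\infty}\lambda^{2g-2}\wF_g(t,\kappa)\Bigr).
\]

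Next I would expand $\exp S_{\mathrm{int}}=\sum_{m\geq0}\tfrac1{m!}S_{\mathrm{int}}^{m}$ and expand each of the $m$ factors over the vertex data $(g_v;\val_1(v),\dots,\val_N(v))$ with $2g_v-2+\sum_j\val_j(v)>0$. Applying Wick's theorem to the monomial $\prod_v\prod_j\xi_j^{\val_j(v)}$ gives a sum over perfect matchings of the set of half-edge instances --- one instance for each $\xi_j$-factor at each vertex --- a matched pair carrying labels $i,j$ contributing $\lambda^2\kappa_{ij}$. A vertex tuple together with such a matching is precisely a stable graph $\Gamma$ with no external edges and with $\{1,\dots,N\}$-labelled half-edges, equipped with an ordering of $V(\Gamma)$ and with the half-edges at each vertex made distinguishable. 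Its total $\lambda$-weight is $\sum_v(2g_v-2)+2|E(\Gamma)|=2\bigl(\sum_v g_v+|E(\Gamma)|-|V(\Gamma)|\bigr)$, which for a graph with $c$ connected components equals $2g(\Gamma)-2c$ since $h^1(\Gamma)=|E(\Gamma)|-|V(\Gamma)|+c$; the remaining factor is exactly the Feynman weight $\omega_\Gamma$ of \eqref{eqn:FR}.

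It then remains to collect terms by isomorphism type, which is where I expect the main work to lie. The group $S_m\times\prod_{v}\bigl(S_{\val_1(v)}\times\cdots\times S_{\val_N(v)}\bigr)$ acts on the configurations (ordered vertex tuple plus matching of half-edge instances) with a fixed underlying isomorphism class $\Gamma$, with stabilizer $\Aut(\Gamma)$, so by orbit--stabilizer there are exactly $m!\,\prod_v\val_1(v)!\cdots\val_N(v)!/|\Aut(\Gamma)|$ of them; combined with the prefactor $\tfrac1{m!}\prod_v(\val_1(v)!\cdots\val_N(v)!)^{-1}$ built into the expansion, the net coefficient of $\Gamma$ is $1/|\Aut(\Gamma)|$. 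Verifying this carefully --- in particular the effect of loops and multiple edges on $|\Aut(\Gamma)|$ --- is the part I regard as the main obstacle. Granting it, $\langle\exp S_{\mathrm{int}}\rangle/\sqrt{\det\kappa}=\sum_\Gamma\frac{1}{|\Aut(\Gamma)|}\lambda^{2g(\Gamma)-2c}\omega_\Gamma$, the sum over all stable graphs with no external edges. Since $\omega_\Gamma$ and the $\lambda$-weight are multiplicative over connected components and $g(\Gamma)$ is additive, the exponential formula yields $\sum_\Gamma\frac{\lambda^{2g(\Gamma)-2c}}{|\Aut(\Gamma)|}\omega_\Gamma=\exp\bigl(\sum_{\Gamma\ \mathrm{connected}}\frac{\lambda^{2g(\Gamma)-2}}{|\Aut(\Gamma)|}\omega_\Gamma\bigr)$; grouping the connected graphs by genus, using \eqref{realization-FE}, and noting that a connected stable graph with no external edges has $g(\Gamma)\geq2$ (so only powers $\lambda^{2g-2}$ with $g\geq2$ occur), we obtain the stated identity.
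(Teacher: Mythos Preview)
The paper does not actually prove this theorem; it is quoted verbatim from \cite[Theorem~4.1]{wz} in the preliminaries section, with no argument given here. Your proposal is the standard Wick--expansion derivation (shift variables, expand $\exp S_{\mathrm{int}}$, pair half-edges via Wick's theorem, collect by isomorphism type using orbit--stabilizer, then apply the exponential formula and the observation that connected stable graphs with no external edges have genus $\geq 2$), and it is correct; this is essentially the argument one expects in the cited reference as well.
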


\subsection{Realization of the recursion relations}
\label{pre-realization-rec}

Let us recall the realization of the recursion relations in this subsection
(see \cite[\S 4]{wz}).
In order to realize the recursions in Lemma \ref{lem-original-N-rec}
and Theorem \ref{thm-original-N-rec},
first we need to realize the edge-cutting operator $K_{ij}$
and the edge-adding operators $\cD_i=\pd_i+\gamma_i$.

The edge-cutting operator $K_{ij}$ can be realized by
the partial derivative $\frac{\pd}{\pd\kappa_{ij}}$,
since using the Feynman rule \eqref{eqn:FR} we can easily check:
\ben
\pd_{\kappa_{ij}}\omega_\Gamma=\omega_{K_{ij}\Gamma}.
\een
The operator $\pd_i$ is realized by an operator
$\tilde\pd_i$ with the following properties
(see \cite[Definition 2.1]{wz2}):

\begin{itemize}
\item[(1)]
$\tilde\pd_i(F_{g;l_1,\cdots,l_N})=F_{g;l_1,\cdots,l_i+1,\cdots,l_N}$
for every $2g-2+\sum\limits_{j=1}^N l_j>0$;

\item[(2)]
$\tilde\pd_i(\kappa_{pq})=\sum\limits_{r,s=1}^N\kappa_{pr}\kappa_{qs}F_{0;l_1,\cdots,l_N}$,
where $l_1+\cdots+l_N=3$ and $l_j$ is the number of $j$ appearing in the sequence $(i,r,s)$.

\item[(3)]
$\tilde\pd_k$ acts on a polynomial in $F_{g;l_1,\cdots,l_N}$ and $\kappa_{pq}$ via
Leibniz rule.

\end{itemize}

\begin{Example}\label{eg-HAE-special}
Let $\{F_g(t_1,\cdots,t_N)\}$ be a sequence of holomorphic functions
in $t$ for $g\geq 0$. We take
\ben
F_{g;l_1,\cdots,l_N}:=\big(\frac{\pd}{\pd t_1}\big)^{l_1}\cdots
\big(\frac{\pd}{\pd t_N}\big)^{l_N}F_g(t),
\een
and choose the propagator $\kappa$ to be
\ben
\kappa=\biggl(C-Hess(F_0)\biggr)^{-1},
\een
where $C$ is either a constant symmetric matrix
or an anti-holomorphic symmetric matrix function in $t$
such that the matrix $(C-Hess(F_0))$ is invertible,
and $Hess(F_0)$ is the Hessian of $F_0(t_1,\cdots,t_N)$.
Then clearly $\pd_j$ can be realized by the partial derivative $\frac{\pd}{\pd t_j}$.
\end{Example}

The realization of $\gamma_i$ is more complicated.
Here we only describe the realization of the operator $\cD_i\pd_j$
acting on $\omega_\Gamma$ for $\Gamma\in\cG_{g,0}^c(N)$.
For a stable graph $\Gamma\in\cG_{g,0}^c(N)$,
we have
\ben
D_i\tilde\pd_j\omega_\Gamma:=\omega_{\cD_i\partial_j\Gamma}
=\tilde\pd_i\tilde\pd_j\omega_\Gamma+
\sum_{l,m=0}^{N}\tilde\pd_l\omega_\Gamma\cdot\kappa_{lm}
F_{0;l_1,\cdots,l_N},
\een
where $\tilde\pd_i$ and $\tilde\pd_j$ on the right-hand-side
are realizations of $\pd_i$ and $\pd_j$ respectively,
and $l_1+\cdots+l_N=3$ and $l_k$ is the number of $k$ appearing in the sequence $(m,i,j)$.

For the special case $N=1$,
the operator $\gamma$ acting on $\omega_\Gamma$ for $\Gamma\in\cG_{g,n}^c$
can be realized by simply multiplying by $n\cdot\kappa F_{0,3}$.

Now the quadratic recursion can be realized by:
\begin{Theorem}\label{thm-N-rec-realization} (\cite[Theorem 4.3]{wz})
Let $\tilde\pd_i$ and $D_i\tilde\pd_j$ be the operators defined above,
then for $g\geq 2$, we have the recursions
\ben
&&\partial_{\kappa_{ij}}\wF_g=D_i\tilde\pd_j\wF_{g-1}
+\sum_{r=1}^{g-1}\tilde\pd_i\wF_r\tilde\pd_{j}\wF_{g-r},\quad i\not=j;\\
&&\partial_{\kappa_{ii}}\wF_g=\frac{1}{2}\big(D_i\tilde\pd_i\wF_{g-1}
+\sum_{r=1}^{g-1}\tilde\pd_{i}\wF_r\tilde\pd_{i}\wF_{g-r}\big),
\een
for the free energy $\wF_g$ defined by \eqref{realization-FE}.
\end{Theorem}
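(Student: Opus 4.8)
The plan is to obtain this realized recursion by transporting the abstract recursion of Theorem \ref{thm-original-N-rec} through the Feynman-rule realization map. First I would package the Feynman rules \eqref{eqn:FR} into a single $\bQ$-linear map $\mathfrak r$ from the space of $\bQ$-combinations of (not necessarily connected) labelled stable graphs into the ring of formal expressions in the $F_{g;l_1,\cdots,l_N}$, the entries $\kappa_{ij}$ and $\lambda$, by setting $\mathfrak r(\Gamma)=\omega_\Gamma$ for connected $\Gamma$ and extending multiplicatively over disjoint unions, $\mathfrak r(\Gamma_1\sqcup\Gamma_2)=\mathfrak r(\Gamma_1)\mathfrak r(\Gamma_2)$. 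By the definitions \eqref{realization-FE} and \eqref{realization-npt-real-N}, $\mathfrak r$ carries $\wcF_{g;l_1,\cdots,l_N}$ to $\wF_{g;l_1,\cdots,l_N}$, the low-genus values being fixed by the same conventions as in Theorem \ref{thm-original-N-rec}. The point is then that $\mathfrak r$ intertwines the three abstract operators entering that recursion with their realizations: the identity $\pd_{\kappa_{ij}}\omega_\Gamma=\omega_{K_{ij}\Gamma}$ recalled in \S\ref{pre-realization-rec}, the fact that properties (1)--(3) force $\tilde\pd_i\omega_\Gamma=\omega_{\pd_i\Gamma}$, and the computation $D_i\tilde\pd_j\omega_\Gamma=\omega_{\cD_i\pd_j\Gamma}$ for $\Gamma\in\cG_{g,0}^c(N)$, combined with $\bQ$-linearity and the Leibniz rule for $\pd_{\kappa_{ij}}$, $\tilde\pd_i$ and $D_i\tilde\pd_j$, say exactly that $\mathfrak r\circ K_{ij}=\pd_{\kappa_{ij}}\circ\mathfrak r$, $\mathfrak r\circ\pd_i=\tilde\pd_i\circ\mathfrak r$ and $\mathfrak r\circ(\cD_i\pd_j)=(D_i\tilde\pd_j)\circ\mathfrak r$ on the relevant subspaces, with the automorphism weights $1/|\Aut(\Gamma)|$ already built into these graph-level identities.

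With this in hand, I would apply $\mathfrak r$ to both sides of the recursion of Theorem \ref{thm-original-N-rec} specialized to $l_1=\cdots=l_N=0$. The left-hand side $K_{ij}\wcF_g$ becomes $\pd_{\kappa_{ij}}\wF_g$. On the right-hand side the term $\cD_i\pd_j\wcF_{g-1}$ becomes $D_i\tilde\pd_j\wF_{g-1}$, using that $\wcF_{g-1}$ is a $\bQ$-combination of graphs in $\cG_{g-1,0}^c(N)$ (and the stated conventions when $g-1\le 1$), while each product $\pd_i\wcF_r\,\pd_j\wcF_{g-r}$ becomes $\tilde\pd_i\wF_r\cdot\tilde\pd_j\wF_{g-r}$ by multiplicativity of $\mathfrak r$ together with the intertwining for $\pd_i$; keeping the overall factor $\frac{1}{2}$ in the $i=i$ case, this yields precisely the two displayed recursions for $\wF_g$.

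What remains to be watched carefully --- though it is all routine and already present in \S\ref{pre-realization-rec} following \cite{wz, wz2} --- is, first, the low-genus boundary $g=2$, where $\wcF_1$, $\pd_j\wcF_1$ and $\cD_i\pd_j\wcF_0$ are only defined by convention, so one must check these conventions are $\mathfrak r$-compatible with the corresponding conventions for $\wF_1$, $\tilde\pd_j\wF_1$ and $D_i\tilde\pd_j\wF_0$; and second, the well-definedness and multiplicativity of $\mathfrak r$ on disconnected graphs. The one genuinely substantive ingredient is the intertwining identity $D_i\tilde\pd_j\omega_\Gamma=\omega_{\cD_i\pd_j\Gamma}$: since $\cD_i=\pd_i+\gamma_i$ and $\gamma_i$ attaches a trivalent genus-$0$ vertex to an external leg, one must verify that the correction term $\sum_{l,m}\tilde\pd_l\omega_\Gamma\cdot\kappa_{lm}F_{0;l_1,\cdots,l_N}$ exactly reproduces the effect of $\gamma_i$ on every external leg created by $\pd_j$, including all combinatorial factors. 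This matching of symmetry factors is the subtlest point, but it has already been established in \S\ref{pre-realization-rec}, so here it is only invoked.
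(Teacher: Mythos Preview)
Your proposal is correct and follows exactly the approach implicit in the paper's setup. Note that the paper does not actually give a proof of this theorem here: it is stated in the preliminaries as a result recalled from \cite[Theorem~4.3]{wz}, and \S\ref{pre-realization-rec} only sketches the ingredients (the intertwining identities $\pd_{\kappa_{ij}}\omega_\Gamma=\omega_{K_{ij}\Gamma}$, the defining properties of $\tilde\pd_i$, and the formula for $D_i\tilde\pd_j\omega_\Gamma$) that make the realization map $\mathfrak r$ carry the abstract recursion of Theorem~\ref{thm-original-N-rec} to the realized one. Your write-up is a faithful reconstruction of that argument, including the correct identification of the one delicate point, namely matching the $\gamma_i$ contribution against the correction term in $D_i\tilde\pd_j$.
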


\section{Dual Diagrammatics of Stable Graphs and Dual Abstract Quantum Field Theory}
\label{sec:dual-diagrammatics}

In this section we reverse the construction of $\wcF_{g,n}$ recalled in last section
and present a dual construction of $\cF_{g,n}$ from $\wcF_{g,n}$
based on the duality of the dual diagrammatics of stable graphs.
We will focus on the one-dimensional case (i.e., $N=1$) in this section.

\subsection{Dotted stable graphs}
\label{sec:Dotted}

In order to reverse the construction of $\{\wF_{g,n}\}$ from $\{F_g\}$,
one needs to reverse the role that they play.
To construct $\{\wF_{g,n}\}$ from $\{F_g\}$,
one first finds $\{F_{g,n}\}$ and uses them to obtain Feynman rules
that associate weights to the vertices of some suitable Feynman graphs,
each $\wF_{g,n}$ is then given by a summation over stable graphs of type $(g,n)$.
To a vertex of genus $g$ of valence $n$,
we associate a contribution of $F_{g,n}$.
If the reversed construction exists,
then $\wF_{g,n}$ should be used to associate a weight to a vertex of genus $g$
and valence $n$, and $F_{g,n}$ should be given by a summation over stable graphs of type $(g,n)$.
Since both the original construction and the reversed construction involve
stable graphs,
we will make a distinction between their graphic representations.
The stable graphs in the original construction will be represented as graphs with solid lines,
while the stable graphs in the reversed construction will be  represented as graphs with dotted lines.
They will be referred to as the ordinary stable graphs and the dotted stable graphs respectively.

Now we present the reversed construction.
We first represent $\wcF_{g,n}$ as $\frac{1}{|\Aut \Gamma^\vee|} \Gamma^\vee$,
here $\Gamma^\vee$ is a dotted stable graph consisting of one vertex of genus $g$ and $n$ external edges.
It is clear that in this case $|\Aut(\Gamma^\vee)| = n!$,
and  this graph is still a stable graph.
It will be called a dotted stable vertex.

\begin{Example}
\label{eg-new-vert}
Using \cite[Example 2.5]{wz},
we present some explicit examples of dotted stable graphs.
\begin{flalign}\label{eg1-eq1}
\begin{tikzpicture}
\draw [dotted,thick](0,0) circle [radius=0.2];
\node [align=center,align=center] at (0,0) {$0$};
\draw [dotted,thick](-0.5,0)--(-0.2,0);
\draw [dotted,thick](0.16,0.1)--(0.5,0.15);
\draw [dotted,thick](0.16,-0.1)--(0.5,-0.15);
\node [align=center,align=center] at (0.8,0) {$=$};
\draw (0+1.6,0) circle [radius=0.2];
\node [align=center,align=center] at (0+1.6,0) {$0$};
\draw (-0.5+1.6,0)--(-0.2+1.6,0);
\draw (0.16+1.6,0.1)--(0.5+1.6,0.15);
\draw (0.16+1.6,-0.1)--(0.5+1.6,-0.15);
\end{tikzpicture},&&
\end{flalign}

\begin{flalign}\label{eg1-eq2}
\begin{tikzpicture}
\draw [dotted,thick](1.6-1.6,0) circle [radius=0.2];
\draw [dotted,thick](1.1-1.6,0.15)--(1.44-1.6,0.1);
\draw [dotted,thick](1.1-1.6,-0.15)--(1.44-1.6,-0.1);
\draw [dotted,thick](1.76-1.6,0.1)--(2.1-1.6,0.15);
\draw [dotted,thick](1.76-1.6,-0.1)--(2.1-1.6,-0.15);
\node [align=center,align=center] at (1.6-1.6,0) {$0$};
\node [align=center,align=center] at (0.8,0) {$=$};
\draw (1.6+0.1,0) circle [radius=0.2];
\draw (1.1+0.1,0.15)--(1.44+0.1,0.1);
\draw (1.1+0.1,-0.15)--(1.44+0.1,-0.1);
\draw (1.76+0.1,0.1)--(2.1+0.1,0.15);
\draw (1.76+0.1,-0.1)--(2.1+0.1,-0.15);
\node [align=center,align=center] at (1.6+0.1,0) {$0$};
\node [align=center,align=center] at (2.6,0) {$+3$};
\draw (1.6+1.9,0) circle [radius=0.2];
\draw (2.2+1.9,0) circle [radius=0.2];
\draw (1.1+1.9,0.15)--(1.44+1.9,0.1);
\draw (1.1+1.9,-0.15)--(1.44+1.9,-0.1);
\draw (1.1+1.9,0.15)--(1.44+1.9,0.1);
\draw (2.36+1.9,0.1)--(2.7+1.9,0.15);
\draw (2.36+1.9,-0.1)--(2.7+1.9,-0.15);
\draw (1.8+1.9,0)--(2+1.9,0);
\node [align=center,align=center] at (1.6+1.9,0) {$0$};
\node [align=center,align=center] at (2.2+1.9,0) {$0$};
\end{tikzpicture},&&
\end{flalign}

\begin{flalign}\label{eg1-eq3}
\begin{tikzpicture}
\draw [dotted,thick](1-1.6,0) circle [radius=0.2];
\draw [dotted,thick](1.2-1.6,0)--(1.5-1.6,0);
\node [align=center,align=center] at (1-1.6,0) {$1$};
\node [align=center,align=center] at (0.3,0) {$=$};
\draw (1,0) circle [radius=0.2];
\draw (1.2,0)--(1.5,0);
\node [align=center,align=center] at (1,0) {$1$};
\node [align=center,align=center] at (2,0) {$+\frac{1}{2}$};
\draw (1+1.8,0) circle [radius=0.2];
\draw (1.2+1.8,0)--(1.5+1.8,0);
\draw (0.84+1.8,0.1) .. controls (0.5+1.8,0.2) and (0.5+1.8,-0.2) ..  (0.84+1.8,-0.1);
\node [align=center,align=center] at (1+1.8,0) {$0$};
\end{tikzpicture},&&
\end{flalign}

\begin{flalign}\label{eg1-eq4}
\begin{tikzpicture}
\draw [dotted,thick](1-2+0.4,0) circle [radius=0.2];
\draw [dotted,thick](1.2-2+0.4,0)--(1.5-2+0.4,0);
\draw [dotted,thick](0.5-2+0.4,0)--(0.8-2+0.4,0);
\node [align=center,align=center] at (1-2+0.4,0) {$1$};
\node [align=center,align=center] at (0.3-0.1,0) {$=$};
\draw (1,0) circle [radius=0.2];
\draw (1.2,0)--(1.5,0);
\draw (0.5,0)--(0.8,0);
\node [align=center,align=center] at (1,0) {$1$};
\node [align=center,align=center] at (2,0) {$+\frac{1}{2}$};
\draw (1+1.8,0) circle [radius=0.2];
\draw (1.17+1.8,0.1)--(1.4+1.8,0.15);
\draw (1.17+1.8,-0.1)--(1.4+1.8,-0.15);
\draw (0.84+1.8,0.1) .. controls (0.5+1.8,0.2) and (0.5+1.8,-0.2) ..  (0.84+1.8,-0.1);
\node [align=center,align=center] at (1+1.8,0) {$0$};
\node [align=center,align=center] at (3.7,0) {$+$};
\draw (1+3.9,0) circle [radius=0.2];
\draw (0.4+3.9,0) circle [radius=0.2];
\draw (1.17+3.9,0.1)--(1.4+3.9,0.15);
\draw (1.17+3.9,-0.1)--(1.4+3.9,-0.15);
\draw (0.6+3.9,0)--(0.8+3.9,0);
\node [align=center,align=center] at (1+3.9,0) {$0$};
\node [align=center,align=center] at (0.4+3.9,0) {$1$};
\node [align=center,align=center] at (5.8,0) {$+\frac{1}{2}$};
\draw (1+6.2,0) circle [radius=0.2];
\draw (0.4+6.2,0) circle [radius=0.2];
\draw (1.17+6.2,0.1)--(1.4+6.2,0.15);
\draw (1.17+6.2,-0.1)--(1.4+6.2,-0.15);
\draw (0.6+6.2,0)--(0.8+6.2,0);
\node [align=center,align=center] at (1+6.3-0.1,0) {$0$};
\node [align=center,align=center] at (0.4+6.3-0.1,0) {$0$};
\draw (0.24+6.3-0.1,0.1) .. controls (-0.1+6.3-0.1,0.2) and (-0.1+6.3-0.1,-0.2) ..  (0.24+6.3-0.1,-0.1);
\node [align=center,align=center] at (8.2-0.1,0) {$+\frac{1}{2}$};
\draw (1+8.1-0.1,0) circle [radius=0.2];
\draw (0.5+8.1-0.1,0)--(0.8+8.1-0.1,0);
\draw (1.18+8.1-0.1,0.07)--(1.42+8.1-0.1,0.07);
\draw (1.18+8.1-0.1,-0.07)--(1.42+8.1-0.1,-0.07);
\draw (1.8+8.1-0.1,-0)--(2.1+8.1-0.1,0);
\draw (1.6+8.1-0.1,0) circle [radius=0.2];
\node [align=center,align=center] at (1+8.1-0.1,0) {$0$};
\node [align=center,align=center] at (1.6+8.1-0.1,0) {$0$};
\end{tikzpicture},&&
\end{flalign}

\be\label{eg1-eq5}
\begin{split}
&\begin{tikzpicture}
\draw [dotted,thick](0.3,0) circle [radius=0.2];
\node [align=center,align=center] at (0.3,0) {$2$};
\node [align=center,align=center] at (0.1+0.7,0) {$=$};
\draw (1+0.3,0) circle [radius=0.2];
\node [align=center,align=center] at (1+0.3,0) {$2$};
\node [align=center,align=center] at (1.6+0.2+0.1,0) {$+\frac{1}{2}$};
\draw (1+1.4+0.2+0.1,0) circle [radius=0.2];
\draw (0.84+1.4+0.2+0.1,0.1) .. controls (0.5+1.4+0.2+0.1,0.2) and (0.5+1.4+0.2+0.1,-0.2) ..  (0.84+1.4+0.2+0.1,-0.1);
\node [align=center,align=center] at (1+1.4+0.2+0.1,0) {$1$};
\node [align=center,align=center] at (3+0.2+0.2,0) {$+\frac{1}{2}$};
\draw (1+2.6+0.2+0.2,0) circle [radius=0.2];
\draw (1.2+2.6+0.2+0.2,0)--(1.4+2.6+0.2+0.2,0);
\draw (1.6+2.6+0.2+0.2,0) circle [radius=0.2];
\node [align=center,align=center] at (1+2.6+0.2+0.2,0) {$1$};
\node [align=center,align=center] at (1.6+2.6+0.2+0.2,0) {$1$};
\node [align=center,align=center] at (5+0.3,0) {$+\frac{1}{8}$};
\draw (1+4.8+0.3,0) circle [radius=0.2];
\draw (0.84+4.8+0.3,0.1) .. controls (0.5+4.8+0.3,0.2) and (0.5+4.8+0.3,-0.2) ..  (0.84+4.8+0.3,-0.1);
\draw (1.16+4.8+0.3,0.1) .. controls (1.5+4.8+0.3,0.2) and (1.5+4.8+0.3,-0.2) ..  (1.16+4.8+0.3,-0.1);
\node [align=center,align=center] at (1+4.8+0.3,0) {$0$};
\node [align=center,align=center] at (6.6+0.4,0) {$+\frac{1}{2}$};
\draw (1+6.8+0.4,0) circle [radius=0.2];
\draw (0.4+6.8+0.4,0) circle [radius=0.2];
\draw (0.6+6.8+0.4,0)--(0.8+6.8+0.4,0);
\draw (1.16+6.8+0.4,0.1) .. controls (1.5+6.8+0.4,0.2) and (1.5+6.8+0.4,-0.2) ..  (1.16+6.8+0.4,-0.1);
\node [align=center,align=center] at (1+6.8+0.4,0) {$0$};
\node [align=center,align=center] at (0.4+6.8+0.4,0) {$1$};
\node [align=center,align=center] at (8.6+0.5,0) {$+\frac{1}{8}$};
\draw (1+9+0.5,0) circle [radius=0.2];
\draw (0.4+9+0.5,0) circle [radius=0.2];
\draw (0.6+9+0.5,0)--(0.8+9+0.5,0);
\draw (1.16+9+0.5,0.1) .. controls (1.5+9+0.5,0.2) and (1.5+9+0.5,-0.2) ..  (1.16+9+0.5,-0.1);
\draw (0.24+9+0.5,0.1) .. controls (-0.1+9+0.5,0.2) and (-0.1+9+0.5,-0.2) ..  (0.24+9+0.5,-0.1);
\node [align=center,align=center] at (1+9+0.5,0) {$0$};
\node [align=center,align=center] at (0.4+9+0.5,0) {$0$};
\end{tikzpicture}\qquad
\\
&\qquad\quad\begin{tikzpicture}
\node [align=center,align=center] at (10.6+0.2,0) {$+\frac{1}{12}$};
\draw (1+10.2+0.2,0) circle [radius=0.2];
\draw (1.2+10.2+0.2,0)--(1.4+10.2+0.2,0);
\draw (1.16+10.2+0.2,0.1)--(1.44+10.2+0.2,0.1);
\draw (1.16+10.2+0.2,-0.1)--(1.44+10.2+0.2,-0.1);
\draw (1.6+10.2+0.2,0) circle [radius=0.2];
\node [align=center,align=center] at (1+10.2+0.2,0) {$0$};
\node [align=center,align=center] at (1.6+10.2+0.2,0) {$0$};
\end{tikzpicture}.
\end{split}
\ee

In particular, we see that the dotted vertex of genus $0$ and valence three
is the same as the ordinary vertex of genus $0$ and valence three.
\end{Example}

Note the dotted stable vertex has the same number of external edges as the ordinary stable graphs
in the summation.
We say that the external edges of each ordinary graphs in the summation are inherited from those of the dotted graphs.

Now to each  dotted stable graph,
we associate a  linear combination
of ordinary stable graphs in the following way.

Given a dotted stable graph $\Gamma^\vee$,
we need to first cut all the internal edges to reduce to the case of the dotted vertex,
rewrite each of the dotted vertices in terms of a summation over ordinary stable graphs.
This gives a bunch of disconnected stable graphs,
with a connected component from each of the dotted vertex.
Next, in order to obtain a summation over some connected stable graphs,
we need to suitably glue some half edges so that we have the same number
of external edges for the dotted stable graph and the ordinary stable graphs
that we sum over.
Since the external edges of the stable graphs in the summation for each dotted vertex
are inherited from the dotted vertex,
this gives us a rule how they can be glued together:
They can be glued together if and only if the ancestor half-edges in the dotted graphs are glued together.

More precisely,
we first distinguish half-edges of $\Gamma^\vee$ by
giving every half-edge a name.
We require that different half-edges have different names.
Next, we cut off every internal edges of $\Gamma^\vee$ to obtain
some dotted stable vertices with all external edges
distinguished by different names.

Next we express dotted stable vertices with names as a summation
over ordinary stable graphs whose external edges have inherited names.
Denote by $\Gamma'^\vee$ a dotted stable vertex of genus $g$ and valence $n$,
whose external edges have different names.
Then we define
\be\label{def-mark}
\Gamma'^\vee=\sum_{\Gamma'}\frac{1}{|\Aut(\Gamma')|}\Gamma',
\ee
where the sum is over all possible stable graphs $\Gamma'$ of type $(g,n)$,
whose external edges have the same $n$ different names as the external edges of  $\Gamma'^\vee$.
The automorphisms of such $\Gamma'$ should preserves the names of
all the external edges,
hence they fix each of the external edges.

Then we glue the external edges of these ordinary stable graphs together
in such a way that two external edges are joined together if and only if
their ancestors  in $\Gamma^\vee$ are joined together.
Finally, we forget all these names,
and multiply by a factor $(-1)^{|E^\vee(\Gamma^\vee)|}$ to the whole expression,
where $E^\vee(\Gamma^\vee)$ is the set of all dotted internal edges of $\Gamma^\vee$.
In this way we obtain a linear combination
of ordinary stable graphs associated to
the dotted stable graph $\Gamma^\vee$.
In the following we will abuse the notations by using $\Gamma^\vee$
to denote both a dotted graph and this linear combination.

\begin{Example}
Now let us give some examples.
First consider the following graph,
and we give names $a,b,c,d$ to its half-edges then cut the internal edge:
\ben
\begin{tikzpicture}
\draw [dotted,thick](1.6-1.6,0) circle [radius=0.2];
\draw [dotted,thick](-0.16,0.1) .. controls (-0.5,0.2) and (-0.5,-0.2) ..  (-0.16,-0.1);
\draw [dotted,thick](1.76-1.6,0.1)--(2.1-1.6,0.15);
\draw [dotted,thick](1.76-1.6,-0.1)--(2.1-1.6,-0.15);
\node [align=center,align=center] at (1.6-1.6,0) {$0$};
\draw [dotted,thick](1.6-1.6+3,0) circle [radius=0.2];
\draw [dotted,thick](-0.16+3,0.1) .. controls (-0.5+3,0.2) and (-0.5+3,-0.2) ..  (-0.16+3,-0.1);
\draw [dotted,thick](1.76-1.6+3,0.1)--(2.1-1.6+3,0.15);
\draw [dotted,thick](1.76-1.6+3,-0.1)--(2.1-1.6+3,-0.15);
\node [align=center,align=center] at (1.6-1.6+3,0) {$0$};
\node [above,align=center] at (-0.16+3,0.1,0) {$a$};
\node [below,align=center] at (-0.16+3,-0.1,0) {$b$};
\node [align=center,right] at (2.1-1.6+3,0.15) {$c$};
\node [align=center,right] at (2.1-1.6+3,-0.15) {$d$};
\draw [dotted,thick](1.6-1.6+6,0) circle [radius=0.2];
\draw [dotted,thick](5.84,0.1)--(5.5,0.15);
\draw [dotted,thick](5.84,-0.1)--(5.5,-0.15);
\draw [dotted,thick](1.76-1.6+6,0.1)--(2.1-1.6+6,0.15);
\draw [dotted,thick](1.76-1.6+6,-0.1)--(2.1-1.6+6,-0.15);
\node [align=center,align=center] at (1.6-1.6+6,0) {$0$};
\node [align=center,left] at (5.5,0.15) {$a$};
\node [align=center,left] at (5.5,-0.15) {$b$};
\node [align=center,right] at (2.1-1.6+6,0.15) {$c$};
\node [align=center,right] at (2.1-1.6+6,-0.15) {$d$};
\node [align=center,right] at (1.3,0) {$\to$};
\node [align=center,right] at (4.2,0) {$\to$};
\end{tikzpicture}.
\een
Then by definition, the third graph above equals
\be\label{eg-mark-eq1}
\begin{tikzpicture}
\draw (0,0) circle [radius=0.2];
\draw (-0.16,0.1)--(-0.5,0.15);
\draw (-0.16,-0.1)--(-0.5,-0.15);
\draw (0.16,0.1)--(0.5,0.15);
\draw (0.16,-0.1)--(0.5,-0.15);
\node [align=center,align=center] at (0,0) {$0$};
\node [align=center,left] at (-0.5,0.15) {$a$};
\node [align=center,left] at (-0.5,-0.15) {$b$};
\node [align=center,right] at (0.5,0.15) {$c$};
\node [align=center,right] at (0.5,-0.15) {$d$};
\node [align=center,align=center] at (1.1,0) {$+$};
\draw (2.2,0) circle [radius=0.2];
\node [align=center,align=center] at (2.2,0) {$0$};
\draw (2.8,0) circle [radius=0.2];
\node [align=center,align=center] at (2.8,0) {$0$};
\draw (2.4,0)--(2.6,0);
\draw (1.7,0.15)--(2.04,0.1);
\draw (1.7,-0.15)--(2.04,-0.1);
\draw (2.96,0.1)--(3.3,0.15);
\draw (2.96,-0.1)--(3.3,-0.15);
\node [align=center,left] at (1.7,0.15) {$a$};
\node [align=center,left] at (1.7,-0.15) {$b$};
\node [align=center,right] at (3.3,0.15) {$c$};
\node [align=center,right] at (3.3,-0.15) {$d$};
\node [align=center,align=center] at (3.9,0) {$+$};
\draw (2.2+2.8,0) circle [radius=0.2];
\node [align=center,align=center] at (2.2+2.8,0) {$0$};
\draw (2.8+2.8,0) circle [radius=0.2];
\node [align=center,align=center] at (2.8+2.8,0) {$0$};
\draw (2.4+2.8,0)--(2.6+2.8,0);
\draw (1.7+2.8,0.15)--(2.04+2.8,0.1);
\draw (1.7+2.8,-0.15)--(2.04+2.8,-0.1);
\draw (2.96+2.8,0.1)--(3.3+2.8,0.15);
\draw (2.96+2.8,-0.1)--(3.3+2.8,-0.15);
\node [align=center,left] at (1.7+2.8,0.15) {$a$};
\node [align=center,left] at (1.7+2.8,-0.15) {$c$};
\node [align=center,right] at (3.3+2.8,0.15) {$b$};
\node [align=center,right] at (3.3+2.8,-0.15) {$d$};
\node [align=center,align=center] at (3.9+2.8,0) {$+$};
\draw (2.2+2.8+2.8,0) circle [radius=0.2];
\node [align=center,align=center] at (2.2+2.8+2.8,0) {$0$};
\draw (2.8+2.8+2.8,0) circle [radius=0.2];
\node [align=center,align=center] at (2.8+2.8+2.8,0) {$0$};
\draw (2.4+2.8+2.8,0)--(2.6+2.8*2,0);
\draw (1.7+2.8+2.8,0.15)--(2.04+2.8*2,0.1);
\draw (1.7+2.8+2.8,-0.15)--(2.04+2.8*2,-0.1);
\draw (2.96+2.8+2.8,0.1)--(3.3+2.8*2,0.15);
\draw (2.96+2.8+2.8,-0.1)--(3.3+2.8+2.8,-0.15);
\node [align=center,left] at (1.7+2.8+2.8,0.15) {$a$};
\node [align=center,left] at (1.7+2.8+2.8,-0.15) {$d$};
\node [align=center,right] at (3.3+2.8+2.8,0.15) {$b$};
\node [align=center,right] at (3.3+2.8+2.8,-0.15) {$c$};
\end{tikzpicture}.
\ee
Glue the edges named by $a$ and $b$ together,
then forget the names and multiply by $(-1)$,
we get
\ben
\begin{tikzpicture}
\draw [dotted,thick](1.6-1.6-0.2,0) circle [radius=0.2];
\draw [dotted,thick](-0.16-0.2,0.1) .. controls (-0.5-0.2,0.2) and (-0.5-0.2,-0.2) ..  (-0.16-0.2,-0.1);
\draw [dotted,thick](1.76-1.6-0.2,0.1)--(2.1-1.6-0.2,0.15);
\draw [dotted,thick](1.76-1.6-0.2,-0.1)--(2.1-1.6-0.2,-0.15);
\node [align=center,align=center] at (1.6-1.6-0.2,0) {$0$};
\node [align=center,align=center] at (0.8,0) {$=-$};
\draw (1.6+0.1,0) circle [radius=0.2];
\draw (1.54,0.1) .. controls (1.2,0.2) and (1.2,-0.2) ..  (1.54,-0.1);
\draw (1.76+0.1,0.1)--(2.1+0.1,0.15);
\draw (1.76+0.1,-0.1)--(2.1+0.1,-0.15);
\node [align=center,align=center] at (1.6+0.1,0) {$0$};
\node [align=center,align=center] at (2.7,0) {$-$};
\draw (1.7+2,0) circle [radius=0.2];
\draw (2.3+2,0) circle [radius=0.2];
\draw (3.54,0.1) .. controls (3.2,0.2) and (3.2,-0.2) ..  (3.54,-0.1);
\draw (2.46+2,0.1)--(2.8+2,0.15);
\draw (2.46+2,-0.1)--(2.8+2,-0.15);
\draw (1.9+2,0)--(2.1+2,0);
\node [align=center,align=center] at (1.7+2,0) {$0$};
\node [align=center,align=center] at (2.3+2,0) {$0$};
\node [align=center,align=center] at (5.3,0) {$-2$};
\draw (6.2,0) circle [radius=0.2];
\draw (5.7,0)--(6,0);
\draw (6.38,0.07)--(6.62,0.07);
\draw (6.38,-0.07)--(6.62,-0.07);
\draw (7,-0)--(7.3,0);
\draw (6.8,0) circle [radius=0.2];
\node [align=center,align=center] at (6.2,0) {$0$};
\node [align=center,align=center] at (6.8,0) {$0$};
\end{tikzpicture}.
\een

Now let us consider the following dotted stable graph with two internal edges:
\ben
\begin{tikzpicture}
\draw [dotted,thick](1.6-1.6,0) circle [radius=0.2];
\draw [dotted,thick](-0.16,0.1) .. controls (-0.5,0.2) and (-0.5,-0.2) ..  (-0.16,-0.1);
\draw [dotted,thick](0.16,0.1) .. controls (0.5,0.2) and (0.5,-0.2) ..  (0.16,-0.1);
\node [align=center,align=center] at (1.6-1.6,0) {$0$};
\draw [dotted,thick](1.6-1.6+3,0) circle [radius=0.2];
\draw [dotted,thick](-0.16+3,0.1) .. controls (-0.5+3,0.2) and (-0.5+3,-0.2) ..  (-0.16+3,-0.1);
\draw [dotted,thick](0.16+3,0.1) .. controls (0.5+3,0.2) and (0.5+3,-0.2) ..  (0.16+3,-0.1);
\node [align=center,align=center] at (1.6-1.6+3,0) {$0$};
\node [above,align=center] at (-0.16+3,0.1,0) {$a$};
\node [below,align=center] at (-0.16+3,-0.1,0) {$b$};
\node [above,align=center] at (0.16+3,0.1) {$c$};
\node [below,align=center] at (0.16+3,-0.1) {$d$};
\draw [dotted,thick](1.6-1.6+6,0) circle [radius=0.2];
\draw [dotted,thick](5.84,0.1)--(5.5,0.15);
\draw [dotted,thick](5.84,-0.1)--(5.5,-0.15);
\draw [dotted,thick](1.76-1.6+6,0.1)--(2.1-1.6+6,0.15);
\draw [dotted,thick](1.76-1.6+6,-0.1)--(2.1-1.6+6,-0.15);
\node [align=center,align=center] at (1.6-1.6+6,0) {$0$};
\node [align=center,left] at (5.5,0.15) {$a$};
\node [align=center,left] at (5.5,-0.15) {$b$};
\node [align=center,right] at (2.1-1.6+6,0.15) {$c$};
\node [align=center,right] at (2.1-1.6+6,-0.15) {$d$};
\node [align=center,right] at (1.3,0) {$\to$};
\node [align=center,right] at (4.2,0) {$\to$};
\end{tikzpicture}.
\een
Glue $a$ with $b$, and $c$ with $d$
in the expression \eqref{eg-mark-eq1},
then forget the names and multiply by $(-1)^2$,
we get
\ben
\begin{tikzpicture}
\draw [dotted,thick](1.6-1.6,0) circle [radius=0.2];
\draw [dotted,thick](-0.16,0.1) .. controls (-0.5,0.2) and (-0.5,-0.2) ..  (-0.16,-0.1);
\draw [dotted,thick](0.16,0.1) .. controls (0.5,0.2) and (0.5,-0.2) ..  (0.16,-0.1);
\node [align=center,align=center] at (1.6-1.6,0) {$0$};
\node [align=center,align=center] at (0.8,0) {$=$};
\draw (1.6+0.1,0) circle [radius=0.2];
\draw (1.54,0.1) .. controls (1.2,0.2) and (1.2,-0.2) ..  (1.54,-0.1);
\draw (1.86,0.1) .. controls (2.2,0.2) and (2.2,-0.2) ..  (1.86,-0.1);
\node [align=center,align=center] at (1.6+0.1,0) {$0$};
\node [align=center,align=center] at (2.8,0) {$+$};
\draw (1.7+2,0) circle [radius=0.2];
\draw (2.3+2,0) circle [radius=0.2];
\draw (3.54,0.1) .. controls (3.2,0.2) and (3.2,-0.2) ..  (3.54,-0.1);
\draw (4.46,0.1) .. controls (4.8,0.2) and (4.8,-0.2) ..  (4.46,-0.1);
\draw (1.9+2,0)--(2.1+2,0);
\node [align=center,align=center] at (1.7+2,0) {$0$};
\node [align=center,align=center] at (2.3+2,0) {$0$};
\node [align=center,align=center] at (5.3,0) {$+2$};
\draw (5.9,0) circle [radius=0.2];
\draw (6.06,0.1)--(6.34,0.1);
\draw (6.06,-0.1)--(6.34,-0.1);
\draw (6.1,0)--(6.3,0);
\draw (6.5,0) circle [radius=0.2];
\node [align=center,align=center] at (5.9,0) {$0$};
\node [align=center,align=center] at (6.5,0) {$0$};
\end{tikzpicture}.
\een

Let us see a third example containing two dotted vertices:
\ben
\begin{tikzpicture}
\draw [dotted,thick](0,0) circle [radius=0.2];
\node [align=center,align=center] at (0,0) {$1$};
\draw [dotted,thick](0.7,0) circle [radius=0.2];
\node [align=center,align=center] at (0.7,0) {$0$};
\draw [dotted,thick](0.2,0)--(0.5,0);
\draw [dotted,thick](0.86,0.1)--(1.2,0.15);
\draw [dotted,thick](0.86,-0.1)--(1.2,-0.15);
\draw [dotted,thick](0+3,0) circle [radius=0.2];
\node [align=center,align=center] at (0+3,0) {$1$};
\draw [dotted,thick](0.7+3,0) circle [radius=0.2];
\node [align=center,align=center] at (0.7+3,0) {$0$};
\draw [dotted,thick](0.2+3,0)--(0.5+3,0);
\draw [dotted,thick](0.86+3,0.1)--(1.2+3,0.15);
\draw [dotted,thick](0.86+3,-0.1)--(1.2+3,-0.15);
\node [above,align=center] at (0.2+3,0) {$a$};
\node [above,align=center] at (0.5+3,0) {$b$};
\node [align=center,right] at (1.2+3,0.15) {$c$};
\node [align=center,right] at (1.2+3,-0.15) {$d$};
\draw [dotted,thick](0+3+3,0) circle [radius=0.2];
\node [align=center,align=center] at (0+3+3,0) {$1$};
\draw [dotted,thick](0.7+3+4-0.2,0) circle [radius=0.2];
\node [align=center,align=center] at (0.7+3+4-0.2,0) {$0$};
\draw [dotted,thick](0.2+3+3,0)--(0.5+3+3,0);
\draw [dotted,thick](7.2-0.2,0)--(7.5-0.2,0);
\draw [dotted,thick](0.86+3+4-0.2,0.1)--(1.2+3+4-0.2,0.15);
\draw [dotted,thick](0.86+3+4-0.2,-0.1)--(1.2+3+4-0.2,-0.15);
\node [above,align=center] at (0.2+3+3+0.1,0) {$a$};
\node [above,align=center] at (0.5+3+4-0.3,0) {$b$};
\node [align=center,right] at (1.2+3+4-0.2,0.15) {$c$};
\node [align=center,right] at (1.2+3+4-0.2,-0.15) {$d$};
\node [align=center,right] at (1.6,0) {$\to$};
\node [align=center,right] at (4.8,0) {$\to$};
\end{tikzpicture}.
\een
The two dotted vertices with names on external edges are given by
\ben
\begin{tikzpicture}
\draw [dotted,thick](1-1.6,0) circle [radius=0.2];
\draw [dotted,thick](1.2-1.6,0)--(1.5-1.6,0);
\node [align=center,align=center] at (1-1.6,0) {$1$};
\node [above,align=center] at (-0.4+0.1,0) {$a$};
\node [align=center,align=center] at (0.3,0) {$=$};
\draw (1,0) circle [radius=0.2];
\draw (1.2,0)--(1.5,0);
\node [align=center,align=center] at (1,0) {$1$};
\node [above,align=center] at (1.2+0.1,0) {$a$};
\node [align=center,align=center] at (2,0) {$+\frac{1}{2}$};
\draw (1+1.8,0) circle [radius=0.2];
\draw (1.2+1.8,0)--(1.5+1.8,0);
\draw (0.84+1.8,0.1) .. controls (0.5+1.8,0.2) and (0.5+1.8,-0.2) ..  (0.84+1.8,-0.1);
\node [align=center,align=center] at (1+1.8,0) {$0$};
\node [above,align=center] at (3+0.1,0) {$a$};
\end{tikzpicture}
\een
and
\ben
\begin{tikzpicture}
\draw [dotted,thick](0-0.3,0) circle [radius=0.2];
\node [align=center,align=center] at (0-0.3,0) {$0$};
\draw [dotted,thick](-0.5-0.3,0)--(-0.2-0.3,0);
\draw [dotted,thick](0.16-0.3,0.1)--(0.5-0.3,0.15);
\draw [dotted,thick](0.16-0.3,-0.1)--(0.5-0.3,-0.15);
\node [above,align=center] at (-0.3-0.3,0) {$b$};
\node [align=center,right] at (0.5-0.3,0.15,0) {$c$};
\node [align=center,right] at (0.5-0.3,-0.15,0) {$d$};
\node [align=center,align=center] at (0.8,0) {$=$};
\draw (0+1.6,0) circle [radius=0.2];
\node [align=center,align=center] at (0+1.6,0) {$0$};
\draw (-0.5+1.6,0)--(-0.2+1.6,0);
\draw (0.16+1.6,0.1)--(0.5+1.6,0.15);
\draw (0.16+1.6,-0.1)--(0.5+1.6,-0.15);
\node [above,align=center] at (1.3,0) {$b$};
\node [align=center,right] at (0.5+1.6,0.15) {$c$};
\node [align=center,right] at (0.5+1.6,-0.15) {$d$};
\end{tikzpicture}.
\een
Glue the half edges $a$ and $b$,
then forget the names and multiply by $(-1)$,
we get
\ben
\begin{tikzpicture}
\draw [dotted,thick](0,0) circle [radius=0.2];
\node [align=center,align=center] at (0,0) {$1$};
\draw [dotted,thick](0.6,0) circle [radius=0.2];
\node [align=center,align=center] at (0.6,0) {$0$};
\draw [dotted,thick](0.2,0)--(0.4,0);
\draw [dotted,thick](0.76,0.1)--(1,0.15);
\draw [dotted,thick](0.76,-0.1)--(1,-0.15);
\node [align=center,align=center] at (1.5,0) {$=-$};
\draw (2.1,0) circle [radius=0.2];
\node [align=center,align=center] at (2.1,0) {$1$};
\draw (2.7,0) circle [radius=0.2];
\node [align=center,align=center] at (2.7,0) {$0$};
\draw (2.3,0)--(2.5,0);
\draw (2.86,0.1)--(3.1,0.15);
\draw (2.86,-0.1)--(3.1,-0.15);
\node [align=center,align=center] at (3.5,0) {$-\frac{1}{2}$};
\draw (4.4,0) circle [radius=0.2];
\node [align=center,align=center] at (4.4,0) {$0$};
\draw (5,0) circle [radius=0.2];
\node [align=center,align=center] at (5,0) {$0$};
\draw (4.24,0.1) .. controls (3.9,0.2) and (3.9,-0.2) ..  (4.24,-0.1);
\draw (4.6,0)--(4.8,0);
\draw (5.16,0.1)--(5.5,0.15);
\draw (5.16,-0.1)--(5.5,-0.15);
\end{tikzpicture}.
\een

\end{Example}

\begin{Example}
\label{eg-genus2-graph}
Explicit expressions for all possible dotted stable graphs of genus $2$
without external edges are listed here:

\begin{flalign*}
\begin{tikzpicture}
\draw [dotted,thick](0.3+0.2,0) circle [radius=0.2];
\node [align=center,align=center] at (0.3+0.2,0) {$2$};
\node [align=center,align=center] at (0.1+0.7+0.2,0) {$=$};
\draw (1+0.3+0.2,0) circle [radius=0.2];
\node [align=center,align=center] at (1+0.3+0.2,0) {$2$};
\node [align=center,align=center] at (1.6+0.2+0.1+0.2,0) {$+\frac{1}{2}$};
\draw (1+1.4+0.2+0.1+0.2,0) circle [radius=0.2];
\draw (0.84+1.4+0.2+0.1+0.2,0.1) .. controls (0.5+1.4+0.2+0.1+0.2,0.2) and (0.5+1.4+0.2+0.1+0.2,-0.2) ..  (0.84+1.4+0.2+0.1+0.2,-0.1);
\node [align=center,align=center] at (1+1.4+0.2+0.1+0.2,0) {$1$};
\node [align=center,align=center] at (3+0.2+0.2+0.1,0) {$+\frac{1}{2}$};
\draw (1+2.6+0.2+0.2+0.1,0) circle [radius=0.2];
\draw (1.2+2.6+0.2+0.2+0.1,0)--(1.4+2.6+0.2+0.2+0.1,0);
\draw (1.6+2.6+0.2+0.2+0.1,0) circle [radius=0.2];
\node [align=center,align=center] at (1+2.6+0.2+0.2+0.1,0) {$1$};
\node [align=center,align=center] at (1.6+2.6+0.2+0.2+0.1,0) {$1$};
\node [align=center,align=center] at (5+0.3,0) {$+\frac{1}{8}$};
\draw (1+4.8+0.3,0) circle [radius=0.2];
\draw (0.84+4.8+0.3,0.1) .. controls (0.5+4.8+0.3,0.2) and (0.5+4.8+0.3,-0.2) ..  (0.84+4.8+0.3,-0.1);
\draw (1.16+4.8+0.3,0.1) .. controls (1.5+4.8+0.3,0.2) and (1.5+4.8+0.3,-0.2) ..  (1.16+4.8+0.3,-0.1);
\node [align=center,align=center] at (1+4.8+0.3,0) {$0$};
\node [align=center,align=center] at (6.6+0.4,0) {$+\frac{1}{2}$};
\draw (1+6.8+0.4,0) circle [radius=0.2];
\draw (0.4+6.8+0.4,0) circle [radius=0.2];
\draw (0.6+6.8+0.4,0)--(0.8+6.8+0.4,0);
\draw (1.16+6.8+0.4,0.1) .. controls (1.5+6.8+0.4,0.2) and (1.5+6.8+0.4,-0.2) ..  (1.16+6.8+0.4,-0.1);
\node [align=center,align=center] at (1+6.8+0.4,0) {$0$};
\node [align=center,align=center] at (0.4+6.8+0.4,0) {$1$};
\node [align=center,align=center] at (8.6+0.5,0) {$+\frac{1}{8}$};
\draw (1+9+0.5,0) circle [radius=0.2];
\draw (0.4+9+0.5,0) circle [radius=0.2];
\draw (0.6+9+0.5,0)--(0.8+9+0.5,0);
\draw (1.16+9+0.5,0.1) .. controls (1.5+9+0.5,0.2) and (1.5+9+0.5,-0.2) ..  (1.16+9+0.5,-0.1);
\draw (0.24+9+0.5,0.1) .. controls (-0.1+9+0.5,0.2) and (-0.1+9+0.5,-0.2) ..  (0.24+9+0.5,-0.1);
\node [align=center,align=center] at (1+9+0.5,0) {$0$};
\node [align=center,align=center] at (0.4+9+0.5,0) {$0$};
\node [align=center,align=center] at (11.4,0) {$+\frac{1}{12}$};
\draw (1+10.2+0.2+0.6,0) circle [radius=0.2];
\draw (1.2+10.2+0.2+0.6,0)--(1.4+10.2+0.2+0.6,0);
\draw (1.16+10.2+0.2+0.6,0.1)--(1.44+10.2+0.2+0.6,0.1);
\draw (1.16+10.2+0.2+0.6,-0.1)--(1.44+10.2+0.2+0.6,-0.1);
\draw (1.6+10.2+0.2+0.6,0) circle [radius=0.2];
\node [align=center,align=center] at (1+10.2+0.2+0.6,0) {$0$};
\node [align=center,align=center] at (1.6+10.2+0.2+0.6,0) {$0$};
\end{tikzpicture},&&
\end{flalign*}

\begin{flalign*}
\begin{tikzpicture}
\draw [dotted,thick](0-0.5,0) circle [radius=0.2];
\node [align=center,align=center] at (0-0.5,0) {$1$};
\draw [dotted,thick](0.16-0.5,0.1) .. controls (0.5-0.5,0.2) and (0.5-0.5,-0.2) ..  (0.16-0.5,-0.1);
\node [align=center,align=center] at (0.8-0.3,0) {$=-$};
\draw (1.3,0) circle [radius=0.2];
\node [align=center,align=center] at (1.3,0) {$1$};
\draw (1.46,0.1) .. controls (1.8,0.2) and (1.8,-0.2) ..  (1.46,-0.1);
\node [align=center,align=center] at (2.1,0) {$-\frac{1}{2}$};
\draw (2.9,0) circle [radius=0.2];
\node [align=center,align=center] at (2.9,0) {$0$};
\draw (3.06,0.1) .. controls (3.4,0.2) and (3.4,-0.2) ..  (3.06,-0.1);
\draw (2.74,0.1) .. controls (2.4,0.2) and (2.4,-0.2) ..  (2.74,-0.1);
\node [align=center,align=center] at (3.7,0) {$-$};
\draw (4.2,0) circle [radius=0.2];
\draw (4.8,0) circle [radius=0.2];
\node [align=center,align=center] at (4.2,0) {$1$};
\node [align=center,align=center] at (4.8,0) {$0$};
\draw (4.96,0.1) .. controls (5.3,0.2) and (5.3,-0.2) ..  (4.96,-0.1);
\draw (4.4,0)--(4.6,0);
\node [align=center,align=center] at (5.6,0) {$-\frac{1}{2}$};
\draw (6.4,0) circle [radius=0.2];
\draw (7,0) circle [radius=0.2];
\node [align=center,align=center] at (6.4,0) {$0$};
\node [align=center,align=center] at (7,0) {$0$};
\draw (6.6,0)--(6.8,0);
\draw (6.24,0.1) .. controls (5.9,0.2) and (5.9,-0.2) ..  (6.24,-0.1);
\draw (7.16,0.1) .. controls (7.5,0.2) and (7.5,-0.2) ..  (7.16,-0.1);
\node [align=center,align=center] at (7.8,0) {$-\frac{1}{2}$};
\draw (8.4,0) circle [radius=0.2];
\draw (9,0) circle [radius=0.2];
\node [align=center,align=center] at (8.4,0) {$0$};
\node [align=center,align=center] at (9,0) {$0$};
\draw (8.6,0)--(8.8,0);
\draw (8.56,0.1)--(8.84,0.1);
\draw (8.56,-0.1)--(8.84,-0.1);
\end{tikzpicture},&&
\end{flalign*}

\begin{flalign*}
\begin{tikzpicture}
\draw [dotted,thick](0-0.5,0) circle [radius=0.2];
\node [align=center,align=center] at (0-0.5,0) {$1$};
\draw [dotted,thick](0.6-0.5,0) circle [radius=0.2];
\node [align=center,align=center] at (0.6-0.5,0) {$1$};
\draw [dotted,thick](0.2-0.5,0)--(0.4-0.5,0);
\node [align=center,align=center] at (1.1-0.2,0) {$=-$};
\draw (1.6,0) circle [radius=0.2];
\node [align=center,align=center] at (1.6,0) {$1$};
\draw (2.2,0) circle [radius=0.2];
\node [align=center,align=center] at (2.2,0) {$1$};
\draw (1.8,0)--(2,0);
\node [align=center,align=center] at (2.7,0) {$-$};
\draw (3.2,0) circle [radius=0.2];
\node [align=center,align=center] at (3.2,0) {$1$};
\draw (3.8,0) circle [radius=0.2];
\node [align=center,align=center] at (3.8,0) {$0$};
\draw (3.4,0)--(3.6,0);
\draw (3.96,0.1) .. controls (4.3,0.2) and (4.3,-0.2) ..  (3.96,-0.1);
\node [align=center,align=center] at (4.6,0) {$-\frac{1}{4}$};
\draw (5.4,0) circle [radius=0.2];
\draw (6,0) circle [radius=0.2];
\node [align=center,align=center] at (5.4,0) {$0$};
\node [align=center,align=center] at (6,0) {$0$};
\draw (5.6,0)--(5.8,0);
\draw (5.24,0.1) .. controls (4.9,0.2) and (4.9,-0.2) ..  (5.24,-0.1);
\draw (6.16,0.1) .. controls (6.5,0.2) and (6.5,-0.2) ..  (6.16,-0.1);
\end{tikzpicture},&&
\end{flalign*}

\begin{flalign*}
\begin{tikzpicture}
\draw [dotted,thick](0,0) circle [radius=0.2];
\node [align=center,align=center] at (0,0) {$1$};
\draw [dotted,thick](0.6,0) circle [radius=0.2];
\node [align=center,align=center] at (0.6,0) {$0$};
\draw [dotted,thick](0.2,0)--(0.4,0);
\draw [dotted,thick](0.76,0.1) .. controls (1.1,0.2) and (1.1,-0.2) ..  (0.76,-0.1);
\node [align=center,align=center] at (1.4,0) {$=$};
\draw (1.9,0) circle [radius=0.2];
\node [align=center,align=center] at (1.9,0) {$1$};
\draw (2.5,0) circle [radius=0.2];
\node [align=center,align=center] at (2.5,0) {$0$};
\draw (2.1,0)--(2.3,0);
\draw (2.66,0.1) .. controls (3,0.2) and (3,-0.2) ..  (2.66,-0.1);
\node [align=center,align=center] at (3.4,0) {$+\frac{1}{2}$};
\draw (4.2,0) circle [radius=0.2];
\node [align=center,align=center] at (4.2,0) {$0$};
\draw (4.8,0) circle [radius=0.2];
\node [align=center,align=center] at (4.8,0) {$0$};
\draw (4.04,0.1) .. controls (3.7,0.2) and (3.7,-0.2) ..  (4.04,-0.1);
\draw (4.4,0)--(4.6,0);
\draw (4.96,0.1) .. controls (5.3,0.2) and (5.3,-0.2) ..  (4.96,-0.1);
\end{tikzpicture},&&
\end{flalign*}

\begin{flalign*}
\begin{tikzpicture}
\draw [dotted,thick](1.6-1.6,0) circle [radius=0.2];
\draw [dotted,thick](-0.16,0.1) .. controls (-0.5,0.2) and (-0.5,-0.2) ..  (-0.16,-0.1);
\draw [dotted,thick](0.16,0.1) .. controls (0.5,0.2) and (0.5,-0.2) ..  (0.16,-0.1);
\node [align=center,align=center] at (1.6-1.6,0) {$0$};
\node [align=center,align=center] at (0.8,0) {$=$};
\draw (1.6,0) circle [radius=0.2];
\draw (1.44,0.1) .. controls (1.1,0.2) and (1.1,-0.2) ..  (1.44,-0.1);
\draw (1.76,0.1) .. controls (2.1,0.2) and (2.12,-0.2) ..  (1.76,-0.1);
\node [align=center,align=center] at (1.6,0) {$0$};
\node [align=center,align=center] at (2.4,0) {$+$};
\draw (3.2,0) circle [radius=0.2];
\draw (3.8,0) circle [radius=0.2];
\draw (3.04,0.1) .. controls (2.7,0.2) and (2.7,-0.2) ..  (3.04,-0.1);
\draw (3.96,0.1) .. controls (4.3,0.2) and (4.3,-0.2) ..  (3.96,-0.1);
\draw (3.4,0)--(3.6,0);
\node [align=center,align=center] at (3.2,0) {$0$};
\node [align=center,align=center] at (3.8,0) {$0$};
\node [align=center,align=center] at (4.7,0) {$+2$};
\draw (5.3,0) circle [radius=0.2];
\draw (5.46,0.1)--(5.74,0.1);
\draw (5.46,-0.1)--(5.74,-0.1);
\draw (5.5,0)--(5.7,0);
\draw (5.9,0) circle [radius=0.2];
\node [align=center,align=center] at (5.3,0) {$0$};
\node [align=center,align=center] at (5.9,0) {$0$};
\end{tikzpicture},&&
\end{flalign*}

\begin{flalign*}
\begin{tikzpicture}
\draw [dotted,thick](0,0) circle [radius=0.2];
\draw [dotted,thick](0.6,0) circle [radius=0.2];
\node [align=center,align=center] at (0,0) {$0$};
\node [align=center,align=center] at (0.6,0) {$0$};
\draw [dotted,thick](0.2,0)--(0.4,0);
\draw [dotted,thick](-0.16,0.1) .. controls (-0.5,0.2) and (-0.5,-0.2) ..  (-0.16,-0.1);
\draw [dotted,thick](0.76,0.1) .. controls (1.1,0.2) and (1.1,-0.2) ..  (0.76,-0.1);
\node [align=center,align=center] at (1.4,0) {$=-$};
\draw (0+2.2,0) circle [radius=0.2];
\draw (0.6+2.2,0) circle [radius=0.2];
\node [align=center,align=center] at (0+2.2,0) {$0$};
\node [align=center,align=center] at (0.6+2.2,0) {$0$};
\draw (0.2+2.2,0)--(0.4+2.2,0);
\draw (-0.16+2.2,0.1) .. controls (-0.5+2.2,0.2) and (-0.5+2.2,-0.2) ..  (-0.16+2.2,-0.1);
\draw (0.76+2.2,0.1) .. controls (1.1+2.2,0.2) and (1.1+2.2,-0.2) ..  (0.76+2.2,-0.1);
\end{tikzpicture},&&
\end{flalign*}

\begin{flalign*}
\begin{tikzpicture}
\draw [dotted,thick](5.3-0.5,0) circle [radius=0.2];
\draw [dotted,thick](5.46-0.5,0.1)--(5.74-0.5,0.1);
\draw [dotted,thick](5.46-0.5,-0.1)--(5.74-0.5,-0.1);
\draw [dotted,thick](5.5-0.5,0)--(5.7-0.5,0);
\draw [dotted,thick](5.9-0.5,0) circle [radius=0.2];
\node [align=center,align=center] at (5.3-0.5,0) {$0$};
\node [align=center,align=center] at (5.9-0.5,0) {$0$};
\node [align=center,align=center] at (6.4-0.2,0) {$=-$};
\draw (5.3+1.6,0) circle [radius=0.2];
\draw (5.46+1.6,0.1)--(5.74+1.6,0.1);
\draw (5.46+1.6,-0.1)--(5.74+1.6,-0.1);
\draw (5.5+1.6,0)--(5.7+1.6,0);
\draw (5.9+1.6,0) circle [radius=0.2];
\node [align=center,align=center] at (5.3+1.6,0) {$0$};
\node [align=center,align=center] at (5.9+1.6,0) {$0$};
\end{tikzpicture}.&&
\end{flalign*}

\end{Example}

Note if $\Gamma^\vee$ is a dotted vertex,
then we have two different ways to express it as a summation over ordinary stable graphs.
The first way is given by \eqref{pre-abs-npt} which involves stable graphs
with no names on the external edges,
the second way is given by \eqref{def-mark}, which involves stable graphs with names on the
external edges,
up to sign and forgetting of the names.
These two definitions involve different kinds of stable graphs and hence different kinds
of automorphism groups.
The following Lemma shows that these two definitions match with each other.

\begin{Lemma}\label{lemma-mark-1}
Denote by $\Gamma$ an ordinary stable graph of type $(g,n)$ without names,
and $S_\Gamma$ the set of ordinary stable graph $\Gamma'$ with names on external edges,
such that $\Gamma$ can be obtained from $\Gamma'$ by forgetting all the names.
Then we have
\ben
\sum_{\Gamma'\in S_\Gamma}\frac{1}{|\Aut(\Gamma')|}
=\frac{n!}{|\Aut(\Gamma)|}.
\een
\end{Lemma}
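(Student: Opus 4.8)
The plan is to recognize the left-hand side as a \emph{mass formula} (orbit-counting identity) for the action of $\Aut(\Gamma)$ on the set of ways to name the external edges of $\Gamma$. Fix once and for all a set $X=\{x_1,\dots,x_n\}$ of $n$ distinct names, and let $\mathrm{Ext}(\Gamma)$ denote the set of external edges of $\Gamma$, so $|\mathrm{Ext}(\Gamma)|=n$. A \emph{naming} of $\Gamma$ is a bijection $\ell\colon \mathrm{Ext}(\Gamma)\to X$; there are exactly $n!$ of them. The group $\Aut(\Gamma)$ acts on the set $L$ of namings by $\phi\cdot\ell = \ell\circ\phi^{-1}$, where we use that every $\phi\in\Aut(\Gamma)$ induces a permutation of $\mathrm{Ext}(\Gamma)$.

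The first step is to set up a bijection between $S_\Gamma$ and the set of $\Aut(\Gamma)$-orbits on $L$. Given a naming $\ell$, the pair $(\Gamma,\ell)$ is a stable graph with names on its external edges whose underlying nameless graph is $\Gamma$, hence represents an element of $S_\Gamma$; two namings $\ell,\ell'$ give isomorphic named graphs if and only if there is $\phi\in\Aut(\Gamma)$ with $\ell'=\ell\circ\phi^{-1}$, i.e.\ if and only if $\ell,\ell'$ lie in the same $\Aut(\Gamma)$-orbit. Conversely every element of $S_\Gamma$ arises this way by the definition of $S_\Gamma$. So $S_\Gamma$ is canonically identified with $L/\Aut(\Gamma)$.

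The second step identifies $\Aut(\Gamma')$ for $\Gamma'\in S_\Gamma$ corresponding to the orbit of a naming $\ell$. An automorphism of the named graph $(\Gamma,\ell)$ is precisely an automorphism $\phi$ of $\Gamma$ that preserves all the names; since the names are pairwise distinct, this forces $\phi$ to fix each external edge, which is exactly the condition $\ell\circ\phi^{-1}=\ell$. Hence $\Aut(\Gamma')=\mathrm{Stab}_{\Aut(\Gamma)}(\ell)$. By the orbit–stabilizer theorem the $\Aut(\Gamma)$-orbit of $\ell$ has size $|\Aut(\Gamma)|/|\Aut(\Gamma')|$. Summing the orbit sizes over all orbits gives $|L|=n!$, that is,
\[
\sum_{\Gamma'\in S_\Gamma}\frac{|\Aut(\Gamma)|}{|\Aut(\Gamma')|}=n!,
\]
and dividing by $|\Aut(\Gamma)|$ yields the claimed identity.

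I do not expect a serious obstacle here; the only point requiring care is the bookkeeping in the second step, namely the verification that an automorphism of a \emph{named} graph is the same thing as an element of $\Aut(\Gamma)$ stabilizing the naming $\ell$ — and that the action convention ($\phi\cdot\ell=\ell\circ\phi^{-1}$) is chosen so that stabilizers match automorphism groups. Everything else is the standard orbit–stabilizer argument.
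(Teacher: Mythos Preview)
Your proof is correct, and both arguments rest on orbit--stabilizer, but the group action you choose is dual to the paper's. The paper lets $S_n$ act (transitively) on $S_\Gamma$ by permuting names, so that $|S_\Gamma|=n!/|H|$ with $H$ the stabilizer of a fixed $\Gamma'$; it then observes separately that all $|\Aut(\Gamma')|$ coincide and that there is a surjection $\Aut(\Gamma)\to H$ with kernel $\Aut(\Gamma')$, giving $|H|=|\Aut(\Gamma)|/|\Aut(\Gamma')|$. You instead let $\Aut(\Gamma)$ act on the set of $n!$ namings, identify $S_\Gamma$ with the orbit set and $\Aut(\Gamma')$ with the stabilizer of a representative, and then sum orbit sizes. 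Your route is the cleaner mass-formula argument: it avoids the preliminary claim that the $|\Aut(\Gamma')|$ are all equal and yields the sum directly, whereas the paper's route factors through computing $|S_\Gamma|$ and a single $|\Aut(\Gamma')|$ separately. Both are short and valid; yours generalizes more readily to situations where the analogue of the $S_n$-action is not transitive.
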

\begin{proof}
First observe that for every $\Gamma',\Gamma''\in S_\Gamma$,
we have $|\Aut(\Gamma')|=|\Aut(\Gamma'')|$.
Thus we only need to check
\be\label{lem-mark-1}
\frac{|S_\Gamma|}{|\Aut(\Gamma')|}=\frac{n!}{|\Aut(\Gamma)|}
\ee
for an arbitrary $\Gamma'\in S_\Gamma$.

The group $S_n$ acts transitively on the set $S_\Gamma$
by permuting these names, thus
\be\label{lem-mark-2}
|S_\Gamma|=\frac{n!}{|H|},
\ee
where $H$ consists of the permutations of the names that preserves $\Gamma'$.

On the other hand, the group $\Aut(\Gamma)$ acts on $H$,
and the invariant subgroup is exactly isomorphic to $\Aut(\Gamma')$,
thus
\be\label{lem-mark-3}
\frac{|\Aut(\Gamma)|}{|\Aut(\Gamma')|}=|H|.
\ee
Then \eqref{lem-mark-2} and \eqref{lem-mark-3} give us \eqref{lem-mark-1}.
\end{proof}

\subsection{Duality map defined by dotted stable graphs}
\label{subsec3.2}

As in \cite{wz},
let $\cV_{g,n}$ be the linear space spanned by the basis consisting of
all ordinary stable graphs of genus $g$ with $n$ external edges.
We define a partial ordering on this basis as follows.
Let $\Gamma_i$ be an ordinary stable graph of type $(g,n)$ for $i=1,2$,
then we set $\Gamma_1>\Gamma_2$ if $|E(\Gamma_1)|<|E(\Gamma_2)|$,
where $E(\Gamma_i)$ is the set of all internal edges of $\Gamma_i$.
Since $\cV_{g,n}$ is finite-dimensional for a fixed pair $(g,n)$,
we can extend this partial ordering to a total ordering
by assign an arbitrary ordering among stable graphs with
the same number of internal edges.

There is a natural linear map
\be
\phi_{g,n}:\cV_{g,n}\to\cV_{g,n},
\ee
which simply changes an ordinary vertex of genus $g$ and valence $n$
to a dotted vertex of genus $g$ and valence $n$,
and change an internal edge to a dotted internal edge.
For example,
\ben
&&\begin{tikzpicture}
\draw (0,0) circle [radius=0.2];
\node [align=center,align=center] at (0,0) {$2$};
\node [align=center,align=center] at (0.6,0) {$\mapsto$};
\draw [dotted,thick](1.2,0) circle [radius=0.2];
\node [align=center,align=center] at (1.2,0) {$2$};
\end{tikzpicture},\\
&&\begin{tikzpicture}
\draw (0,0) circle [radius=0.2];
\node [align=center,align=center] at (0,0) {$1$};
\draw (0.6,0) circle [radius=0.2];
\node [align=center,align=center] at (0.6,0) {$1$};
\draw (0.2,0)--(0.4,0);
\node [align=center,align=center] at (1.2,0) {$\mapsto$};
\draw [dotted,thick](2.2-0.3,0) circle [radius=0.2];
\node [align=center,align=center] at (2.2-0.3,0) {$1$};
\draw [dotted,thick](2.8-0.3,0) circle [radius=0.2];
\node [align=center,align=center] at (2.8-0.3,0) {$1$};
\draw [dotted,thick](2.4-0.3,0)--(2.6-0.3,0);
\end{tikzpicture},\\
&&\begin{tikzpicture}
\draw (0,0) circle [radius=0.2];
\node [align=center,align=center] at (0,0) {$1$};
\draw (0.6,0) circle [radius=0.2];
\node [align=center,align=center] at (0.6,0) {$0$};
\draw (0.2,0)--(0.4,0);
\draw (0.76,0.1) .. controls (1.1,0.2) and (1.1,-0.2) ..  (0.76,-0.1);
\node [align=center,align=center] at (1.5,0) {$\mapsto$};
\draw [dotted,thick](0+2.2,0) circle [radius=0.2];
\node [align=center,align=center] at (0+2.2,0) {$1$};
\draw [dotted,thick](0.6+2.2,0) circle [radius=0.2];
\node [align=center,align=center] at (0.6+2.2,0) {$0$};
\draw [dotted,thick](0.2+2.2,0)--(0.4+2.2,0);
\draw [dotted,thick](0.76+2.2,0.1) .. controls (1.1+2.2,0.2) and (1.1+2.2,-0.2) ..  (0.76+2.2,-0.1);
\end{tikzpicture},\\
&&\begin{tikzpicture}
\draw (0,0) circle [radius=0.2];
\node [align=center,align=center] at (0,0) {$0$};
\draw (0.6,0) circle [radius=0.2];
\node [align=center,align=center] at (0.6,0) {$0$};
\draw (0.2,0)--(0.4,0);
\draw (0.76,0.1) .. controls (1.1,0.2) and (1.1,-0.2) ..  (0.76,-0.1);
\draw (-0.16,0.1) .. controls (-0.5,0.2) and (-0.5,-0.2) ..  (-0.16,-0.1);
\node [align=center,align=center] at (1.5,0) {$\mapsto$};
\draw [dotted,thick](0+2.7-0.3,0) circle [radius=0.2];
\node [align=center,align=center] at (0+2.7-0.3,0) {$0$};
\draw [dotted,thick](0.6+2.7-0.3,0) circle [radius=0.2];
\node [align=center,align=center] at (0.6+2.7-0.3,0) {$0$};
\draw [dotted,thick](0.2+2.7-0.3,0)--(0.4+2.7-0.3,0);
\draw [dotted,thick](0.76+2.7-0.3,0.1) .. controls (1.1+2.7-0.3,0.2) and (1.1+2.7-0.3,-0.2) ..  (0.76+2.7-0.3,-0.1);
\draw [dotted,thick](-0.16+2.7-0.3,0.1) .. controls (-0.5+2.7-0.3,0.2) and (-0.5+2.7-0.3,-0.2) ..  (-0.16+2.7-0.3,-0.1);
\end{tikzpicture}.
\een
An easy observation is that for every ordinary stable graph $\Gamma'$
appearing in $\phi_{g,n}(\Gamma)$,
we have $|E(\Gamma')|\geq |E(\Gamma)|$,
and $|E(\Gamma')|=|E(\Gamma)|$ if and only if $\Gamma'=\Gamma$.
Therefore using the total ordering defined above,
the endomorphism $\phi_{g,n}$ is a upper-triangular matrix
with $\pm 1$'s on the diagonal in such a basis.
Therefore $\phi_{g,n}$ is an isomorphism of the vector space $\cV_{g,n}$,
and all dotted stable graphs of type $(g,n)$ form another basis for $\cV_{g,n}$.
Let $\phi:\cV \to \cV$ be defined by:
\be\label{eq-phi-duality}
\phi:=\bigoplus_{2g-2+n > 0} \phi_{g,n}.
\ee

Now we can write the ordinary stable graphs in terms of linear combinations of
dotted stable graphs.

\begin{Example}\label{eg-inverse}
The following relations can be checked by explicit computations:
\begin{flalign*}
\begin{tikzpicture}
\draw (0,0) circle [radius=0.2];
\node [align=center,align=center] at (0,0) {$0$};
\draw (-0.5,0)--(-0.2,0);
\draw (0.16,0.1)--(0.5,0.15);
\draw (0.16,-0.1)--(0.5,-0.15);
\node [align=center,align=center] at (0.8,0) {$=$};
\draw [dotted,thick](0+1.6,0) circle [radius=0.2];
\node [align=center,align=center] at (0+1.6,0) {$0$};
\draw [dotted,thick](-0.5+1.6,0)--(-0.2+1.6,0);
\draw [dotted,thick](0.16+1.6,0.1)--(0.5+1.6,0.15);
\draw [dotted,thick](0.16+1.6,-0.1)--(0.5+1.6,-0.15);
\end{tikzpicture},&&
\end{flalign*}

\begin{flalign*}
\begin{tikzpicture}
\draw (1.6-1.6,0) circle [radius=0.2];
\draw (1.1-1.6,0.15)--(1.44-1.6,0.1);
\draw (1.1-1.6,-0.15)--(1.44-1.6,-0.1);
\draw (1.76-1.6,0.1)--(2.1-1.6,0.15);
\draw (1.76-1.6,-0.1)--(2.1-1.6,-0.15);
\node [align=center,align=center] at (1.6-1.6,0) {$0$};
\node [align=center,align=center] at (0.8,0) {$=$};
\draw [dotted,thick](1.6+0.1,0) circle [radius=0.2];
\draw [dotted,thick](1.1+0.1,0.15)--(1.44+0.1,0.1);
\draw [dotted,thick](1.1+0.1,-0.15)--(1.44+0.1,-0.1);
\draw [dotted,thick](1.76+0.1,0.1)--(2.1+0.1,0.15);
\draw [dotted,thick](1.76+0.1,-0.1)--(2.1+0.1,-0.15);
\node [align=center,align=center] at (1.6+0.1,0) {$0$};
\node [align=center,align=center] at (2.6,0) {$+3$};
\draw [dotted,thick](1.6+1.9,0) circle [radius=0.2];
\draw [dotted,thick](2.2+1.9,0) circle [radius=0.2];
\draw [dotted,thick](1.1+1.9,0.15)--(1.44+1.9,0.1);
\draw [dotted,thick](1.1+1.9,-0.15)--(1.44+1.9,-0.1);
\draw [dotted,thick](1.1+1.9,0.15)--(1.44+1.9,0.1);
\draw [dotted,thick](2.36+1.9,0.1)--(2.7+1.9,0.15);
\draw [dotted,thick](2.36+1.9,-0.1)--(2.7+1.9,-0.15);
\draw [dotted,thick](1.8+1.9,0)--(2+1.9,0);
\node [align=center,align=center] at (1.6+1.9,0) {$0$};
\node [align=center,align=center] at (2.2+1.9,0) {$0$};
\end{tikzpicture},&&
\end{flalign*}

\begin{flalign*}
\begin{tikzpicture}
\draw (1-1.6,0) circle [radius=0.2];
\draw (1.2-1.6,0)--(1.5-1.6,0);
\node [align=center,align=center] at (1-1.6,0) {$1$};
\node [align=center,align=center] at (0.3,0) {$=$};
\draw [dotted,thick](1,0) circle [radius=0.2];
\draw [dotted,thick](1.2,0)--(1.5,0);
\node [align=center,align=center] at (1,0) {$1$};
\node [align=center,align=center] at (2,0) {$+\frac{1}{2}$};
\draw [dotted,thick](1+1.8,0) circle [radius=0.2];
\draw [dotted,thick](1.2+1.8,0)--(1.5+1.8,0);
\draw [dotted,thick](0.84+1.8,0.1) .. controls (0.5+1.8,0.2) and (0.5+1.8,-0.2) ..  (0.84+1.8,-0.1);
\node [align=center,align=center] at (1+1.8,0) {$0$};
\end{tikzpicture},&&
\end{flalign*}

\begin{flalign*}
\begin{tikzpicture}
\draw (1-2+0.2,0) circle [radius=0.2];
\draw (1.2-2+0.2,0)--(1.5-2+0.2,0);
\draw (0.5-2+0.2,0)--(0.8-2+0.2,0);
\node [align=center,align=center] at (1-2+0.2,0) {$1$};
\node [align=center,align=center] at (0.3-0.2,0) {$=$};
\draw [dotted,thick](1,0) circle [radius=0.2];
\draw [dotted,thick](1.2,0)--(1.5,0);
\draw [dotted,thick](0.5,0)--(0.8,0);
\node [align=center,align=center] at (1,0) {$1$};
\node [align=center,align=center] at (2,0) {$+\frac{1}{2}$};
\draw [dotted,thick](1+1.8,0) circle [radius=0.2];
\draw [dotted,thick](1.17+1.8,0.1)--(1.4+1.8,0.15);
\draw [dotted,thick](1.17+1.8,-0.1)--(1.4+1.8,-0.15);
\draw [dotted,thick](0.84+1.8,0.1) .. controls (0.5+1.8,0.2) and (0.5+1.8,-0.2) ..  (0.84+1.8,-0.1);
\node [align=center,align=center] at (1+1.8,0) {$0$};
\node [align=center,align=center] at (3.7,0) {$+$};
\draw [dotted,thick](1+3.9,0) circle [radius=0.2];
\draw [dotted,thick](0.4+3.9,0) circle [radius=0.2];
\draw [dotted,thick](1.17+3.9,0.1)--(1.4+3.9,0.15);
\draw [dotted,thick](1.17+3.9,-0.1)--(1.4+3.9,-0.15);
\draw [dotted,thick](0.6+3.9,0)--(0.8+3.9,0);
\node [align=center,align=center] at (1+3.9,0) {$0$};
\node [align=center,align=center] at (0.4+3.9,0) {$1$};
\node [align=center,align=center] at (5.8,0) {$+\frac{1}{2}$};
\draw [dotted,thick](1+6.2,0) circle [radius=0.2];
\draw [dotted,thick](0.4+6.2,0) circle [radius=0.2];
\draw [dotted,thick](1.17+6.2,0.1)--(1.4+6.2,0.15);
\draw [dotted,thick](1.17+6.2,-0.1)--(1.4+6.2,-0.15);
\draw [dotted,thick](0.6+6.2,0)--(0.8+6.2,0);
\node [align=center,align=center] at (1+6.3-0.1,0) {$0$};
\node [align=center,align=center] at (0.4+6.3-0.1,0) {$0$};
\draw [dotted,thick](0.24+6.3-0.1,0.1) .. controls (-0.1+6.3-0.1,0.2) and (-0.1+6.3-0.1,-0.2) ..  (0.24+6.3-0.1,-0.1);
\node [align=center,align=center] at (8.2-0.1,0) {$+\frac{1}{2}$};
\draw [dotted,thick](1+8.1-0.1,0) circle [radius=0.2];
\draw [dotted,thick](0.5+8.1-0.1,0)--(0.8+8.1-0.1,0);
\draw [dotted,thick](1.18+8.1-0.1,0.07)--(1.42+8.1-0.1,0.07);
\draw [dotted,thick](1.18+8.1-0.1,-0.07)--(1.42+8.1-0.1,-0.07);
\draw [dotted,thick](1.8+8.1-0.1,-0)--(2.1+8.1-0.1,0);
\draw [dotted,thick](1.6+8.1-0.1,0) circle [radius=0.2];
\node [align=center,align=center] at (1+8.1-0.1,0) {$0$};
\node [align=center,align=center] at (1.6+8.1-0.1,0) {$0$};
\end{tikzpicture}.&&
\end{flalign*}

\end{Example}

From these examples we see a surprising duality:
$\phi_{g,n}$ is an involution,
i.e. $\phi_{g,n}^2 = \Id$.
In other words,
the coefficients in the identities that express ordinary stable graphs
in terms of dotted stable graphs coincide with the coefficients
in the identities that express dotted stable graphs in terms of ordinary stable graphs.
To prove this result,
we need to first consider the dual versions of the operators $K$, $\cD$, $\pd$ and $\gamma$
introduced in \cite{wz} (see also \S \ref{sec-pre}).

\subsection{Operators for the dotted stable graphs}

Let $K^\vee$ be the operator that acts on $\cV$ defined
by cutting one of the internal edges of a dotted stable graph
and sums over all internal edges.
It will be referred to as the {\em dual edge-cutting operator}.
It is clear from the definition of $\phi$ that
\be\label{phi-K}
\phi(K\Gamma)=K^\vee\phi(\Gamma).
\ee

Let us now introduce the operators $\cD^\vee$ and $\pd^\vee$ acting on
the dotted stable graphs. The definitions are identical to
that of the operators $\cD$ and $\pd$ for ordinary stable graphs,
except that one just replace ordinary graphs by dotted graphs.

\begin{Definition}
The operator $\pd^\vee$ acting on a dotted stable graph consists of two parts,
one is to add an external edge to a dotted vertex and sum over all dotted vertices,
and the other is  to insert a trivalent dotted vertex of genus $0$
on an internal edge and
then sum over all internal edges.

The operator $\gamma^\vee$ acts on a dotted stable graph by attaching a trivalent
dotted vertex of genus $0$ to an external edge and sum over all external edges.

The operator $\cD^\vee$ is defined to be $\cD^\vee:=\pd^\vee+\gamma^\vee$.
\end{Definition}

From these definitions and the definition of $\phi$,
it is clear that  for every ordinary
stable graph $\Gamma$, we have
\be\label{phi-pd&cD}
\phi(\pd\Gamma)=\pd^\vee\phi(\Gamma),\quad
\phi(\cD\Gamma)=\cD^\vee\phi(\Gamma).
\ee

\begin{Example}
Here are some examples for the operators $\pd^\vee$ and $\gamma^\vee$.
\begin{flalign*}
\begin{tikzpicture}
\node [align=center,align=center] at (-0.4,0) {$\pd^\vee$};
\draw [dotted,thick](1,0) circle [radius=0.2];
\draw [dotted,thick](0.4,0) circle [radius=0.2];
\draw [dotted,thick](0.6,0)--(0.8,0);
\draw [dotted,thick](1.16,0.1) .. controls (1.5,0.2) and (1.5,-0.2) ..  (1.16,-0.1);
\draw [dotted,thick](0.24,0.1) .. controls (-0.1,0.2) and (-0.1,-0.2) ..  (0.24,-0.1);
\node [align=center,align=center] at (1,0) {$0$};
\node [align=center,align=center] at (0.4,0) {$0$};
\node [align=center,align=center] at (1.8,0) {$=2$};
\draw [dotted,thick](3.2,0) circle [radius=0.2];
\draw [dotted,thick](2.6,0) circle [radius=0.2];
\draw [dotted,thick](2.8,0)--(3,0);
\draw [dotted,thick](2.6,0.2)--(2.6,0.4);
\draw [dotted,thick](3.36,0.1) .. controls (3.7,0.2) and (3.7,-0.2) ..  (3.36,-0.1);
\draw [dotted,thick](2.44,0.1) .. controls (2.1,0.2) and (2.1,-0.2) ..  (2.44,-0.1);
\node [align=center,align=center] at (2.6,0) {$0$};
\node [align=center,align=center] at (3.2,0) {$0$};
\node [align=center,align=center] at (4,0) {$+2$};
\draw [dotted,thick](5.4,0) circle [radius=0.2];
\draw [dotted,thick](4.8,0) circle [radius=0.2];
\draw [dotted,thick](5,0)--(5.2,0);
\draw [dotted,thick](4.64,0.1) .. controls (4.3,0.2) and (4.3,-0.2) ..  (4.64,-0.1);
\draw [dotted,thick](5.58,0.07)--(5.82,0.07);
\draw [dotted,thick](5.58,-0.07)--(5.82,-0.07);
\draw [dotted,thick](6.2,-0)--(6.5,0);
\draw [dotted,thick](6,0) circle [radius=0.2];
\node [align=center,align=center] at (5.4,0) {$0$};
\node [align=center,align=center] at (4.8,0) {$0$};
\node [align=center,align=center] at (6,0) {$0$};
\node [align=center,align=center] at (7,0) {$+$};
\draw [dotted,thick](8.4,0) circle [radius=0.2];
\draw [dotted,thick](7.8,0) circle [radius=0.2];
\draw [dotted,thick](9,0) circle [radius=0.2];
\draw [dotted,thick](8,0)--(8.2,0);
\draw [dotted,thick](8.6,0)--(8.8,0);
\draw [dotted,thick](8.4,0.2)--(8.4,0.4);
\draw [dotted,thick](9.16,0.1) .. controls (9.5,0.2) and (9.5,-0.2) ..  (9.16,-0.1);
\draw [dotted,thick](7.64,0.1) .. controls (7.3,0.2) and (7.3,-0.2) ..  (7.64,-0.1);
\node [align=center,align=center] at (8.4,0) {$0$};
\node [align=center,align=center] at (7.8,0) {$0$};
\node [align=center,align=center] at (9,0) {$0$};
\end{tikzpicture},&&
\end{flalign*}

\begin{flalign*}
\begin{tikzpicture}
\node [align=center,align=center] at (0.4,0) {$\pd^\vee$};
\draw [dotted,thick](1,0) circle [radius=0.2];
\draw [dotted,thick](1.2,0)--(1.4,0);
\draw [dotted,thick](1.16,0.1)--(1.44,0.1);
\draw [dotted,thick](1.16,-0.1)--(1.44,-0.1);
\draw [dotted,thick](1.6,0) circle [radius=0.2];
\node [align=center,align=center] at (1,0) {$0$};
\node [align=center,align=center] at (1.6,0) {$1$};
\node [align=center,align=center] at (2.1,0) {$=$};
\draw [dotted,thick](3,0) circle [radius=0.2];
\draw [dotted,thick](3.2,0)--(3.4,0);
\draw [dotted,thick](2.5,0)--(2.8,0);
\draw [dotted,thick](3.16,0.1)--(3.44,0.1);
\draw [dotted,thick](3.16,-0.1)--(3.44,-0.1);
\draw [dotted,thick](3.6,0) circle [radius=0.2];
\node [align=center,align=center] at (3,0) {$0$};
\node [align=center,align=center] at (3.6,0) {$1$};
\node [align=center,align=center] at (4.1,0) {$+$};
\draw [dotted,thick](4.7,0) circle [radius=0.2];
\draw [dotted,thick](4.9,0)--(5.1,0);
\draw [dotted,thick](5.5,0)--(5.8,0);
\draw [dotted,thick](4.86,0.1)--(5.14,0.1);
\draw [dotted,thick](4.86,-0.1)--(5.14,-0.1);
\draw [dotted,thick](5.3,0) circle [radius=0.2];
\node [align=center,align=center] at (4.7,0) {$0$};
\node [align=center,align=center] at (5.3,0) {$1$};
\node [align=center,align=center] at (6.3,0) {$+3$};
\draw [dotted,thick](6.8,-0.2) circle [radius=0.2];
\draw [dotted,thick](6.98,-0.13)--(7.42,-0.13);
\draw [dotted,thick](6.98,-0.27)--(7.42,-0.27);
\draw [dotted,thick](7.6,-0.2) circle [radius=0.2];
\draw [dotted,thick](7.2,0.25) circle [radius=0.2];
\draw [dotted,thick](6.94,-0.06)--(7.06,0.11);
\draw [dotted,thick](7.46,-0.06)--(7.34,0.11);
\node [align=center,align=center] at (6.8,-0.2) {$0$};
\node [align=center,align=center] at (7.6,-0.2) {$1$};
\node [align=center,align=center] at (7.2,0.25) {$0$};
\draw [dotted,thick](7.2,0.45)--(7.2,0.65);
\end{tikzpicture}.&&
\end{flalign*}

\begin{flalign*}
\begin{tikzpicture}
\node [align=center,align=center] at (0.4,0) {$\gamma^\vee$};
\draw [dotted,thick](1,0) circle [radius=0.2];
\draw [dotted,thick](1.2,0)--(1.5,0);
\node [align=center,align=center] at (1,0) {$1$};
\node [align=center,align=center] at (1.9,0) {$=$};
\draw [dotted,thick](2.4,0) circle [radius=0.2];
\draw [dotted,thick](3,0) circle [radius=0.2];
\draw [dotted,thick](2.6,0)--(2.8,0);
\draw [dotted,thick](3.16,0.1)--(3.4,0.15);
\draw [dotted,thick](3.16,-0.1)--(3.4,-0.15);
\node [align=center,align=center] at (2.4,0) {$1$};
\node [align=center,align=center] at (3,0) {$0$};
\end{tikzpicture},&&
\end{flalign*}

\begin{flalign*}
\begin{tikzpicture}
\node [align=center,align=center] at (0.1,0) {$\gamma^\vee$};
\draw [dotted,thick](0.9,0) circle [radius=0.2];
\draw [dotted,thick](1.06,0.1)--(1.3,0.15);
\draw [dotted,thick](1.06,-0.1)--(1.3,-0.15);
\draw [dotted,thick](0.74,0.1) .. controls (0.4,0.2) and (0.4,-0.2) ..  (0.74,-0.1);
\node [align=center,align=center] at (0.9,0) {$0$};
\node [align=center,align=center] at (1.8,0) {$=2$};
\draw [dotted,thick](2.7,0) circle [radius=0.2];
\draw [dotted,thick](3.3,0) circle [radius=0.2];
\draw [dotted,thick](2.9,0)--(3.1,0);
\draw [dotted,thick](2.7,0.2)--(2.7,0.4);
\draw [dotted,thick](3.47,0.1)--(3.7,0.15);
\draw [dotted,thick](3.47,-0.1)--(3.7,-0.15);
\draw [dotted,thick](2.54,0.1) .. controls (2.2,0.2) and (2.2,-0.2) ..  (2.54,-0.1);
\node [align=center,align=center] at (2.7,0) {$0$};
\node [align=center,align=center] at (3.3,0) {$0$};
\end{tikzpicture}.&&
\end{flalign*}

\end{Example}

Our next goal is to study the relations between $\cD^\vee$, $\pd^\vee$ and
the ordinary operators $\cD$, $\pd$.
First, we have the following:

\begin{Lemma}
\label{lem-gamma}
We have $\gamma=-\gamma^\vee$.
\end{Lemma}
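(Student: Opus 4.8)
The plan is to verify $\gamma=-\gamma^\vee$ on a spanning set of $\cV$, using the dotted stable graphs themselves, which form a basis by \S\ref{subsec3.2}. Fix a dotted stable graph $\Gamma^\vee$ of type $(g,n)$. If $n=0$ then $\gamma^\vee(\Gamma^\vee)=0$, and every ordinary graph in the expansion of $\Gamma^\vee$ has no external edges either, so $\gamma(\Gamma^\vee)=0$ as well; hence we may assume $n\geq 1$. Set $m:=|E^\vee(\Gamma^\vee)|$. By the construction of \S\ref{sec:Dotted}, cutting all dotted internal edges of $\Gamma^\vee$ produces dotted stable vertices $V_1^\vee,\dots,V_k^\vee$; expanding each by \eqref{def-mark}, re-gluing along ancestors, and multiplying by $(-1)^m$ gives the expansion $\Gamma^\vee=(-1)^m X$ with $X$ a fixed combination of ordinary stable graphs with positive rational coefficients. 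Each ordinary graph occurring in $X$ has its external edges inherited from those of $\Gamma^\vee$, so $E^{ext}(\Gamma^\vee)$ is canonically identified with the set of external edges of each, and every $e\in E^{ext}(\Gamma^\vee)$ is carried by a unique constituent $V_{j(e)}^\vee$.

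The first step is to record, for each external edge $e$, the single-attachment operation. Let $Y_e$ denote the result of attaching, in every ordinary graph occurring in $X$, a trivalent genus-$0$ vertex to the external edge corresponding to $e$. Since $\gamma$ sums this over all external edges, $\gamma(\Gamma^\vee)=(-1)^m\sum_e Y_e$. On the other side, $\gamma^\vee(\Gamma^\vee)=\sum_e (\Gamma^\vee)_e''$, where $(\Gamma^\vee)_e''$ is the dotted graph obtained by inserting a trivalent dotted genus-$0$ vertex on the external edge $e$; it has $m+1$ dotted internal edges. Cutting all of them recovers $V_1^\vee,\dots,V_k^\vee$ unchanged, together with one extra dotted vertex $T^\vee$ that is a trivalent dotted genus-$0$ vertex. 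Since such a vertex coincides with its own (ordinary) version, once its three half-edges are named it has trivial automorphism group, so \eqref{def-mark} expands it to the single ordinary trivalent genus-$0$ vertex with coefficient $1$. Re-gluing $T$ onto the half-edge of the constituent $\Gamma'_{j(e)}$ at which $e$ was attached is exactly the operation of attaching a trivalent genus-$0$ vertex to the external edge $e$ (its other two half-edges becoming the two new external edges), and all the other $m$ re-gluings are unchanged; hence the expansion of $(\Gamma^\vee)_e''$ is $(-1)^{m+1}Y_e$.

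Combining the two,
\[
\gamma^\vee(\Gamma^\vee)=\sum_e(\Gamma^\vee)_e''=(-1)^{m+1}\sum_eY_e=-(-1)^m\sum_eY_e=-\gamma(\Gamma^\vee)
\]
for every dotted stable graph $\Gamma^\vee$, and since these span $\cV$ this gives $\gamma=-\gamma^\vee$. The case $m=0$ of a single dotted vertex already contains the whole mechanism and can be read off from the examples, e.g.\ \eqref{eg1-eq3}: there $\gamma(V^\vee)=\sum_eY_e$ while $\gamma^\vee(V^\vee)=\sum_e(\Gamma^\vee)_e''=-\sum_eY_e$.

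The argument is a matter of bookkeeping, so the point that requires care is to confirm that nothing distinguishes $\gamma^\vee$ from $\gamma$ except the single sign $(-1)$. The only place a spurious numerical factor could enter is the automorphism weight of the freshly inserted trivalent vertex, which is exactly where we use that a trivalent genus-$0$ vertex equals its dotted version and so contributes weight $1/|\Aut|=1$ after its half-edges are named; the identification of inherited external edges and the re-gluing along ancestors are manifestly the same on both sides, and the extra sign comes solely from the one additional dotted internal edge created by $\gamma^\vee$.
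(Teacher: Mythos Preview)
Your proof is correct and follows exactly the idea the paper has in mind; the paper merely records ``This lemma holds trivially by \eqref{eg1-eq1}'' and leaves the bookkeeping to the reader, while you spell out in full how the single extra dotted internal edge produces the sign $(-1)$ and how \eqref{eg1-eq1} guarantees that the inserted trivalent dotted vertex contributes no further discrepancy. So the content is the same, only the level of detail differs.
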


This lemma holds trivially by \eqref{eg1-eq1}.

Next, consider the actions of the ordinary operators $\cD$ and $\pd$
on a dotted stable vertex.
Recall that the dotted stable vertex of genus $g$ with $n$ external edges
is defined to be $n!\cdot\wcF_{g,n}$, thus Lemma \ref{lem-original-D} gives
\ben
\begin{tikzpicture}
\node [align=center,align=center] at (-0.6,0) {$\cD\biggl($};
\draw [dotted,thick](0,0) circle [radius=0.2];
\node [align=center,align=center] at (0,0) {$g$};
\draw [dotted,thick](0.16,0.1)--(0.5,0.35);
\draw [dotted,thick](0.16,-0.1)--(0.5,-0.35);
\node [align=center,align=center] at (0.5,0.1) {$\vdots$};
\node [align=center,align=center] at (0.7,-0.15) {$n$};
\node [align=center,align=center] at (1.2,0) {$\biggr)=$};
\draw [dotted,thick](2,0) circle [radius=0.2];
\node [align=center,align=center] at (2,0) {$g$};
\draw [dotted,thick](2.16,0.1)--(2.5,0.35);
\draw [dotted,thick](2.16,-0.1)--(2.5,-0.35);
\node [align=center,align=center] at (2.5,0.1) {$\vdots$};
\node [align=center,align=center] at (3,-0.15) {$n+1$};
\node [align=center,align=center] at (4.1,0) {$=\pd^\vee\biggl($};
\draw [dotted,thick](4.9,0) circle [radius=0.2];
\node [align=center,align=center] at (0+4.9,0) {$g$};
\draw [dotted,thick](0.16+4.9,0.1)--(0.5+4.9,0.35);
\draw [dotted,thick](0.16+4.9,-0.1)--(0.5+4.9,-0.35);
\node [align=center,align=center] at (0.5+4.9,0.1) {$\vdots$};
\node [align=center,align=center] at (0.7+4.9,-0.15) {$n$};
\node [align=center,align=center] at (5.9,0) {$\biggr)$};
\end{tikzpicture}.
\een
Therefore, the operator $\pd=\cD-\gamma=\cD+\gamma^\vee$
acts on this dotted stable vertex by
\ben
\begin{tikzpicture}
\node [align=center,align=center] at (-0.6,0) {$\pd\biggl($};
\draw [dotted,thick](0,0) circle [radius=0.2];
\node [align=center,align=center] at (0,0) {$g$};
\draw [dotted,thick](0.16,0.1)--(0.5,0.35);
\draw [dotted,thick](0.16,-0.1)--(0.5,-0.35);
\node [align=center,align=center] at (0.5,0.1) {$\vdots$};
\node [align=center,align=center] at (0.65,-0.15) {$n$};
\node [align=center,align=center] at (1.2,0) {$\biggr)=$};
\draw [dotted,thick](2,0) circle [radius=0.2];
\node [align=center,align=center] at (2,0) {$g$};
\draw [dotted,thick](2.16,0.1)--(2.5,0.35);
\draw [dotted,thick](2.16,-0.1)--(2.5,-0.35);
\node [align=center,align=center] at (2.5,0.1) {$\vdots$};
\node [align=center,align=center] at (2.95,-0.15) {$n+1$};
\node [align=center,align=center] at (3.95,0) {$+n\biggr($};
\draw [dotted,thick](4.8,0) circle [radius=0.2];
\node [align=center,align=center] at (4.8,0) {$0$};
\draw [dotted,thick](5,0)--(5.2,0);
\draw [dotted,thick](4.64,0.1)--(4.3,0.15);
\draw [dotted,thick](4.64,-0.1)--(4.3,-0.15);
\draw [dotted,thick](2+3.4,0) circle [radius=0.2];
\node [align=center,align=center] at (2+3.4,0) {$g$};
\draw [dotted,thick](2.16+3.4,0.1)--(2.5+3.4,0.35);
\draw [dotted,thick](2.16+3.4,-0.1)--(2.5+3.4,-0.35);
\node [align=center,align=center] at (2.5+3.4,0.1) {$\vdots$};
\node [align=center,align=center] at (3+3.35,-0.15) {$n-1$};
\node [align=center,align=center] at (6.9,0) {$\biggr)$};
\node [align=center,align=center] at (7.7,0) {$=\cD^\vee\biggl($};
\draw [dotted,thick](0+8.5,0) circle [radius=0.2];
\node [align=center,align=center] at (0+8.5,0) {$g$};
\draw [dotted,thick](0.16+8.5,0.1)--(0.5+8.5,0.35);
\draw [dotted,thick](0.16+8.5,-0.1)--(0.5+8.5,-0.35);
\node [align=center,align=center] at (0.5+8.5,0.1) {$\vdots$};
\node [align=center,align=center] at (0.65+8.5,-0.15) {$n$};
\node [align=center,align=center] at (9.5,0) {$\biggr)$};
\end{tikzpicture}.
\een
In general, we will prove the following:

\begin{Theorem}\label{thm1}
We have
\be\label{thm1-eq}
\cD^\vee=\pd,\qquad \pd^\vee=\cD.
\ee
\end{Theorem}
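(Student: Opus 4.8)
\emph{Reduction.} The plan is to deduce both identities from the single statement $\pd^\vee=\cD$. Indeed, granting Lemma~\ref{lem-gamma} (so $\gamma^\vee=-\gamma$) and the definitions $\cD=\pd+\gamma$, $\cD^\vee=\pd^\vee+\gamma^\vee$, the identity $\pd^\vee=\cD$ gives $\cD^\vee=\pd^\vee+\gamma^\vee=\cD-\gamma=\pd$, and conversely; so it is enough to prove $\pd^\vee=\cD$ as operators on $\cV$. Since the dotted stable graphs form a basis of $\cV$ (as noted in \S\ref{subsec3.2}, $\phi$ being an isomorphism), it suffices to check $\pd^\vee\Gamma^\vee=\cD\Gamma^\vee$ for each dotted stable graph $\Gamma^\vee$, where on both sides $\Gamma^\vee$ is identified with the element of $\cV$ obtained from the cut--expand--glue recipe of \S\ref{sec:Dotted}. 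I would induct on $m:=|E^\vee(\Gamma^\vee)|$.

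\emph{Base case.} When $m=0$, $\Gamma^\vee$ is a dotted stable vertex $V^\vee_{g,n}=n!\,\wcF_{g,n}$ (a genuine stable vertex, so $2g-2+n>0$). Then $\pd^\vee V^\vee_{g,n}$ is simply the dotted vertex with one extra external edge, $V^\vee_{g,n+1}$, while $\cD V^\vee_{g,n}=n!\,\cD\wcF_{g,n}=n!(n+1)\wcF_{g,n+1}=V^\vee_{g,n+1}$ by Lemma~\ref{lem-original-D}; this is exactly the computation displayed just before the theorem. For the inductive step I will also need the same identity for dotted vertices whose external edges carry distinct names, which follows from the unnamed case together with the counting in Lemma~\ref{lemma-mark-1}.

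\emph{Inductive step.} Let $\Gamma^\vee$ have dotted vertices $v_1,\dots,v_k$ and internal dotted edges $e_1,\dots,e_m$ with $m\ge 1$. Cutting all $e_a$ writes $\Gamma^\vee=(-1)^m\,\mathrm{glue}\big(E_1\otimes\cdots\otimes E_k\big)$, where $E_i$ is the named expansion of the dotted vertex $V^\vee_{v_i}$; the ordinary graphs $\Sigma$ that occur have vertices lying in the pieces, internal edges either internal to a piece or among the $m$ re-glued edges $\bar e_1,\dots,\bar e_m$, and external edges exactly those of $\Gamma^\vee$. Thus $\cD\Gamma^\vee=(-1)^m\sum_\Sigma\cD\Sigma$, and $\cD\Sigma=\pd\Sigma+\gamma\Sigma$ equals: (add an external edge at each vertex) $+$ (insert an ordinary trivalent genus-$0$ vertex on each within-piece edge and on each $\bar e_a$ \emph{once}) $+$ (attach an ordinary trivalent genus-$0$ vertex on each external edge of $\Gamma^\vee$). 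Now expand $\pd^\vee\Gamma^\vee$ term by term. The ``add an external edge to the dotted vertex $v_i$'' terms, after the recipe and the named base case $\cD V^\vee_{v_i}=\pd^\vee V^\vee_{v_i}$ applied piecewise, reproduce the first three items above, \emph{except} that a trivalent attached to a half-edge of $e_a$ becomes, after re-gluing, a trivalent inserted on $\bar e_a$, and this arises once for each of the two ends of $e_a$, so each $\bar e_a$ is hit \emph{twice}. The ``insert a trivalent dotted vertex on the internal dotted edge $e_a$'' terms produce dotted graphs with $m+1$ internal dotted edges, whose recipe-expansion (the new dotted trivalent vertex expanding to an ordinary trivalent vertex) equals $(-1)^{m+1}$ times (insert an ordinary trivalent on $\bar e_a$), i.e. $-(-1)^m$ times each $\bar e_a$ hit \emph{once}. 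Summing the two parts, ``twice'' minus ``once'' is ``once'', which exactly matches $\sum_\Sigma\cD\Sigma$; hence $\pd^\vee\Gamma^\vee=\cD\Gamma^\vee$, closing the induction.

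\emph{Main obstacle.} The delicate step is this final bookkeeping: one must check that attaching a trivalent vertex to a half-edge of a piece that is later re-glued genuinely yields a subdivision of $\bar e_a$ with the correct surviving external edge, that it is counted once per endpoint of $e_a$ (including the case where $e_a$ is a loop), and — crucially — that the sign $(-1)^{m+1}$ borne by the trivalent-insertion part of $\pd^\vee$ (it creates one extra dotted internal edge) is precisely what turns ``two minus one'' into ``one''. This is the cancellation already visible in the smallest nontrivial instance: $\cD^\vee V^\vee_{0,3}=V^\vee_{0,4}+3\,G^\vee$, where $G$ is the ``dumbbell'' of two trivalent genus-$0$ vertices joined by an edge; expanding $V^\vee_{0,4}=V_{0,4}+3\,G$ and $G^\vee=-G$ makes the $\pm 3\,G$ cancel, leaving $V_{0,4}=\pd V_{0,3}$.
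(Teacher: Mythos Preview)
Your proof is correct and follows essentially the same approach as the paper: cut the dotted graph into dotted vertices, use the vertex base case $\pd^\vee V^\vee_{g,n}=\cD V^\vee_{g,n}$ (i.e.\ Lemma~\ref{lem-original-D}) piecewise, and then verify the ``twice minus once equals once'' cancellation at the re-glued edges coming from the sign $(-1)^{|E^\vee|}$; the paper records this same cancellation as the identity \eqref{claim-proof1}. One minor remark: your framing as an induction on $m=|E^\vee(\Gamma^\vee)|$ is superfluous, since in the inductive step you cut \emph{all} dotted edges at once and appeal only to the $m=0$ base case, never to the hypothesis for smaller $m$; the argument is really a direct reduction to dotted vertices, exactly as in the paper.
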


\begin{proof}
We already know that
\ben
\cD^\vee\Gamma^\vee=\pd\Gamma^\vee,
\qquad \pd^\vee\Gamma^\vee=\cD\Gamma^\vee
\een
for the simplest case where $\Gamma^\vee$ is a dotted stable vertex.

Now consider an arbitrary dotted stable graph $\Gamma^\vee$.
Let us recall the definition of $\Gamma^\vee$ expressed as a linear combination of
stable graphs.
Denote by $V^\vee(\Gamma^\vee)$ the set of all its vertices.
We give different names to all the half-edges of $\Gamma^\vee$
to get a dotted stable graphs with names $\Gamma'^\vee$.
Then cut off all internal edges of $\Gamma'^\vee$,
we obtain a (not necessarily connected) dotted stable graph
\ben
\bigsqcup_{v\in V^\vee(\Gamma^\vee)}V_v^\vee,
\een
where $V_v^\vee$ is a dotted stable vertex with names
of genus $g_v$ and valence $n_v$.
Then
\ben
V_v^\vee=n_v!\cdot\sum_{\Gamma\in \cG_v}
\frac{1}{|\Aut(\Gamma)|}\Gamma,
\een
where $\cG_v$ is the set of all ordinary stable graphs of
genus $g_v$ with $n_v$ external edges which have the same names
with the dotted vertex $V_v^\vee$.
By definition $\Gamma^\vee$ is obtained by gluing two external edges together
if their ancestors in $\Gamma^\vee$ are joined together,
then forgetting the names and multiplying by $(-1)^{|E^\vee(\Gamma^\vee)|}$.

Let $\Gamma_v^\vee$ be the graph obtained from $\Gamma^\vee$ by
adding an external edge to the dotted vertex $v\in V^\vee(\Gamma^\vee)$.
And for every $e\in E^\vee(\Gamma^\vee)$, let $\Gamma_e^\vee$ be
the dotted stable graph obtained from $\Gamma^\vee$ by replacing $e$
with a trivalent dotted vertex of genus $0$ together with
two additional internal edges.
Then
\be\label{thm1-eq1}
\pd^\vee\Gamma^\vee=\sum_{v\in V^\vee(\Gamma^\vee)}\Gamma_v^\vee
+\sum_{e\in E^\vee(\Gamma^\vee)}\Gamma_e^\vee.
\ee

We cut off all internal edges and give different names to $\Gamma_v^\vee$.
The result is a bunch  of dotted stable vertices with names.
We denote them by $\{V_{v,v'}^\vee\}_{v'\in V^\vee(\Gamma^\vee)}$.
Notice that each $V_{v,v'}^\vee$ comes from a vertex $v'$ of $\Gamma^\vee$,
and if $v'\not= v$,
then $V_{v,v'}^\vee$ is just $V_{v'}^\vee$,
but $V_{v,v}^\vee$ is obtained from $V_v^\vee$ by adding  one more external edge.
By Lemma \ref{lem-original-D} we know that
\ben
V_{v,v}^\vee=\cD_m V_v^\vee
=(\pd_m+\gamma_m) V_v^\vee,
\een
where $\cD_m,\pd_m$ is to apply $\cD,\pd$ respectively
and then give the new external edge a different name,
and $\gamma_m$ is to apply $\gamma$, then
move the original name of this external edge of $V_v^\vee$
to one of the new external edge,
and the other new external edge and the new internal edge
will be given three new names such that
all names appearing on this graph are different.
Moreover, the names of external edges should match with $V_{v,v}^\vee$.
Then the dotted graph $\Gamma_v^\vee$ is obtained by gluing
$\{V_{v'}^\vee\}_{v'\not= v}$ and $(\pd_m+\gamma_m) V_v^\vee$
together according to the information of names,
then forgetting the names and multiplying by $(-1)^{|E^\vee(V_v^\vee)|}$.

On the other hand, we can regard $\Gamma^\vee$ as a linear combination of
ordinary stable graphs and analyse
$\cD\Gamma^\vee=\pd\Gamma^\vee+\gamma\Gamma^\vee$.

Notice that $\pd$ is a `differential operator'
in the sense that it acts on an ordinary stable vertex
as adding an external edge, and acts on an internal edge as
replacing it with a trivalent vertex of genus $0$
together with two additional internal edges;
moreover, it acts on an ordinary stable graph $\Gamma$
via Leibniz rule if we regard $\Gamma$ as a product of
all vertices $v\in V(\Gamma)$ and
all internal edges $e\in E(\Gamma)$.

Now we write $\Gamma^\vee$ in terms of a linear combination of
ordinary stable graphs, then we have a new natural way
to factorize each term $\Gamma$;
that is, we regard a term in one of the dotted stable vertex
as a factor of $\Gamma$,
then $\Gamma$ can be regarded as a product of such factors and
internal edges $e\in E^\vee(\Gamma^\vee)$.
Thus Leibniz rule gives us
\ben
\pd_m\Gamma'^\vee=\sum_{v\in V^\vee(\Gamma^\vee)} \Gamma'^\vee_v
+\sum_{e\in E^\vee(\Gamma^\vee)}\Gamma'^\vee_e,
\een
where $\Gamma_v'^\vee$ is obtained by gluing the dotted vertices with names
$\{V_{v'}^\vee\}_{v'\in V^\vee(\Gamma^\vee)}$ and $\pd_m V_v^\vee$ together,
and $\Gamma'^\vee_e$ is just $\Gamma^\vee_e$ with different names
on half-edges. Therefore
\be
\cD_m\Gamma'^\vee=\sum_{v\in V^\vee(\Gamma^\vee)} \Gamma'^\vee_v
+\sum_{e\in E^\vee(\Gamma^\vee)}\Gamma'^\vee_e
+\gamma_m \Gamma'^\vee.
\ee

Recall that when we forget the names,
we need an additional factor $(-1)^{|E^\vee|}$.
Therefore the above relation gives
\be\label{thm1-eq2}
\cD\Gamma^\vee=\sum_{v\in V^\vee(\Gamma^\vee)} \widetilde\Gamma^\vee_v
-\sum_{e\in E^\vee(\Gamma^\vee)}\Gamma^\vee_e
-\gamma \Gamma^\vee,
\ee
where $\widetilde\Gamma^\vee_v$ is obtained by gluing
$\{V_{v,v'}^\vee\}_{v'\not= v}$ and $\pd_m V_v^\vee$ together
according to the information of names,
then forgetting the names and multiplying by $(-1)^{|E^\vee(V_v^\vee)|}$.

Now we compare the two relations \eqref{thm1-eq1} and \eqref{thm1-eq2}.
In order to prove $\pd^\vee\Gamma^\vee=\cD\Gamma^\vee$,
all we need to do is to check
\be\label{claim-proof1}
\sum_{v\in V^\vee(\Gamma^\vee)}(\widetilde\Gamma_v^\vee
-\Gamma_v^\vee)
=2 \sum_{e\in E^\vee(\Gamma^\vee)}\Gamma^\vee_e+\gamma\Gamma^\vee.
\ee

We already know that $(\widetilde\Gamma_v^\vee
-\Gamma_v^\vee)$ is obtained by gluing
$\{V_{v'}^\vee\}_{v'\not= v}$ and $\gamma_m V_v^\vee$ together
according to the information of names,
then forgetting the names and multiplying by $(-1)^{|E^\vee|}$.
Recall that the operator $\gamma$ is to attach a trivalent vertex of genus $0$
to an external edge,
and the set of all external edges of the graph
$\bigsqcup_{v\in V^\vee(\Gamma^\vee)}V_v^\vee$ equals the union of
the set of all external edges of $\Gamma^\vee$
and two copies of $E^\vee(\Gamma)$,
thus the relation \eqref{claim-proof1} trivially holds.
Therefore we've proved
\be
\pd^\vee=\cD.
\ee

The other equality $\cD^\vee=\pd$ follows from
\ben
\cD^\vee=\pd^\vee+\gamma^\vee=\pd^\vee-\gamma=\cD-\gamma=\pd.
\een
\end{proof}

The relation between the operators $K$ and $K^\vee$
is more complicated.
Let $\Gamma^\vee$ be a dotted stable graph,
and let $K$ act on $\Gamma^\vee$.
As a linear combination of ordinary stable graphs,
$\Gamma^\vee$ is obtained by gluing some ordinary stable graphs togrther
along the edges $e\in E^\vee(\Gamma^\vee)$,
where these ordinary graphs are of the same types as
dotted vertices of $\Gamma^\vee$.
Therefore the operator $K$ can act on both the edges $e^\vee \in E^\vee(\Gamma^\vee)$
and the dotted vertices of $\Gamma^\vee$.
We regard $\Gamma^\vee$ as a product of
internal edges $e^\vee \in E^\vee(\Gamma^\vee)$
and dotted vertices $v^\vee \in V^\vee(\Gamma^\vee)$,
then the action of $K$ on $\Gamma^\vee$ satisfies the Leibniz rule.
The action of $K$ on internal edges $e^\vee \in E^\vee(\Gamma^\vee)$
is the same as the action of $-K^\vee$, where the appearance of the factor $(-1)$
is due to the fact that the number of internal edges decreases by one.
Let $V_{g,n}^\vee$ be a dotted vertex of genus $g$ and valence $n$,
then Theorem \ref{thm-original-rec} tells us
\ben
KV_{g,n}^\vee=\frac{1}{2}\biggl(V_{g-1,n+2}^\vee+
\sum_{\substack{g_1+g_2=g,n_1+n_2=n+2\\n_1\geq 1,n_2\geq 1}}
V_{g_1,n_1}^\vee\sqcup V_{g_2,n_2}^\vee\biggr),
\een
where the sum above is over all stable cases.

\subsection{Dual abstract free energies and the duality theorem}
Let $\cG_g^{\vee,c}$ be the set of all connected dotted stable graphs of genus $g$
without external edges, and $\cG_{g,n}^{\vee,c}$ be the set of
all connected dotted stable graphs of genus $g$ with $n$ external edges.
Then similar to the case of ordinary stable graphs,
we can define a dual abstract free energies for dotted stable graphs.

\begin{Definition}
\label{def-dual-absfe}
For $g\geq 2$, we define the dual abstract free energy $\wcF_g^\vee$ to be
\be
\wcF_g^\vee:=\sum_{\Gamma^\vee\in\cG_g^{\vee,c}}
\frac{1}{|\Aut(\Gamma^\vee)|}\Gamma^\vee.
\ee
Also, we define the dual abstract $n$-point functions to be
\be
\wcF_{g,n}^\vee:=\sum_{\Gamma^\vee\in\cG_{g,n}^{\vee,c}}
\frac{1}{|\Aut(\Gamma^\vee)|}\Gamma^\vee
\ee
for $2g-2+n>0$.

\end{Definition}

Clearly we have
\ben
\phi(\wcF_{g,n})=\wcF_{g,n}^\vee
\een
for any $2g-2+n>0$, thus if we apply $\phi$ to
Lemma \ref{lem-original-D} and Theorem \ref{thm-original-rec},
by the properties \eqref{phi-K} and \eqref{phi-pd&cD}
we obtain the following recursion relations for the
dual abstract free energies and dual abstract $n$-point functions:

\begin{Lemma}\label{lem-new-D}
For $2g-2+n>0$, we have
\be
\cD^\vee\wcF_{g,n}^\vee=(n+1)\wcF_{g,n+1}^\vee.
\ee
\end{Lemma}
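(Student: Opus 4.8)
The plan is to obtain this by transporting the already-established relation for ordinary stable graphs, namely Lemma~\ref{lem-original-D}, through the linear map $\phi$. The starting point is the identity $\cD\wcF_{g,n}=(n+1)\wcF_{g,n+1}$, valid for all $(g,n)$ with $2g-2+n>0$, viewed as an equality in $\cV$. I would simply apply $\phi$ to both sides of this identity and then rewrite each side using the compatibility properties of $\phi$ recorded in the excerpt.

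For the left-hand side, the key input is the intertwining relation~\eqref{phi-pd&cD}, $\phi(\cD\Gamma)=\cD^\vee\phi(\Gamma)$, which is stated for an arbitrary ordinary stable graph $\Gamma$ and therefore, since $\phi$, $\cD$ and $\cD^\vee$ are all $\bQ$-linear, extends to every element of $\cV$; applying it to the finite linear combination $\wcF_{g,n}$ gives $\phi(\cD\wcF_{g,n})=\cD^\vee\phi(\wcF_{g,n})$. Combining this with $\phi(\wcF_{g,n})=\wcF_{g,n}^\vee$ — which is immediate because $\phi$ sends each ordinary connected stable graph of type $(g,n)$ to the corresponding dotted one bijectively and with the same automorphism group, so it carries the sum~\eqref{pre-abs-npt} defining $\wcF_{g,n}$ onto the sum defining $\wcF_{g,n}^\vee$ in Definition~\ref{def-dual-absfe} term by term — turns the left-hand side into $\cD^\vee\wcF_{g,n}^\vee$.

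For the right-hand side, linearity of $\phi$ together with $\phi(\wcF_{g,n+1})=\wcF_{g,n+1}^\vee$ (note that $2g-2+(n+1)>0$, so $\wcF_{g,n+1}^\vee$ is defined) gives $\phi\bigl((n+1)\wcF_{g,n+1}\bigr)=(n+1)\wcF_{g,n+1}^\vee$. Equating the two sides yields $\cD^\vee\wcF_{g,n}^\vee=(n+1)\wcF_{g,n+1}^\vee$, which is the claim. I do not expect any genuine obstacle here: all of the real content sits in Lemma~\ref{lem-original-D} and in the relation~\eqref{phi-pd&cD}, and the latter holds essentially by construction, since $\cD^\vee$ is defined to be the literal dotted-graph analogue of $\cD$. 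The only thing to watch is the low-genus/low-valence boundary behaviour governed by the conventions~\eqref{convention-1}, but these are precisely the cases excluded by the hypothesis $2g-2+n>0$, so they never enter the argument. (An equally short route would argue graph-by-graph, applying Lemma~\ref{lem-original-D} to each dotted stable vertex occurring in $\wcF_{g,n}^\vee$, but the $\phi$-transport argument is cleaner and avoids reproving the vertex case.)
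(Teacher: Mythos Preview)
Your proposal is correct and matches the paper's own argument essentially verbatim: the paper also notes that $\phi(\wcF_{g,n})=\wcF_{g,n}^\vee$ and then simply applies $\phi$ to Lemma~\ref{lem-original-D}, invoking~\eqref{phi-pd&cD} to convert $\cD$ into $\cD^\vee$.
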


\begin{Theorem}\label{thm-new-rec}
For $2g-2+n>0$, we have
\be\label{eq-thm-new-rec}
K^\vee\widehat{\cF}^\vee_{g,n}=\frac{1}{2}(\cD^\vee \cD^\vee\widehat{\cF}^\vee_{g-1,n}
+\sum_{\substack{g_1+g_2=g,\\n_1+n_2=n}}
\cD^\vee\widehat{\cF}^\vee_{g_1,n_1}\cD^\vee\widehat{\cF}^\vee_{g_2,n_2}).
\ee

In particular, by taking $n=0$ we get a recursion relation
for the dual abstract free energy for $g\geq 2$:
\be\label{eq-thm-new-rec}
K^\vee\widehat{\cF}^\vee_g=\frac{1}{2}
(\cD^\vee \pd^\vee\widehat{\cF}^\vee_{g-1}+\sum_{r=1}^{g-1}
\pd^\vee\widehat{\cF}^\vee_{r}\pd^\vee\widehat{\cF}^\vee_{g-r}).
\ee
\end{Theorem}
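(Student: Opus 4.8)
The plan is to deduce the statement from Theorem \ref{thm-original-rec} by transporting it through the isomorphism $\phi$ of \eqref{eq-phi-duality}. First I would record the structural compatibilities of $\phi$ that are needed. By the definition of $\phi_{g,n}$ (replace each ordinary vertex by a dotted vertex of the same genus and valence, and each ordinary internal edge by a dotted internal edge) one has $\phi(\wcF_{g,n})=\wcF_{g,n}^\vee$ for every $2g-2+n>0$; moreover $\phi$ intertwines the edge operators, i.e. $\phi\circ K=K^\vee\circ\phi$ by \eqref{phi-K} and $\phi\circ\cD=\cD^\vee\circ\phi$, $\phi\circ\pd=\pd^\vee\circ\phi$ by \eqref{phi-pd&cD}. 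Finally, since $\phi$ acts locally, vertex by vertex and edge by edge, it never mixes connected components, so $\phi(\Gamma_1\sqcup\Gamma_2)=\phi(\Gamma_1)\sqcup\phi(\Gamma_2)$; hence $\phi$ is multiplicative for the disjoint-union product, $\phi(x\cdot y)=\phi(x)\cdot\phi(y)$ on the relevant graded pieces, and it preserves all the bookkeeping data (the number of internal edges, the genus and valence of each vertex, the orders $|\Aut|$).

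Next I would simply apply $\phi$ to both sides of \eqref{eq-thm2}. The left-hand side becomes $\phi(K\wcF_{g,n})=K^\vee\phi(\wcF_{g,n})=K^\vee\wcF^\vee_{g,n}$. On the right-hand side, $\phi(\cD\cD\wcF_{g-1,n})=\cD^\vee\cD^\vee\wcF^\vee_{g-1,n}$, and for each summand, using multiplicativity and the intertwining relations, $\phi(\cD\wcF_{g_1,n_1}\cdot\cD\wcF_{g_2,n_2})=\cD^\vee\wcF^\vee_{g_1,n_1}\cdot\cD^\vee\wcF^\vee_{g_2,n_2}$. This yields the general recursion. For the $n=0$ specialization one sets $n=0$ and uses the dual versions of the conventions \eqref{convention-1}, namely $\pd^\vee\wcF^\vee_1=\cD^\vee\wcF^\vee_1:=\wcF^\vee_{1,1}$, $\cD^\vee\wcF^\vee_{0,2}:=3\wcF^\vee_{0,3}$, $\cD^\vee\cD^\vee\wcF^\vee_{0,1}:=6\wcF^\vee_{0,3}$, which are the $\phi$-images of \eqref{convention-1} and are consistent with Lemma \ref{lem-new-D} formally applied; one may also invoke $\cD^\vee=\pd$ and $\pd^\vee=\cD$ from Theorem \ref{thm1} to rewrite the display in the stated form.

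The only genuinely delicate point is the multiplicativity of $\phi$: the product appearing in Theorem \ref{thm-original-rec} is the multiplication in the polynomial algebra generated by connected stable graphs under disjoint union, and one must check this product is carried by $\phi$ to the corresponding product of dotted graphs. This is immediate from the fact that $\phi$ commutes with disjoint union, but it deserves an explicit sentence, as does the remark that $\phi$ leaves the degenerate-case conventions unchanged so that they transport verbatim. Everything else is a formal substitution, so I expect the write-up to be short.
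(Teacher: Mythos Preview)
Your proposal is correct and follows essentially the same approach as the paper: the paper's argument is simply to apply $\phi$ to Theorem~\ref{thm-original-rec}, using $\phi(\wcF_{g,n})=\wcF_{g,n}^\vee$ together with the intertwining relations \eqref{phi-K} and \eqref{phi-pd&cD}. Your explicit remark on the multiplicativity of $\phi$ with respect to disjoint union is a helpful elaboration that the paper leaves implicit, and your appeal to Theorem~\ref{thm1} at the end is unnecessary (the stated form is already in terms of $\cD^\vee$ and $\pd^\vee$, which is what $\phi$ produces directly), but none of this changes the substance.
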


\begin{Example}
The dual free energy of genus $2$ is given by
\begin{flalign*}
\begin{tikzpicture}
\node [align=center,align=center] at (0.1+0.7,0) {$\wcF_2^\vee=$};
\draw [dotted,thick](1+0.3+0.2,0) circle [radius=0.2];
\node [align=center,align=center] at (1+0.3+0.2,0) {$2$};
\node [align=center,align=center] at (1.6+0.2+0.1+0.2,0) {$+\frac{1}{2}$};
\draw [dotted,thick](1+1.4+0.2+0.1+0.2,0) circle [radius=0.2];
\draw [dotted,thick](0.84+1.4+0.2+0.1+0.2,0.1) .. controls (0.5+1.4+0.2+0.1+0.2,0.2) and (0.5+1.4+0.2+0.1+0.2,-0.2) ..  (0.84+1.4+0.2+0.1+0.2,-0.1);
\node [align=center,align=center] at (1+1.4+0.2+0.1+0.2,0) {$1$};
\node [align=center,align=center] at (3+0.2+0.2+0.1,0) {$+\frac{1}{2}$};
\draw [dotted,thick](1+2.6+0.2+0.2+0.1,0) circle [radius=0.2];
\draw [dotted,thick](1.2+2.6+0.2+0.2+0.1,0)--(1.4+2.6+0.2+0.2+0.1,0);
\draw [dotted,thick](1.6+2.6+0.2+0.2+0.1,0) circle [radius=0.2];
\node [align=center,align=center] at (1+2.6+0.2+0.2+0.1,0) {$1$};
\node [align=center,align=center] at (1.6+2.6+0.2+0.2+0.1,0) {$1$};
\node [align=center,align=center] at (5+0.3,0) {$+\frac{1}{8}$};
\draw [dotted,thick](1+4.8+0.3,0) circle [radius=0.2];
\draw [dotted,thick](0.84+4.8+0.3,0.1) .. controls (0.5+4.8+0.3,0.2) and (0.5+4.8+0.3,-0.2) ..  (0.84+4.8+0.3,-0.1);
\draw [dotted,thick](1.16+4.8+0.3,0.1) .. controls (1.5+4.8+0.3,0.2) and (1.5+4.8+0.3,-0.2) ..  (1.16+4.8+0.3,-0.1);
\node [align=center,align=center] at (1+4.8+0.3,0) {$0$};
\node [align=center,align=center] at (6.6+0.4,0) {$+\frac{1}{2}$};
\draw [dotted,thick](1+6.8+0.4,0) circle [radius=0.2];
\draw [dotted,thick](0.4+6.8+0.4,0) circle [radius=0.2];
\draw [dotted,thick](0.6+6.8+0.4,0)--(0.8+6.8+0.4,0);
\draw [dotted,thick](1.16+6.8+0.4,0.1) .. controls (1.5+6.8+0.4,0.2) and (1.5+6.8+0.4,-0.2) ..  (1.16+6.8+0.4,-0.1);
\node [align=center,align=center] at (1+6.8+0.4,0) {$0$};
\node [align=center,align=center] at (0.4+6.8+0.4,0) {$1$};
\node [align=center,align=center] at (8.6+0.4,0) {$+\frac{1}{8}$};
\draw [dotted,thick](1+9+0.4,0) circle [radius=0.2];
\draw [dotted,thick](0.4+9+0.4,0) circle [radius=0.2];
\draw [dotted,thick](0.6+9+0.4,0)--(0.8+9+0.4,0);
\draw [dotted,thick](1.16+9+0.4,0.1) .. controls (1.5+9+0.4,0.2) and (1.5+9+0.4,-0.2) ..  (1.16+9+0.4,-0.1);
\draw [dotted,thick](0.24+9+0.4,0.1) .. controls (-0.1+9+0.4,0.2) and (-0.1+9+0.4,-0.2) ..  (0.24+9+0.4,-0.1);
\node [align=center,align=center] at (1+9+0.4,0) {$0$};
\node [align=center,align=center] at (0.4+9+0.4,0) {$0$};
\node [align=center,align=center] at (11.3,0) {$+\frac{1}{12}$};
\draw [dotted,thick](1+10.2+0.2+0.5,0) circle [radius=0.2];
\draw [dotted,thick](1.2+10.2+0.2+0.5,0)--(1.4+10.2+0.2+0.5,0);
\draw [dotted,thick](1.16+10.2+0.2+0.5,0.1)--(1.44+10.2+0.2+0.5,0.1);
\draw [dotted,thick](1.16+10.2+0.2+0.5,-0.1)--(1.44+10.2+0.2+0.5,-0.1);
\draw [dotted,thick](1.6+10.2+0.2+0.5,0) circle [radius=0.2];
\node [align=center,align=center] at (1+10.2+0.2+0.5,0) {$0$};
\node [align=center,align=center] at (1.6+10.2+0.2+0.5,0) {$0$};
\end{tikzpicture}.&&
\end{flalign*}

Using the expressions in Example \ref{eg-genus2-graph},
we are able to express $\wcF_2^\vee$ entirely in terms of ordinary stable graphs.
The final result is simple:
\be
\begin{tikzpicture}
\node [align=center,align=center] at (0.2,0) {$\wcF_2^\vee=$};
\draw (1,0) circle [radius=0.2];
\node [align=center,align=center] at (1,0) {$2$};
\end{tikzpicture}.
\ee

\end{Example}

This is indeed another example of the duality we mentioned
at the end of subsection \ref{subsec3.2}.
In general, we have the following duality theorem.

\begin{Theorem}
\label{thm-dual-FE}
For $2g-2+n>0$, $\wcF_{g,n}^\vee$ equals $\frac{1}{n!}$ times
the ordinary stable vertex of genus $g$ with $n$ external edges:
\be\label{thm2-eq1}
\begin{tikzpicture}
\node [align=center,align=center] at (-1.2,0) {$\wcF_{g,n}^\vee=\frac{1}{n!}$};
\draw (0,0) circle [radius=0.2];
\node [align=center,align=center] at (0,0) {$g$};
\draw (0.16,0.1)--(0.5,0.35);
\draw (0.16,-0.1)--(0.5,-0.35);
\node [align=center,align=center] at (0.5,0.1) {$\vdots$};
\node [align=center,align=center] at (0.7,-0.15) {$n$};
\end{tikzpicture}.
\ee
In particular, for $g\geq 2$, we have
\be\label{thm2-eq2}
\begin{tikzpicture}
\node [align=center,align=center] at (0.2,0) {$\wcF_g^\vee=$};
\draw (1,0) circle [radius=0.2];
\node [align=center,align=center] at (1,0) {$g$};
\end{tikzpicture}.
\ee
\end{Theorem}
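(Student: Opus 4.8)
The strategy is to prove \eqref{thm2-eq1} by induction, using the recursion relations for the dual abstract $n$-point functions (Lemma \ref{lem-new-D}, Theorem \ref{thm-new-rec}) rewritten via Theorem \ref{thm1}. Denote by $V_{g,n}$ the ordinary stable vertex of genus $g$ with $n$ external edges and set $T_{g,n}:=\frac{1}{n!}V_{g,n}$, so the claim is $\wcF_{g,n}^{\vee}=T_{g,n}$. Since $\cD^{\vee}=\pd$ and $V_{g,n}$ has no internal edges, $\pd$ acts on it simply by attaching one external edge to its unique vertex, so $\cD^{\vee}T_{g,n}=\pd T_{g,n}=(n+1)T_{g,n+1}$; thus $T_{g,n}$ obeys the relation of Lemma \ref{lem-new-D}, and that relation propagates the identity in $n$ once it is known for the least admissible $n$ in a fixed genus. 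Hence it suffices to treat $\wcF_{0,3}^{\vee}$, $\wcF_{1,1}^{\vee}$, and the free energies $\wcF_g^{\vee}$ for $g\ge 2$. The first two are checked directly: $\wcF_{0,3}^{\vee}=\frac16\Gamma^{\vee}$ with $\Gamma^{\vee}$ the trivalent genus-$0$ dotted vertex, which by Example \ref{eg-new-vert} equals the ordinary trivalent vertex; and $\wcF_{1,1}^{\vee}=\phi(\wcF_{1,1})$ collapses to $V_{1,1}$ upon inserting the dotted expansions of Example \ref{eg-new-vert}.

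For $g\ge 2$ I would induct on $g$, assuming $\wcF_{g',n'}^{\vee}=T_{g',n'}$ for all admissible $(g',n')$ with $g'<g$. Substituting $\cD^{\vee}=\pd$, $\pd^{\vee}=\cD$ into the quadratic recursion of Theorem \ref{thm-new-rec}, and using $\cD\wcF_r^{\vee}=\cD V_r=V_{r,1}$ and $\cD^{\vee}\pd^{\vee}\wcF_{g-1}^{\vee}=\pd\cD V_{g-1}=\pd V_{g-1,1}=V_{g-1,2}$ (with the conventions of Theorem \ref{thm-original-rec} for $r=1$ and $g-1=1$), the inductive hypothesis yields
\[
K^{\vee}\wcF_g^{\vee}=\tfrac12\Bigl(V_{g-1,2}+\sum_{r=1}^{g-1}V_{r,1}\sqcup V_{g-r,1}\Bigr).
\]
Moreover the coefficient of the bare vertex $V_g$ in $\wcF_g^{\vee}=\phi(\wcF_g)$ is $1$: $\phi$ is triangular with diagonal entries $\pm1$ with respect to the number of internal edges (\S\ref{subsec3.2}), $V_g$ is the unique edge-free connected genus-$g$ graph with no external edges and occurs in $\wcF_g$ with coefficient $1$, and $\phi$ sends every other graph of $\wcF_g$ to a combination of graphs each with at least one internal edge.

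To finish one wants $\wcF_g^{\vee}-V_g=0$. This difference has vanishing $V_g$-coefficient, so it is enough to know that it lies in $\ker K^{\vee}$ and that $\ker\bigl(K^{\vee}|_{\cV_{g,0}^{c}}\bigr)$ meets the subspace of vectors with vanishing $V_g$-coefficient only in $0$. The second point holds because, by \eqref{phi-K}, $\ker K^{\vee}$ on this space is $\phi$ of $\ker K$, and $K$ on connected genus-$g$ graphs without external edges has kernel exactly $\bQ V_g$: on the piece with a fixed number $k\ge 1$ of internal edges, the operator $\Psi$ ``reconnect the two external edges'' is a well-defined left inverse of $K$ up to the factor $k$ (if $\Gamma$ has no external edge then each $\Gamma_e$ has exactly two, and regluing them returns $\Gamma$), so $K$ is injective there; hence $\ker\bigl(K^{\vee}|_{\cV_{g,0}^{c}}\bigr)=\bQ\,\phi(V_g)=\bQ\,V_g^{\vee}=\bQ\,\wcF_g$, whose only vector with zero $V_g$-coefficient is $0$.

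The main obstacle is the remaining input, namely that $\wcF_g^{\vee}-V_g\in\ker K^{\vee}$, equivalently $K^{\vee}V_g=\frac12\bigl(V_{g-1,2}+\sum_r V_{r,1}\sqcup V_{g-r,1}\bigr)$. Unwinding $K^{\vee}$ through the change of basis between dotted and ordinary graphs identifies this with the involution property $\phi^2=\Id$ evaluated on $\wcF_g$, so it is not a formal consequence of the recursions. I would prove it directly from the marking construction of \S\ref{sec:Dotted}: expand $\wcF_{g,n}^{\vee}$ fully into ordinary stable graphs and show, by a sign-reversing cancellation organised by the number of internal edges (the sign $(-1)^{|E^{\vee}|}$ being the crucial feature), that every ordinary stable graph other than the bare vertex occurs with total coefficient $0$, while the bare vertex occurs with coefficient $\frac{1}{n!}$ as in the second paragraph. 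Granting this, \eqref{thm2-eq2} follows, and then \eqref{thm2-eq1} for all $n$ as explained.
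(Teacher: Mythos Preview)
Your reduction of \eqref{thm2-eq1} to the $n=0$ case via Lemma \ref{lem-new-D} and $\cD^{\vee}=\pd$ is exactly what the paper does, and the base cases $(0,3)$ and $(1,1)$ are handled the same way.

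The inductive scheme for $g\ge 2$, however, does not close, and you correctly diagnose why: the step $K^{\vee}V_g=\tfrac12\bigl(V_{g-1,2}+\sum_r V_{r,1}\sqcup V_{g-r,1}\bigr)$ is equivalent to $\phi^2(\wcF_g)=\wcF_g$, which in the paper is a \emph{corollary} of the theorem (Theorem \ref{thm:Involution}), not an input. So the recursion and the kernel computation for $K^{\vee}$, while correct as stated, cannot finish the argument on their own.

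Your proposed remedy --- expand $\wcF_g^{\vee}$ in ordinary graphs and exhibit a sign-reversing cancellation coming from $(-1)^{|E^{\vee}|}$ --- is precisely the paper's proof, and once you carry it out the inductive apparatus becomes redundant. Concretely, the paper fixes an ordinary $\Gamma\in\cG_{g,0}^{c}$ and, by a careful ``naming'' bookkeeping that tracks which internal edges of $\Gamma$ descend from dotted edges of each contributing $\Gamma^{\vee}$, shows that the coefficient of $\Gamma$ in $\wcF_g^{\vee}$ is
\[
\frac{1}{|\Aut(\Gamma)|}\sum_{E\subset E(\Gamma)}(-1)^{|E|},
\]
which is $0$ unless $E(\Gamma)=\emptyset$, i.e.\ $\Gamma=V_g$. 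The point is that the subset $E$ records exactly the edges of $\Gamma$ inherited from the dotted internal edges, and summing over all dotted graphs amounts to summing over all such subsets. This is the ``sign-reversing cancellation'' you allude to; the substance of the paper's proof lies in justifying the passage to $\sum_{E\subset E(\Gamma)}(-1)^{|E|}$ through repeated applications of the name/forget argument of Lemma \ref{lemma-mark-1}. Your sketch points in the right direction but omits this bookkeeping, which is where the work is.
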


\begin{proof}
The cases of $(g,n)=(0,3)$ and $(1,1)$ have already been checked
in Example \ref{eg-inverse}.
For all other cases,
we only need to prove \eqref{thm2-eq2} since by Theorem \ref{thm1}
and Lemma \ref{lem-new-D} we have for $g \geq 2$,
\be
\wcF_{g,n}^\vee=\frac{1}{n!}(\cD^\vee)^n \wcF_g^\vee=\frac{1}{n!}\pd^n \wcF_g^\vee,
\ee
which gives \eqref{thm2-eq1} if we have \eqref{thm2-eq2}.
The general $g=0$ and $g=1$ cases can be proved in the same fashion.

For a given $g\geq 2$,
by definition,
\ben
\wcF_g^\vee=\sum_{\Gamma^\vee\in\cG_g^{\vee,c}}
\frac{1}{|\Aut(\Gamma^\vee)|}\Gamma^\vee
\een
can be written as a linear combination of ordinary graphs
of genus $g$ without external edges.
For every ordinary graph $\Gamma$ of genus $g$ without external edges,
the coefficient of $\Gamma$ in $\wcF_g^\vee$ is
\be\label{pf-thm2-coeff}
\sum_{\Gamma^\vee\in\cG_g^{\vee,c}}
\frac{1}{|\Aut(\Gamma^\vee)|}A_{\Gamma^\vee}^\Gamma,
\ee
where $A_{\Gamma^\vee}^\Gamma$ is the coefficient of $\Gamma$
in the expression of $\Gamma^\vee$.

Given an ordinary graph $\Gamma$, we first consider all possible graphs $\Gamma^\vee$
whose expansion in ordinary graphs contains a nonzero multiple of $\Gamma$.
Let us explain how to reconstruct a possible $\Gamma^\vee$
from $\Gamma$ by `choosing internal edges'.
In fact, from the definition of the elements in $\cV$ associated to dotted stable graphs,
we see that the set of internal edges of $\Gamma^\vee$ can be
mapped injectively into the set $E(\Gamma)$ of internal edges of $\Gamma$.
Conversely, for every subset $E$ of $E(\Gamma)$, we can construct
a dotted stable graph $\Gamma_E^\vee$
by inductively contract the edges in $E(\Gamma)\backslash E$ as follows.
Denote the elements in $E(\Gamma)\backslash E$ by $\{e_1, \dots, e_n\}$.
We first contract $e_1$ to get a graph $\Gamma_1$.
We have two cases to consider.
If $e_1$ connects two vertices $v_1$ and $v_2$,
labelled by  $g(v_1)$ and $g(v_2)$ respectively,
then the vertex in $\Gamma_1$ corresponding to these two vertices
is labelled by $g(v_1) +g(v_2)$;
if $e_1$ is a loop incident at a vertex $v$ labelled by $g(v)$,
then we label the corresponding vertex in $\Gamma_1$ by $g(v) +1$.
Repeat this procedure $n$ times,
we end up with a graph $\Gamma_n$.
It is not hard to see that $\Gamma_n$ is independent of the choice
of the ordering of the set $E(\Gamma)-E$.
The dotted graph $\Gamma^\vee_E$ is just $\phi(\Gamma)$,
i.e., it is obtained from $\Gamma_n$ by
changing the edges in $\Gamma_n$ to dotted edges,
and vertices to dotted vertices.
For example, for the stable graph
\ben
\begin{tikzpicture}
\draw (0,0) circle [radius=0.2];
\node [align=center,align=center] at (0,0) {$2$};
\draw (0.8,0) circle [radius=0.2];
\node [align=center,align=center] at (0.8,0) {$1$};
\draw (0.2,0)--(0.6,0);
\draw (0.96,0.1) .. controls (1.3,0.2) and (1.3,-0.2) ..  (0.96,-0.1);
\node [above,align=center] at (0.4,0) {$\alpha$};
\node [align=center,right] at (1.3,0) {$\beta$};
\end{tikzpicture}
\een
of genus $4$ with two internal edges $\alpha,\beta$,
for $E=\emptyset$, $E=\{\alpha\}$, $E=\{\beta\}$, $E=\{\alpha,\beta\}$,
the corresponding dotted stable graphs $\Gamma^\vee_E$ are
\ben
\begin{tikzpicture}
\draw [dotted,thick](0,0) circle [radius=0.2];
\node [align=center,align=center] at (0,0) {$4$};
\draw [dotted,thick](2,0) circle [radius=0.2];
\node [align=center,align=center] at (2,0) {$2$};
\draw [dotted,thick](2.6,0) circle [radius=0.2];
\node [align=center,align=center] at (2.6,0) {$2$};
\draw [dotted,thick](2.2,0)--(2.4,0);
\draw [dotted,thick](4.6,0) circle [radius=0.2];
\node [align=center,align=center] at (4.6,0) {$3$};
\draw [dotted,thick](4.76,0.1) .. controls (5.1,0.2) and (5.1,-0.2) ..  (4.76,-0.1);
\draw [dotted,thick](0+7,0) circle [radius=0.2];
\node [align=center,align=center] at (0+7,0) {$2$};
\draw [dotted,thick](0.8+6.8,0) circle [radius=0.2];
\node [align=center,align=center] at (0.8+6.8,0) {$1$};
\draw [dotted,thick](0.2+7,0)--(0.6+6.8,0);
\draw [dotted,thick](0.96+6.8,0.1) .. controls (1.3+6.8,0.2) and (1.3+6.8,-0.2) ..  (0.96+6.8,-0.1);
\end{tikzpicture}
\een
respectively.

Now we fix a stable graph $\Gamma\in\cG_{g,0}^c$
together with a dotted stable graph $\Gamma^\vee \in\cG_{g,0}^{\vee,c}$.
Define $S_{\Gamma,\Gamma^\vee}$ to be the following set of
stable graphs with different names on some of the half-edges:
\ben
S_{\Gamma,\Gamma^\vee}:=\{\widetilde\Gamma
&|&
\widetilde\Gamma\text{ is a stable graph with names, obtained from $\Gamma$ by adding}\\
&&\text{$2|E|$ different names on half-edges of $E$,
where $E\subset E(\Gamma)$}\\
&&\text{such that $\Gamma^\vee$ can be obtained from $\Gamma$ and $E$}\}.
\een
Then we claim:
\be\label{thm2-claim1}
\frac{A_{\Gamma^\vee}^\Gamma}{|\Aut(\Gamma^\vee)|}=
\frac{(-1)^{|E^\vee(\Gamma^\vee)|}}{(2|E^\vee(\Gamma^\vee)|)!}
\cdot\sum_{\widetilde\Gamma\in S_{\Gamma,\Gamma^\vee}}
\frac{1}{|\Aut(\widetilde\Gamma)|}.
\ee

Now let us prove the above claim.
First we define
\ben
S_{\Gamma,\Gamma^\vee}^\vee:=\{
\widetilde\Gamma^\vee&|&
\text{$\widetilde\Gamma^\vee$ is obtained from $\Gamma^\vee$
by adding $2|E^\vee(\Gamma^\vee)|$ different}\\
&&\text{names to every half edge of $\Gamma^\vee$}\}.
\een
Then similar to the case of Lemma \ref{lemma-mark-1},
using the argument of `forgetting names',
we can prove:
\be\label{thm2-claim1-pf}
\frac{(2|E^\vee(\Gamma^\vee)|)!}{|\Aut(\Gamma^\vee)|}=
\sum_{\widetilde\Gamma^\vee\in S_{\Gamma,\Gamma^\vee}^\vee}
\frac{1}{|\Aut(\widetilde\Gamma^\vee)|}
=\sum_{\widetilde\Gamma^\vee\in S_{\Gamma,\Gamma^\vee}^\vee}1
=|S_{\Gamma,\Gamma^\vee}^\vee|.
\ee

Recall that the coefficients $A_{\Gamma^\vee}^\Gamma$ are defined in \S \ref{subsec3.2}
by first assigning names to the half-edges of $\Gamma^\vee$,
and representing the dotted stable vertices of $\Gamma^\vee$
as linear combinations of ordinary graphs
$\widetilde\Gamma_1,\cdots,\widetilde\Gamma_k$ with names on external edges
(with coefficients $\frac{1}{|\Aut(\widetilde\Gamma_1)|},
\cdots,\frac{1}{|\Aut(\widetilde\Gamma_k)|}$ respectively),
then gluing these graphs together according to the names
and multiplying by $(-1)^{|E^\vee(\Gamma^\vee)|}$.
Therefore
\ben
|S_{\Gamma,\Gamma^\vee}^\vee |\cdot A_{\Gamma^\vee}^\Gamma
=(-1)^{|E^\vee(\Gamma^\vee)|}\cdot\sum
\frac{1}{|\Aut(\widetilde\Gamma_1)|\cdots|\Aut(\widetilde\Gamma_k)|},
\een
where the sum on the right-hand-side is over all possible ways
to choose $\widetilde\Gamma_1,\cdots,\widetilde\Gamma_k$.
Notice that we only glue the half-edges with names together,
thus
\ben
\frac{1}{|\Aut(\widetilde\Gamma_1)|\cdots|\Aut(\widetilde\Gamma_k)|}=
\frac{1}{|\Aut(\widetilde\Gamma)|}
\een
where $\widetilde\Gamma$ is the resulted graph with names
on half-edges which correspond to external edges of
$\widetilde\Gamma_1,\cdots,\widetilde\Gamma_k$.
Thus we have
\ben
|S_{\Gamma,\Gamma^\vee}^\vee |\cdot A_{\Gamma^\vee}^\Gamma=
(-1)^{|E^\vee(\Gamma^\vee)|}\cdot\sum_{\widetilde\Gamma\in S_{\Gamma,\Gamma^\vee}}
\frac{1}{|\Aut(\widetilde\Gamma)|}.
\een
By this equation and equation \eqref{thm2-claim1-pf},
we have proved the claim \eqref{thm2-claim1}.

Now recall that the coefficient of a stable graph $\Gamma$ in $\wcF_g^\vee$
is given by \eqref{pf-thm2-coeff},
thus by the claim \eqref{thm2-claim1} we can rewrite this coefficient as
\be\label{thm2-pf-coeff2}
\sum_{\Gamma^\vee\in\cG_g^{\vee,c}}
\frac{1}{|\Aut(\Gamma^\vee)|}A_{\Gamma^\vee}^\Gamma
=
\sum_{\Gamma^\vee\in\cG_g^{\vee,c}}
\frac{(-1)^{|E^\vee(\Gamma^\vee)|}}{(2|E^\vee(\Gamma^\vee)|)!}
\sum_{\widetilde\Gamma\in S_{\Gamma,\Gamma^\vee}}
\frac{1}{|\Aut(\widetilde\Gamma)|}.
\ee

Let $S_{\Gamma,\Gamma^\vee}'$ be the following set of stable graphs
with a same name on some of the half-edges:
\ben
S_{\Gamma,\Gamma^\vee}':=\{\Gamma'
&|&
\text{$\Gamma'$ is obtained from $\Gamma$ by adding the same name to}\\
&&\text{$2|E|$ half-edges of $E$,
where $E\subset E(\Gamma)$ such that }\\
&&\text{$\Gamma^\vee$ can be obtained from $\Gamma$ and $E$}\}.
\een
Then we can apply the argument of `forgetting names' again,
we get
\ben
\frac{1}{(2|E^\vee(\Gamma^\vee)|)!}
\sum_{\widetilde\Gamma\in T_{\Gamma'}}
\frac{1}{|\Aut(\widetilde\Gamma)|}=
\frac{1}{|\Aut(\Gamma')|}
\een
for every $\Gamma'\in S_{\Gamma,\Gamma^\vee}'$,
where the set $T_{\Gamma'}$ is defined to be the subset of $S_{\Gamma,\Gamma^\vee}$,
consisting of those $\Gamma^\vee$ such that $\Gamma^\vee$ can be obtained from $\Gamma'$
by changing the names on $\Gamma'$ to $2|E^\vee(\Gamma^\vee)|$ different names.
Therefore,
we can rewrite the coefficient \eqref{thm2-pf-coeff2} as
\ben
\sum_{\Gamma^\vee\in\cG_g^{\vee,c}}
\frac{(-1)^{|E^\vee(\Gamma^\vee)|}}{(2|E^\vee(\Gamma^\vee)|)!}
\sum_{\widetilde\Gamma\in S_{\Gamma,\Gamma^\vee}}
\frac{1}{|\Aut(\widetilde\Gamma)|}=
\sum_{\Gamma^\vee\in\cG_g^{\vee,c}}(-1)^{|E^\vee(\Gamma^\vee)|}
\sum_{\Gamma'\in S_{\Gamma,\Gamma^\vee}'}\frac{1}{|\Aut(\Gamma')|}.
\een
Notice that we can also regard $\Gamma'$ as a stable graph
with a same name on some of its internal edges (instead of half-edges),
thus applying the argument of `forgetting names' one more time,
we get
\be\label{eq-coeff-final}
\sum_{\Gamma^\vee\in\cG_g^{\vee,c}}(-1)^{|E^\vee(\Gamma^\vee)|}
\sum_{\Gamma'\in S_{\Gamma,\Gamma^\vee}'}\frac{1}{|\Aut(\Gamma')|}
=
\frac{1}{|\Aut(\Gamma)|}\cdot\sum_{E\subset E(\Gamma)}(-1)^{|E|} .
\ee
Clearly $\sum\limits_{E\subset E(\Gamma)}(-1)^{|E|}=0$ if $E(\Gamma)$ is not empty.
And if $E(\Gamma)=\emptyset$,
i.e., $\Gamma$ is a single ordinary stable vertex,
then the coefficient of $\Gamma$ is $\frac{1}{|\Aut(\Gamma)|}$.
This proves the conclusion \eqref{thm2-eq2}.
\end{proof}

Denote by $V_{g,n}$ an ordinary stable vertex
of genus $g$ and valence $n$,
and $V_{g,n}^\vee$ a dotted stable vertex of genus $g$ and valence $n$.
The above theorem simply tells us
$\phi(V_{g,n}^\vee)=n!\cdot\wcF_{g,n}^\vee=V_{g,n}$,
i.e., $\phi^2(V_{g,n})=V_{g,n}$ for every $2g-2+n>0$.
We will see that this relation is true for a stable graph in general.

\begin{Theorem} \label{thm:Involution}
For any ordinary stable graph $\Gamma$,
we have $\phi^2(\Gamma)=\Gamma$.
I.e., the linear map $\phi: \cV\to \cV$ is an involution.
\end{Theorem}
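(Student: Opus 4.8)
The plan is to reduce $\phi^2=\Id$ to the case of a single stable vertex, which is already settled in the paragraph preceding the statement ($\phi^2(V_{g,n})=\phi(V^\vee_{g,n})=V_{g,n}$, via Theorem~\ref{thm-dual-FE}), and then to propagate this to all stable graphs along the two operations from which every stable graph is assembled: disjoint union, and the gluing of a pair of external edges into an internal edge. It is convenient to run the argument for graphs whose external edges carry pairwise distinct names, as in \S\ref{sec:Dotted}: $\phi$ is defined on such graphs by the same recipe as in \S\ref{subsec3.2} but without the final step of forgetting the names. This refines the $\phi$ on $\cV$ — applying the named $\phi$ and then forgetting names agrees with first forgetting names and then applying the $\phi$ of $\cV$, by the very definition of $\Gamma^\vee$ — so it suffices to prove $\phi^2=\Id$ for named stable graphs.

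First, $\phi$ respects disjoint union: $\phi(\Gamma_1\sqcup\Gamma_2)=\phi(\Gamma_1)\sqcup\phi(\Gamma_2)$, extended bilinearly. This is immediate from the construction, since cutting all dotted internal edges of $(\Gamma_1\sqcup\Gamma_2)^\vee$ yields the disjoint union of the dotted vertices of $\Gamma_1^\vee$ and of $\Gamma_2^\vee$, the expansions and regluings take place inside each factor, and the sign $(-1)^{|E^\vee|}$ factors. Second, for two names $a,b$ let $G_{a,b}$ be the operation deleting the external edges named $a$ and $b$ and joining them into one internal edge. The key identity is
\be
\phi\bigl(G_{a,b}\Gamma\bigr)=-\,G_{a,b}\,\phi(\Gamma),
\ee
extended linearly. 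Indeed, $G_{a,b}\Gamma$ and $\Gamma$ have literally the same dotted vertices once all dotted internal edges are cut (cutting the new edge just returns the half-edges $a,b$ to the vertices they sat on), hence the same expansions of those vertices; the regluing recipe for $G_{a,b}\Gamma$ is that for $\Gamma$ together with the single extra instruction of joining the two edges still named $a$ and $b$ (the construction never renames an external edge, so $a$ and $b$ survive into every term of $\phi(\Gamma)$); and $G_{a,b}\Gamma$ has exactly one more dotted internal edge than $\Gamma$, which accounts for the extra factor $-1$. Iterating gives $\phi^2(G_{a,b}\Gamma)=G_{a,b}\phi^2(\Gamma)$, and of course $\phi^2$ is again multiplicative under $\sqcup$.

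The induction is on the number of internal edges of a named stable graph $\Gamma$. If $|E(\Gamma)|\ge 1$, choose an internal edge, name its two ends $a,b$, and write $\Gamma=G_{a,b}\Gamma'$ with $|E(\Gamma')|=|E(\Gamma)|-1$; then $\phi^2(\Gamma)=G_{a,b}\phi^2(\Gamma')=G_{a,b}\Gamma'=\Gamma$ by the inductive hypothesis. If $|E(\Gamma)|=0$ then $\Gamma$ is a disjoint union of named stable vertices, so by $\sqcup$-multiplicativity we are reduced to a single named vertex $V^{\mathrm{named}}_{g,n}$. To obtain $\phi^2(V^{\mathrm{named}}_{g,n})=V^{\mathrm{named}}_{g,n}$ from the unnamed identity $\phi^2(V_{g,n})=V_{g,n}$, observe that $\phi$ commutes with the $S_n$-action relabelling external edges, that $V^{\mathrm{named}}_{g,n}$ is fixed by this action (any two namings of a corolla differ by an automorphism), hence so is $\phi^2(V^{\mathrm{named}}_{g,n})$; and forgetting names is injective on $S_n$-invariant combinations of named type-$(g,n)$ graphs, because such a combination is a combination of $S_n$-orbit sums — one orbit per unnamed graph — and forgetting names multiplies each orbit sum by its (nonzero) number of elements. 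Since $\phi^2(V^{\mathrm{named}}_{g,n})$ and $V^{\mathrm{named}}_{g,n}$ are $S_n$-invariant and have the same image under forgetting names, they coincide. Forgetting names in the resulting identity $\phi^2=\Id$ for named graphs gives the theorem.

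I expect the main difficulty to lie in the second compatibility property: turning the informal statements — \emph{the same dotted vertices}, \emph{the same regluing plus one pair}, \emph{one extra factor $-1$} — into rigorous bookkeeping that keeps precise track of the names and of the single sign. Once that identity is established the induction is purely formal. An alternative that avoids names altogether would be to compute directly the coefficient of an arbitrary stable graph in $\phi^2(\Gamma)$ via the inclusion–exclusion over subsets of internal edges used in the proof of Theorem~\ref{thm-dual-FE}; this is likely workable but seems less transparent than the multiplicativity argument above.
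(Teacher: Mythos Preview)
Your proof is correct and follows essentially the same strategy as the paper: reduce to the single-vertex case established just before the theorem via Theorem~\ref{thm-dual-FE}, and then use that $\phi$ is compatible with disjoint union and gluing (your identity $\phi\circ G_{a,b}=-\,G_{a,b}\circ\phi$ is precisely what underlies the paper's one-line claim that $\phi(\Gamma^\vee)$ can be computed by applying $\phi$ to each dotted vertex and regluing with the sign $(-1)^{|E^\vee|}$). The paper does this in a single step rather than by induction on $|E(\Gamma)|$ and is silent on the named/unnamed passage for vertices that you handle via $S_n$-equivariance, so your write-up is more explicit where the paper is terse, but the argument is the same.
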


\begin{proof}
Let $\Gamma^\vee:=\phi (\Gamma)$,
then $\Gamma^\vee$ is a dotted stable graph whose vertices
are all of the same type as the ordinary vertices of $\Gamma$.
By giving different names to half-edges of $\Gamma^\vee$
and cutting off all its internal edges,
we get a set of dotted stable vertices $\{V_{g_i,n_i}^\vee\}$.
We already know that $\phi(V_{g_i,n_i}^\vee)=V_{g_i,n_i}$.
Notice that $\phi(\Gamma^\vee)$ can be obtained by
gluing external edges with names of
$\phi(V_{g_i,n_i}^\vee)=V_{g_i,n_i}$ together
if their ancestors are joined together,
and finally forgetting the names and multiplying by $(-1)^{|E^\vee(\Gamma^\vee)|}$,
therefore we indeed get $\phi(\Gamma^\vee)=\Gamma$,
i.e., $\phi^2(\Gamma)=\Gamma$.
\end{proof}

\section{Realizations of the Abstract QFT vs. Dual Realizations of the Dual Abstract QFT}
\label{sec:realization}

Given a realization $\{F_{g,n},\kappa\}$ of the abstract quantum field theory,
we are able to construct a natural realization of the dual abstract quantum field theory.
In this section we describe such a procedure.
We will focus on the one-dimensional case (i.e., stable graphs without labels)
in this section.

\subsection{Realizations of the dual abstract QFT}

Let us first recall the construction of a realization of abstract QFT \cite{wz}
(see also \S \ref{sec-pre-realization}).
Let $\{F_{g,n}(t, \kappa)\}_{2g-2+n>0}$ be a sequence of functions,
and let $\kappa$ be a formal variable
(it can be taken to be arbitrary smooth function in $t$ in examples).
Then the Feynman rule \eqref{eqn:FR} gives us
a realization of the abstract quantum field theory
(see \S \ref{sec-pre-realization}).
The abstract free energy $\wcF_g$ ($g\geq2$)
and abstract $n$-point functions $\wcF_{g,n}$ ($2g-2+n>0$)
can be realized by functions $\wF_{g}(t,\kappa)$ and $\wF_{g,n}(t,\kappa)$
respectively.

Now this construction can be applied to the dual abstract QFT,
the main difference is that the realizations of the dual abstract QFT use the dotted
stable graphs,
and to get the Feynman rules,
we need a collection $\{F^\vee_{g,n}(t, \kappa^\vee)\}_{2g-2+n>0}$ that we use for
 the vertices,
and a formal variable $\kappa^\vee$ which we use as the propagator.
For the present,
we do not require that $\{F^\vee_{g,n}(t, \kappa^\vee)\}_{2g-2+n>0}$
or $\kappa^\vee$ to be related to $\{F_{g,n}(t, \kappa)\}_{2g-2+n>0}$
or $\kappa$.
Then the dual version of the Feynman rules \eqref{eqn:FR} gives us
a realization of the dual abstract quantum field theory.
Denote by $\wF^\vee_{g}(t,\kappa^\vee)$ and $\wF^\vee_{g,n}(t,\kappa^\vee)$
the resulting free energy and $n$-point functions respectively.
They realize
the dual abstract free energy $\wcF^\vee_g$ ($g\geq2$)
and dual abstract $n$-point functions $\wcF^\vee_{g,n}$ ($2g-2+n>0$)
respectively.

\subsection{Induced realizations of the dual abstract QFT}

Let us now consider a special case of the realization of
the dual abstract QFT.
To motivate it,
let us first examine an example.

\begin{Example}\label{eg-realization}
We give some examples of $\wF_{g,n}(t,\kappa)$ for small $g$, $n$:
\be\label{eq1-eg-realization}
\begin{split}
&\wF_{0,3}=\frac{1}{6}F_{0,3},\\
&\wF_{0,4}=\frac{1}{24}F_{0,4}+
\kappa\frac{1}{8}F_{0,3}^2,\\
&\wF_{1,1}=F_{1,1}+\frac{1}{2}\kappa F_{0,3},\\
&\wF_{1,2}=\frac{1}{2}F_{1,2}+
\frac{1}{4}\kappa F_{0,4}+\frac{1}{2}\kappa F_{1,1}F_{0,3}
+\frac{1}{2}\kappa^2 F_{0,3}^2.
\end{split}
\ee
Using the above expressions, we can also solve $\{F_{g,n}\}$
from $\{\wF_{g,n}\}$ and $\kappa$:
\be\label{eq2-eg-realization}
\begin{split}
&F_{0,3}=6\wF_{0,3},\\
&F_{0,4}=24\wF_{0,4}-
108 \kappa\wF_{0,3}^2,\\
&F_{1,1}=\wF_{1,1}-3\kappa \wF_{0,3},\\
&F_{1,2}=2\wF_{1,2}-12
\kappa \wF_{0,4}-6\kappa \wF_{1,1}\wF_{0,3}
+36\kappa^2 \wF_{0,3}^2.
\end{split}
\ee
\end{Example}

By comparing the expressions in \eqref{eq1-eg-realization} and \eqref{eq2-eg-realization},
we are led to the following construction.
Suppose that we are given $\{F_{g,n}(t, \kappa)\}_{2g-2+n>0}$
and $\kappa$,
and we have used the realization of the abstract QFT to construct
$\{\wF_{g,n}(t, \kappa)\}_{2g-2+n>0}$,
then we can take
\be
F^\vee_{g,n}(t, \kappa)= n!\wF_{g,n}(t, \kappa),
\ee
and for the propagator $\kappa^\vee$,  we take
\be
\kappa^\vee = -\kappa,
\ee
and use them to carry out the realization
of the dual abstract quantum field theory as discussed in last subsection.
We refer to it as
the {\em induced realization of the dual abstract QFT}.
For a dotted stable graph $\Gamma^\vee$, the contribution
of a dotted vertex $v^\vee\in V^\vee(\Gamma^\vee)$
of genus $g_v^\vee$ and valence $n_v^\vee$ is given by
\be
\omega_{v^\vee}=n!\cdot\wF_{g_v^\vee,n_v^\vee}
=\sum_{\Gamma\in\cG_{g_v,n_v}^c}\frac{1}{|\Aut(\Gamma)|}\omega_\Gamma,
\ee
and the contribution of an internal edge $e^\vee\in E^\vee(\Gamma^\vee)$
is
\be
\omega_{e^\vee}=-\kappa,
\ee
where the factor $-1$ comes from the factor $(-1)^{|E^\vee(\Gamma)|}$
in the definition of dotted stable graphs.

In what follows we will use the notation
\be\label{eq-notation}
\tF_{g,n}:=n!\cdot\wF_{g,n}.
\ee
Then the induced Feynman rule for a dotted stable graph
$\Gamma^\vee\in\cG_{g,n}^{\vee,c}$ is
\be
\begin{split}
\Gamma^\vee\mapsto \omega_{\Gamma^\vee}=&
\prod_{e^\vee\in E^\vee(\Gamma^\vee)}\omega_{e^\vee}\cdot
\prod_{v^\vee\in V^\vee(\Gamma^\vee)}\omega_{v^\vee}\\
=&(-\kappa)^{|E^\vee(\Gamma^\vee)|}\cdot
\prod_{v^\vee\in V^\vee(\Gamma^\vee)}\tF_{g_v^\vee,n_v^\vee}.
\end{split}
\ee

Using this induced Feynman rule,
the dual abstract free energy and dual abstract $n$-point functions
can be realized as follows.
The dual abstract free energy $\wcF^\vee_g$ of genus $g\geq 2$
is realized by
\be
\wF^\vee_g=\sum_{\Gamma^\vee\in\cG_{g,0}^{\vee,c}}
\frac{1}{|\Aut(\Gamma^\vee)|}
\omega_{\Gamma^\vee},
\ee
and $\wcF^\vee_{g,n}$ for $2g-2+n>0$ is realized by
\be
\wF^\vee_{g,n}=\sum_{\Gamma^\vee\in\cG_{g,n}^{\vee,c}}
\frac{1}{|\Aut(\Gamma^\vee)|}
\omega_{\Gamma^\vee}.
\ee

\subsection{Realization of the duality theorem}

Given a realization of the abstract QFT and the induced realization
of the dual abstract QFT,
Theorem \ref{thm-dual-FE} gives us the following:

\begin{Theorem}\label{thm-dual-realization}
For every $2g-2+n>0$, we have
\be\label{eq-duality-onedim}
F_{g,n}=n!\cdot\wF^\vee_{g,n}=n!\cdot
\sum_{\Gamma^\vee\in\cG_{g,n}^{\vee,c}}
\frac{1}{|\Aut(\Gamma^\vee)|}
\omega_{\Gamma^\vee}.
\ee
In particular, for every $g\geq2$ we have
\be
F_{g,0}=\wF^\vee_g
=
\sum_{\Gamma^\vee\in\cG_{g,n}^{\vee,c}}
\frac{1}{|\Aut(\Gamma^\vee)|}
\omega_{\Gamma^\vee}.
\ee
\end{Theorem}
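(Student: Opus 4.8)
The plan is to deduce Theorem~\ref{thm-dual-realization} from the abstract duality theorem (Theorem~\ref{thm-dual-FE}) by applying the induced Feynman rule to both sides of the graph-level identity. The key observation is that the induced Feynman rule is a \emph{linear} map from the space $\cV$ spanned by (dotted and ordinary) stable graphs to the ring of functions in $t$ and $\kappa$: it sends a dotted graph $\Gamma^\vee$ to $\omega_{\Gamma^\vee}=(-\kappa)^{|E^\vee(\Gamma^\vee)|}\prod_{v^\vee}\tF_{g_v^\vee,n_v^\vee}$, and it sends an ordinary graph $\Gamma$ to $\omega_\Gamma=\prod_{v}F_{g_v,n_v}\cdot\prod_e\kappa$. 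What has to be checked is that this assignment is consistent with the defining identity \eqref{def-mark} that expresses a dotted graph as a signed linear combination of ordinary graphs; this is precisely the content of the induced-realization construction in the previous subsection (the sign $(-\kappa)^{|E^\vee|}$ was built in exactly so that cutting an internal dotted edge and re-gluing matches the weight $-\kappa$, and the vertex weight $\tF_{g,n}=n!\wF_{g,n}=\sum_{\Gamma\in\cG^c_{g,n}}\frac{1}{|\Aut\Gamma|}\omega_\Gamma$ realizes a dotted vertex as the Feynman sum over ordinary graphs with inherited legs).

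First I would make the compatibility statement precise: define $\Omega:\cV\to(\text{functions})$ to be the unique linear extension of $\Gamma\mapsto\omega_\Gamma$ on ordinary graphs, and observe that for any dotted stable graph $\Gamma^\vee$, interpreted via \eqref{def-mark} as an element of $\cV$, one has $\Omega(\Gamma^\vee)=\omega_{\Gamma^\vee}$ with $\omega_{\Gamma^\vee}$ as defined by the induced Feynman rule. This is checked by following the construction of \eqref{def-mark}: cut all dotted internal edges into dotted vertices, expand each dotted vertex $V^\vee_{g_v,n_v}$ into $\sum_{\Gamma\in\cG_v}\frac{1}{|\Aut\Gamma|}\Gamma$ (whose $\Omega$-image is $\tF_{g_v,n_v}$ by Lemma~\ref{lemma-mark-1} and \eqref{realization-npt-real-N}), glue back along the inherited half-edges (each gluing of legs produces a propagator factor $\kappa$ from $\omega_e$), and finally multiply by $(-1)^{|E^\vee(\Gamma^\vee)|}$; the net effect on each glued internal edge is $-\kappa$, which is exactly $\omega_{e^\vee}$, and the product over vertices is $\prod_{v^\vee}\tF_{g_v^\vee,n_v^\vee}$, so indeed $\Omega(\Gamma^\vee)=\omega_{\Gamma^\vee}$.

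Next I would apply $\Omega$ to the identity of Theorem~\ref{thm-dual-FE}, namely $\wcF^\vee_{g,n}=\frac{1}{n!}V_{g,n}$ where $V_{g,n}$ is the ordinary stable vertex of genus $g$ with $n$ external legs. By definition $\wcF^\vee_{g,n}=\sum_{\Gamma^\vee\in\cG_{g,n}^{\vee,c}}\frac{1}{|\Aut(\Gamma^\vee)|}\Gamma^\vee$, so linearity of $\Omega$ together with the previous paragraph gives $\Omega(\wcF^\vee_{g,n})=\sum_{\Gamma^\vee\in\cG_{g,n}^{\vee,c}}\frac{1}{|\Aut(\Gamma^\vee)|}\omega_{\Gamma^\vee}=\wF^\vee_{g,n}$. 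On the other side, $\Omega$ applied to the single ordinary vertex $V_{g,n}$ is just its vertex weight $\omega_{V_{g,n}}=F_{g,n}$ (a graph with one vertex of genus $g$ and valence $n$ and no internal edges). Hence $\wF^\vee_{g,n}=\frac{1}{n!}F_{g,n}$, which is \eqref{eq-duality-onedim}, and the special case $n=0$, $g\ge 2$ gives $\wF^\vee_g=F_{g,0}$.

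The main obstacle is the bookkeeping in the compatibility claim $\Omega(\Gamma^\vee)=\omega_{\Gamma^\vee}$: one must be careful that the factor $n!$ hidden in $\tF_{g,n}$ versus the factor $\frac{1}{|\Aut\Gamma^\vee|}$ and the automorphisms introduced by gluing named legs all cancel correctly — this is the same combinatorial matching that Lemma~\ref{lemma-mark-1} handles at the abstract level, so I expect no new difficulty, only the need to invoke it. Everything else is a formal consequence of linearity of the Feynman rule and Theorem~\ref{thm-dual-FE}; no recursion or induction on $(g,n)$ is needed since the duality theorem already packages the combinatorics.
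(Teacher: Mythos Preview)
Your proposal is correct and follows exactly the paper's approach: the paper simply states that Theorem~\ref{thm-dual-realization} is what Theorem~\ref{thm-dual-FE} gives once one passes to the induced realization, and you have spelled out the one point the paper leaves implicit, namely that the Feynman-rule map $\Omega$ is compatible with the expansion of dotted graphs into ordinary ones (so that $\Omega(\Gamma^\vee)=\omega_{\Gamma^\vee}$). Your invocation of Lemma~\ref{lemma-mark-1} for the automorphism bookkeeping is appropriate and no further argument is needed.
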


Now Theorem \ref{thm-dual-realization} provides an explicit way to
express $F_{g,n}$ in terms of $\tF_{g,n}$ and the propagator $\kappa$.

\begin{Example}\label{eq-inverse}
We have
\be
\begin{split}
&\frac{1}{6}F_{0,3}=\frac{1}{6}\tF_{0,3},\\
&\frac{1}{24}F_{0,4}=\frac{1}{24}\tF_{0,4}-
\frac{1}{8}\kappa\tF_{0,3}^2,\\
&F_{1,1}=\tF_{1,1}-\frac{1}{2}\kappa \tF_{0,3},\\
&\frac{1}{2}F_{1,2}=\frac{1}{2}\tF_{1,2}-
\frac{1}{4}\kappa \tF_{0,4}-\frac{1}{2}\kappa \tF_{1,1}\tF_{0,3}
+\frac{1}{2}\kappa^2 \tF_{0,3}^2,
\end{split}
\ee
these match with the equalities \eqref{eq2-eg-realization}.
Moreover, the free energies $F_2(t)$ and $F_3(t)$ are given by:
\be
\begin{split}
F_2=&\tF_{2,0}-\kappa\big(\frac{1}{2}\tF_{1,2}+\frac{1}{2}\tF_{1,1}^2\big)
+\kappa^2\big(\frac{1}{8}\tF_{0,4}+\frac{1}{2}\tF_{1,1}\tF_{0,3}\big)
-\frac{5}{24}\kappa^3\tF_{0,3}^2;\\
F_3=&\tF_{3,0}-\kappa\big(\frac{1}{2}\tF_{2,2}+\tF_{1,1}\tF_{2,1}\big)\\
&+\kappa^2\big(\frac{1}{8}\tF_{1,4}+\frac{1}{4}\tF_{1,2}^2+
\frac{1}{2}\tF_{0,3}\tF_{2,1}+\frac{1}{2}\tF_{1,1}\tF_{1,3}+
\frac{1}{2}\tF_{1,1}^2\tF_{1,2}\big)\\
&-\kappa^3\big(\frac{1}{48}\tF_{0,6}+\frac{1}{4}\tF_{0,4}\tF_{1,2}+
\frac{5}{12}\tF_{0,3}\tF_{1,3}+\frac{1}{8}\tF_{0,5}\tF_{1,1}
+\tF_{0,3}\tF_{1,1}\tF_{1,2}\\
&\qquad +\frac{1}{4}\tF_{0,4}\tF_{1,1}^2+\frac{1}{6}\tF_{0,3}\tF_{1,1}^3\big)\\
&+\kappa^4\big(\frac{1}{12}\tF_{0,4}^2+\frac{7}{48}\tF_{0,3}\tF_{0,5}+
\frac{5}{8}\tF_{0,3}^2 \tF_{1,2}+\frac{2}{3}\tF_{0,3}\tF_{0,4}\tF_{1,1}
+\frac{1}{2}\tF_{0,3}^2 \tF_{1,1}^2\big)\\
&-\kappa^5\big(\frac{25}{48}\tF_{0,3}^2\tF_{0,4}+\frac{5}{8}\tF_{0,3}^3\tF_{1,1}\big)
+\frac{5}{16}\kappa^6 \tF_{0,3}^4.
\end{split}
\ee

\end{Example}

\subsection{Representation by formal Gaussian integrals}
\label{subsec-transf}

In this subsection let us represent the above constructions
as formal Gaussian integrals.

Recall the formal Gaussian integral in Theorem \ref{thm-Gaussian},
in the case of $N=1$ this theorem gives us:
\ben
\exp\biggl(\sum_{g\geq 2}\lambda^{2g-2}\wF_{g}\biggr)
=\frac{1}{(2\pi\lambda^2\kappa)^{\frac{1}{2}}}
\int \exp\biggl\{\sum_{2g-2+n>0}\lambda^{2g-2}F_{g,n}\cdot
\frac{\eta^n}{n!}-\frac{\lambda^{-2}}{2\kappa}\eta^2\biggr\}d\eta.
\een
By Theorem \ref{thm-dual-realization},
we also have:
\ben
\exp\biggl(\sum_{g\geq 2}\lambda^{2g-2}F_{g,0}\biggr)
=\frac{1}{(-2\pi\lambda^2\kappa)^{\frac{1}{2}}}
\int \exp\biggl\{\sum_{2g-2+n>0}\lambda^{2g-2}\tF_{g,n}\cdot
\frac{\eta^n}{n!}+\frac{\lambda^{-2}}{2\kappa}\eta^2\biggr\}d\eta,
\een
where $\tF_{g,n}=n!\cdot \wF_{g,n}$.

Now let us generalize the formal Gaussian integral
described in Theorem \ref{thm-Gaussian}
in the following way:

\begin{Theorem}\label{thm-transf-1}
Let $z$ be a formal variable.
Then we have:
\begin{equation}\label{eq-transf-1}
\begin{split}
&\exp\bigg(\sum_{2g-2+n>0}
\lambda^{2g-2}z^n\cdot \wF_{g,n}\bigg)\\
=&
\frac{1}{(2\pi\lambda^{2}\kappa)^{\frac{1}{2}}}
\int \exp\biggl\{
\sum_{2g-2+n>0}\lambda^{2g-2}F_{g,n}\cdot
\frac{\eta^n}{n!}-\frac{\lambda^{-2}}{2\kappa}(\eta-z)^2
\biggr\}d\eta.
\end{split}
\end{equation}
\end{Theorem}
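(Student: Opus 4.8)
The plan is to evaluate the right-hand side of \eqref{eq-transf-1} by the standard formal Gaussian (Wick) expansion and to recognize the outcome as the generating series of Feynman weights of stable graphs; this runs parallel to the proof of Theorem \ref{thm-Gaussian} in \cite{wz}, the only genuinely new ingredient being that a half-edge which is \emph{not} contracted against another half-edge is now recorded by a factor of $z$ and thereby becomes an external edge. Thus the passage from $\wF_g$ (Theorem \ref{thm-Gaussian}) to the full collection $\wF_{g,n}$ is obtained for free once one keeps track, in the Wick expansion, of the ``uncontracted'' half-edges.

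Concretely, I would first normalize the Gaussian: the substitution $\eta = z + \lambda\sqrt{\kappa}\,x$ turns the right-hand side into $\tfrac1{\sqrt{2\pi}}\int \exp\{S(z+\lambda\sqrt{\kappa}\,x) - \tfrac12 x^2\}\,dx$, where $S(\eta):=\sum_{2g-2+n>0}\lambda^{2g-2}\tfrac{F_{g,n}}{n!}\eta^n$ and the prefactor $(2\pi\lambda^2\kappa)^{-1/2}$ is exactly cancelled by the Jacobian. Next I would expand $\exp\{S(z+\lambda\sqrt\kappa x)\}=\sum_{k\ge0}\tfrac1{k!}S(z+\lambda\sqrt\kappa x)^k$ into monomials in the $F_{g,n}$: each monomial is indexed by an ordered $k$-tuple of ``vertices'', the $i$-th of genus $g_i$ and valence $n_i$, contributing $\lambda^{2g_i-2}\tfrac{F_{g_i,n_i}}{n_i!}(z+\lambda\sqrt\kappa\,x)^{n_i}$. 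Expanding each binomial $(z+\lambda\sqrt\kappa x)^{n_i}=\sum_{j}\binom{n_i}{j}z^{n_i-j}(\lambda\sqrt\kappa)^j x^j$ splits the $n_i$ half-edges of that vertex into ``external'' ones (carrying $z$) and ``internal'' ones (carrying $\lambda\sqrt\kappa\,x$). Integrating term by term using the moments $\tfrac1{\sqrt{2\pi}}\int x^{2m}e^{-x^2/2}\,dx=(2m-1)!!$ then pairs up the internal half-edges in all possible ways (the integral vanishes when their total number is odd), each pair becoming an internal edge of weight $(\lambda\sqrt\kappa)^2=\lambda^2\kappa$. Collecting powers, a connected graph $\Gamma$ of genus $g(\Gamma)=h^1(\Gamma)+\sum_v g_v$ with $n$ external edges acquires overall weight $\lambda^{2g(\Gamma)-2}z^{n}\,\omega_\Gamma$ with $\omega_\Gamma$ as in \eqref{eqn:FR}, since the extra $\lambda^2$ per edge against the $\lambda^{2g_v-2}$ per vertex produces precisely $\lambda^{2h^1+2\sum g_v-2}$.

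The combinatorial core is then to show that, after summing over all the ways of producing a fixed isomorphism class of (possibly disconnected) stable graph $\Gamma$, the numerical coefficient assembled from $\tfrac1{k!}$, the $\prod_v\tfrac1{n_v!}$, the binomials of the external/internal split, and the count of perfect matchings realizing the internal edges, collapses exactly to $\tfrac1{|\Aut(\Gamma)|}$. This is the classical ``Feynman graphs $=\sum 1/|\Aut|$'' bookkeeping, and it is the same orbit--stabilizer argument already used for Theorem \ref{thm-Gaussian}; the marking by $z$ is structurally inert because $\Aut(\Gamma)$ permutes external half-edges in the same way it permutes internal ones. Once this is done, I would observe that every nonempty connected stable graph automatically satisfies $2g-2+n>0$ (there is no connected stable graph of genus $1$ with no external edges, nor one of genus $0$ with at most two external edges, because a genus-$0$ vertex has valence $\ge 3$ and a genus-$1$ vertex valence $\ge 1$), so the exponential formula for graphs identifies the sum over all stable graphs with $\exp$ of the sum over connected ones, i.e. with $\exp\big(\sum_{2g-2+n>0}\lambda^{2g-2}z^{n}\sum_{\Gamma\in\cG_{g,n}^c}\tfrac1{|\Aut\Gamma|}\omega_\Gamma\big)=\exp\big(\sum_{2g-2+n>0}\lambda^{2g-2}z^{n}\wF_{g,n}\big)$ by \eqref{realization-npt-real-N}, which is the left-hand side.

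The hard part will be the symmetry-factor computation of the previous paragraph carried out simultaneously with the $z$-decoration, together with the discipline of keeping every manipulation inside the ring of formal power series in $\lambda$, $z$ and the $F_{g,n}$, so that ``the Gaussian integral'' and all interchanges of summation are legitimate. A shortcut I would try in order to avoid re-deriving the symmetry-factor lemma is to note that the case $z=0$ is, after the harmless shift $\eta\mapsto\eta+z$, literally Theorem \ref{thm-Gaussian} for $N=1$, and then argue that turning on $z$ merely refines the already-established diagrammatics by recording which half-edges were left uncontracted; this reduces the new content to the single multinomial identity $(z+\lambda\sqrt\kappa\,x)^{n}=\sum_j\binom nj z^{n-j}(\lambda\sqrt\kappa\,x)^{j}$ together with the remark that $\binom nj$ counts the orbits of the external/internal partition of the $n$ half-edges under the symmetric group.
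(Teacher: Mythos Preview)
Your approach is correct but differs from the paper's. The paper does not shift the integration variable. Instead, it expands the square $-\tfrac{\lambda^{-2}}{2\kappa}(\eta-z)^2$ to isolate a linear term $\lambda^{-2}\kappa^{-1}z\,\eta$ in the exponent, and interprets this as an additional \emph{unstable} vertex type: a genus-$0$, valence-$1$ vertex with weight $F_{0,1}:=\kappa^{-1}z$. The Feynman expansion (the same machinery underlying Theorem~\ref{thm-Gaussian}) then runs over a larger class $\cG_g^*$ of graphs that may carry such leaves, and a bijection $A_g$---delete each leaf together with its incident edge, producing an external edge---identifies the sum with the desired sum over stable graphs weighted by $z^n$, since the leaf weight $\kappa^{-1}z$ cancels against the edge weight $\kappa$. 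The one exceptional graph $\Gamma_0$ (two leaves joined by a single edge) contributes $\tfrac12\lambda^{-2}\kappa^{-1}z^2$, which exactly cancels the constant term left over from expanding the square.

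Your route---shift $\eta\mapsto z+\lambda\sqrt\kappa\,x$ and expand each $(z+\lambda\sqrt\kappa\,x)^{n_v}$ binomially so that the split into external and internal half-edges happens at each vertex---is equally valid and perhaps more direct, but it obliges you to redo the $1/|\Aut(\Gamma)|$ bookkeeping in the presence of external edges (routine, as you say). The paper's leaf-vertex trick buys a cleaner reduction: once the extra vertex type is admitted, the symmetry-factor statement is literally the one already established, and all the new content is the elementary bijection $A_g$ plus the cancellation of the $\Gamma_0$ term. Your proposed ``shortcut'' at the end (specialize to $z=0$ and then turn $z$ back on) is not really a shortcut: the step ``turning on $z$ merely refines the diagrammatics'' is exactly the content to be proved, and is where either your binomial expansion or the paper's leaf trick must do its work.
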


\begin{proof}
First let us consider the formal formal Gaussian integral:
\ben
\widehat Z^*:=\frac{1}{(2\pi\lambda^{2}\kappa)^{\frac{1}{2}}}
\int \exp\biggl\{\lambda^{-2}(\kappa^{-1}z)\eta+
\sum_{2g-2+n>0}\lambda^{2g-2}F_{g,n}\cdot
\frac{\eta^n}{n!}-\frac{\lambda^{-2}}{2\kappa}\eta^2\biggr\}d\eta.
\een
It is clear that
\be\label{eq-Z-star}
\log(\widehat Z^*)=\sum\limits_{g\geq 0}\lambda^{2g-2}\widehat F^*_g,
\ee
where for every $g\geq 0$,
\ben
\widehat F_g^*=\sum_{\Gamma\in\cG_g^*}\frac{\omega_\Gamma}{|\Aut(\Gamma)|}
:=\sum_{\Gamma\in\cG_g^*}\frac{1}{|\Aut(\Gamma)|}
\prod_{v\in V(\Gamma)}F_{g_v,n_v}
\cdot\prod_{e\in E(\Gamma)}\kappa,
\een
where $\cG_g^*$ is the set of connected graphs of genus $g$ without external edges,
whose vertices are either stable,
or of genus $0$ and valence $1$;
and the weight of a vertex of genus $0$ and valence $1$
is defined to be
\ben
F_{0,1}:=\kappa^{-1}\cdot z.
\een

Notice that for every $g\geq 1$,
there is a one-to-one correspondence
between $\cG_g^*$ and the set of all connected stable graphs of genus $g$:
\ben
A_g:\quad
\cG_g^*\quad\to\quad
\bigsqcup_{n\geq 0}\cG_{g,n}^c,
\een
given by deleting the vertices of genus $0$ and valence $1$
(to obtain the same number of external edges).
Let $\Gamma\in \cG_g^*$ ($g\geq 1$) be a graph of genus $g$,
which contains $n$ vertices of genus $0$ and valence $1$.
Denote by $V_0(\Gamma)\subset V(\Gamma)$ the subset of
vertices of genus $0$ and valence $1$,
and by $E_0(\Gamma)$ the subset of internal edges
attached to these vertices,
then clearly we have:
\ben
\omega_\Gamma&=&\prod_{v\in V_0(\Gamma)}\omega_v
\cdot\prod_{v\in V(\Gamma)\backslash V_0(\Gamma)}\omega_v
\cdot\prod_{e\in E_0(\Gamma)}\omega_e
\cdot\prod_{e\in E(\Gamma)\backslash E_0(\Gamma)}\omega_e
\\
&=&
F_{0,1}^n\cdot \kappa^n\cdot \biggl(
\prod_{v\in V(\Gamma)\backslash V_0(\Gamma)}\omega_v
\cdot\prod_{e\in E(\Gamma)\backslash E_0(\Gamma)}\omega_e\biggr)\\
&=&z^n\cdot\omega_{A_g(\Gamma)}.
\een
Therefore,
we have
\ben
\sum_{\Gamma\in \cG_g^*}\frac{\omega_\Gamma}{|\Aut(\Gamma)|}&=&
\sum_{n=0}^\infty\biggl( z^n\cdot\sum_{\Gamma\in \cG_{g,n}^c}
\frac{\omega_\Gamma}{|\Aut(\Gamma)|}
\biggr),
\qquad \forall g\geq 1,
\een
since it is clear that $\Aut(\Gamma)\cong\Aut(A_g(\Gamma))$
for every $\Gamma\in \cG_g^*$.

In the case of genus $0$,
we also have a similar one-to-one correspondence
\ben
A_0:\quad
\cG_g^*\quad\to\quad
\{\Gamma_0\}\sqcup\biggl(\bigsqcup_{n\geq 3}\cG_{0,n}^c\biggr),
\een
where $\Gamma_0$ is the graph consisting of two vertices of genus $0$ and valence $1$,
together with one internal edge connecting these two vertices.
It is clear that in this case we also have
\ben
\omega_\Gamma=z^n\cdot\omega_{A_0(\Gamma)}
\een
for every $\Gamma\in \cG_0^*\backslash\{\Gamma_0\}$.
Thus in the case of genus $0$,
we have
\ben
\sum_{\Gamma\in \cG_0^*}\frac{\omega_\Gamma}{|\Aut(\Gamma)|}&=&
\frac{1}{|\Aut(\Gamma_0)|}\omega_{\Gamma_0}+
\sum_{n=3}^\infty\biggl( z^n\cdot\sum_{\Gamma\in \cG_{0,n}^c}
\frac{\omega_\Gamma}{|\Aut(\Gamma)|}
\biggr)\\
&=&\half \lambda^{-2}\kappa^{-1}z^2+
\sum_{n=3}^\infty\biggl( z^n\cdot\sum_{\Gamma\in \cG_{0,n}^c}
\frac{\omega_\Gamma}{|\Aut(\Gamma)|}
\biggr),
\een
Thus the free energy \eqref{eq-Z-star} equals to:
\ben
\log(\widehat Z^*)&=&
\sum_{\Gamma\in \cG_0^*}\frac{\omega_\Gamma}{|\Aut(\Gamma)|}
+\sum_{g\geq 1}\sum_{\Gamma\in \cG_g^*}\frac{\omega_\Gamma}{|\Aut(\Gamma)|}
\\
&=&\frac{\lambda^{-2}}{2\kappa}z^2+
\sum_{2g-2+n>0}
\lambda^{2g-2}z^n\cdot \wF_{g,n},
\een
therefore we have:
\begin{equation*}
\begin{split}
&\exp\bigg(\sum_{2g-2+n>0}
\lambda^{2g-2}z^n\cdot \wF_{g,n}\bigg)\\
=&
\frac{1}{(2\pi\lambda^{2}\kappa)^{\frac{1}{2}}}
\int \exp\biggl\{\lambda^{-2}(\kappa^{-1}z)\eta+
\sum_{2g-2+n>0}\lambda^{2g-2}F_{g,n}\cdot
\frac{\eta^n}{n!}-\frac{\lambda^{-2}}{2\kappa}\eta^2
-\frac{\lambda^{-2}}{2\kappa}z^2\biggr\}d\eta\\
=&
\frac{1}{(2\pi\lambda^{2}\kappa)^{\frac{1}{2}}}
\int \exp\biggl\{
\sum_{2g-2+n>0}\lambda^{2g-2}F_{g,n}\cdot
\frac{\eta^n}{n!}-\frac{\lambda^{-2}}{2\kappa}(\eta-z)^2
\biggr\}d\eta.
\end{split}
\end{equation*}
\end{proof}

Similarly,
applying Theorem \ref{thm-dual-realization},
we get the inversed version:

\begin{Theorem}\label{thm-transf-2}
Let $\eta$ be a formal variable.
Then we have:
\begin{equation}\label{eq-transf-2}
\begin{split}
&\exp\bigg(\sum_{2g-2+n>0}
\lambda^{2g-2}\eta^n\cdot \wF^\vee_{g,n}\bigg)\\
=&
\frac{1}{(-2\pi\lambda^{2}\kappa)^{\frac{1}{2}}}
\int \exp\biggl\{
\sum_{2g-2+n>0}\lambda^{2g-2}\tF_{g,n}\cdot
\frac{z^n}{n!}+\frac{\lambda^{-2}}{2\kappa}(z-\eta)^2
\biggr\}dz.
\end{split}
\end{equation}
\end{Theorem}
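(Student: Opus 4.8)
The plan is to obtain Theorem \ref{thm-transf-2} from Theorem \ref{thm-transf-1} by applying the latter to the \emph{induced} realization of the dual abstract QFT, that is, to the realization whose input functions are $\{\tF_{g,n}\}_{2g-2+n>0}$ and whose propagator is $-\kappa$. The key point, already recorded in the previous subsection, is that as combinatorial objects the dotted stable graphs in $\cG_{g,n}^{\vee,c}$ are precisely the ordinary connected stable graphs of type $(g,n)$, and that under the induced Feynman rule the weight of such a graph is $\omega_{\Gamma^\vee}=(-\kappa)^{|E^\vee(\Gamma^\vee)|}\prod_{v^\vee\in V^\vee(\Gamma^\vee)}\tF_{g_v^\vee,n_v^\vee}$. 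Hence $\Gamma^\vee\mapsto\omega_{\Gamma^\vee}$ is literally the Feynman rule \eqref{eqn:FR} (for $N=1$) attached to the sequence $\{\tF_{g,n}\}$ together with the propagator $-\kappa$, so that $\wF^\vee_{g,n}=\sum_{\Gamma^\vee\in\cG_{g,n}^{\vee,c}}\frac{1}{|\Aut(\Gamma^\vee)|}\omega_{\Gamma^\vee}$ is exactly the realized $n$-point function of the abstract QFT built from this substituted data.

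First I would invoke Theorem \ref{thm-transf-1} with $F_{g,n}$ replaced everywhere by $\tF_{g,n}$ and $\kappa$ replaced everywhere by $-\kappa$. Since Theorem \ref{thm-transf-1} is a formal identity valid for an arbitrary admissible sequence of input functions and an arbitrary nonzero propagator, this specialization is legitimate; the one substantive remark needed is that, by the previous paragraph, the functions $\wF_{g,n}$ produced by the substituted data are the $\wF^\vee_{g,n}$. Using $\bigl(2\pi\lambda^2(-\kappa)\bigr)^{1/2}=\bigl(-2\pi\lambda^2\kappa\bigr)^{1/2}$ and the fact that the quadratic term $-\tfrac{\lambda^{-2}}{2\kappa}(\eta-z)^2$ becomes $+\tfrac{\lambda^{-2}}{2\kappa}(\eta-z)^2$, Theorem \ref{thm-transf-1} turns into the identity $\exp(\sum\lambda^{2g-2}z^n\wF^\vee_{g,n})=\tfrac{1}{(-2\pi\lambda^2\kappa)^{1/2}}\int\exp\{\sum\lambda^{2g-2}\tF_{g,n}\eta^n/n!+\tfrac{\lambda^{-2}}{2\kappa}(\eta-z)^2\}\,d\eta$. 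The last step is a pure relabelling: rename the external formal variable $z$ to $\eta$ and the integration variable $\eta$ to $z$, and one arrives at exactly \eqref{eq-transf-2}. One may moreover combine this with Theorem \ref{thm-dual-realization}, which gives $n!\,\wF^\vee_{g,n}=F_{g,n}$; specializing to $\eta=0$ then reproduces the unlabelled formula displayed just before Theorem \ref{thm-transf-1}.

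I do not expect any genuine obstacle here, since the argument is essentially bookkeeping, but two points deserve care. The first is checking that the induced realization really meets the hypotheses of Theorem \ref{thm-transf-1}, which comes down to noting that $\tF_{g,n}=n!\,\wF_{g,n}$ is again a legitimate sequence of (formal) functions and that $-\kappa$ is again a nonzero propagator. The second is the sign bookkeeping: the substitution $\kappa\mapsto-\kappa$ is simultaneously responsible for the $(-2\pi\lambda^2\kappa)^{1/2}$ normalization and for the change of sign of the quadratic term in the exponent. If a self-contained proof is preferred, one can instead repeat the argument of Theorem \ref{thm-transf-1} line by line: introduce the auxiliary Gaussian integral with a linear source term, expand its logarithm as a sum over connected graphs whose vertices are either stable or of genus $0$ and valence $1$ (with the latter carrying weight $(-\kappa)^{-1}\eta=-\kappa^{-1}\eta$), apply the bijection that deletes the genus-$0$ valence-$1$ vertices, and treat separately the exceptional genus-$0$ graph made of two such vertices joined by one edge, whose contribution yields the $\tfrac{\lambda^{-2}}{2\kappa}\eta^2$ term coming from completing the square.
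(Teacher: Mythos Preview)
Your proposal is correct and is essentially the same approach as the paper's: the paper simply says ``Similarly, applying Theorem \ref{thm-dual-realization}, we get the inversed version,'' meaning that one runs the argument of Theorem \ref{thm-transf-1} on the induced dual realization $(\{\tF_{g,n}\},-\kappa)$, whose realized $n$-point functions are by definition the $\wF^\vee_{g,n}$. Your observation that Theorem \ref{thm-dual-realization} is not strictly needed for the statement as written (only the definition of the induced realization is) is accurate; the paper invokes it because it is what identifies $n!\,\wF^\vee_{g,n}$ with $F_{g,n}$ and hence gives the formula its interpretation as the inverse of \eqref{eq-transf-1}.
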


\subsection{Transformations on the space of theories}
\label{sec:Space}

Notice that $F_{g,n}=n!\cdot \wF^\vee_{g,n}$
and $\tF_{g,n}=n!\cdot \wF_{g,n}$,
the two formal Gaussian integrals \eqref{eq-transf-1} and \eqref{eq-transf-1}
can be rewritten as:
\begin{equation}\label{eq-transf-3}
\begin{split}
&\exp\bigg(\sum_{2g-2+n>0}\frac{1}{n!}
\lambda^{2g-2}z^n\cdot \tF_{g,n}\bigg)\\
=&
\frac{1}{(2\pi\lambda^{2}\kappa)^{\frac{1}{2}}}
\int \exp\biggl\{
\sum_{2g-2+n>0}\lambda^{2g-2}F_{g,n}\cdot
\frac{\eta^n}{n!}-\frac{\lambda^{-2}}{2\kappa}(\eta-z)^2
\biggr\}d\eta,
\end{split}
\end{equation}
and
\begin{equation}\label{eq-transf-4}
\begin{split}
&\exp\bigg(\sum_{2g-2+n>0}\frac{1}{n!}
\lambda^{2g-2}\eta^n\cdot F_{g,n}\bigg)\\
=&
\frac{1}{(-2\pi\lambda^{2}\kappa)^{\frac{1}{2}}}
\int \exp\biggl\{
\sum_{2g-2+n>0}\lambda^{2g-2}\tF_{g,n}\cdot
\frac{z^n}{n!}+\frac{\lambda^{-2}}{2\kappa}(z-\eta)^2
\biggr\}dz.
\end{split}
\end{equation}

Now let us rewrite the conclusions of Theorem \ref{thm-transf-1} and Theorem \ref{thm-transf-2}.

\begin{Definition}
Let $\mathcal W$ be the following infinite-dimensional vector space
of formal power series of the form:
\be
w(\eta)=
\sum_{2g-2+n>0}\frac{1}{n!}F_{g,n}\cdot \lambda^{2g-2}\eta^n,
\ee
where $F_{g,n}$ is a smooth function or formal power series,
depending on the concrete problems.
We will also use $\{F_{g,n}\}$ to represent such a series.
We will call $\cW$ the {\em space of field theories}.
\end{Definition}

Define the following two transformations on $\cW$ using
the two formal Gaussian integrals \eqref{eq-transf-3} and \eqref{eq-transf-4}:
\begin{equation*}
\begin{split}
&{\mathcal S}_\kappa:\quad \mathcal W\to \cW,\quad
w(\eta) \mapsto
\log\biggl\{
\frac{1}{(2\pi\lambda^{2}\kappa)^{\frac{1}{2}}}
\int \exp\biggl(
w(z)-\frac{\lambda^{-2}}{2\kappa}(\eta-z)^2
\biggr)dz\biggr\}
\end{split}
\end{equation*}
for $\kappa \neq 0$;
and for $\kappa = 0$,
we understand that
\be
\lim_{\kappa \to 0} \frac{1}{(2\pi\lambda^{2}\kappa)^{\frac{1}{2}}} \exp\biggl(
-\frac{\lambda^{-2}}{2\kappa}(\eta-z)^2
\biggr) = \delta(z-\eta)
\ee
and  so:
\be
w(\eta) \mapsto
\log\biggl\{
\int \exp w(z) \cdot \delta(z-\eta) dz\biggr\} = w(\eta).
\ee

Then Theorem \ref{thm-transf-1} and Theorem \ref{thm-transf-2} can be reformulated as follows:

\begin{Theorem}\label{thm-transf-3}
We have
${\mathcal S}_{-\kappa}\circ {\mathcal S}_{\kappa}=id_{\mathcal W}$,
and
${\mathcal S}_{\kappa}\circ {\mathcal S}_{-\kappa}=id_{\mathcal W}$.
\end{Theorem}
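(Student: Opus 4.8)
The plan is to unwind the definition of $\mathcal{S}_\kappa$ and then invoke the two formal Gaussian integral identities \eqref{eq-transf-3} and \eqref{eq-transf-4}, which are the rewritten forms of Theorem \ref{thm-transf-1} and Theorem \ref{thm-transf-2}. Once one recognizes these as saying that $\mathcal{S}_\kappa$ and $\mathcal{S}_{-\kappa}$ are mutually inverse, the only real content is a change-of-variable bookkeeping.

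First I would read off coefficientwise what $\mathcal{S}_\kappa$ does. Given $w(\eta)=\sum_{2g-2+n>0}\frac{1}{n!}F_{g,n}\lambda^{2g-2}\eta^n\in\cW$, identity \eqref{eq-transf-3} says exactly that $\mathcal{S}_\kappa(w)(z)=\sum_{2g-2+n>0}\frac{1}{n!}\tF_{g,n}\lambda^{2g-2}z^n$; in other words $\mathcal{S}_\kappa$ is the realization map of the abstract QFT with propagator $\kappa$, sending the theory $\{F_{g,n}\}$ to $\{\tF_{g,n}\}=\{n!\,\wF_{g,n}(\cdot,\kappa)\}$. In particular $\mathcal{S}_\kappa$ genuinely sends $\cW$ into $\cW$, since the quadratic term $\frac{\lambda^{-2}}{2\kappa}z^2$ produced by completing the square is absorbed, exactly as in the proof of Theorem \ref{thm-transf-1}.

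Then I would compose. Feeding $\mathcal{S}_\kappa(w)=\{\tF_{g,n}\}$ into $\mathcal{S}_{-\kappa}$: its defining integral carries the normalization $(2\pi\lambda^{2}(-\kappa))^{\frac12}=(-2\pi\lambda^{2}\kappa)^{\frac12}$ and the exponent $-\frac{\lambda^{-2}}{2(-\kappa)}(\eta-z)^2=+\frac{\lambda^{-2}}{2\kappa}(\eta-z)^2$, so $\mathcal{S}_{-\kappa}(\{\tF_{g,n}\})(\eta)$ is precisely the logarithm of the right-hand side of \eqref{eq-transf-4}. By \eqref{eq-transf-4} that right-hand side equals $\exp\big(\sum_{2g-2+n>0}\frac{1}{n!}F_{g,n}\lambda^{2g-2}\eta^n\big)=\exp(w(\eta))$, hence $\mathcal{S}_{-\kappa}(\mathcal{S}_\kappa(w))=w$, i.e. $\mathcal{S}_{-\kappa}\circ\mathcal{S}_\kappa=\mathrm{id}_{\cW}$. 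For the reverse composition I would note that neither Theorem \ref{thm-transf-1} nor Theorem \ref{thm-transf-2} uses the sign of $\kappa$ anywhere: the former only manipulates a Feynman sum with the auxiliary weight $\kappa^{-1}z$, and the latter rests on the purely combinatorial duality results Theorem \ref{thm-dual-FE} and Theorem \ref{thm-dual-realization}. Thus \eqref{eq-transf-3} and \eqref{eq-transf-4} remain valid after the substitution $\kappa\mapsto-\kappa$; repeating the argument above with that substitution and using $-(-\kappa)=\kappa$ yields $\mathcal{S}_\kappa(\mathcal{S}_{-\kappa}(w))=w$. The degenerate case $\kappa=0$ is immediate, since by definition $\mathcal{S}_0=\mathrm{id}_{\cW}$ (convolution against $\delta(z-\eta)$ inside the exponential), so both compositions are the identity there as well.

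The only step that demands attention — and the place an error would creep in — is the dictionary between these identities and the definition of $\mathcal{S}_\kappa$: one must keep straight which of $\{F_{g,n}\}$, $\{\tF_{g,n}\}$ is the ``vertex-weight'' input and which the ``output'' in each of \eqref{eq-transf-3} and \eqref{eq-transf-4}, and match the integration variable and the free variable with the $z$ and $\eta$ in the definition of $\mathcal{S}_\kappa$. There is no analytic difficulty, everything being a formal Gaussian integral; once the identification $\mathcal{S}_\kappa=(\text{hat construction, propagator }\kappa)$ and $\mathcal{S}_{-\kappa}=(\text{dual construction, propagator }-\kappa)$ is in place, the theorem is just the assertion that these two constructions invert one another.
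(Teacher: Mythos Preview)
Your proposal is correct and is essentially the paper's own argument spelled out in detail: the paper does not give a separate proof but simply introduces the theorem with ``Then Theorem \ref{thm-transf-1} and Theorem \ref{thm-transf-2} can be reformulated as follows,'' and afterwards remarks that the Fourier-transform viewpoint \eqref{eqn:Fourier} gives ``another proof.'' Your unwinding of the definition of $\mathcal{S}_\kappa$ via \eqref{eq-transf-3} and of $\mathcal{S}_{-\kappa}$ via \eqref{eq-transf-4}, together with the $\kappa\mapsto-\kappa$ symmetry, is exactly the content of that reformulation.
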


By comparing the definitions of ${\mathcal S}_{\kappa}$ and ${\mathcal S}_{-\kappa}$
we find some similarities to the Fourier transformation and
the inverse Fourier transformation.
Indeed,
if we consider the partition functions
\be
Z(x):= \exp\bigg( \sum_{2g-2+n>0}\frac{1}{n!}F_{g,n}\cdot \lambda^{2g-2}x^n\bigg),
\ee
and think of $\cS_\kappa$ as acting on $Z$, then
\be \label{eqn:Fourier}
(\cS_\kappa Z)(y) = \frac{1}{(2\pi\lambda^{2}\kappa)^{\frac{1}{2}}}
\int Z(x) \cdot \exp\biggl(-\frac{\lambda^{-2}}{2\kappa}(x-y)^2
\biggr)dx.
\ee
From this point of view one gets another proof of Theorem \ref{thm-transf-3}.
Therefore,
we have found a surprising connection between the duality on stable graphs and
the duality of Fourier transforms of the type as in \eqref{eqn:Fourier}.

\subsection{The independence assumption and quadratic recursion relations}
\label{sec:Indep}

In the above we have established the duality between
$\{ \wF^\vee_{g,n}(t, \kappa^\vee) \}_{g,n}$ and $\{\wF_{g,n}(t, \kappa)\}_{g,n}$
with $\kappa^\vee = - \kappa$.
In this subsection we will make  the following:

\begin{Definition}
Given a collection $\{ F^\vee_{g,n}(t, \kappa^\vee=-\kappa) \}_{g,n}$,
if  the dual collection  $\{ F_{g,n}(t, \kappa)\}_{g,n}$ is independent of $\kappa$,
then we say  $\{ F^\vee_{g,n}(t, \kappa^\vee= -\kappa) \}_{g,n}$ satisfies {\em the Independence Assumption}.
\end{Definition}

In this subsection we use the following notations:
When we see $\tD \tF_{g,n}$ we understand it as $\tF_{g,n+1}$,
and we understand $\tD \tD\tF_{g-1,n}$ as $\tF_{g-1,n+2}$.
Then the right-hand-side of \eqref{eq-thm2} can be rewritten as:
\be\label{rec-rhs}
\frac{1}{2}\biggl(\frac{1}{n!}\tD \tD\tF_{g-1,n}+
\sum_{\substack{g_1+g_2=g,\\n_1+n_2=n}}
\frac{1}{n_1!\cdot n_2!}\tD\tF_{g_1,n_1}\tD\tF_{g_2,n_2}\biggr).
\ee
We will  present a  realization of $\tD \tF_{g,n}$ as
an operator  acting on the functions $\{\tF_{g,n}(t,\kappa)\}$  in
 \S \ref{sec:KleZas}.

Now let us consider the realization of the left-hand-side
of the quadratic recursion relation \eqref{eq-thm2}.
Notice that the edge-cutting operator $K$ (see \S \ref{sec-pre-def})
acts only on internal edges, not on vertices.
Therefore,
if we want to realize the edge-cutting operator $K$ as
the partial derivative $\pd_\kappa$,
we need to require that
the partial derivative $\pd_\kappa$ can only act
on the weights of internal edges,
not on the weights of ordinary vertices.
I.e., we need the induced realizations $F_{g,n}(t,\kappa)$
of ordinary vertices to be independent of $\kappa$
for every $2g-2+n>0$.
This exactly means that the collection $\{ \wF^\vee_{g,n}(t, \kappa^\vee) \}_{g,n}$ satisfies
the Independence Assumption.
Then the following theorem holds trivially by definition:

\begin{Theorem}\label{thm-realization-HAE}
Let $\{\tF_{g,n}(t,\kappa),-\kappa\}$ be a realization  of
the dual abstract QFT satisfying the Independence Assumption.
Then for $2g-2+n>0$, we have
\be
\pd_\kappa\wF_{g,n}=\frac{1}{2}\biggl(
\tD \tD\wF_{g-1,n}
+\sum_{\substack{g_1+g_2=g,\\n_1+n_2=n}}
\tD\wF_{g_1,n_1}\tD\wF_{g_2,n_2}\biggr).
\ee
where we denote $\wF_{g,n}:=\frac{1}{n!}\tF_{g,n}$.
In particular, by taking $n=0$ we get
\be
\pd_\kappa\tF_g=\frac{1}{2}\biggl(\tD \tD\tF_{g-1}+
\sum_{r=1}^{g-1}\tD\tF_{r}\tD\tF_{g-r}\biggr)
\ee
for $g\geq 2$.
Here we use the following conventions
\be
\begin{split}
&\tpd\tF_1=\tD\tF_{1}:=\tF_{1,1},\\
&\tD\tF_{0,2}:=\tF_{0,3},\\
&\tD \tD\tF_{0,1}:=\tF_{0,3}.
\end{split}
\ee
\end{Theorem}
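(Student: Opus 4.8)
The plan is to obtain the statement as the image, under the ordinary Feynman-rule realization of the abstract quantum field theory, of the abstract quadratic recursion of Theorem~\ref{thm-original-rec}; the sole point at which the Independence Assumption enters is in identifying $\pd_\kappa$ with the realization of the edge-cutting operator $K$. So I would first unwind the data. By Theorem~\ref{thm-dual-realization} the given dual realization $\{\tF_{g,n}(t,\kappa),-\kappa\}$ determines the collection $F_{g,n}:=n!\,\wF^\vee_{g,n}$, and — this being the content of the realized involution (Theorems~\ref{thm:Involution} and~\ref{thm-transf-3}) — the functions $\wF_{g,n}=\tfrac1{n!}\tF_{g,n}$ are precisely the ordinary Feynman sums
\[
\wF_{g,n}=\sum_{\Gamma\in\cG_{g,n}^c}\frac{1}{|\Aut(\Gamma)|}\,\omega_\Gamma,
\qquad
\omega_\Gamma=\kappa^{|E(\Gamma)|}\prod_{v\in V(\Gamma)}F_{g_v,n_v},
\]
built with vertex weights $F_{g,n}$ and propagator $\kappa$ as in~\eqref{eqn:FR} ($n_v$ being the valence of $v$). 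The Independence Assumption says exactly that each $F_{g,n}$ is independent of $\kappa$.

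Next I would check that $\pd_\kappa$ realizes $K$. Since the vertex weights $F_{g_v,n_v}$ carry no $\kappa$, differentiating $\omega_\Gamma$ in $\kappa$ touches only the propagator factors, so $\pd_\kappa\omega_\Gamma=|E(\Gamma)|\,\kappa^{|E(\Gamma)|-1}\prod_v F_{g_v,n_v}=\omega_{K\Gamma}$: cutting an internal edge lowers $|E(\Gamma)|$ by one while leaving every vertex — hence every vertex weight — unchanged, and both sides vanish when $\Gamma$ has no internal edge. As the realization map is linear and multiplicative over disjoint unions, summing $\tfrac1{|\Aut(\Gamma)|}\pd_\kappa\omega_\Gamma$ over $\Gamma\in\cG_{g,n}^c$ shows $\pd_\kappa\wF_{g,n}$ is the realization of $K\wcF_{g,n}$.

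Finally I would invoke Theorem~\ref{thm-original-rec}, namely $K\wcF_{g,n}=\tfrac12(\cD\cD\wcF_{g-1,n}+\sum_{g_1+g_2=g,\,n_1+n_2=n}\cD\wcF_{g_1,n_1}\cD\wcF_{g_2,n_2})$, and realize both sides. By Lemma~\ref{lem-original-D}, $\cD\wcF_{g,n}=(n+1)\wcF_{g,n+1}$, so the realization of $\cD\wcF_{g,n}$ is $(n+1)\wF_{g,n+1}=\tfrac1{n!}\tF_{g,n+1}=\tD\wF_{g,n}$; iterating, the realization of $\cD\cD\wcF_{g-1,n}$ is $\tfrac1{n!}\tF_{g-1,n+2}=\tD\tD\wF_{g-1,n}$, and multiplicativity handles the bilinear terms. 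This turns the realized recursion into~\eqref{rec-rhs}, i.e.\ the displayed identity, the auxiliary conventions listed being just the realizations via Lemma~\ref{lem-original-D} of the formal conventions~\eqref{convention-1} in Theorem~\ref{thm-original-rec}. Setting $n=0$ and using $\wF_{g,0}=\wF_g$, $\tF_{g,0}=\tF_g$, $\tD\wF_{r,0}=\tF_{r,1}=\tD\tF_r$ yields the recursion for the free energies.

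The only genuinely non-formal ingredient is the first step — that the given $\{\tF_{g,n}\}$ really arise as an ordinary Feynman sum with $\kappa$-free vertex weights — and that is where the duality theorems already proved (Theorems~\ref{thm-dual-realization},~\ref{thm:Involution},~\ref{thm-transf-3}) together with the Independence Assumption do all the work. Everything after that is the mechanical realization of an identity in $\cV$, so the main obstacle should be careful bookkeeping of combinatorial factors rather than any real difficulty.
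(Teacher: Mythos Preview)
Your proposal is correct and follows essentially the same route as the paper, which dismisses the proof as ``trivially by definition'': one uses the duality/involution to write $\wF_{g,n}=\tfrac1{n!}\tF_{g,n}$ as the ordinary Feynman sum with vertex weights $F_{g,n}$ and propagator $\kappa$, invokes the Independence Assumption so that $\pd_\kappa$ realizes $K$, and then realizes the abstract recursion of Theorem~\ref{thm-original-rec}. If anything, your write-up is more explicit than the paper's, since you spell out the role of Theorems~\ref{thm-dual-realization}/\ref{thm:Involution} in the first step rather than taking it for granted.
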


\section{A Higher-Dimensional Generalization}
\label{sec:higher-dim}

In this section we will generalize the constructions
in \S \ref{sec:dual-diagrammatics} and \S \ref{sec:realization}
to the case of higher-dimensional state spaces.
The Feynman graphs we need here are the stable graphs
with labels on half-edges
(see \S \ref{sec-pre-label}).

\subsection{Dotted stable graphs with labels on half-edges}

Fix an integer $N\geq 1$.
By a dotted $N$-labelled stable graph,
we mean a dotted stable graph with a label in $\{1,2,\cdots,N\}$
assigned to every of its half-edges.
Similar to the case of one-dimensional state space discussed in \S \ref{sec:dual-diagrammatics},
we will define a map that takes
these dotted stable graphs with labels on half-edges
to linear combinations of labelled ordinary graphs in this subsection.

Let us first consider the labelled dotted stable vertices.
Denote by $V_{g;l_1,\cdots,l_N}^\vee$ a dotted stable vertex of genus $g$
and valence $n=l_1+\cdots+l_N$,
where $l_j$ is the number of $j$ appearing in all the $n$ labels
on the half-edges incident at this vertex.
Similar to the one-dimensional case,
we define this dotted vertex to be
the following linear combination of labelled ordinary graphs:
\ben
V_{g;l_1,\cdots,l_N}^\vee:=l_1!\cdots l_N!\cdot
\wcF_{g;l_1,\cdots,l_N}
\een
when $2g-2+\sum\limits_{j=1}^N l_j>0$, i.e.,
\ben
V_{g;l_1,\cdots,l_N}^\vee=l_1!\cdots l_N!\cdot
\sum_{\Gamma\in\cG_{g;l_1,\cdots,l_N}^c}\frac{1}{|\Aut(\Gamma)|}\Gamma.
\een

Lemma \ref{lem-original-N-rec} tells us:
\ben
V_{g;l_1,\cdots,l_j+1,\cdots,l_N}^\vee=\cD_j V_{g;l_1,\cdots,l_N}^\vee,
\een
therefore we have
\ben
V_{g;l_1,\cdots,l_N}^\vee=\cD_1^{l_1}\cdots\cD_N^{l_N} V_{g}^\vee,
\een
where $V_g:=V_{g;0,\cdots,0}=\wcF_g$ for every $g\geq 2$,
and for $g=0$, $1$ we use the following convention:
\ben
&&\cD_j V_1^\vee:=
\wcF_{1;l_1,\cdots,l_N},\quad
\text{where } l_p:=\delta_{j,p};\\
&&\cD_i \cD_j \cD_k V_0^\vee:=l_1!\cdots l_N!\cdot
\wcF_{0;l_1,\cdots,l_N},\quad
\text{where } l_p:=\delta_{i,p}+\delta_{j,p}+\delta_{k,p}.
\een

\begin{Example}
For $N=2$, we have the following examples.
\begin{flalign*}
\begin{tikzpicture}
\draw [dotted,thick](0,0) circle [radius=0.2];
\node [align=center,align=center] at (0,0) {$0$};
\draw [dotted,thick](-0.5,0)--(-0.2,0);
\node [align=center,left] at (-0.5,0) {$1$};
\draw [dotted,thick](0.16,0.1)--(0.5,0.2);
\draw [dotted,thick](0.16,-0.1)--(0.5,-0.2);
\node [align=center,right] at (0.5,0.2) {$2$};
\node [align=center,right] at (0.5,-0.2) {$2$};
\node [align=center,align=center] at (1.2,0) {$=$};
\draw (0+2.4,0) circle [radius=0.2];
\node [align=center,align=center] at (0+2.4,0) {$0$};
\draw (-0.5+2.4,0)--(-0.2+2.4,0);
\node [align=center,left] at (-0.5+2.4,0) {$1$};
\draw (0.16+2.4,0.1)--(0.5+2.4,0.2);
\draw (0.16+2.4,-0.1)--(0.5+2.4,-0.2);
\node [align=center,right] at (0.5+2.4,0.2) {$2$};
\node [align=center,right] at (0.5+2.4,-0.2) {$2$};
\end{tikzpicture},&&
\end{flalign*}

\begin{flalign*}
\begin{split}
&\begin{tikzpicture}
\draw [dotted,thick](0,0) circle [radius=0.2];
\node [align=center,align=center] at (0,0) {$0$};
\draw [dotted,thick](-0.16,0.1)--(-0.5,0.2);
\draw [dotted,thick](-0.16,-0.1)--(-0.5,-0.2);
\node [align=center,left] at (-0.5,0.2) {$1$};
\node [align=center,left] at (-0.5,-0.2) {$1$};
\draw [dotted,thick](0.16,0.1)--(0.5,0.2);
\draw [dotted,thick](0.16,-0.1)--(0.5,-0.2);
\node [align=center,right] at (0.5,0.2) {$2$};
\node [align=center,right] at (0.5,-0.2) {$2$};
\node [align=center,align=center] at (1.2,0) {$=$};
\draw (0+2.4,0) circle [radius=0.2];
\node [align=center,align=center] at (0+2.4,0) {$0$};
\draw (-0.16+2.4,0.1)--(-0.5+2.4,0.2);
\draw (-0.16+2.4,-0.1)--(-0.5+2.4,-0.2);
\node [align=center,left] at (-0.5+2.4,0.2) {$1$};
\node [align=center,left] at (-0.5+2.4,-0.2) {$1$};
\draw (0.16+2.4,0.1)--(0.5+2.4,0.2);
\draw (0.16+2.4,-0.1)--(0.5+2.4,-0.2);
\node [align=center,right] at (0.5+2.4,0.2) {$2$};
\node [align=center,right] at (0.5+2.4,-0.2) {$2$};
\node [align=center,align=center] at (3.6,0) {$+$};
\draw (0+4.8,0) circle [radius=0.2];
\node [align=center,align=center] at (0+4.8,0) {$0$};
\draw (-0.16+4.8,0.1)--(-0.5+4.8,0.2);
\draw (-0.16+4.8,-0.1)--(-0.5+4.8,-0.2);
\node [align=center,left] at (-0.5+4.8,0.2) {$1$};
\node [align=center,left] at (-0.5+4.8,-0.2) {$1$};
\draw (0+5.6,0) circle [radius=0.2];
\node [align=center,align=center] at (0+5.6,0) {$0$};
\draw (0.16+5.6,0.1)--(0.5+5.6,0.2);
\draw (0.16+5.6,-0.1)--(0.5+5.6,-0.2);
\node [align=center,right] at (0.5+5.6,0.2) {$2$};
\node [align=center,right] at (0.5+5.6,-0.2) {$2$};
\draw (5,0)--(5.4,0);
\node [above,align=center] at (5,0) {$1$};
\node [above,align=center] at (5.4,0) {$1$};
\node [align=center,align=center] at (3.6+3.2,0) {$+$};
\draw (0+4.8+3.2,0) circle [radius=0.2];
\node [align=center,align=center] at (0+4.8+3.2,0) {$0$};
\draw (-0.16+4.8+3.2,0.1)--(-0.5+4.8+3.2,0.2);
\draw (-0.16+4.8+3.2,-0.1)--(-0.5+4.8+3.2,-0.2);
\node [align=center,left] at (-0.5+4.8+3.2,0.2) {$1$};
\node [align=center,left] at (-0.5+4.8+3.2,-0.2) {$1$};
\draw (0+5.6+3.2,0) circle [radius=0.2];
\node [align=center,align=center] at (0+5.6+3.2,0) {$0$};
\draw (0.16+5.6+3.2,0.1)--(0.5+5.6+3.2,0.2);
\draw (0.16+5.6+3.2,-0.1)--(0.5+5.6+3.2,-0.2);
\node [align=center,right] at (0.5+5.6+3.2,0.2) {$2$};
\node [align=center,right] at (0.5+5.6+3.2,-0.2) {$2$};
\draw (5+3.2,0)--(5.4+3.2,0);
\node [above,align=center] at (5+3.2,0) {$1$};
\node [above,align=center] at (5.4+3.2,0) {$2$};
\end{tikzpicture}
\\
&\qquad\qquad\qquad
\begin{tikzpicture}
\node [align=center,align=center] at (0.4,0) {$+$};
\draw (0+2.4,0) circle [radius=0.2];
\node [align=center,align=center] at (0+2.4,0) {$0$};
\draw (0+1.6,0) circle [radius=0.2];
\node [align=center,align=center] at (0+1.6,0) {$0$};
\draw (1.8,0)--(2.2,0);
\node [above,align=center] at (1.8,0) {$2$};
\node [above,align=center] at (2.2,0) {$1$};
\draw (-0.16+2.4-0.8,0.1)--(-0.5+2.4-0.8,0.2);
\draw (-0.16+2.4-0.8,-0.1)--(-0.5+2.4-0.8,-0.2);
\node [align=center,left] at (-0.5+2.4-0.8,0.2) {$1$};
\node [align=center,left] at (-0.5+2.4-0.8,-0.2) {$1$};
\draw (0.16+2.4,0.1)--(0.5+2.4,0.2);
\draw (0.16+2.4,-0.1)--(0.5+2.4,-0.2);
\node [align=center,right] at (0.5+2.4,0.2) {$2$};
\node [align=center,right] at (0.5+2.4,-0.2) {$2$};
\node [align=center,align=center] at (3.6,0) {$+$};
\draw (0+4.8,0) circle [radius=0.2];
\node [align=center,align=center] at (0+4.8,0) {$0$};
\draw (-0.16+4.8,0.1)--(-0.5+4.8,0.2);
\draw (-0.16+4.8,-0.1)--(-0.5+4.8,-0.2);
\node [align=center,left] at (-0.5+4.8,0.2) {$1$};
\node [align=center,left] at (-0.5+4.8,-0.2) {$1$};
\draw (0+5.6,0) circle [radius=0.2];
\node [align=center,align=center] at (0+5.6,0) {$0$};
\draw (0.16+5.6,0.1)--(0.5+5.6,0.2);
\draw (0.16+5.6,-0.1)--(0.5+5.6,-0.2);
\node [align=center,right] at (0.5+5.6,0.2) {$2$};
\node [align=center,right] at (0.5+5.6,-0.2) {$2$};
\draw (5,0)--(5.4,0);
\node [above,align=center] at (5,0) {$2$};
\node [above,align=center] at (5.4,0) {$2$};
\node [align=center,align=center] at (3.6+3.2,0) {$+2$};
\draw (0+4.8+3.2,0) circle [radius=0.2];
\node [align=center,align=center] at (0+4.8+3.2,0) {$0$};
\draw (-0.16+4.8+3.2,0.1)--(-0.5+4.8+3.2,0.2);
\draw (-0.16+4.8+3.2,-0.1)--(-0.5+4.8+3.2,-0.2);
\node [align=center,left] at (-0.5+4.8+3.2,0.2) {$1$};
\node [align=center,left] at (-0.5+4.8+3.2,-0.2) {$2$};
\draw (0+5.6+3.2,0) circle [radius=0.2];
\node [align=center,align=center] at (0+5.6+3.2,0) {$0$};
\draw (0.16+5.6+3.2,0.1)--(0.5+5.6+3.2,0.2);
\draw (0.16+5.6+3.2,-0.1)--(0.5+5.6+3.2,-0.2);
\node [align=center,right] at (0.5+5.6+3.2,0.2) {$1$};
\node [align=center,right] at (0.5+5.6+3.2,-0.2) {$2$};
\draw (5+3.2,0)--(5.4+3.2,0);
\node [above,align=center] at (5+3.2,0) {$1$};
\node [above,align=center] at (5.4+3.2,0) {$1$};
\end{tikzpicture}
\\
&\qquad\qquad\qquad
\begin{tikzpicture}
\node [align=center,align=center] at (0.4,0) {$+2$};
\draw (0+2.4,0) circle [radius=0.2];
\node [align=center,align=center] at (0+2.4,0) {$0$};
\draw (0+1.6,0) circle [radius=0.2];
\node [align=center,align=center] at (0+1.6,0) {$0$};
\draw (1.8,0)--(2.2,0);
\node [above,align=center] at (1.8,0) {$2$};
\node [above,align=center] at (2.2,0) {$2$};
\draw (-0.16+2.4-0.8,0.1)--(-0.5+2.4-0.8,0.2);
\draw (-0.16+2.4-0.8,-0.1)--(-0.5+2.4-0.8,-0.2);
\node [align=center,left] at (-0.5+2.4-0.8,0.2) {$1$};
\node [align=center,left] at (-0.5+2.4-0.8,-0.2) {$2$};
\draw (0.16+2.4,0.1)--(0.5+2.4,0.2);
\draw (0.16+2.4,-0.1)--(0.5+2.4,-0.2);
\node [align=center,right] at (0.5+2.4,0.2) {$1$};
\node [align=center,right] at (0.5+2.4,-0.2) {$2$};
\node [align=center,align=center] at (3.6,0) {$+4$};
\draw (0+4.8,0) circle [radius=0.2];
\node [align=center,align=center] at (0+4.8,0) {$0$};
\draw (-0.16+4.8,0.1)--(-0.5+4.8,0.2);
\draw (-0.16+4.8,-0.1)--(-0.5+4.8,-0.2);
\node [align=center,left] at (-0.5+4.8,0.2) {$1$};
\node [align=center,left] at (-0.5+4.8,-0.2) {$2$};
\draw (0+5.6,0) circle [radius=0.2];
\node [align=center,align=center] at (0+5.6,0) {$0$};
\draw (0.16+5.6,0.1)--(0.5+5.6,0.2);
\draw (0.16+5.6,-0.1)--(0.5+5.6,-0.2);
\node [align=center,right] at (0.5+5.6,0.2) {$1$};
\node [align=center,right] at (0.5+5.6,-0.2) {$2$};
\draw (5,0)--(5.4,0);
\node [above,align=center] at (5,0) {$1$};
\node [above,align=center] at (5.4,0) {$2$};
\end{tikzpicture}.
\end{split}&&
\end{flalign*}
\end{Example}

We have assigned to a labelled dotted stable vertex
a linear combination of labelled stable graphs.
Now we can do the same for labelled dotted stable graphs by gluing these labelled
dotted stable vertices together, such that
an internal edge labelled by $i,j$
is obtained by gluing two external edges
labelled by $i$ and $j$ together.
The gluing procedure is the same as the unlabelled case (see \S \ref{subsec3.2}).
We first give different names to half-edges of a dotted stable graph
and cut off all internal edges,
then write every dotted vertices in terms of labelled ordinary graphs
with names on external edges.
We glue two external edges together if their ancestors are joined together.
Finally we forget the names and multiply by a factor $(-1)^{|E^\vee|}$.

\begin{Example}
For the case of $N=2$,
let us take the following dotted stable graph as an example:
\be\label{eg-dotted-label}
\begin{tikzpicture}
\draw [dotted,thick](0,0) circle [radius=0.2];
\node [align=center,align=center] at (0,0) {$0$};
\draw [dotted,thick](0.16,0.1) .. controls (0.5,0.2) and (0.5,-0.2) ..  (0.16,-0.1);
\draw [dotted,thick](-0.16,0.1) .. controls (-0.5,0.2) and (-0.5,-0.2) ..  (-0.16,-0.1);
\node [above,left] at (-0.2,0.3) {$1$};
\node [below,left] at (-0.2,-0.3) {$1$};
\node [above,right] at (0.2,0.3) {$2$};
\node [below,right] at (0.2,-0.3) {$2$};
\end{tikzpicture}.
\ee
We first give four names $\{a,b,c,d\}$ to the half-edges of this graph,
and then cut the internal edges off:
\ben
\begin{tikzpicture}
\draw [dotted,thick](0,0) circle [radius=0.2];
\node [align=center,align=center] at (0,0) {$0$};
\draw [dotted,thick](0.16,0.1) .. controls (0.5,0.2) and (0.5,-0.2) ..  (0.16,-0.1);
\draw [dotted,thick](-0.16,0.1) .. controls (-0.5,0.2) and (-0.5,-0.2) ..  (-0.16,-0.1);
\node [above,left] at (-0.2,0.3) {$1$};
\node [below,left] at (-0.2,-0.3) {$1$};
\node [above,right] at (0.2,0.3) {$2$};
\node [below,right] at (0.2,-0.3) {$2$};
\node [above,left] at (-0.5,0.3) {$a$};
\node [below,left] at (-0.5,-0.3) {$b$};
\node [above,right] at (0.5,0.3) {$c$};
\node [below,right] at (0.5,-0.3) {$d$};
\node [align=center,align=center] at (2,0) {$\to$};
\draw [dotted,thick](4,0) circle [radius=0.2];
\node [align=center,align=center] at (4,0) {$0$};
\draw [dotted,thick](4.16,0.1)--(4.5,0.2);
\draw [dotted,thick](4.16,-0.1)--(4.5,-0.2);
\draw [dotted,thick](3.84,0.1)--(3.5,0.2);
\draw [dotted,thick](3.84,-0.1)--(3.5,-0.2);
\node [above,left] at (-0.4+4,0.3) {$1$};
\node [below,left] at (-0.4+4,-0.3) {$1$};
\node [above,right] at (0.4+4,0.3) {$2$};
\node [below,right] at (0.4+4,-0.3) {$2$};
\node [above,left] at (-0.6+4,0.3) {$a$};
\node [below,left] at (-0.6+4,-0.3) {$b$};
\node [above,right] at (0.6+4,0.3) {$c$};
\node [below,right] at (0.6+4,-0.3) {$d$};
\end{tikzpicture}.
\een
The labelled dotted stable vertex with names on the right-hand-side
can be written as the following linear combination:

\begin{flalign*}
\begin{split}
&
\begin{tikzpicture}
\draw (0+2.4,0) circle [radius=0.2];
\node [align=center,align=center] at (0+2.4,0) {$0$};
\draw (-0.16+2.4,0.1)--(-0.5+2.4,0.2);
\draw (-0.16+2.4,-0.1)--(-0.5+2.4,-0.2);
\node [align=center,left] at (-0.5+2.4,0.2) {$a1$};
\node [align=center,left] at (-0.5+2.4,-0.2) {$b1$};
\draw (0.16+2.4,0.1)--(0.5+2.4,0.2);
\draw (0.16+2.4,-0.1)--(0.5+2.4,-0.2);
\node [align=center,right] at (0.5+2.4,0.2) {$2c$};
\node [align=center,right] at (0.5+2.4,-0.2) {$2d$};
\node [align=center,align=center] at (3.6,0) {$+$};
\draw (0+4.8,0) circle [radius=0.2];
\node [align=center,align=center] at (0+4.8,0) {$0$};
\draw (-0.16+4.8,0.1)--(-0.5+4.8,0.2);
\draw (-0.16+4.8,-0.1)--(-0.5+4.8,-0.2);
\node [align=center,left] at (-0.5+4.8,0.2) {$a1$};
\node [align=center,left] at (-0.5+4.8,-0.2) {$b1$};
\draw (0+5.6,0) circle [radius=0.2];
\node [align=center,align=center] at (0+5.6,0) {$0$};
\draw (0.16+5.6,0.1)--(0.5+5.6,0.2);
\draw (0.16+5.6,-0.1)--(0.5+5.6,-0.2);
\node [align=center,right] at (0.5+5.6,0.2) {$2c$};
\node [align=center,right] at (0.5+5.6,-0.2) {$2d$};
\draw (5,0)--(5.4,0);
\node [above,align=center] at (5,0) {$1$};
\node [above,align=center] at (5.4,0) {$1$};
\node [align=center,align=center] at (3.6+3.2,0) {$+$};
\draw (0+4.8+3.2,0) circle [radius=0.2];
\node [align=center,align=center] at (0+4.8+3.2,0) {$0$};
\draw (-0.16+4.8+3.2,0.1)--(-0.5+4.8+3.2,0.2);
\draw (-0.16+4.8+3.2,-0.1)--(-0.5+4.8+3.2,-0.2);
\node [align=center,left] at (-0.5+4.8+3.2,0.2) {$a1$};
\node [align=center,left] at (-0.5+4.8+3.2,-0.2) {$b1$};
\draw (0+5.6+3.2,0) circle [radius=0.2];
\node [align=center,align=center] at (0+5.6+3.2,0) {$0$};
\draw (0.16+5.6+3.2,0.1)--(0.5+5.6+3.2,0.2);
\draw (0.16+5.6+3.2,-0.1)--(0.5+5.6+3.2,-0.2);
\node [align=center,right] at (0.5+5.6+3.2,0.2) {$2c$};
\node [align=center,right] at (0.5+5.6+3.2,-0.2) {$2d$};
\draw (5+3.2,0)--(5.4+3.2,0);
\node [above,align=center] at (5+3.2,0) {$1$};
\node [above,align=center] at (5.4+3.2,0) {$2$};
\node [align=center,align=center] at (3.6+3.2+3.2,0) {$+$};
\draw (0+4.8+3.2+3.2,0) circle [radius=0.2];
\node [align=center,align=center] at (0+4.8+3.2+3.2,0) {$0$};
\draw (-0.16+4.8+3.2+3.2,0.1)--(-0.5+4.8+3.2+3.2,0.2);
\draw (-0.16+4.8+3.2+3.2,-0.1)--(-0.5+4.8+3.2+3.2,-0.2);
\node [align=center,left] at (-0.5+4.8+3.2+3.2,0.2) {$a1$};
\node [align=center,left] at (-0.5+4.8+3.2+3.2,-0.2) {$b1$};
\draw (0+5.6+3.2+3.2,0) circle [radius=0.2];
\node [align=center,align=center] at (0+5.6+3.2+3.2,0) {$0$};
\draw (0.16+5.6+3.2+3.2,0.1)--(0.5+5.6+3.2+3.2,0.2);
\draw (0.16+5.6+3.2+3.2,-0.1)--(0.5+5.6+3.2+3.2,-0.2);
\node [align=center,right] at (0.5+5.6+3.2+3.2,0.2) {$2c$};
\node [align=center,right] at (0.5+5.6+3.2+3.2,-0.2) {$2d$};
\draw (5+3.2+3.2,0)--(5.4+3.2+3.2,0);
\node [above,align=center] at (5+3.2+3.2,0) {$2$};
\node [above,align=center] at (5.4+3.2+3.2,0) {$1$};
\end{tikzpicture}
\\
&
\begin{tikzpicture}
\node [align=center,align=center] at (3.6,0) {$+$};
\draw (0+4.8,0) circle [radius=0.2];
\node [align=center,align=center] at (0+4.8,0) {$0$};
\draw (-0.16+4.8,0.1)--(-0.5+4.8,0.2);
\draw (-0.16+4.8,-0.1)--(-0.5+4.8,-0.2);
\node [align=center,left] at (-0.5+4.8,0.2) {$a1$};
\node [align=center,left] at (-0.5+4.8,-0.2) {$b1$};
\draw (0+5.6,0) circle [radius=0.2];
\node [align=center,align=center] at (0+5.6,0) {$0$};
\draw (0.16+5.6,0.1)--(0.5+5.6,0.2);
\draw (0.16+5.6,-0.1)--(0.5+5.6,-0.2);
\node [align=center,right] at (0.5+5.6,0.2) {$2c$};
\node [align=center,right] at (0.5+5.6,-0.2) {$2d$};
\draw (5,0)--(5.4,0);
\node [above,align=center] at (5,0) {$2$};
\node [above,align=center] at (5.4,0) {$2$};
\node [align=center,align=center] at (3.6+3.2,0) {$+$};
\draw (0+4.8+3.2,0) circle [radius=0.2];
\node [align=center,align=center] at (0+4.8+3.2,0) {$0$};
\draw (-0.16+4.8+3.2,0.1)--(-0.5+4.8+3.2,0.2);
\draw (-0.16+4.8+3.2,-0.1)--(-0.5+4.8+3.2,-0.2);
\node [align=center,left] at (-0.5+4.8+3.2,0.2) {$a1$};
\node [align=center,left] at (-0.5+4.8+3.2,-0.2) {$c2$};
\draw (0+5.6+3.2,0) circle [radius=0.2];
\node [align=center,align=center] at (0+5.6+3.2,0) {$0$};
\draw (0.16+5.6+3.2,0.1)--(0.5+5.6+3.2,0.2);
\draw (0.16+5.6+3.2,-0.1)--(0.5+5.6+3.2,-0.2);
\node [align=center,right] at (0.5+5.6+3.2,0.2) {$1b$};
\node [align=center,right] at (0.5+5.6+3.2,-0.2) {$2d$};
\draw (5+3.2,0)--(5.4+3.2,0);
\node [above,align=center] at (5+3.2,0) {$1$};
\node [above,align=center] at (5.4+3.2,0) {$1$};
\node [align=center,align=center] at (3.6+3.2+3.2,0) {$+$};
\draw (0+4.8+3.2+3.2,0) circle [radius=0.2];
\node [align=center,align=center] at (0+4.8+3.2+3.2,0) {$0$};
\draw (-0.16+4.8+3.2+3.2,0.1)--(-0.5+4.8+3.2+3.2,0.2);
\draw (-0.16+4.8+3.2+3.2,-0.1)--(-0.5+4.8+3.2+3.2,-0.2);
\node [align=center,left] at (-0.5+4.8+3.2+3.2,0.2) {$a1$};
\node [align=center,left] at (-0.5+4.8+3.2+3.2,-0.2) {$d2$};
\draw (0+5.6+3.2+3.2,0) circle [radius=0.2];
\node [align=center,align=center] at (0+5.6+3.2+3.2,0) {$0$};
\draw (0.16+5.6+3.2+3.2,0.1)--(0.5+5.6+3.2+3.2,0.2);
\draw (0.16+5.6+3.2+3.2,-0.1)--(0.5+5.6+3.2+3.2,-0.2);
\node [align=center,right] at (0.5+5.6+3.2+3.2,0.2) {$1b$};
\node [align=center,right] at (0.5+5.6+3.2+3.2,-0.2) {$2c$};
\draw (5+3.2+3.2,0)--(5.4+3.2+3.2,0);
\node [above,align=center] at (5+3.2+3.2,0) {$1$};
\node [above,align=center] at (5.4+3.2+3.2,0) {$1$};
\end{tikzpicture}
\\
&
\begin{tikzpicture}
\node [align=center,align=center] at (3.6,0) {$+$};
\draw (0+4.8,0) circle [radius=0.2];
\node [align=center,align=center] at (0+4.8,0) {$0$};
\draw (-0.16+4.8,0.1)--(-0.5+4.8,0.2);
\draw (-0.16+4.8,-0.1)--(-0.5+4.8,-0.2);
\node [align=center,left] at (-0.5+4.8,0.2) {$a1$};
\node [align=center,left] at (-0.5+4.8,-0.2) {$c2$};
\draw (0+5.6,0) circle [radius=0.2];
\node [align=center,align=center] at (0+5.6,0) {$0$};
\draw (0.16+5.6,0.1)--(0.5+5.6,0.2);
\draw (0.16+5.6,-0.1)--(0.5+5.6,-0.2);
\node [align=center,right] at (0.5+5.6,0.2) {$1b$};
\node [align=center,right] at (0.5+5.6,-0.2) {$2d$};
\draw (5,0)--(5.4,0);
\node [above,align=center] at (5,0) {$2$};
\node [above,align=center] at (5.4,0) {$2$};
\node [align=center,align=center] at (3.6+3.2,0) {$+$};
\draw (0+4.8+3.2,0) circle [radius=0.2];
\node [align=center,align=center] at (0+4.8+3.2,0) {$0$};
\draw (-0.16+4.8+3.2,0.1)--(-0.5+4.8+3.2,0.2);
\draw (-0.16+4.8+3.2,-0.1)--(-0.5+4.8+3.2,-0.2);
\node [align=center,left] at (-0.5+4.8+3.2,0.2) {$a1$};
\node [align=center,left] at (-0.5+4.8+3.2,-0.2) {$d2$};
\draw (0+5.6+3.2,0) circle [radius=0.2];
\node [align=center,align=center] at (0+5.6+3.2,0) {$0$};
\draw (0.16+5.6+3.2,0.1)--(0.5+5.6+3.2,0.2);
\draw (0.16+5.6+3.2,-0.1)--(0.5+5.6+3.2,-0.2);
\node [align=center,right] at (0.5+5.6+3.2,0.2) {$1b$};
\node [align=center,right] at (0.5+5.6+3.2,-0.2) {$2c$};
\draw (5+3.2,0)--(5.4+3.2,0);
\node [above,align=center] at (5+3.2,0) {$2$};
\node [above,align=center] at (5.4+3.2,0) {$2$};
\node [align=center,align=center] at (3.6+3.2+3.2,0) {$+$};
\draw (0+4.8+3.2+3.2,0) circle [radius=0.2];
\node [align=center,align=center] at (0+4.8+3.2+3.2,0) {$0$};
\draw (-0.16+4.8+3.2+3.2,0.1)--(-0.5+4.8+3.2+3.2,0.2);
\draw (-0.16+4.8+3.2+3.2,-0.1)--(-0.5+4.8+3.2+3.2,-0.2);
\node [align=center,left] at (-0.5+4.8+3.2+3.2,0.2) {$a1$};
\node [align=center,left] at (-0.5+4.8+3.2+3.2,-0.2) {$c2$};
\draw (0+5.6+3.2+3.2,0) circle [radius=0.2];
\node [align=center,align=center] at (0+5.6+3.2+3.2,0) {$0$};
\draw (0.16+5.6+3.2+3.2,0.1)--(0.5+5.6+3.2+3.2,0.2);
\draw (0.16+5.6+3.2+3.2,-0.1)--(0.5+5.6+3.2+3.2,-0.2);
\node [align=center,right] at (0.5+5.6+3.2+3.2,0.2) {$1b$};
\node [align=center,right] at (0.5+5.6+3.2+3.2,-0.2) {$2d$};
\draw (5+3.2+3.2,0)--(5.4+3.2+3.2,0);
\node [above,align=center] at (5+3.2+3.2,0) {$1$};
\node [above,align=center] at (5.4+3.2+3.2,0) {$2$};
\end{tikzpicture}
\\
&
\begin{tikzpicture}
\node [align=center,align=center] at (3.6,0) {$+$};
\draw (0+4.8,0) circle [radius=0.2];
\node [align=center,align=center] at (0+4.8,0) {$0$};
\draw (-0.16+4.8,0.1)--(-0.5+4.8,0.2);
\draw (-0.16+4.8,-0.1)--(-0.5+4.8,-0.2);
\node [align=center,left] at (-0.5+4.8,0.2) {$a1$};
\node [align=center,left] at (-0.5+4.8,-0.2) {$d2$};
\draw (0+5.6,0) circle [radius=0.2];
\node [align=center,align=center] at (0+5.6,0) {$0$};
\draw (0.16+5.6,0.1)--(0.5+5.6,0.2);
\draw (0.16+5.6,-0.1)--(0.5+5.6,-0.2);
\node [align=center,right] at (0.5+5.6,0.2) {$1b$};
\node [align=center,right] at (0.5+5.6,-0.2) {$2c$};
\draw (5,0)--(5.4,0);
\node [above,align=center] at (5,0) {$1$};
\node [above,align=center] at (5.4,0) {$2$};
\node [align=center,align=center] at (3.6+3.2,0) {$+$};
\draw (0+4.8+3.2,0) circle [radius=0.2];
\node [align=center,align=center] at (0+4.8+3.2,0) {$0$};
\draw (-0.16+4.8+3.2,0.1)--(-0.5+4.8+3.2,0.2);
\draw (-0.16+4.8+3.2,-0.1)--(-0.5+4.8+3.2,-0.2);
\node [align=center,left] at (-0.5+4.8+3.2,0.2) {$b1$};
\node [align=center,left] at (-0.5+4.8+3.2,-0.2) {$c2$};
\draw (0+5.6+3.2,0) circle [radius=0.2];
\node [align=center,align=center] at (0+5.6+3.2,0) {$0$};
\draw (0.16+5.6+3.2,0.1)--(0.5+5.6+3.2,0.2);
\draw (0.16+5.6+3.2,-0.1)--(0.5+5.6+3.2,-0.2);
\node [align=center,right] at (0.5+5.6+3.2,0.2) {$1a$};
\node [align=center,right] at (0.5+5.6+3.2,-0.2) {$2d$};
\draw (5+3.2,0)--(5.4+3.2,0);
\node [above,align=center] at (5+3.2,0) {$1$};
\node [above,align=center] at (5.4+3.2,0) {$2$};
\node [align=center,align=center] at (3.6+3.2+3.2,0) {$+$};
\draw (0+4.8+3.2+3.2,0) circle [radius=0.2];
\node [align=center,align=center] at (0+4.8+3.2+3.2,0) {$0$};
\draw (-0.16+4.8+3.2+3.2,0.1)--(-0.5+4.8+3.2+3.2,0.2);
\draw (-0.16+4.8+3.2+3.2,-0.1)--(-0.5+4.8+3.2+3.2,-0.2);
\node [align=center,left] at (-0.5+4.8+3.2+3.2,0.2) {$b1$};
\node [align=center,left] at (-0.5+4.8+3.2+3.2,-0.2) {$d2$};
\draw (0+5.6+3.2+3.2,0) circle [radius=0.2];
\node [align=center,align=center] at (0+5.6+3.2+3.2,0) {$0$};
\draw (0.16+5.6+3.2+3.2,0.1)--(0.5+5.6+3.2+3.2,0.2);
\draw (0.16+5.6+3.2+3.2,-0.1)--(0.5+5.6+3.2+3.2,-0.2);
\node [align=center,right] at (0.5+5.6+3.2+3.2,0.2) {$1a$};
\node [align=center,right] at (0.5+5.6+3.2+3.2,-0.2) {$2c$};
\draw (5+3.2+3.2,0)--(5.4+3.2+3.2,0);
\node [above,align=center] at (5+3.2+3.2,0) {$1$};
\node [above,align=center] at (5.4+3.2+3.2,0) {$2$};
\end{tikzpicture}.
\end{split}&&
\end{flalign*}
Finally, we glue the half-edge named by $a$ with the half-edge named by $b$,
and glue the half-edge named by $c$ with the half-edge named by $d$.
then we forget the names, and multiply by $(-1)^2$.
In this way
we can obtain the following expression for the graph \eqref{eg-dotted-label}:
\begin{flalign*}
\begin{split}
&
\begin{tikzpicture}
\draw (0,0) circle [radius=0.2];
\node [align=center,align=center] at (0,0) {$0$};
\draw (0.16,0.1) .. controls (0.5,0.2) and (0.5,-0.2) ..  (0.16,-0.1);
\draw (-0.16,0.1) .. controls (-0.5,0.2) and (-0.5,-0.2) ..  (-0.16,-0.1);
\node [above,left] at (-0.2,0.3) {$1$};
\node [below,left] at (-0.2,-0.3) {$1$};
\node [above,right] at (0.2,0.3) {$2$};
\node [below,right] at (0.2,-0.3) {$2$};
\node [align=center,align=center] at (0.8,0) {$+$};
\draw (1.6,0) circle [radius=0.2];
\node [align=center,align=center] at (1.6,0) {$0$};
\draw (2.4,0) circle [radius=0.2];
\node [align=center,align=center] at (2.4,0) {$0$};
\draw (1.8,0) -- (2.2,0);
\draw (0.16+2.4,0.1) .. controls (0.5+2.4,0.2) and (0.5+2.4,-0.2) ..  (0.16+2.4,-0.1);
\draw (-0.16+1.6,0.1) .. controls (-0.5+1.6,0.2) and (-0.5+1.6,-0.2) ..  (-0.16+1.6,-0.1);
\node [above,left] at (-0.2+1.6,0.3) {$1$};
\node [below,left] at (-0.2+1.6,-0.3) {$1$};
\node [above,right] at (0.2+2.4,0.3) {$2$};
\node [below,right] at (0.2+2.4,-0.3) {$2$};
\node [above,align=center] at (1.8,0) {$1$};
\node [above,align=center] at (2.2,0) {$1$};
\node [align=center,align=center] at (0.8+2.4,0) {$+$};
\draw (1.6+2.4,0) circle [radius=0.2];
\node [align=center,align=center] at (1.6+2.4,0) {$0$};
\draw (2.4+2.4,0) circle [radius=0.2];
\node [align=center,align=center] at (2.4+2.4,0) {$0$};
\draw (1.8+2.4,0) -- (2.2+2.4,0);
\draw (0.16+2.4+2.4,0.1) .. controls (0.5+2.4+2.4,0.2) and (0.5+2.4+2.4,-0.2) ..  (0.16+2.4+2.4,-0.1);
\draw (-0.16+1.6+2.4,0.1) .. controls (-0.5+1.6+2.4,0.2) and (-0.5+1.6+2.4,-0.2) ..  (-0.16+1.6+2.4,-0.1);
\node [above,left] at (-0.2+1.6+2.4,0.3) {$1$};
\node [below,left] at (-0.2+1.6+2.4,-0.3) {$1$};
\node [above,right] at (0.2+2.4+2.4,0.3) {$2$};
\node [below,right] at (0.2+2.4+2.4,-0.3) {$2$};
\node [above,align=center] at (1.8+2.4,0) {$1$};
\node [above,align=center] at (2.2+2.4,0) {$2$};
\node [align=center,align=center] at (0.8+2.4+2.4,0) {$+$};
\draw (1.6+2.4+2.4,0) circle [radius=0.2];
\node [align=center,align=center] at (1.6+2.4+2.4,0) {$0$};
\draw (2.4+2.4+2.4,0) circle [radius=0.2];
\node [align=center,align=center] at (2.4+2.4+2.4,0) {$0$};
\draw (1.8+2.4+2.4,0) -- (2.2+2.4+2.4,0);
\draw (0.16+2.4+2.4+2.4,0.1) .. controls (0.5+2.4+2.4+2.4,0.2) and (0.5+2.4+2.4+2.4,-0.2) ..  (0.16+2.4+2.4+2.4,-0.1);
\draw (-0.16+1.6+2.4+2.4,0.1) .. controls (-0.5+1.6+2.4+2.4,0.2) and (-0.5+1.6+2.4+2.4,-0.2) ..  (-0.16+2.4+1.6+2.4,-0.1);
\node [above,left] at (-0.2+1.6+2.4+2.4,0.3) {$1$};
\node [below,left] at (-0.2+1.6+2.4+2.4,-0.3) {$1$};
\node [above,right] at (0.2+2.4+2.4+2.4,0.3) {$2$};
\node [below,right] at (0.2+2.4+2.4+2.4,-0.3) {$2$};
\node [above,align=center] at (1.8+2.4+2.4,0) {$2$};
\node [above,align=center] at (2.2+2.4+2.4,0) {$1$};
\node [align=center,align=center] at (0.8+2.4+2.4+2.4,0) {$+$};
\draw (1.6+2.4+2.4+2.4,0) circle [radius=0.2];
\node [align=center,align=center] at (1.6+2.4+2.4+2.4,0) {$0$};
\draw (2.4+2.4+2.4+2.4,0) circle [radius=0.2];
\node [align=center,align=center] at (2.4+2.4+2.4+2.4,0) {$0$};
\draw (1.8+2.4+2.4+2.4,0) -- (2.2+2.4+2.4+2.4,0);
\draw (0.16+2.4+2.4+2.4+2.4,0.1) .. controls (0.5+2.4+2.4+2.4+2.4,0.2) and (0.5+2.4+2.4+2.4+2.4,-0.2) ..  (0.16+2.4+2.4+2.4+2.4,-0.1);
\draw (-0.16+1.6+2.4+2.4+2.4,0.1) .. controls (-0.5+1.6+2.4+2.4+2.4,0.2) and (-0.5+2.4+1.6+2.4+2.4,-0.2) ..  (-0.16+2.4+1.6+2.4+2.4,-0.1);
\node [above,left] at (-0.2+1.6+2.4+2.4+2.4,0.3) {$1$};
\node [below,left] at (-0.2+1.6+2.4+2.4+2.4,-0.3) {$1$};
\node [above,right] at (0.2+2.4+2.4+2.4+2.4,0.3) {$2$};
\node [below,right] at (0.2+2.4+2.4+2.4+2.4,-0.3) {$2$};
\node [above,align=center] at (1.8+2.4+2.4+2.4,0) {$2$};
\node [above,align=center] at (2.2+2.4+2.4+2.4,0) {$2$};
\end{tikzpicture}\\
&
\begin{tikzpicture}
\node [align=center,align=center] at (-0.6,0) {$+2$};
\draw (0,0) circle [radius=0.2];
\node [align=center,align=center] at (0,0) {$0$};
\draw (1,0) circle [radius=0.2];
\node [align=center,align=center] at (1,0) {$0$};
\draw (0.1,0.16) .. controls (0.3,0.4) and (0.7,0.4) ..  (0.9,0.16);
\node [align=center,align=center] at (0.35,0) {$1$};
\node [align=center,align=center] at (0.65,0) {$1$};
\draw (0.2,0) -- (0.8,0);
\node [above,align=center] at (0.1,0.16) {$1$};
\node [above,align=center] at (0.9,0.16) {$1$};
\draw (0.1,-0.16) .. controls (0.3,-0.4) and (0.7,-0.4) ..  (0.9,-0.16);
\node [below,align=center] at (0.1,-0.16) {$2$};
\node [below,align=center] at (0.9,-0.16) {$2$};
\node [align=center,align=center] at (-0.6+2.2,0) {$+2$};
\draw (0+2.2,0) circle [radius=0.2];
\node [align=center,align=center] at (0+2.2,0) {$0$};
\draw (1+2.2,0) circle [radius=0.2];
\node [align=center,align=center] at (1+2.2,0) {$0$};
\draw (0.1+2.2,0.16) .. controls (0.3+2.2,0.4) and (0.7+2.2,0.4) ..  (0.9+2.2,0.16);
\node [align=center,align=center] at (0.35+2.2,0) {$2$};
\node [align=center,align=center] at (0.65+2.2,0) {$2$};
\draw (0.2+2.2,0) -- (0.8+2.2,0);
\node [above,align=center] at (0.1+2.2,0.16) {$1$};
\node [above,align=center] at (0.9+2.2,0.16) {$1$};
\draw (0.1+2.2,-0.16) .. controls (0.3+2.2,-0.4) and (0.7+2.2,-0.4) ..  (0.9+2.2,-0.16);
\node [below,align=center] at (0.1+2.2,-0.16) {$2$};
\node [below,align=center] at (0.9+2.2,-0.16) {$2$};
\node [align=center,align=center] at (-0.6+2.2+2.2,0) {$+4$};
\draw (0+2.2+2.2,0) circle [radius=0.2];
\node [align=center,align=center] at (0+2.2+2.2,0) {$0$};
\draw (1+2.2+2.2,0) circle [radius=0.2];
\node [align=center,align=center] at (1+2.2+2.2,0) {$0$};
\draw (0.1+2.2+2.2,0.16) .. controls (0.3+2.2+2.2,0.4) and (0.7+2.2+2.2,0.4) ..  (0.9+2.2+2.2,0.16);
\node [align=center,align=center] at (0.35+2.2+2.2,0) {$1$};
\node [align=center,align=center] at (0.65+2.2+2.2,0) {$2$};
\draw (0.2+2.2+2.2,0) -- (0.8+2.2+2.2,0);
\node [above,align=center] at (0.1+2.2+2.2,0.16) {$1$};
\node [above,align=center] at (0.9+2.2+2.2,0.16) {$1$};
\draw (0.1+2.2+2.2,-0.16) .. controls (0.3+2.2+2.2,-0.4) and (0.7+2.2+2.2,-0.4) ..  (0.9+2.2+2.2,-0.16);
\node [below,align=center] at (0.1+2.2+2.2,-0.16) {$2$};
\node [below,align=center] at (0.9+2.2+2.2,-0.16) {$2$};
\end{tikzpicture}.
\end{split}
\end{flalign*}
\end{Example}

\subsection{Dual abstract quantum field theory}

Now we can generalize the dual abstract QFT to the case
of labelled dotted stable graphs.
The dual abstract free energy of genus $g$ ($g\geq 2$) is defined to be:
\ben
\wcF_g^\vee:=\sum_{\Gamma^\vee\in\cG_g^{\vee,c}(N)}
\frac{1}{|\Aut(\Gamma^\vee)|}\Gamma^\vee,
\een
where $\cG_g^{\vee,c}(N)$ is the set of all connected
dotted stable graphs of genus $g$ without external edges,
whose half-edges are labelled by indices in $\{1,2,\cdots,N\}$.
Also, we define the dual abstract $n$-point function of genus $g$ to be:
\ben
\wcF^\vee_{g;l_1,\cdots,l_N}:=
\sum_{\Gamma^\vee\in \cG_{g;l_1,\cdots,l_N}^{\vee,c}(N)}
\frac{1}{|\Aut(\Gamma^\vee)|}\Gamma^\vee,
\een
where $\cG_{g;l_1,\cdots,l_N}^{\vee,c}(N)$ is the set of all connected labelled
dotted stable graphs of genus $g$,
with $l_j$ external edges
labelled by $j$ for every $j\in\{1,\cdots,N\}$
(we have $l_1+\cdots+l_N=n$).

Similar to the one-dimensional case,
the edge-cutting operators $K_{ij}$
and the edge-adding operators $\cD_i=\pd_i+\gamma_i$
discussed in \S \ref{sec-pre-label} can be generalized to
the case of dual abstract QFT.
Let $K_{ij}^\vee$ be the edge-cutting operator which cuts an
dotted internal edge labelled by $i$ and $j$,
and $\pd_i^\vee$, $\cD_i^\vee$ be the straightforward generalizations
of $\pd^\vee$, $\cD^\vee$ respectively
to the labelled case using the same way in \S \ref{sec-pre-label}.
Then we have the following recursion relations:

\begin{Lemma}
For every $2g-2+\sum\limits_{j=1}^N l_j>0$,
we have
\ben
\cD_j^\vee\wcF^\vee_{g;l_1,\cdots,l_N}=
\wcF^\vee_{g;l_1,\cdots,l_j+1,\cdots,l_N}.
\een
\end{Lemma}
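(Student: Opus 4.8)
The plan is to obtain this recursion as the image, under the duality map $\phi$, of its ordinary (solid-graph) counterpart, Lemma \ref{lem-original-N-rec}, in exact parallel with the way the one-dimensional Lemma \ref{lem-new-D} is deduced from Lemma \ref{lem-original-D} via \eqref{phi-pd&cD}. So the only real content is to set up $\phi$ and its intertwining properties in the labelled setting.

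Concretely, I would first extend $\phi$ of \S \ref{subsec3.2} to the span of labelled stable graphs: let $\phi$ send a labelled ordinary stable graph to the same graph redrawn with dotted vertices and dotted internal edges, keeping every half-edge label, and extend linearly. This shape-preserving redrawing is a bijection of the set of labelled ordinary graphs of type $(g;l_1,\cdots,l_N)$ onto the set of labelled dotted graphs of that type and preserves automorphism groups, so applying $\phi$ term by term to the defining sums gives $\phi(\wcF_{g;l_1,\cdots,l_N})=\wcF^\vee_{g;l_1,\cdots,l_N}$ for all admissible $(g;l_1,\cdots,l_N)$. Moreover, since $K_{ij}^\vee$, $\pd_i^\vee$, $\gamma_i^\vee$ and $\cD_i^\vee=\pd_i^\vee+\gamma_i^\vee$ are defined to perform on dotted graphs precisely the same label-preserving local moves that $K_{ij}$, $\pd_i$, $\gamma_i$, $\cD_i$ perform on solid graphs — and those moves are insensitive to line style — one obtains, for every labelled ordinary stable graph $\Gamma$,
\[
\phi(K_{ij}\Gamma)=K_{ij}^\vee\phi(\Gamma),\qquad
\phi(\pd_i\Gamma)=\pd_i^\vee\phi(\Gamma),\qquad
\phi(\cD_j\Gamma)=\cD_j^\vee\phi(\Gamma),
\]
the labelled analogues of \eqref{phi-K} and \eqref{phi-pd&cD}. (These hold with no sign: the sign of Lemma \ref{lem-gamma} and the factors $(-1)^{|E^\vee|}$ only appear when a dotted graph is afterwards re-expanded into solid graphs, which is not needed here.) Applying $\phi$ to the identity of Lemma \ref{lem-original-N-rec} and invoking the two displays then yields the assertion, its right-hand coefficient being inherited unchanged from that of Lemma \ref{lem-original-N-rec}.

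The one place that genuinely requires care — and the only obstacle worth naming — is making the duality map rigorous in the labelled case: one must carry the half-edge labels through the construction of \S \ref{subsec3.2} and the following subsections (in particular the ``expand the dotted vertices, glue, attach the sign $(-1)^{|E^\vee|}$, forget the names'' recipe of the preceding subsection) to check that $\phi$ is a well-defined linear endomorphism, that labelled dotted stable graphs again form a basis of the relevant graph spaces, and that the equivariance relations above really follow from the ``same local moves'' reasoning. This is routine book-keeping, but it should be spelled out, since $\cD_j^\vee$ is defined as an operation on dotted graphs whereas those graphs were introduced only through their solid expansions. Once it is in place, the Lemma — and, in the same way, the labelled versions of the remaining recursion relations — follows immediately.
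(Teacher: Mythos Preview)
Your approach is correct and is precisely the paper's: in the labelled setting the paper gives no separate proof, deferring to the one-dimensional argument (Lemma \ref{lem-new-D} from Lemma \ref{lem-original-D} via \eqref{phi-pd&cD}), which is exactly the $\phi$-intertwining you describe. Your remark that the right-hand coefficient is inherited unchanged from Lemma \ref{lem-original-N-rec} is on point --- the factor $(l_j+1)$ should indeed appear, and its absence in the displayed statement is a typographical slip in the paper.
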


\begin{Theorem}
For $2g-2+\sum\limits_{j=1}^N l_j>0$, we have
\begin{equation*}
\begin{split}
&K_{ij}^\vee\widehat{\cF}^\vee_{g;l_1,\cdots,l_N}=
\cD_i^\vee\cD_j^\vee\widehat{\cF}^\vee_{g-1;l_1,\cdots,l_N}+
\sum_{\substack{g_1+g_2=g\\p_k+q_k=l_k}}
\cD_i^\vee\widehat{\cF}^\vee_{g_1;p_1,\cdots,p_N}
\cD_j^\vee\widehat{\cF}^\vee_{g_2;q_1,\cdots,q_N},
i\not=j;\\
&K_{ii}^\vee\widehat{\cF}^\vee_{g;l_1,\cdots,l_N}=\half\biggl(
\cD_i^\vee\cD_i^\vee\widehat{\cF}^\vee_{g-1;l_1,\cdots,l_N}+
\sum_{\substack{g_1+g_2=g\\p_k+q_k=l_k}}
\cD_i^\vee\widehat{\cF}^\vee_{g_1;p_1,\cdots,p_N}
\cD_i^\vee\widehat{\cF}^\vee_{g_2;q_1,\cdots,q_N}\biggr).
\end{split}
\end{equation*}
In particular,
for $l_1=\cdots=l_N=0$,
we have the following recursions for the dual free energy
of genus $g$ ($g\geq 2$):
\be
\begin{split}
&K_{ij}^\vee\widehat{\cF}^\vee_g=\cD_i^\vee\partial_j^\vee\widehat{\cF}^\vee_{g-1}+
\sum_{r=1}^{g-1}\partial_i^\vee\widehat{\cF}^\vee_r\partial_j^\vee\widehat{\cF}^\vee_{g-r},
\quad i\not=j;\\
&K_{ii}^\vee\widehat{\cF}^\vee_g=\frac{1}{2}\biggl(\cD_i^\vee\partial_i^\vee\widehat{\cF}^\vee_{g-1}
+\sum_{r=1}^{g-1}\partial_i^\vee\widehat{\cF}^\vee_r\partial_i^\vee\widehat{\cF}^\vee_{g-r}\biggr).
\end{split}
\ee
Here we use the convention
\ben
&&\pd_j^\vee \wcF_{1;l_1,\cdots,l_N}^\vee:=\wcF_{1;l_1,\cdots,l_j+1,\cdots,l_N}^\vee,
\quad \text{for } l_1=\cdots=l_N=0;\\
&&\cD_j^\vee \wcF_{0;l_1,\cdots,l_N}^\vee:= (l_j+1)\wcF_{0;l_1,\cdots,l_j+1,\cdots,l_N}^\vee,
\quad \text{for } l_1+\cdots+l_N=2;\\
&&\cD_i^\vee\cD_j^\vee \wcF_{0;l_1,\cdots,l_N}^\vee:
=(l_j+1)\cD_i^\vee \wcF_{0;l_1,\cdots,l_j+1,\cdots,l_N}^\vee,
\quad \text{for } l_1+\cdots+l_N=1.\\
\een
\end{Theorem}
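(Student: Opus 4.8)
The plan is to deduce these recursions from the non-dotted labelled recursions recalled in \S\ref{sec-pre-label} (Lemma \ref{lem-original-N-rec} and Theorem \ref{thm-original-N-rec}) by transporting them through a labelled duality map, exactly in the way Theorem \ref{thm-new-rec} was obtained from Lemma \ref{lem-original-D} and Theorem \ref{thm-original-rec} in the one-dimensional case. Throughout write $\vec l=(l_1,\ldots,l_N)$. First I would set up the labelled duality map: let $\cV_{g;\vec l}$ be the $\bQ$-span of labelled ordinary stable graphs of type $(g;\vec l)$, and let $\phi_{g;\vec l}$ send each such graph to the corresponding labelled dotted graph — solid lines to dotted lines, ordinary vertices to dotted vertices, all labels unchanged — where a labelled dotted graph is read as the linear combination of labelled ordinary graphs produced by the cut--expand--glue construction of \S\ref{sec:higher-dim}. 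The argument of \S\ref{subsec3.2} applies verbatim: order graphs of a fixed type by the number of internal edges, observe that the expansion of a dotted graph involves only ordinary graphs with at least as many internal edges, with equality exactly for the graph itself, so $\phi:=\bigoplus_{2g-2+\sum l_j>0}\phi_{g;\vec l}$ is upper-triangular with $\pm1$ on the diagonal, hence an isomorphism. Its well-definedness rests on the labelled analog of Lemma \ref{lemma-mark-1}, in which the automorphisms are additionally required to preserve the labels; this makes the naive definition of a labelled dotted vertex agree with the cut--expand--glue recipe.

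\emph{Compatibility of $\phi$ with the operators.} Since the dotted operators $K_{ij}^\vee,\pd_i^\vee,\gamma_i^\vee$ perform literally the same combinatorial moves as $K_{ij},\pd_i,\gamma_i$, only on dotted rather than solid graphs, and the half-edges created by $\pd_i$ (one new half-edge, labelled $i$) and by $\gamma_i$ (the three new half-edges with the prescribed labels) receive identical labels in both pictures, one reads off directly from the definitions
\[
\phi\circ K_{ij}=K_{ij}^\vee\circ\phi,\qquad
\phi\circ\pd_i=\pd_i^\vee\circ\phi,\qquad
\phi\circ\gamma_i=-\gamma_i^\vee\circ\phi,
\]
the sign in the last relation being the labelled form of Lemma \ref{lem-gamma} — the $(-1)^{|E^\vee|}$ accompanying the creation of a new internal edge (cf. \eqref{eg1-eq1} and \eqref{phi-K}, \eqref{phi-pd&cD}); hence also $\phi\circ\cD_i=\cD_i^\vee\circ\phi$. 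One further needs that $\phi$ is multiplicative under disjoint union, $\phi(\Gamma_1\sqcup\Gamma_2)=\phi(\Gamma_1)\sqcup\phi(\Gamma_2)$, which is clear because the cut--expand--glue construction is local and performs no gluing between distinct connected components. Finally $\phi$ restricts to a bijection $\cG^c_{g;\vec l}(N)\to\cG^{\vee,c}_{g;\vec l}(N)$ leaving the underlying graph, hence its automorphism group, unchanged, so $\phi(\wcF_{g;\vec l})=\wcF^\vee_{g;\vec l}$ for every admissible type and likewise on the conventional boundary cases.

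\emph{Conclusion.} Apply $\phi$ termwise to the identities of Lemma \ref{lem-original-N-rec} and Theorem \ref{thm-original-N-rec}. The left-hand side $K_{ij}\wcF_{g;\vec l}$ goes to $K_{ij}^\vee\wcF^\vee_{g;\vec l}$; each term $\cD_i\cD_j\wcF_{g-1;\vec l}$ goes to $\cD_i^\vee\cD_j^\vee\wcF^\vee_{g-1;\vec l}$; and, by multiplicativity, each product $\cD_i\wcF_{g_1;\vec p}\,\cD_j\wcF_{g_2;\vec q}$ goes to $\cD_i^\vee\wcF^\vee_{g_1;\vec p}\,\cD_j^\vee\wcF^\vee_{g_2;\vec q}$. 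This produces the two displayed recursions for $K_{ij}^\vee\wcF^\vee_{g;\vec l}$; specializing to $\vec l=0$ gives the free-energy recursions, with $\pd_i^\vee$ in place of $\pd_i$ via $\phi\circ\pd_i=\pd_i^\vee\circ\phi$, and the stated conventions are the $\phi$-images of the conventions of Theorem \ref{thm-original-N-rec}. The preceding Lemma on $\cD_j^\vee\wcF^\vee_{g;\vec l}$ is obtained the same way from Lemma \ref{lem-original-N-rec}. Note that, as in the one-dimensional case, only the ``un-swapped'' intertwining relations above are used; the operator identity $\cD^\vee=\pd$ of Theorem \ref{thm1} is not needed here.

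The only genuine work is bookkeeping: checking move-for-move that the labelled versions of $\pd_i$ and $\gamma_i$ — in particular the labels on the half-edges they create — match between solid and dotted graphs, and that the automorphism count in the labelled analog of Lemma \ref{lemma-mark-1} survives the label-preservation constraint. I expect this to be the main, though entirely routine, obstacle; all the substantive content already sits in the one-dimensional duality results and in Theorem \ref{thm-original-N-rec}.
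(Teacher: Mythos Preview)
Your approach is exactly the paper's: apply the labelled duality map $\phi$ to Theorem \ref{thm-original-N-rec} and use the intertwining properties analogous to \eqref{phi-K} and \eqref{phi-pd&cD}. One slip to fix: the relation $\phi\circ\gamma_i=-\gamma_i^\vee\circ\phi$ is wrong---the correct statement is $\phi\circ\gamma_i=\gamma_i^\vee\circ\phi$, since $\gamma_i^\vee$ applied to $\phi(\Gamma)$ produces precisely the dotted version of $\gamma_i\Gamma$ (both have one more internal edge, so the $(-1)^{|E^\vee|}$ factors match). You have conflated this with Lemma \ref{lem-gamma}, which is the separate identity $\gamma=-\gamma^\vee$ as operators on $\cV$. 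As written, your three displayed relations are inconsistent with the ``hence also $\phi\circ\cD_i=\cD_i^\vee\circ\phi$'' that follows, since $\pd_i^\vee-\gamma_i^\vee\neq\cD_i^\vee$. Either correct the sign, or (as the paper does) argue $\phi\circ\cD_i=\cD_i^\vee\circ\phi$ directly from the definitions and drop the $\gamma$ relation entirely; the rest of your argument then goes through unchanged.
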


\subsection{Duality theorem in higher dimension}

In last subsection we have generalized the dual abstract QFT,
dual edge-cutting operator, and dual edge-adding operators
to higher-dimensional case,
i.e., the case of dotted stable graphs with labels on half-edges.
Now let us generalize the duality theorem.

First, similar to Lemma \ref{lem-gamma} and Theorem \ref{thm1},
we have the following relations between ordinary edge-adding operators
$\cD_i=\pd_i+\gamma_i$
and dual edge-adding operators $\cD_i^\vee=\pd_i^\vee+\gamma_i^\vee$:

\begin{Lemma}
We have $\gamma_j=-\gamma^\vee_j$ for every $j\in\{1,2,\cdots,N\}$.
\end{Lemma}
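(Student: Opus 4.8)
The plan is to deduce this from the labeled analogue of the identity \eqref{eg1-eq1}, exactly as Lemma \ref{lem-gamma} was deduced from \eqref{eg1-eq1} in the one-dimensional case. Since the only stable graph of type $(0,3)$ is a single trivalent genus-$0$ vertex, for any labels assigned to its half-edges we have $V^\vee_{0;l_1,\ldots,l_N}=l_1!\cdots l_N!\cdot\wcF_{0;l_1,\ldots,l_N}$, and the right-hand side is simply the corresponding single labeled ordinary trivalent genus-$0$ vertex; the ``names'' version of this matches by the same counting as in Lemma \ref{lemma-mark-1}. So the labeled dotted trivalent genus-$0$ vertex coincides, as an element of $\cV$, with the labeled ordinary trivalent genus-$0$ vertex.

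First I would fix a dotted stable graph $\Gamma^\vee$ and one of its external edges $e$, and compare the contribution to $\gamma_j^\vee\Gamma^\vee$ coming from attaching a dotted trivalent genus-$0$ vertex at $e$ with $(-1)$ times the contribution to $\gamma_j\Gamma^\vee$ coming from attaching an ordinary trivalent genus-$0$ vertex at (the descendants of) $e$. On the dotted side this attaching turns $e$ into a new dotted internal edge; running the duality construction of \S\ref{subsec3.2} we cut that edge, expand the new dotted trivalent vertex into an ordinary one via the identity above, glue back, and pick up the global sign $(-1)^{|E^\vee(\Gamma^\vee)|+1}$. Expanding $\Gamma^\vee$ itself carries the sign $(-1)^{|E^\vee(\Gamma^\vee)|}$, and $\gamma_j$ attaches the same ordinary trivalent vertex at the same place with the same label assignments on the new internal edge and the two new external edges; hence the two expansions agree term by term up to the single extra factor of $-1$. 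Summing over all external edges gives $\gamma_j^\vee\Gamma^\vee=-\gamma_j\Gamma^\vee$, and since the dotted stable graphs form a basis of $\cV$ we conclude $\gamma_j^\vee=-\gamma_j$.

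The main obstacle is purely bookkeeping: one must verify that the label carried by the newly created internal edge and the labels carried by the two new external edges are produced identically by the two procedures, and that the ``distinguish-then-forget-names'' step in the definition of the duality map in \S\ref{subsec3.2} interacts correctly with the new half-edges. The first point holds because $\gamma_j$ and $\gamma_j^\vee$ are defined by verbatim the same local rule (see \S\ref{sec-pre-label}), and the second is handled by the argument of Lemma \ref{lemma-mark-1}; no genuinely new difficulty arises beyond the one-dimensional case already treated.
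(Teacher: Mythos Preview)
Your argument is correct and follows exactly the same approach as the paper: the one-dimensional Lemma \ref{lem-gamma} is justified in the paper by the single observation that \eqref{eg1-eq1} makes the dotted and ordinary trivalent genus-$0$ vertices coincide, so that attaching one via a dotted edge versus an ordinary edge differs only by the extra sign from $(-1)^{|E^\vee|}$; the $N$-labelled version is stated without proof as it is the verbatim analogue. Your write-up simply spells out this triviality in more detail, including the label bookkeeping, but there is no difference in strategy.
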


\begin{Theorem}
We have
\be\label{thm1-eq}
\cD^\vee_j=\pd_j,\qquad\qquad \pd_j^\vee=\cD_j
\ee
for every $j\in\{1,2,\cdots,N\}$.
\end{Theorem}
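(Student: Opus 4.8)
The plan is to transcribe the proof of Theorem \ref{thm1} almost verbatim, carrying the half-edge labels along passively. First, the identity $\gamma_j=-\gamma_j^\vee$ is exactly the preceding Lemma; it is immediate from the labelled analogue of \eqref{eg1-eq1}, namely that a labelled dotted stable vertex of genus $0$ and valence $3$ coincides with the corresponding labelled ordinary vertex. So the work is concentrated in proving $\pd_j^\vee=\cD_j$, and then $\cD_j^\vee=\pd_j$ follows formally from $\cD_j^\vee=\pd_j^\vee+\gamma_j^\vee=\cD_j-\gamma_j=\pd_j$.

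To prove $\pd_j^\vee=\cD_j$, I first treat the base case of a single labelled dotted stable vertex $V_{g;l_1,\cdots,l_N}^\vee=l_1!\cdots l_N!\cdot\wcF_{g;l_1,\cdots,l_N}$. By Lemma \ref{lem-original-N-rec} we have $\cD_j V_{g;l_1,\cdots,l_N}^\vee=V_{g;l_1,\cdots,l_j+1,\cdots,l_N}^\vee$, and this is precisely the effect of applying $\pd_j^\vee$ to the dotted vertex (adding one external edge labelled $j$), so $\pd_j^\vee V^\vee=\cD_j V^\vee$ holds for dotted vertices. For a general labelled dotted stable graph $\Gamma^\vee$, I expand it as a linear combination of labelled ordinary stable graphs exactly as in \S\ref{sec:higher-dim}: name all half-edges, cut all dotted internal edges to obtain dotted vertices $\{V_v^\vee\}_{v\in V^\vee(\Gamma^\vee)}$, expand each one via the labelled version of \eqref{def-mark}, glue external edges whose ancestors in $\Gamma^\vee$ were joined, forget the names, and multiply by $(-1)^{|E^\vee(\Gamma^\vee)|}$.

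Next I apply $\pd_j^\vee$ and $\cD_j$ to both descriptions and compare term by term. On the dotted side,
\[
\pd_j^\vee\Gamma^\vee=\sum_{v\in V^\vee(\Gamma^\vee)}\Gamma^\vee_{v,j}+\sum_{e^\vee\in E^\vee(\Gamma^\vee)}\Gamma^\vee_{e^\vee},
\]
where $\Gamma^\vee_{v,j}$ adds a $j$-labelled external edge to the dotted vertex $v$ and $\Gamma^\vee_{e^\vee}$ replaces $e^\vee$ by a trivalent dotted genus-$0$ vertex together with two new dotted internal edges. On the ordinary side, regarding $\Gamma^\vee$ (as a combination of ordinary graphs) as a product of the dotted-vertex factors $V_v^\vee$ and the internal edges $e^\vee\in E^\vee(\Gamma^\vee)$, the operator $\pd_j$ acts by the Leibniz rule; on a factor $V_v^\vee$ it acts by Lemma \ref{lem-original-N-rec} as $\pd_j V_v^\vee=\cD_j V_v^\vee-\gamma_j V_v^\vee$, and on an internal edge $e^\vee$ it produces a trivalent genus-$0$ vertex with two new internal edges. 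Tracking the extra sign $(-1)$ that arises each time a new internal edge is created (since forgetting names multiplies by $(-1)^{|E^\vee|}$), one obtains
\[
\cD_j\Gamma^\vee=\sum_{v\in V^\vee(\Gamma^\vee)}\widetilde\Gamma^\vee_{v,j}-\sum_{e^\vee\in E^\vee(\Gamma^\vee)}\Gamma^\vee_{e^\vee}-\gamma_j\Gamma^\vee,
\]
where $\widetilde\Gamma^\vee_{v,j}$ uses $\pd_j V_v^\vee$ in place of $\cD_j V_v^\vee$. Comparing the two displays, the claim $\pd_j^\vee\Gamma^\vee=\cD_j\Gamma^\vee$ reduces to
\[
\sum_{v\in V^\vee(\Gamma^\vee)}\bigl(\widetilde\Gamma^\vee_{v,j}-\Gamma^\vee_{v,j}\bigr)=2\sum_{e^\vee\in E^\vee(\Gamma^\vee)}\Gamma^\vee_{e^\vee}+\gamma_j\Gamma^\vee,
\]
which is the labelled version of the identity \eqref{claim-proof1}: the left-hand side assembles $\gamma_j$ applied to the external edges of $\bigsqcup_v V_v^\vee$, and the set of those external edges is the set of external edges of $\Gamma^\vee$ together with two copies of the half-edges of $E^\vee(\Gamma^\vee)$.

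The \emph{main point requiring care} is the label bookkeeping, and this is the only genuinely new ingredient beyond Theorem \ref{thm1}. One must check that the convention from \S\ref{sec-pre-label} that ``the two labels of each new internal edge may be chosen arbitrarily'' in the definitions of $\pd_i$ and $\gamma_i$ is compatible with the labelled gluing of dotted graphs, so that the correspondence between the terms of $\sum_v(\widetilde\Gamma^\vee_{v,j}-\Gamma^\vee_{v,j})$ and those of $2\sum_{e^\vee}\Gamma^\vee_{e^\vee}+\gamma_j\Gamma^\vee$ is an honest bijection rather than only an equality after summing over relabellings. Once this is verified — and it is, because in both $\gamma_j$ and the cut-and-glue procedure the labels on the newly created internal edges are summed over all choices in the same way — the labelled identity holds verbatim, giving $\pd_j^\vee=\cD_j$ and hence $\cD_j^\vee=\pd_j$.
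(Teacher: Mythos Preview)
Your proposal is correct and follows exactly the route the paper indicates: the paper omits the proof with the remark that it is ``similar to the one-dimensional case,'' and you have supplied precisely that transcription of the proof of Theorem~\ref{thm1} with labels carried along, including the correct identification of the one new point (compatibility of the label-summation conventions in $\pd_i$, $\gamma_i$ with the labelled gluing) and why it goes through.
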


Furthermore,
the duality theorem for the abstract free energies and
abstract $n$-point functions (see Theorem \ref{thm-dual-FE})
can also be generalized to the higher-dimensional case as follows:

\begin{Theorem}\label{N-duality}
For $2g-2+\sum\limits_{j=1}^N l_j>0$, we have
\be
\wcF^\vee_{g;l_1,\cdots,l_N}=\frac{1}{l_1!\cdots l_N!}
\cdot V_{g;l_1,\cdots,l_N},
\ee
where $V_{g;l_1,\cdots,l_N}$ is a labelled ordinary stable vertex of genus $g$,
with $l_j$ external edges labelled by $j$ for every $j\in\{1,2,\cdots,N\}$.
In particular, for $l_1=l_2=\cdots=l_N=0$,
we have
\ben
\begin{tikzpicture}
\node [align=center,align=center] at (-1.8,0) {$\wcF_g^\vee=\wcF_{g;0,\cdots,0}^\vee=$};
\node [align=center,align=center] at (0,0) {$g$};
\draw (0,0) circle [radius=0.2];
\end{tikzpicture}
\een
for every $g\geq 2$.
\end{Theorem}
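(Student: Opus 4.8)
The plan is to adapt, essentially verbatim, the proof of Theorem \ref{thm-dual-FE}, treating the $\{1,\dots,N\}$-labels on half-edges as inert decorations carried along by every operation. First I would reduce to the case $l_1=\cdots=l_N=0$. By the labelled analogue of Theorem \ref{thm1} we have $\cD_j^\vee=\pd_j$, and by the labelled analogue of Lemma \ref{lem-new-D} each $\cD_j^\vee$ increases $l_j$ by one up to a factor $(l_j+1)$; iterating gives
\be
\wcF^\vee_{g;l_1,\cdots,l_N}=\frac{1}{l_1!\cdots l_N!}\pd_1^{l_1}\cdots\pd_N^{l_N}\wcF_g^\vee .
\ee
On the other hand, since an ordinary stable vertex $V_g$ has no internal edges, $\pd_j$ acts on it simply by attaching one external edge labelled $j$, so $\pd_1^{l_1}\cdots\pd_N^{l_N}V_g=V_{g;l_1,\cdots,l_N}$. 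Hence it suffices to prove $\wcF_g^\vee=V_g$ for $g\geq2$; the remaining low cases (genus $0$ and genus $1$ with $\sum_j l_j$ small) are checked directly, exactly as in the one-dimensional situation, using the stated conventions.

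For the identity $\wcF_g^\vee=V_g$, fix a connected labelled ordinary stable graph $\Gamma$ of genus $g$ with no external edges and compute the coefficient of $\Gamma$ in $\wcF_g^\vee$, namely $\sum_{\Gamma^\vee}\frac{1}{|\Aut(\Gamma^\vee)|}A^\Gamma_{\Gamma^\vee}$ over all connected labelled dotted stable graphs $\Gamma^\vee$ of genus $g$ without external edges. As in the proof of Theorem \ref{thm-dual-FE}, the dotted graphs $\Gamma^\vee$ that contribute a nonzero multiple of $\Gamma$ are parametrized by subsets $E\subseteq E(\Gamma)$: one contracts the internal edges of $E(\Gamma)\setminus E$, adding up the genera of the merged vertices and turning a contracted loop into $+1$ in genus, and then applies $\phi$ to obtain $\Gamma^\vee_E$. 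This contraction is well defined on labelled graphs because the two labels on a contracted internal edge are simply absorbed into the new vertex and play no further role. The labelled analogue of Lemma \ref{lemma-mark-1}, and of the identities \eqref{thm2-claim1} and \eqref{thm2-claim1-pf}, then hold with the same proofs, since the auxiliary names introduced in the ``forgetting names'' arguments range over a set disjoint from the $\{1,\dots,N\}$-labels and are permuted freely by the relevant symmetric groups.

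Running the same chain of name-forgetting identities as in the proof of Theorem \ref{thm-dual-FE}, I expect the coefficient of $\Gamma$ in $\wcF_g^\vee$ to collapse to
\be
\frac{1}{|\Aut(\Gamma)|}\sum_{E\subseteq E(\Gamma)}(-1)^{|E|},
\ee
which is $0$ whenever $E(\Gamma)\neq\emptyset$ and equals $\frac{1}{|\Aut(\Gamma)|}$ exactly when $\Gamma$ is a single labelled vertex. This yields $\wcF_g^\vee=V_g$, and combined with the first paragraph completes the proof. The main obstacle is bookkeeping rather than anything conceptual: one must check carefully that inserting half-edge labels does not disturb the interplay between the symmetric group permuting the auxiliary names and the automorphism group of $\Gamma$ (which in the labelled setting is required to preserve labels), and that the correspondence $E\leftrightarrow\Gamma^\vee_E$ is genuinely label-compatible, so that the counting identity \eqref{thm2-claim1-pf} and its consequences survive the decoration.
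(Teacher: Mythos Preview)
Your proposal is correct and follows exactly the approach the paper intends: the paper itself omits the proof, stating only that it is ``similar to the one-dimensional case,'' and your adaptation of the proof of Theorem~\ref{thm-dual-FE}---reducing via $\cD_j^\vee=\pd_j$ to the vertex case and then rerunning the name-forgetting combinatorics with labels carried along as inert data---is precisely that similar argument. Your closing caveat about checking that the auxiliary-name symmetric group action and the label-preserving automorphism group do not interfere is the right place to focus attention, but no new idea is needed beyond the $N=1$ proof.
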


We omit the proof of this theorem
since it is similar to the one-dimensional case.

\subsection{Generalization of the induced realizations}

In this subsection we describe the higher-dimensional generalization
of the induced realizations in \S \ref{sec:realization}.

Fix a sequence of smooth functions
$\{F_{g;l_1,\cdots,l_N}(t,\kappa)\}_{2g-2+\sum l_j>0}$
and a symmetric matrix $\kappa=(\kappa_{ij})_{1\leq i,j\leq N}$
whose entries $\kappa_{ij}$ are some formal variables.
Then we obtain a realization of the abstract QFT following the construction
in \S\ref{sec-pre-realization}.
This construction induces a natural realization of the dual abstract QFT by assigning
the weights of dotted vertices to be
\ben
\tF_{g;l_1,\cdots,l_N}=l_1!\cdots l_N!\cdot\wF_{g;l_1,\cdots,l_N},
\een
and the weights of dotted internal edges
\ben
\omega_{e^\vee}=-\kappa_{ij}.
\een

Conversely,
if we are given a realization
$\{F^\vee_{g;l_1,\cdots,l_N}(t,\kappa)\}_{2g-2+\sum l_j>0}$
and $\kappa^\vee=(\kappa_{ij}^\vee)_{1\leq i,j\leq N}$
of the dual abstract QFT,
we can construct a realization of the abstract QFT by assigning
the weights of ordinary vertices to be
\ben
\tF^\vee_{g;l_1,\cdots,l_N}=l_1!\cdots l_N!\cdot\wF^\vee_{g;l_1,\cdots,l_N},
\een
and the weights of dotted internal edges
\ben
\omega_e=-\kappa_{ij}^\vee,
\een
where
\be
\wF^\vee_{g;l_1,\cdots,l_N}=
\sum_{\Gamma^\vee\in \cG_{g;l_1,\cdots,l_N}^{\vee,c}(N)}
\frac{1}{|\Aut(\Gamma^\vee)|}\omega_{\Gamma^\vee}
\ee
is the realization of the dual abstract $n$-point functions.

Then the duality theorem relates the two induced realizations above
in the following way:

\begin{Theorem} \label{thm-duality-Nrealization}
Let $\{F_{g;l_1,\cdots,l_N}(t,\kappa)\}_{2g-2+\sum l_j>0}$ and
$\kappa=(\kappa_{ij})_{1\leq i,j\leq N}$
be a realization of the abstract QFT,
and let $\{F^\vee_{g;l_1,\cdots,l_N}(t,\kappa)\}_{2g-2+\sum l_j>0}$
and $\kappa^\vee=(\kappa_{ij}^\vee)_{1\leq i,j\leq N}$
be the induced realization of the dual abstract QFT,
i.e.,
\ben
F^\vee_{g;l_1,\cdots,l_N}(t,\kappa):=\tF_{g;l_1,\cdots,l_N}(t,\kappa),
\qquad
\kappa_{ij}^\vee=-\kappa_{ij}.
\een
Let $\{\tF^\vee_{g;l_1,\cdots,l_N}(t,\kappa)\}_{2g-2+\sum l_j>0}$
and $-\kappa^\vee=\kappa$
be the realization of the abstract QFT induced by
$\{F^\vee_{g;l_1,\cdots,l_N}(t,\kappa)\}_{2g-2+\sum l_j>0}$
and $\kappa^\vee=(\kappa_{ij}^\vee)$,
then we have:
\ben
\tF^\vee_{g;l_1,\cdots,l_N}=
F_{g,l_1,\cdots,l_N}
\een
for every $2g-2+\sum\limits_{j=1}^N l_j>0$.
In particular,
for $l_1=\cdots=l_N=0$, we have
\ben
\tF^\vee_{g}=F_{g;0,\cdots,0}
\een
for every $g\geq 2$.
\end{Theorem}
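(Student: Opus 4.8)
The plan is to deduce this statement from the higher-dimensional abstract duality theorem (Theorem~\ref{N-duality}) in exactly the way Theorem~\ref{thm-dual-realization} was deduced from Theorem~\ref{thm-dual-FE} in the one-dimensional case, the only new ingredient being that all the graphs now carry labels in $\{1,\cdots,N\}$ on their half-edges. First I would package the realization as a linear map: the labelled Feynman rule \eqref{eqn:FR} assigns to each $\Gamma\in\cG^c_{g;l_1,\cdots,l_N}(N)$ a function $\omega_\Gamma$, and since these graphs form a basis of the corresponding summand of $\cV$, this extends uniquely to a linear map $\omega$ from the space of formal linear combinations of labelled ordinary stable graphs to the ring of functions in $t$ and the $\kappa_{ij}$. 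By definition $\wF_{g;l_1,\cdots,l_N}=\omega(\wcF_{g;l_1,\cdots,l_N})$, and hence the labelled dotted stable vertex $V^\vee_{g;l_1,\cdots,l_N}=l_1!\cdots l_N!\cdot\wcF_{g;l_1,\cdots,l_N}$ is realized by $\tF_{g;l_1,\cdots,l_N}=l_1!\cdots l_N!\,\wF_{g;l_1,\cdots,l_N}=\omega(V^\vee_{g;l_1,\cdots,l_N})$.

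The key point to check is that the induced realization of the dual abstract QFT is compatible with $\omega$, i.e.\ that for every labelled dotted stable graph $\Gamma^\vee$ the induced Feynman weight
\[
\omega_{\Gamma^\vee}=\prod_{e^\vee\in E^\vee(\Gamma^\vee)}\bigl(-\kappa_{i(e^\vee)j(e^\vee)}\bigr)\cdot\prod_{v^\vee\in V^\vee(\Gamma^\vee)}\tF_{g_{v^\vee};\,\ell_1(v^\vee),\cdots,\ell_N(v^\vee)}
\]
equals $\omega$ evaluated on the linear combination of labelled ordinary stable graphs that $\Gamma^\vee$ represents. I expect this to be the main obstacle, but it is only a bookkeeping argument: cutting the dotted internal edges and naming all half-edges replaces each dotted vertex by $l_1!\cdots l_N!\sum_{\Gamma'}\frac{1}{|\Aut(\Gamma')|}\Gamma'$ over named labelled graphs; gluing back along matching named half-edges makes automorphism orders multiply; the factor $(-1)^{|E^\vee(\Gamma^\vee)|}$ converts each $\kappa_{ij}$ attached to a dotted edge into $-\kappa_{ij}$; and forgetting the names is handled by the automorphism count of Lemma~\ref{lemma-mark-1}, whose proof is insensitive to the presence of labels. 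This is the verbatim labelled analogue of the discussion that defines the induced Feynman rule in \S\ref{sec:realization} together with the coefficient computation in the proof of Theorem~\ref{thm-dual-FE}, so I would only indicate the modifications rather than repeat it.

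Granting this compatibility, the conclusion follows immediately. From it,
\[
\wF^\vee_{g;l_1,\cdots,l_N}=\sum_{\Gamma^\vee\in\cG^{\vee,c}_{g;l_1,\cdots,l_N}(N)}\frac{1}{|\Aut(\Gamma^\vee)|}\omega_{\Gamma^\vee}=\omega\bigl(\wcF^\vee_{g;l_1,\cdots,l_N}\bigr).
\]
By Theorem~\ref{N-duality}, $\wcF^\vee_{g;l_1,\cdots,l_N}=\tfrac{1}{l_1!\cdots l_N!}V_{g;l_1,\cdots,l_N}$ as an element of $\cV$, and since $V_{g;l_1,\cdots,l_N}$ is a single labelled vertex without internal edges, $\omega(V_{g;l_1,\cdots,l_N})=F_{g;l_1,\cdots,l_N}$. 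Applying the linear map $\omega$ therefore gives $\wF^\vee_{g;l_1,\cdots,l_N}=\tfrac{1}{l_1!\cdots l_N!}F_{g;l_1,\cdots,l_N}$, whence
\[
\tF^\vee_{g;l_1,\cdots,l_N}=l_1!\cdots l_N!\cdot\wF^\vee_{g;l_1,\cdots,l_N}=F_{g;l_1,\cdots,l_N}
\]
for every $2g-2+\sum_j l_j>0$. Specializing to $l_1=\cdots=l_N=0$ yields $\tF^\vee_g=F_{g;0,\cdots,0}$ for $g\geq2$. In short, everything except the compatibility lemma is formal linear algebra applied on top of Theorem~\ref{N-duality}; the real work is verifying that evaluating a labelled dotted graph by the propagator $-\kappa$ with vertex weights $\tF$ agrees with expanding it into ordinary labelled graphs and applying \eqref{eqn:FR}.
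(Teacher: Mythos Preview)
Your approach is correct and matches the paper's: the paper does not give a separate proof of this theorem at all, treating it as an immediate consequence of the abstract duality Theorem~\ref{N-duality} applied through the labelled Feynman rule, exactly parallel to how Theorem~\ref{thm-dual-realization} follows from Theorem~\ref{thm-dual-FE} in the one-dimensional case. Your compatibility check (that the induced dual Feynman weight of $\Gamma^\vee$ equals $\omega$ applied to its expansion in ordinary labelled graphs) is the only substantive point, and the paper leaves it implicit; your outline of it via cutting, naming, gluing, and Lemma~\ref{lemma-mark-1} is the right justification.
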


\subsection{Representations by formal Gaussian integrals}

In this subsection we generalize the construction in \S \ref{subsec-transf}.

Recall that we have
\begin{equation*}
\begin{split}
&(2\pi\lambda^2)^{\frac{N}{2}}\cdot\biggl(
\sum_{g=2}^\infty \lambda^{2g-2}\wF_g
\biggr)
\\
=&\frac{1}{\sqrt{\det (\kappa)}}
\int \exp\biggl(\sum_{2g-2+\sum\limits_{j=1}^N l_j>0}
\frac{\lambda^{2g-2}}{l_1!\cdots l_N!}F_{g;l_1,\cdots,l_N}
\cdot\prod_{j=1}^{N}\eta_{j}^{l_j}
-\frac{\lambda^{-2}}{2}
\eta^T\kappa^{-1}\eta \biggr)d\eta,
\end{split}
\end{equation*}
and its inverse:
\begin{equation*}
\begin{split}
&(2\pi\lambda^2)^{\frac{N}{2}}\cdot\biggl(
\sum_{g=2}^\infty \lambda^{2g-2} F_{g;0,\cdots,0}
\biggr)
\\
=&\frac{1}{\sqrt{\det (-\kappa)}}
\int \exp\biggl(\sum_{2g-2+\sum\limits_{j=1}^N l_j>0}
\frac{\lambda^{2g-2}}{l_1!\cdots l_N!}\tF_{g;l_1,\cdots,l_N}
\cdot\prod_{j=1}^{N}\eta_{j}^{l_j}
+\frac{\lambda^{-2}}{2}
\eta^T\kappa^{-1}\eta \biggr)d\eta,
\end{split}
\end{equation*}
where $\tF_{g;l_1,\cdots,l_N}:=l_1!\cdots l_N!\cdot \wF_{g;l_1,\cdots,l_N}$.

Let us first generalize the two formal Gaussian integrals above such that
the $n$-point functions $\wF_{g;l_1,\cdots,l_N}$
and the dual $n$-point functions $\wF^\vee_{g;l_1,\cdots,l_N}$
are included in  the left-hand-sides.
Similar to Theorem\ref{thm-transf-1} and Theorem\ref{thm-transf-2},
we have the following:

\begin{Theorem}
Let $z_1,\cdots,z_N$ be a sequence of formal variables.
Then we have:
\begin{equation*}
\begin{split}
&(2\pi\lambda^2\det(\kappa))^{\frac{N}{2}}\cdot\biggl(
\sum_{2g-2+\sum\limits_{j=1}^N l_j>0}\
\frac{1}{l_1!\cdots l_N!} \lambda^{2g-2}z_1^{l_1}\cdots z_N^{l_N}\cdot
\tF_{g;l_1,\cdots,l_N}
\biggr)
\\
=&\int \exp\biggl(\sum_{2g-2+\sum\limits_{j=1}^N l_j>0}
\frac{\lambda^{2g-2}}{l_1!\cdots l_N!}F_{g;l_1,\cdots,l_N}
\cdot\prod_{j=1}^{N}\eta_{j}^{l_j}
-\frac{\lambda^{-2}}{2}
(\eta-z)^T\kappa^{-1}(\eta-z)\biggr)d\eta,
\end{split}
\end{equation*}
and its dual version:
\begin{equation*}
\begin{split}
&(2\pi\lambda^2\det(-\kappa))^{\frac{N}{2}}\cdot\biggl(
\sum_{2g-2+\sum\limits_{j=1}^N l_j>0}
\frac{1}{l_1!\cdots l_N!} \lambda^{2g-2} \eta_1^{l_1}\cdots \eta_N^{l_N}\cdot
F_{g;l_1,\cdots,l_N}
\biggr)
\\
=&\int \exp\biggl(\sum_{2g-2+\sum\limits_{j=1}^N l_j>0}
\frac{\lambda^{2g-2}}{l_1!\cdots l_N!}\tF_{g;l_1,\cdots,l_N}
\cdot\prod_{j=1}^{N}z_{j}^{l_j}
+\frac{\lambda^{-2}}{2}
(\eta-z)^T\kappa^{-1}(\eta-z)\biggr)dz.
\end{split}
\end{equation*}
\end{Theorem}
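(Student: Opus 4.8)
The plan is to reduce this higher-dimensional Gaussian integral identity to the combinatorial duality already established, namely Theorem \ref{thm-duality-Nrealization} (equivalently Theorem \ref{N-duality}), exactly in the way the one-dimensional cases in Theorem \ref{thm-transf-1} and Theorem \ref{thm-transf-2} were reduced to Theorem \ref{thm-dual-realization}. So the first step is to introduce an auxiliary partition function with a source term: set
\ben
\widehat Z^*(z):=\frac{1}{\sqrt{\det(\kappa)}}
\int \exp\biggl(\lambda^{-2}(\kappa^{-1}z)^T\eta+
\sum_{2g-2+\sum\limits_{j=1}^N l_j>0}
\frac{\lambda^{2g-2}}{l_1!\cdots l_N!}F_{g;l_1,\cdots,l_N}\cdot\prod_{j=1}^N\eta_j^{l_j}
-\frac{\lambda^{-2}}{2}\eta^T\kappa^{-1}\eta\biggr)d\eta.
\een
This is again a formal Gaussian integral whose Feynman expansion is a sum over connected graphs whose vertices are either $N$-labelled stable vertices (weighted by $F_{g;l_1,\cdots,l_N}$) or "source" univalent genus-$0$ vertices of some label $j$, each weighted by $(\kappa^{-1}z)_j=\sum_k (\kappa^{-1})_{jk} z_k$, with internal edges weighted by $\kappa_{ij}$.

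Second, I would set up the label-preserving analogue of the bijection $A_g$ used in the proof of Theorem \ref{thm-transf-1}: deleting all univalent genus-$0$ source vertices from a graph in $\cG_g^*(N)$ produces a connected $N$-labelled stable graph, and conversely attaching sources to external edges recovers the original. The key local computation is that a source vertex of label $j$ glued through an edge of labels $(i,j)$ contributes $\sum_i(\kappa^{-1}z)_i\,\kappa_{ij}=z_j$; summing a source of each label over the external legs of a fixed stable graph with $l_j$ legs of label $j$ contributes the monomial $\prod_j z_j^{l_j}$, and the symmetry factors match because $\Aut$ of the graph-with-sources equals $\Aut$ of the underlying stable graph (the sources, being on distinct external legs, are rigid). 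The genus-$0$ exceptional term — the single edge joining two sources — produces the quadratic completion-of-square term $\tfrac{\lambda^{-2}}{2}z^T\kappa^{-1}z$, just as in the $N=1$ case. Hence $\log\widehat Z^*(z)=\tfrac{\lambda^{-2}}{2}z^T\kappa^{-1}z+\sum_{2g-2+\sum l_j>0}\frac{\lambda^{2g-2}}{l_1!\cdots l_N!}z_1^{l_1}\cdots z_N^{l_N}\,\tF_{g;l_1,\cdots,l_N}$, and completing the square $\lambda^{-2}(\kappa^{-1}z)^T\eta-\tfrac{\lambda^{-2}}{2}\eta^T\kappa^{-1}\eta-\tfrac{\lambda^{-2}}{2}z^T\kappa^{-1}z=-\tfrac{\lambda^{-2}}{2}(\eta-z)^T\kappa^{-1}(\eta-z)$ gives the first displayed formula after dividing by the Gaussian normalization.

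Third, the dual version follows by the identical argument applied to the induced realization of the dual abstract QFT: replace $F_{g;l_1,\cdots,l_N}$ by $\tF_{g;l_1,\cdots,l_N}$, replace $\kappa$ by $-\kappa$ as the propagator, and invoke Theorem \ref{thm-duality-Nrealization} (which says that the doubly-induced realization returns $F_{g;l_1,\cdots,l_N}$) in place of Theorem \ref{thm-dual-realization}; the sign $\det(-\kappa)$ and the $+\tfrac{\lambda^{-2}}{2}(\eta-z)^T\kappa^{-1}(\eta-z)$ in the exponent come from the propagator being $-\kappa$, exactly as in Theorem \ref{thm-transf-2}. I expect the main obstacle to be purely bookkeeping rather than conceptual: one must check carefully that in the multi-labelled setting the automorphism-group factors are unchanged under attaching/removing labelled sources and that the sum over which label to attach at each external leg reproduces $\prod_j z_j^{l_j}$ with the correct combinatorial coefficient $1/(l_1!\cdots l_N!)$ — this is where the matrix identity $\kappa^{-1}\kappa=\mathrm{Id}$ enters and where an off-by-a-factorial slip is easiest to make. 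Everything else is a routine transcription of the $N=1$ proofs.
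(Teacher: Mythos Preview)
Your proposal is correct and follows essentially the same route as the paper. The paper itself gives no separate proof of this theorem, stating only that it is ``similar to Theorem~\ref{thm-transf-1} and Theorem~\ref{thm-transf-2}''; your write-up is precisely the natural $N$-labelled transcription of that one-dimensional argument (auxiliary partition function with a linear source $\lambda^{-2}(\kappa^{-1}z)^T\eta$, Feynman expansion with extra univalent genus-$0$ vertices, the bijection $A_g$ deleting sources, the local identity $\sum_i(\kappa^{-1}z)_i\kappa_{ij}=z_j$, the exceptional two-source graph giving $\tfrac{\lambda^{-2}}{2}z^T\kappa^{-1}z$, and completion of the square), with the dual version obtained by swapping $\kappa\mapsto-\kappa$ and invoking Theorem~\ref{thm-duality-Nrealization}. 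The only cosmetic point is that your $\widehat Z^*(z)$ should carry the overall factor $(2\pi\lambda^2)^{-N/2}$ to match the normalization in Theorem~\ref{thm-Gaussian}; with that inserted, the bookkeeping goes through exactly as you outline.
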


Let us rewrite the above theorem in a way similar to Theorem \ref{thm-transf-3}.
Let $\mathcal W_N$ be the following infinite-dimensional vector space
of formal power series of the form:
\be
w(\eta)=
 \sum_{2g-2+\sum\limits_{j=1}^N l_j>0}\frac{1}{l_1!\cdots l_N!}
\lambda^{2g-2}\eta_1^{l_1}\cdots \eta_N^{l_N}\cdot F_{g;l_1,\cdots,l_N}.
\ee
Define a transformation $\cS$ on the linear space $\cW$ using
the two formal Gaussian integrals \eqref{eq-transf-3} and \eqref{eq-transf-4}:
\begin{equation*}
\begin{split}
\mathcal S^{(N)}_{\kappa}:  &\mathcal W_N\to {\mathcal W}_N,\\
& w(\eta)\mapsto
\log\biggl\{
\frac{1}{(2\pi\lambda^{2}\det(\kappa))^{\frac{N}{2}}}
\int \exp\biggl(
w(z)
-\frac{\lambda^{-2}}{2}
(\eta-z)^T\kappa^{-1}(\eta-z)\biggr)d\eta\biggr\}
\end{split}
\end{equation*}
for $\det(\kappa)\not=0$.
For the special case $\det(\kappa)=0$,
we take \eqref{realization-npt-real-N} as the definition
of the transformation $\cS^{(N)}$.

Then similar to Theorem \ref{thm-transf-3}, we have:
\begin{Theorem}
We have
${\mathcal S}_{-\kappa}^{(N)}\circ {\mathcal S}^{(N)}_{\kappa}=id_{\mathcal W_N}$,
and
${\mathcal S}_{\kappa}^{(N)} \circ {\mathcal S}_{-\kappa}^{(N)}=id_{{\mathcal W}_N}$.
\end{Theorem}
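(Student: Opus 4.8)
The plan is to transcribe, almost verbatim, the argument that deduces Theorem~\ref{thm-transf-3} from the one-dimensional formal Gaussian integrals, now using the $N$-dimensional versions established just above together with the higher-dimensional duality theorem (Theorem~\ref{N-duality}, Theorem~\ref{thm-duality-Nrealization}). First I would unwind the definition of $\cS^{(N)}_\kappa$: for $w(\eta)=\sum_{2g-2+\sum_j l_j>0}\frac{1}{l_1!\cdots l_N!}\lambda^{2g-2}\eta_1^{l_1}\cdots\eta_N^{l_N}F_{g;l_1,\cdots,l_N}\in\cW_N$, the first formal Gaussian integral displayed above (the $N$-variable analogue of Theorem~\ref{thm-transf-1}) identifies $\cS^{(N)}_\kappa(w)$, up to the explicit normalization factor that is absorbed by $\log$, with the generating series $\sum\frac{1}{l_1!\cdots l_N!}\lambda^{2g-2}\eta_1^{l_1}\cdots\eta_N^{l_N}\tF_{g;l_1,\cdots,l_N}$, where $\tF_{g;l_1,\cdots,l_N}=l_1!\cdots l_N!\cdot\wF_{g;l_1,\cdots,l_N}$ are the $n$-point functions of the induced realization of the abstract QFT with propagator $\kappa$. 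In other words, $\cS^{(N)}_\kappa$ is precisely the passage $\{F_{g;l_1,\cdots,l_N}\}\mapsto\{\tF_{g;l_1,\cdots,l_N}\}$ written at the level of generating functions; that this series again lies in $\cW_N$ is automatic, since the realizations $\wF_{g;l_1,\cdots,l_N}$ are only defined in the stable range $2g-2+\sum_j l_j>0$ and the unstable contributions to the integral are exactly the normalization constants.

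Next I would apply $\cS^{(N)}_{-\kappa}$ to the output. By the dual $N$-variable Gaussian integral (the analogue of Theorem~\ref{thm-transf-2}), $\cS^{(N)}_{-\kappa}$ takes $\{\tF_{g;l_1,\cdots,l_N}\}$, now regarded as the vertex weights of the induced realization of the dual abstract QFT with propagator $\kappa^\vee=-\kappa$, to the collection $\{\tF^\vee_{g;l_1,\cdots,l_N}\}$. But Theorem~\ref{thm-duality-Nrealization} asserts exactly that $\tF^\vee_{g;l_1,\cdots,l_N}=F_{g;l_1,\cdots,l_N}$ for every $2g-2+\sum_j l_j>0$. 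Hence $\cS^{(N)}_{-\kappa}\circ\cS^{(N)}_\kappa$ returns $w$, i.e.\ it is $\mathrm{id}_{\cW_N}$. The reverse composition $\cS^{(N)}_\kappa\circ\cS^{(N)}_{-\kappa}=\mathrm{id}_{\cW_N}$ follows by the same reasoning with the roles of ordinary and dotted stable graphs (equivalently of $\kappa$ and $-\kappa$) interchanged: this symmetry is built into the theory through the involutivity of the duality map $\phi$ (the $N$-labelled analogue of Theorem~\ref{thm:Involution}), so Theorem~\ref{thm-duality-Nrealization} applies equally well when the starting data is a realization of the dual abstract QFT. The degenerate case $\det\kappa=0$ is immediate, since there $\cS^{(N)}_{\pm\kappa}$ is declared to be the identity by convention.

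An alternative, more self-contained route is the ``Fourier transform'' proof indicated after Theorem~\ref{thm-transf-3}: pass to partition functions $Z(x)=\exp w(x)$, so that $\cS^{(N)}_\kappa$ becomes the Gaussian convolution $Z\mapsto (2\pi\lambda^2\det\kappa)^{-N/2}\int Z(x)\exp\bigl(-\tfrac{\lambda^{-2}}{2}(x-y)^T\kappa^{-1}(x-y)\bigr)dx$, and then compose the convolutions with kernels attached to $\kappa$ and to $-\kappa$; completing the square in the inner integration variable, the quadratic terms cancel and that integral collapses to a formal delta function $\delta^{(N)}(y-x)$, leaving $Z$ unchanged. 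The main obstacle is the same in either approach, and it is a matter of bookkeeping rather than of real difficulty: one must justify every step as an identity of formal power series in $\lambda$ (expanding the non-Gaussian part of the exponent and integrating term by term), and one must be careful that with the indefinite quadratic form coming from $-\kappa$ the ``inverse Gaussian'' integration, and the delta function it produces, are meaningful only formally, their legitimacy ultimately resting on the combinatorial duality theorems (Theorem~\ref{N-duality}, Theorem~\ref{thm-duality-Nrealization}) rather than on any convergence. Since the $N$-dimensional versions of all the needed ingredients have already been recorded above, I would present the proof simply as the routine transcription of the $N=1$ case.
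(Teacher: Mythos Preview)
Your proposal is correct and matches the paper's approach exactly: the paper gives no explicit proof, merely stating ``similar to Theorem~\ref{thm-transf-3}'', and your write-up is precisely the transcription of that one-dimensional argument via the $N$-labelled duality theorems (Theorem~\ref{N-duality}, Theorem~\ref{thm-duality-Nrealization}) and the $N$-variable Gaussian integrals of Theorem~5.4.

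One small correction: your handling of the degenerate case $\det\kappa=0$ is not quite right. The paper does \emph{not} declare $\cS^{(N)}_{\pm\kappa}$ to be the identity there; rather, it falls back on the Feynman-sum definition~\eqref{realization-npt-real-N}, which is nontrivial whenever $\kappa\neq 0$ (a nonzero matrix can have zero determinant). The good news is that your main argument already covers this case without modification: the identification of $\cS^{(N)}_\kappa$ with $\{F_{g;l_1,\dots,l_N}\}\mapsto\{\tF_{g;l_1,\dots,l_N}\}$ and the appeal to Theorem~\ref{thm-duality-Nrealization} are purely combinatorial and do not use invertibility of $\kappa$, so no separate treatment is needed.
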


\subsection{The independence assumption}

In this subsection we generalize the Independence Assumption
introduced in \S \ref{sec:Indep} to the higher-dimensional case.

Given a realization $\{F^\vee_{g;l_1,\cdots,l_N}(t,\kappa),-\kappa_{ij}(t)\}$
of the dual abstract QFT,
we can construct a realization $\{\tF^\vee_{g;l_1,\cdots,l_N}(t,\kappa),\kappa_{ij}(t)\}$
of the abstract QFT by applying Theorem \ref{thm-duality-Nrealization}.
Similar to the case of dimension one,
a priori we do not know if the weights of the ordinary stable vertices
$\tF^\vee_{g;l_1,\cdots,l_N}(t,\kappa)$ are independent of $\kappa_{ij}$.
If $\tF^\vee_{g;l_1,\cdots,l_N}(t,\kappa)$ is independent of $\kappa$
for every $2g-2+\sum\limits_{j=1}^N l_j>0$,
then the edge-cutting operator $K_{ij}$ can be realized by
the partial derivative $\frac{\pd}{\pd\kappa_{ij}}$.

\begin{Definition}
We say that a realization $\{F^\vee_{g;l_1,\cdots,l_N}(t,\kappa),-\kappa_{ij}(t)\}$ of
the dual abstract QFT satisfies the {\em Independence Assumption},
if the functions
$\{\tF^\vee_{g,l_1,\cdots,l_N}\}$
are independent of $\kappa$
for every $2g-2+\sum\limits_{j=1}^N l_j>0$.
\end{Definition}

We have the following quadratic recursion generalizing Theorem \ref{thm-realization-HAE}:

\begin{Theorem}
Let $\{\tF_{g;l_1,\cdots,l_N}(t,\kappa),-\kappa_{ij}(t)\}$ be a realization  of
the dual abstract QFT which satisfies the Independence Assumption,
and let
\ben
&& \tD_j\tF_{g;l_1,\cdots,l_N}=\tF_{g;l_1,\cdots,l_j+1,\cdots,l_N}, \\
&& \tD_i\tD_j\tF_{g-1;l_1,\cdots,l_N}
= \tF_{g;l_1,\cdots, l_i+1, \cdots,l_j+1,\cdots,l_N}.
\een
Then for $2g-2+\sum\limits_{j=1}^N l_j>0$, we have
\begin{equation*}
\begin{split}
&\pd_{\kappa_{ij}}\widehat{F}_{g;l_1,\cdots,l_N}=\tD_i\tD_j\widehat{F}_{g-1;l_1,\cdots,l_N}+
\sum_{\substack{g_1+g_2=g\\p_k+q_k=l_k}}
\tD_i\widehat{F}_{g_1;p_1,\cdots,p_N}\tD_j\widehat{F}_{g_2;q_1,\cdots,q_N},
\quad i\not=j;\\
&\pd_{\kappa_{ii}}\widehat{F}_{g;l_1,\cdots,l_N}=\half\biggl(
\tD_i\tD_i\widehat{F}_{g-1;l_1,\cdots,l_N}+
\sum_{\substack{g_1+g_2=g\\p_k+q_k=l_k}}
\tD_i\widehat{F}_{g_1;p_1,\cdots,p_N}\tD_i\widehat{F}_{g_2;q_1,\cdots,q_N}\biggr).
\end{split}
\end{equation*}
where we denote $\wF_{g,n}:=\frac{1}{l_1!\cdots l_N!}\cdot\tF_{g,n}$.
In particular, by taking $l_1=\cdots=l_N=0$, we get
\ben
&&\pd_{\kappa_{ij}}\widehat{F}_g=\tD_i\tD_j\widehat{F}_{g-1}+
\sum_{r=1}^{g-1}\tD_i\widehat{F}_r\cdot\tD_j\widehat{F}_{g-r},
\quad i\not=j;\\
&&\pd_{\kappa_{ii}}\widehat{F}_g=\frac{1}{2}\biggl(\tD_i\tD_i\widehat{F}_{g-1}
+\sum_{r=1}^{g-1}\tD_i\widehat{F}_r\cdot\tD_i\widehat{F}_{g-r}\biggr).
\een
\end{Theorem}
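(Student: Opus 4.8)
The plan is to derive both recursions by applying the induced realization to the abstract quadratic recursion of Theorem~\ref{thm-original-N-rec}, after identifying what the operators $K_{ij}$ and $\cD_j$ become. A realization $\{\tF_{g;l_1,\cdots,l_N}(t,\kappa),-\kappa_{ij}(t)\}$ of the dual abstract QFT of the induced type arises from vertex weights $\{F_{g;l_1,\cdots,l_N}(t,\kappa)\}$ and a propagator $(\kappa_{ij}(t))$ of the ordinary abstract QFT, with $\tF_{g;l_1,\cdots,l_N}=l_1!\cdots l_N!\cdot\wF_{g;l_1,\cdots,l_N}$, where $\wF_{g;l_1,\cdots,l_N}$ is the realization of the abstract $n$-point function $\wcF_{g;l_1,\cdots,l_N}$ built from $\{F_{g;l_1,\cdots,l_N}\}$ and $(\kappa_{ij})$ via the Feynman rule~\eqref{eqn:FR}; by the involution property of Theorem~\ref{thm-duality-Nrealization} one has $F_{g;l_1,\cdots,l_N}=\tF^\vee_{g;l_1,\cdots,l_N}$, and the Independence Assumption says precisely that these $F_{g;l_1,\cdots,l_N}$ are independent of the entries $\kappa_{ij}$.

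Next I would realize the two operators. By Lemma~\ref{lem-original-N-rec}, $\cD_j\wcF_{g;l_1,\cdots,l_N}=(l_j+1)\wcF_{g;l_1,\cdots,l_j+1,\cdots,l_N}$, so under the realization $\cD_j$ carries $\wF_{g;l_1,\cdots,l_N}$ to $(l_j+1)\wF_{g;l_1,\cdots,l_j+1,\cdots,l_N}$; rewriting this with $\tF_{g;l_1,\cdots,l_N}=l_1!\cdots l_N!\cdot\wF_{g;l_1,\cdots,l_N}$, it is exactly the operator $\tD_j$ defined by $\tD_j\tF_{g;l_1,\cdots,l_N}=\tF_{g;l_1,\cdots,l_j+1,\cdots,l_N}$, and composing twice gives the realization $\cD_i\cD_j\leftrightarrow\tD_i\tD_j$. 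For $K_{ij}$, the weight of a labelled ordinary stable graph $\Gamma$ factors as $\omega_\Gamma=\prod_{v\in V(\Gamma)}F_{g_v;\val_1(v),\cdots,\val_N(v)}\cdot\prod_{e\in E(\Gamma)}\kappa_{j_1(e)j_2(e)}$; since the vertex weights are independent of the $\kappa_{ij}$ by the Independence Assumption, $\partial_{\kappa_{ij}}$ differentiates only the edge factors, and one checks directly that $\partial_{\kappa_{ij}}\omega_\Gamma=\sum_e\omega_{\Gamma_e}=\omega_{K_{ij}\Gamma}$, the sum running over internal edges $e$ of $\Gamma$ with labels $i$ and $j$, and $\Gamma_e$ denoting $\Gamma$ with $e$ cut. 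Summing over $\Gamma$ with weights $\frac{1}{|\Aut(\Gamma)|}$ then gives that $\partial_{\kappa_{ij}}\wF_{g;l_1,\cdots,l_N}$ realizes $K_{ij}\wcF_{g;l_1,\cdots,l_N}$.

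Substituting these identifications into Theorem~\ref{thm-original-N-rec} produces the two displayed recursions for $2g-2+\sum_j l_j>0$, and the free energy statement follows by specializing $l_1=\cdots=l_N=0$ and interpreting the low-genus terms through the conventions of Theorem~\ref{thm-original-N-rec} transported along $\cD_j\mapsto\tD_j$, exactly as for the one-dimensional Theorem~\ref{thm-realization-HAE}. The only step requiring genuine care is the realization $K_{ij}\leftrightarrow\partial_{\kappa_{ij}}$: this is where, and the only place where, the Independence Assumption is used, for without it $\partial_{\kappa_{ij}}$ would also act on the vertex weights $F_{g_v;\cdots}(t,\kappa)$ and the identity $\partial_{\kappa_{ij}}\omega_\Gamma=\omega_{K_{ij}\Gamma}$ would fail. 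The remaining bookkeeping — the $|\Aut(\Gamma)|$ factors in passing between the $\wcF$'s and the $\wF$'s, and the matching of low-genus conventions — is routine and parallels the $N=1$ argument, so I would present it only briefly.
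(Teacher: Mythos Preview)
Your proposal is correct and follows essentially the same approach as the paper: the paper states that (in the $N=1$ case, Theorem~\ref{thm-realization-HAE}) the result ``holds trivially by definition'' once the Independence Assumption is in place, because then $\partial_{\kappa_{ij}}$ realizes $K_{ij}$ and the abstract recursion of Theorem~\ref{thm-original-N-rec} translates directly; the higher-dimensional statement is given without a separate proof, being the evident parallel. Your write-up simply makes explicit the identifications $K_{ij}\leftrightarrow\partial_{\kappa_{ij}}$ and $\cD_j\leftrightarrow\tD_j$ that the paper leaves implicit.
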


\section{Fourier-Like Transforms of Stable Graphs and
Transformations on Space of Field Theories}

\label{sec:Fourier}

In this section,
we generalize the duality theory of stable graphs to
construct a family of Fourier-like transforms of stable graphs.
When we consider the realizations of the abstract QFT,
these Fourier-like transforms of stable graphs give us
a family of transformations on the space of field theories,
which can be represented by a family of Fourier-like transformations
on formal Gaussian integrals.
We also discuss the group structure on the space of
all such transformations.

\subsection{Type-$\epsilon$ stable graphs and Fourier-like transforms}
\label{sec-epsilon}

Similar to the case of dotted stable vertices and
dotted stable graphs in the dual abstract QFT,
in this subsection we define stable graphs of Type-$\epsilon$.

Let $\cA$ be an arbitrary commutative algebra over $\bQ$.
For example,
we may simply take $\cA=\bC$,
or the ring of smooth functions or formal power series of some formal variable $t$.
In what follows we will work on the linear space $\cA \otimes\cG_{g,n}$,
i.e., the space spanned by stable graphs with coefficients in $\cA$.

First let us define stable vertices of Type-$\epsilon$.
Fix an element $\epsilon\in \cA$.
First,
we formulate a new type of linear combination of stable graphs.
Since they will also be graphically represented by stable vertices
(similar to the dotted stable vertices),
but with an indicator $\epsilon$ to distinguish them from the ordinary stable vertices,
they will be called `Type-$\epsilon$ stable vertices'.
Denote $V_{g,n}^{\epsilon}$ be a Type-$\epsilon$ stable vertex
of genus $g$ and valence $n$,
then we assign the following linear combination of
ordinary stable graphs to $V_{g,n}^\epsilon$:
\be\label{eq-type-vertex}
V_{g,n}^\epsilon:=n!\cdot\sum_{\Gamma\in \cG_{g,n}^{c}}
\frac{\epsilon^{|E(\Gamma)|}}{|\Aut(\Gamma)|}\Gamma.
\ee

\begin{Remark}
It is clear that the dotted stable vertices defined in \S \ref{sec:Dotted}
are stable vertices of Type-$1$.
\end{Remark}

\begin{Example}
Let us present some examples of Type-$\epsilon$ stable vertices:

\begin{flalign*}
\begin{tikzpicture}
\node [above,align=center] at (0,0.1) {\LARGE{$\epsilon$}};
\draw (0,0) circle [radius=0.2];
\node [align=center,align=center] at (0,0) {$0$};
\draw (-0.5,0)--(-0.2,0);
\draw (0.16,0.1)--(0.5,0.15);
\draw (0.16,-0.1)--(0.5,-0.15);
\node [align=center,align=center] at (0.8,0) {$=$};
\draw (0+1.6,0) circle [radius=0.2];
\node [align=center,align=center] at (0+1.6,0) {$0$};
\draw (-0.5+1.6,0)--(-0.2+1.6,0);
\draw (0.16+1.6,0.1)--(0.5+1.6,0.15);
\draw (0.16+1.6,-0.1)--(0.5+1.6,-0.15);
\end{tikzpicture},&&
\end{flalign*}

\begin{flalign*}
\begin{tikzpicture}
\node [above,align=center] at (0,0.1) {\LARGE{$\epsilon$}};
\draw (1.6-1.6,0) circle [radius=0.2];
\draw (1.1-1.6,0.15)--(1.44-1.6,0.1);
\draw (1.1-1.6,-0.15)--(1.44-1.6,-0.1);
\draw (1.76-1.6,0.1)--(2.1-1.6,0.15);
\draw (1.76-1.6,-0.1)--(2.1-1.6,-0.15);
\node [align=center,align=center] at (1.6-1.6,0) {$0$};
\node [align=center,align=center] at (0.8,0) {$=$};
\draw (1.6+0.1,0) circle [radius=0.2];
\draw (1.1+0.1,0.15)--(1.44+0.1,0.1);
\draw (1.1+0.1,-0.15)--(1.44+0.1,-0.1);
\draw (1.76+0.1,0.1)--(2.1+0.1,0.15);
\draw (1.76+0.1,-0.1)--(2.1+0.1,-0.15);
\node [align=center,align=center] at (1.6+0.1,0) {$0$};
\node [align=center,align=center] at (2.7,0) {$+3\epsilon$};
\draw (1.6+1.9+0.2,0) circle [radius=0.2];
\draw (2.2+1.9+0.2,0) circle [radius=0.2];
\draw (1.1+1.9+0.2,0.15)--(1.44+1.9+0.2,0.1);
\draw (1.1+1.9+0.2,-0.15)--(1.44+1.9+0.2,-0.1);
\draw (1.1+1.9+0.2,0.15)--(1.44+1.9+0.2,0.1);
\draw (2.36+1.9+0.2,0.1)--(2.7+1.9+0.2,0.15);
\draw (2.36+1.9+0.2,-0.1)--(2.7+1.9+0.2,-0.15);
\draw (1.8+1.9+0.2,0)--(2+1.9+0.2,0);
\node [align=center,align=center] at (1.6+1.9+0.2,0) {$0$};
\node [align=center,align=center] at (2.2+1.9+0.2,0) {$0$};
\end{tikzpicture},&&
\end{flalign*}

\begin{flalign*}
\begin{tikzpicture}
\node [above,align=center] at (-0.6,0.1) {\LARGE{$\epsilon$}};
\draw (1-1.6,0) circle [radius=0.2];
\draw (1.2-1.6,0)--(1.5-1.6,0);
\node [align=center,align=center] at (1-1.6,0) {$1$};
\node [align=center,align=center] at (0.3,0) {$=$};
\draw (1,0) circle [radius=0.2];
\draw (1.2,0)--(1.5,0);
\node [align=center,align=center] at (1,0) {$1$};
\node [align=center,align=center] at (2.1,0) {$+\frac{1}{2}\epsilon$};
\draw (1+1.8+0.2,0) circle [radius=0.2];
\draw (1.2+1.8+0.2,0)--(1.5+1.8+0.2,0);
\draw (0.84+1.8+0.2,0.1) .. controls (0.5+1.8+0.2,0.2) and (0.5+1.8+0.2,-0.2) ..  (0.84+1.8+0.2,-0.1);
\node [align=center,align=center] at (1+1.8+0.2,0) {$0$};
\end{tikzpicture}.&&
\end{flalign*}
\end{Example}

Now let us define stable graphs of Type-$\epsilon$.
By a `Type-$\epsilon$ stable graph',
we mean a linear combination of ordinary stable graphs,
whose expression can be obtained by gluing Type-$\epsilon$ stable vertices together.
Their definition is similar to the definition of dotted stable graphs
introduced in \S\ref{sec:Dotted},
the only difference is that in the final step (i.e., `gluing'),
we do NOT multiply the factor $(-1)^{|E^\vee(\Gamma^\vee)|}$.

\begin{Remark}
\label{rmk-dotted-type-1}
Let $\Gamma$ be a stable graph,
and $\Gamma^\vee$
(resp. $\Gamma^1$)
be the dotted stable graph (resp. Type-$1$ stable graph)
obtained by directly changing the vertices and edges of $\Gamma$ into
dotted (resp. Type-$1$) vertices and edges.
Then we have:
\ben
\Gamma^\vee=(-1)^{|E(\Gamma)|}\cdot \Gamma^{1}.
\een
\end{Remark}

\begin{Definition}
Let $\Gamma$ be a stable graph,
and denote by $\Gamma^\epsilon$ the stable graph of Type-$\epsilon$
obtained by directly changing the vertices and edges of $\Gamma$
into vertices and edges of Type-$\epsilon$.
Then we call the transformation
\be
\Phi_\epsilon :\quad \cV\to \cV,\quad
\Gamma\mapsto\Gamma^\epsilon
\ee
a Fourier-like transform of stable graphs.
In particular,
\be\label{eq-vertex-fourier}
\Phi_\epsilon(V_{g,n})=V_{g,n}^\epsilon,
\ee
where $V_{g,n}$ a stable vertex of genus $g$ and valence $n$,
and $V_{g,n}^\epsilon$ the stable vertex of Type-$\epsilon$, of genus $g$ and valence $n$.
\end{Definition}

\subsection{Type-$(\epsilon_1,\epsilon_2)$ stable graphs}

Now fix two elements $\epsilon_1,\epsilon_2\in \cA$,
in this subsection
let us define stable vertices of Type-$(\epsilon_1,\epsilon_2)$.

Recall that Type-$\epsilon$ stable vertices are obtained by
gluing ordinary stable vertices together using Type-$\epsilon$ edges
(see \eqref{eq-type-vertex}).
Similarly,
roughly speaking, stable vertices of Type-$(\epsilon_1,\epsilon_2)$
is obtained by gluing Type-$\epsilon_1$ stable vertices together
using Type-$\epsilon_2$ edges.
Or more precisely,
\be\label{eq-type-vertex-2}
V_{g,n}^{(\epsilon_1,\epsilon_2)}:=
\sum_{\Gamma^{\epsilon_1}\in\cG_{g,n}^{\epsilon_1,c}}
\frac{\epsilon_2^{|E^{\epsilon_1}(\Gamma^{\epsilon_1})|}}{|\Aut(\Gamma^{\epsilon_1})|}
\Gamma^{\epsilon_1},
\ee
where $V_{g,n}^{(\epsilon_1,\epsilon_2)}$ is a stable vertex
of Type-$(\epsilon_1,\epsilon_2)$, of genus $g$ and valence $n$;
and $\cG_{g,n}^{\epsilon_1,c}$ is the set of all connected Type-$\epsilon_1$ stable graphs
of genus $g$ with $n$ external edges,
and $E^{\epsilon_1}(\Gamma^{\epsilon_1})$ is the set of
Type-$\epsilon_1$ internal edges of $\Gamma^{\epsilon_1}$.

Similarly,
we can glue stable vertices of Type-$(\epsilon_1,\epsilon_2)$
to obtain stable graphs of Type-$(\epsilon_1,\epsilon_2)$.
The transformation
\ben
\Gamma\quad\mapsto\quad \Gamma^{(\epsilon_1,\epsilon_2)},
\qquad\qquad
V_{g,n}\quad\mapsto\quad
V_{g,n}^{(\epsilon_1,\epsilon_2)},
\een
can be understood as the composition of Fourier-like transformations
$\Phi_{\epsilon_1}$ and $\Phi_{\epsilon_2}$.

\begin{Example}
We give some examples:

\begin{flalign*}
\begin{tikzpicture}
\draw (0,0) circle [radius=0.2];
\node [align=center,align=center] at (0,0) {$0$};
\draw (-0.5,0)--(-0.2,0);
\draw (0.16,0.1)--(0.5,0.15);
\draw (0.16,-0.1)--(0.5,-0.15);
\node [above,align=center] at (0,0.1) {\large{$(\epsilon_1,\epsilon_2)$}};
\node [align=center,align=center] at (0.8,0) {$=$};
\draw (0+1.6,0) circle [radius=0.2];
\node [above,align=center] at (1.6,0.1) {\Large{$\epsilon_1$}};
\node [align=center,align=center] at (0+1.6,0) {$0$};
\draw (-0.5+1.6,0)--(-0.2+1.6,0);
\draw (0.16+1.6,0.1)--(0.5+1.6,0.15);
\draw (0.16+1.6,-0.1)--(0.5+1.6,-0.15);
\node [align=center,align=center] at (0.8+1.6,0) {$=$};
\draw (0+1.6+1.6,0) circle [radius=0.2];
\node [align=center,align=center] at (0+1.6+1.6,0) {$0$};
\draw (-0.5+1.6+1.6,0)--(-0.2+1.6+1.6,0);
\draw (0.16+1.6+1.6,0.1)--(0.5+1.6+1.6,0.15);
\draw (0.16+1.6+1.6,-0.1)--(0.5+1.6+1.6,-0.15);
\end{tikzpicture},&&
\end{flalign*}

\begin{flalign*}
\begin{tikzpicture}
\draw (1-1.6,0) circle [radius=0.2];
\draw (1.2-1.6,0)--(1.5-1.6,0);
\node [align=center,align=center] at (1-1.6,0) {$1$};
\node [align=center,align=center] at (0.3,0) {$=$};
\node [above,align=center] at (1-1.6,0.1) {\large{$(\epsilon_1,\epsilon_2)$}};
\draw (1,0) circle [radius=0.2];
\node [above,align=center] at (1,0.1) {\Large{$\epsilon_1$}};
\draw (1.2,0)--(1.5,0);
\node [align=center,align=center] at (1,0) {$1$};
\node [align=center,align=center] at (2.1,0) {$+\frac{1}{2}\epsilon_2$};
\draw (1+2.1,0) circle [radius=0.2];
\node [above,align=center] at (3.1,0.1) {\Large{$\epsilon_1$}};
\draw (1.2+2.1,0)--(1.5+2.1,0);
\draw (0.84+2.1,0.1) .. controls (0.5+2.1,0.2) and (0.5+2.1,-0.2) ..  (0.84+2.1,-0.1);
\node [align=center,align=center] at (1+2.1,0) {$0$};
\node [align=center,align=center] at (4.1,0) {$=$};
\draw (1+3.9,0) circle [radius=0.2];
\draw (1.2+3.9,0)--(1.5+3.9,0);
\node [align=center,align=center] at (1+3.9,0) {$1$};
\node [align=center,align=center] at (2.1+4.4,0) {$+\frac{1}{2}(\epsilon_1+\epsilon_2)$};
\draw (1+2.1+3.9+1,0) circle [radius=0.2];
\draw (1.2+2.1+3.9+1,0)--(1.5+2.1+3.9+1,0);
\draw (0.84+2.1+3.9+1,0.1) .. controls (0.5+2.1+3.9+1,0.2) and (0.5+2.1+3.9+1,-0.2) ..  (0.84+2.1+3.9+1,-0.1);
\node [align=center,align=center] at (1+2.1+3.9+1,0) {$0$};
\end{tikzpicture}.&&
\end{flalign*}

\end{Example}

\begin{Remark}
Compare Remark \ref{rmk-dotted-type-1}
with the definition of dual abstract free energies and abstract $n$-point functions
(see Definition \ref{def-dual-absfe}),
we know that the dual abstract $n$-point function
equals the stable vertex of Type-$(1,-1)$:
\ben
\wcF_{g,n}^\vee=V_{g,n}^{(1,-1)}.
\een
\end{Remark}

Similar to the case of Type-$(1,-1)$
(see Theorem\ref{thm-dual-FE}),
we want to represent a stable vertex $V_{g,n}^{(\epsilon_1,\epsilon_2)}$
of Type-${(\epsilon_1,\epsilon_2)}$
directly in terms of a linear combination of ordinary stable graphs.
Our main result in this subsection is the following:

\begin{Theorem}
\label{thm-epsilon-abstract}
For $2g-2+n>0$,
we have:
\be\label{eq-epsilon-abstract}
V_{g,n}^{(\epsilon_1,\epsilon_2)}
=\sum_{\Gamma\in\cG_{g,n}^{c}}
\frac{(\epsilon_1+\epsilon_2)^{|E(\Gamma)|}}{|\Aut(\Gamma)|}\Gamma.
\ee
\end{Theorem}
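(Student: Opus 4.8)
The plan is to reduce the statement to a purely combinatorial identity about generating functions of the form $\sum_\Gamma \frac{x^{|E(\Gamma)|}}{|\Aut(\Gamma)|}\Gamma$, and then prove that identity by tracking how the edge-count factors multiply when one glues vertices together. The key point is that, by definition \eqref{eq-type-vertex-2}, $V_{g,n}^{(\epsilon_1,\epsilon_2)}$ is built in two nested stages: first every ordinary stable vertex is replaced by a Type-$\epsilon_1$ stable vertex, which by \eqref{eq-type-vertex} introduces a factor $\epsilon_1^{|E(\Gamma')|}$ summed over connected ordinary graphs $\Gamma'$; then these Type-$\epsilon_1$ vertices are glued along Type-$\epsilon_2$ edges, each gluing contributing a factor of $\epsilon_2$.

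First I would set up the bookkeeping for a single Type-$(\epsilon_1,\epsilon_2)$ vertex $V_{g,n}^{(\epsilon_1,\epsilon_2)}$. Unwinding \eqref{eq-type-vertex-2} and \eqref{eq-type-vertex}, and using the gluing rule exactly as in \S\ref{sec:Dotted} (with names, then forgetting names, but \emph{without} the sign $(-1)^{|E^\vee|}$), an ordinary stable graph $\Gamma \in \cG_{g,n}^c$ acquires a coefficient of the form
\be
\frac{1}{|\Aut(\Gamma)|}\sum_{E \subseteq E(\Gamma)} \epsilon_2^{|E|}\,\epsilon_1^{|E(\Gamma)| - |E|},
\ee
where the subset $E$ records which internal edges of $\Gamma$ are "Type-$\epsilon_2$ gluing edges" and the complementary edges are the internal edges living inside the Type-$\epsilon_1$ vertices. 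This is the analogue of the edge-choosing bijection used in the proof of Theorem \ref{thm-dual-FE}: a Type-$(\epsilon_1,\epsilon_2)$ graph expansion, when collected into a given ordinary graph $\Gamma$, is indexed by subsets of $E(\Gamma)$, with each subset weighted by the appropriate monomial in $\epsilon_1,\epsilon_2$. The combinatorial heart is the verification that the multiplicities (automorphism factors, the $n!$ and $l_j!$ normalizations, the $\frac{1}{|\Aut(\Gamma')|}$ weights of the intermediate named graphs) all collapse correctly; this is precisely what Lemma \ref{lemma-mark-1} and the "forgetting names" arguments in the proof of Theorem \ref{thm-dual-FE} accomplish, and I would invoke those mechanisms rather than redo them. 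Once that is in place, the binomial theorem gives
\be
\sum_{E \subseteq E(\Gamma)} \epsilon_2^{|E|}\,\epsilon_1^{|E(\Gamma)|-|E|} = (\epsilon_1+\epsilon_2)^{|E(\Gamma)|},
\ee
which is exactly the coefficient appearing on the right-hand side of \eqref{eq-epsilon-abstract}.

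The cleanest way to organize this, and the route I would actually take in the writeup, is to prove the stronger statement that Fourier-like transforms compose additively in the parameter, i.e. $\Phi_{\epsilon_2}\circ\Phi_{\epsilon_1} = \Phi_{\epsilon_1+\epsilon_2}$ on all of $\cV$, and then specialize to vertices via \eqref{eq-vertex-fourier} to get \eqref{eq-epsilon-abstract}. The composition statement is natural because $\Phi_\epsilon$ acts diagrammatically: it replaces each vertex and each edge by its Type-$\epsilon$ counterpart, and Type-$\epsilon$ edges behave like "weighted gluings" that concatenate associatively. The identity $\epsilon_1 + \epsilon_2$ then emerges from the observation that an internal edge of the final ordinary graph can arise either as an internal edge already present inside a Type-$\epsilon_1$ vertex (weight $\epsilon_1$) or as a new Type-$\epsilon_2$ gluing edge (weight $\epsilon_2$), and these two sources are mutually exclusive and exhaustive — again the subset decomposition of $E(\Gamma)$.

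The main obstacle will be the same one that makes the proof of Theorem \ref{thm-dual-FE} nontrivial: carefully matching up automorphism groups across the three descriptions of the same linear combination (named intermediate graphs, unnamed Type-$\epsilon_1$ graphs, and the final ordinary graphs). In particular one must check that when a Type-$\epsilon_1$ vertex of genus $g$ and valence $n$ is itself a nontrivial sum over connected graphs, gluing it to others and then forgetting names produces each ordinary graph $\Gamma$ with coefficient $\epsilon_1^{|E(\Gamma)\setminus E|}/|\Aut(\Gamma)|$ and \emph{not} some different automorphism factor — the subtlety being that $\Aut$ of the glued named object need not split as a product of the $\Aut$'s of the pieces. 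The resolution is exactly the "forgetting names" lemma machinery already established (Lemma \ref{lemma-mark-1} and its iterated use in the proof of Theorem \ref{thm-dual-FE}), so once I phrase the argument to run through those lemmas verbatim, the remaining computation is the elementary binomial identity above. I would therefore present the proof as: (1) unwind the definitions to get the subset-sum expression for the coefficient of $\Gamma$; (2) cite the automorphism-counting arguments from the proof of Theorem \ref{thm-dual-FE} to justify the weights; (3) apply the binomial theorem; (4) optionally remark that this yields $\Phi_{\epsilon_2}\circ\Phi_{\epsilon_1}=\Phi_{\epsilon_1+\epsilon_2}$, which is what makes the Fourier-like transforms into a commutative (one-parameter) group.
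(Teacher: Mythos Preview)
Your proposal is correct and follows essentially the same route as the paper's own proof: compute the coefficient of a fixed ordinary graph $\Gamma$ in $V_{g,n}^{(\epsilon_1,\epsilon_2)}$ by invoking the automorphism-counting machinery from the proof of Theorem~\ref{thm-dual-FE}, obtain the subset-sum $\frac{1}{|\Aut(\Gamma)|}\sum_{E\subset E(\Gamma)}\epsilon_1^{|E|}\epsilon_2^{|E(\Gamma)\setminus E|}$, and close with the binomial theorem. The paper's proof is just a terse version of exactly this, citing \eqref{eq-coeff-final} as the special case being generalized.
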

\begin{proof}
All we need to do is to mimic the proof of Theorem \ref{thm-dual-FE}.
In fact,
given a stable graph $\Gamma\in\cG_{g,n}^c$,
let us compute the coefficient of $\Gamma$ in $V_{g,n}^{(\epsilon_1,\epsilon_2)}$
using exactly the same method in the proof of Theorem\ref{thm-dual-FE},
and we will get the following result
(which generalizes the coefficient \eqref{eq-coeff-final}):
\be\label{eq-coeff-add}
\frac{1}{|\Aut(\Gamma)|}\cdot\sum_{E\subset E(\Gamma)}\epsilon_1^{|E|}
\cdot \epsilon_2^{|E(\Gamma)\backslash E|}.
\ee
Therefore we have:
\ben
V_{g,n}^{(\epsilon_1,\epsilon_2)}
&=&\sum_{\Gamma\in\cG_{g,n}^{c}}
\frac{1}{|\Aut(\Gamma)|}\cdot\biggl(\sum_{E\subset E(\Gamma)}\epsilon_1^{|E|}
\cdot \epsilon_2^{|E(\Gamma)\backslash E|}\biggr)
\Gamma\\
&=&\sum_{\Gamma\in\cG_{g,n}^{c}}
\frac{(\epsilon_1+\epsilon_2)^{|E(\Gamma)|}}{|\Aut(\Gamma)|}\Gamma.
\een
\end{proof}

The above theorem gives us the following:
\begin{Corollary}
For every two elements $\epsilon_1,\epsilon_2\in\cA$,
we have:
\be
V_{g,n}^{(\epsilon_1,\epsilon_2)}=
V_{g,n}^{(\epsilon_2,\epsilon_1)}.
\ee
In other words, we have:
\be
\Phi_{\epsilon_1}\circ\Phi_{\epsilon_2}
=\Phi_{\epsilon_2}\circ\Phi_{\epsilon_1}.
\ee
\end{Corollary}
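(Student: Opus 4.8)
The plan is to deduce the Corollary directly from Theorem \ref{thm-epsilon-abstract}. The key observation is that the right-hand side of \eqref{eq-epsilon-abstract}, namely $\sum_{\Gamma\in\cG_{g,n}^{c}} \frac{(\epsilon_1+\epsilon_2)^{|E(\Gamma)|}}{|\Aut(\Gamma)|}\Gamma$, depends on $\epsilon_1$ and $\epsilon_2$ only through the \emph{sum} $\epsilon_1+\epsilon_2$, which is symmetric in the two arguments. Hence $V_{g,n}^{(\epsilon_1,\epsilon_2)} = V_{g,n}^{(\epsilon_2,\epsilon_1)}$ for every $g,n$ with $2g-2+n>0$, which establishes the first displayed identity of the Corollary.

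For the second identity, I would first recall the definition of the Fourier-like transform $\Phi_\epsilon$ and its iterated version: by construction $\Phi_{\epsilon_1}\circ\Phi_{\epsilon_2}$ applied to an ordinary stable vertex $V_{g,n}$ produces exactly $V_{g,n}^{(\epsilon_2,\epsilon_1)}$ (gluing Type-$\epsilon_2$ vertices with Type-$\epsilon_1$ edges), as explained in the discussion preceding Theorem \ref{thm-epsilon-abstract}; likewise $\Phi_{\epsilon_2}\circ\Phi_{\epsilon_1}(V_{g,n}) = V_{g,n}^{(\epsilon_1,\epsilon_2)}$. By the symmetry just proved, these two agree on every ordinary stable vertex. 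Since $\cV$ is spanned (as established in \S\ref{subsec3.2}) by stable graphs, and every stable graph is built by gluing stable vertices along internal edges, and both $\Phi_{\epsilon_1}\circ\Phi_{\epsilon_2}$ and $\Phi_{\epsilon_2}\circ\Phi_{\epsilon_1}$ are linear maps compatible with this gluing operation (each acts on a stable graph by transforming its constituent vertices and edges according to the same combinatorial recipe), agreement on vertices propagates to agreement on all of $\cV$. Therefore $\Phi_{\epsilon_1}\circ\Phi_{\epsilon_2} = \Phi_{\epsilon_2}\circ\Phi_{\epsilon_1}$ as endomorphisms of $\cV$.

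The only genuinely nontrivial input is Theorem \ref{thm-epsilon-abstract} itself, which I am permitted to assume; given that, the Corollary is essentially immediate. The one point requiring a little care is the passage from "agreement on single stable vertices" to "agreement on all stable graphs": one must check that $\Phi_{\epsilon_1}\circ\Phi_{\epsilon_2}$ really is determined by its action on vertices in a way that respects gluing, i.e. that the composite transform of a general graph is obtained by applying the composite transform vertex-by-vertex and edge-by-edge. This is built into the definition of Type-$(\epsilon_1,\epsilon_2)$ stable graphs (the gluing prescription is performed on named half-edges exactly as in \S\ref{sec:Dotted}, without the sign factor), so I expect this verification to be short; it is the main, if modest, obstacle. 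I would therefore organize the write-up as: (1) quote \eqref{eq-epsilon-abstract} and read off symmetry in $\epsilon_1\leftrightarrow\epsilon_2$; (2) observe $\Phi_{\epsilon_1}\circ\Phi_{\epsilon_2}(V_{g,n})=V_{g,n}^{(\epsilon_2,\epsilon_1)}$ and its mirror; (3) conclude equality on vertices, then on $\cV$ by linearity and compatibility with gluing.
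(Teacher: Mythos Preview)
Your proposal is correct and matches the paper's approach: the paper states the Corollary immediately after Theorem \ref{thm-epsilon-abstract} with no proof, treating it as a direct consequence of the fact that the right-hand side of \eqref{eq-epsilon-abstract} depends only on the symmetric quantity $\epsilon_1+\epsilon_2$. Your step (3) extending from vertices to all of $\cV$ via gluing compatibility is more detail than the paper provides, but it is correct and in the same spirit.
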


\subsection{Realization and formal Gaussian integrals}

In this subsection we consider the realization of
the constructions in the previous subsections.
We also represent them in terms of formal Gaussian integrals.

First,
we need to fix a realization $(\{F_{g,n}\}_{2g-2+n>0},\kappa)$
of the abstract quantum field theory ($\kappa\not=0$),
where $F_{g,n}$ and $\kappa$ are some smooth functions
(or formal power series) in a formal variable $t$.

Let $\cA$ be the commutative algebra consisting of some smooth functions
(or formal power series) in $t$,
and take two elements $\epsilon_1,\epsilon_2$ in $\cA$.
Then from \eqref{eq-type-vertex} and \eqref{eq-type-vertex-2} we know that
a Type-$\epsilon_1$ stable vertex $V_{g,n}^{\epsilon_1}$ of genus $g$ and valence $n$
is realizaed by:
\ben
U_{g,n}=n!\cdot\sum_{\Gamma\in\cG_{g,n}^c}\bigg(
\frac{(\epsilon_1\kappa)^{|E(\Gamma)|}}{|\Aut(\Gamma)|}\cdot
\prod_{v\in V(\Gamma)}F_{g_v,\val_v}\bigg),
\een
and a Type-$(\epsilon_1,\epsilon_2)$ stable vertex
$V_{g,n}^{(\epsilon_1,\epsilon_2)}$ of genus $g$ and valence $n$
is realizaed by:
\ben
W_{g,n}=n!\cdot\sum_{\Gamma\in\cG_{g,n}^{c}}\bigg(
\frac{(\epsilon_2\kappa)^{|E(\Gamma)|}}{|\Aut(\Gamma)|}\cdot
\prod_{v\in V(\Gamma)}U_{g_v,\val_v}\bigg).
\een

Then Theorem \ref{thm-epsilon-abstract} tells us:

\begin{Theorem}
\label{thm-epsilon-real}
Let $2g-2+n>0$. Then we have:
\be
W_{g,n}=n!\cdot\sum_{\Gamma\in\cG_{g,n}^{c}}\bigg(
\frac{\big((\epsilon_1+\epsilon_2)\kappa\big)^{|E(\Gamma)|}}{|\Aut(\Gamma)|}\cdot
\prod_{v\in V(\Gamma)}F_{g_v,\val_v}\bigg).
\ee
\end{Theorem}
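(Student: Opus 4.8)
The plan is to deduce Theorem \ref{thm-epsilon-real} directly from the abstract statement Theorem \ref{thm-epsilon-abstract} by pushing the identity \eqref{eq-epsilon-abstract} through the Feynman-rule realization. The point is that both sides of \eqref{eq-epsilon-abstract} are elements of $\cV$ (linear combinations of ordinary stable graphs with coefficients in $\cA$), and the realization map $\Gamma \mapsto \omega_\Gamma$ extends $\cA$-linearly to $\cV$; so it suffices to check that under this extended map the two sides of \eqref{eq-epsilon-abstract} go to $W_{g,n}$ and to the claimed sum, respectively.

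First I would make precise the realization of a Type-$\epsilon_1$ stable vertex and of a Type-$(\epsilon_1,\epsilon_2)$ stable vertex. By definition \eqref{eq-type-vertex}, $V_{g,n}^{\epsilon_1} = n!\cdot\sum_{\Gamma\in\cG_{g,n}^c}\frac{\epsilon_1^{|E(\Gamma)|}}{|\Aut(\Gamma)|}\Gamma$ as an element of $\cV$; applying the Feynman rule \eqref{eqn:FR}, which sends an internal edge to $\kappa$ and a vertex of genus $g_v$, valence $\val_v$ to $F_{g_v,\val_v}$, gives exactly $U_{g,n}=n!\cdot\sum_{\Gamma\in\cG_{g,n}^c}\frac{(\epsilon_1\kappa)^{|E(\Gamma)|}}{|\Aut(\Gamma)|}\prod_{v}F_{g_v,\val_v}$, since $\epsilon_1$ is a scalar in $\cA$ and pulls through multiplicatively, one factor per internal edge. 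Next, a Type-$(\epsilon_1,\epsilon_2)$ stable vertex is, by \eqref{eq-type-vertex-2}, obtained by gluing Type-$\epsilon_1$ vertices along Type-$\epsilon_2$ edges; in the realization, a Type-$\epsilon_1$ vertex of genus $g_v$ and valence $\val_v$ contributes $U_{g_v,\val_v}$ and a Type-$\epsilon_2$ internal edge contributes $\epsilon_2\kappa$, so the realization of $V_{g,n}^{(\epsilon_1,\epsilon_2)}$ is precisely $W_{g,n}=n!\cdot\sum_{\Gamma\in\cG_{g,n}^c}\frac{(\epsilon_2\kappa)^{|E(\Gamma)|}}{|\Aut(\Gamma)|}\prod_v U_{g_v,\val_v}$. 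I should be a little careful here that the "gluing" in \eqref{eq-type-vertex-2} is compatible with the Feynman rule in the same way the gluing of dotted graphs was compatible with the induced Feynman rule in \S\ref{sec:realization}; this is essentially bookkeeping with automorphism factors of the kind already settled by Lemma \ref{lemma-mark-1}, and I would either invoke that lemma or remark that the identical argument applies.

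Then the main step: apply the extended realization map to the abstract identity \eqref{eq-epsilon-abstract}. The left-hand side $V_{g,n}^{(\epsilon_1,\epsilon_2)}$ realizes to $W_{g,n}$ by the previous paragraph, and the right-hand side $\sum_{\Gamma\in\cG_{g,n}^c}\frac{(\epsilon_1+\epsilon_2)^{|E(\Gamma)|}}{|\Aut(\Gamma)|}\Gamma$ realizes — again because $(\epsilon_1+\epsilon_2)\in\cA$ is a scalar pulled through one factor per internal edge, and each internal edge of $\Gamma$ contributes a $\kappa$ — to $n!\cdot\sum_{\Gamma\in\cG_{g,n}^c}\frac{((\epsilon_1+\epsilon_2)\kappa)^{|E(\Gamma)|}}{|\Aut(\Gamma)|}\prod_v F_{g_v,\val_v}$, which is exactly the asserted formula for $W_{g,n}$. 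Hence the theorem follows.

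I do not expect a genuine obstacle here, since Theorem \ref{thm-epsilon-abstract} already does the combinatorial heavy lifting and the realization map is a homomorphism by construction. The only place requiring care — the "main obstacle" in a mild sense — is verifying that the realization of a Type-$(\epsilon_1,\epsilon_2)$ vertex really is $W_{g,n}$ with the stated automorphism weights, i.e. that composing the two gluing constructions \eqref{eq-type-vertex} and \eqref{eq-type-vertex-2} at the level of $\cV$ and then realizing agrees with first realizing the inner vertices to $U_{g_v,\val_v}$ and then gluing. This is the same "forgetting-names" compatibility used throughout \S\ref{sec:Dotted} and \S\ref{sec:realization}, so I would state it as a short lemma or simply cite the parallel argument rather than redo it. Everything else is linearity of the Feynman rule and the fact that $\cA$ is commutative.
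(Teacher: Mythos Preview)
Your proposal is correct and matches the paper's approach: the paper simply states that Theorem \ref{thm-epsilon-real} is what ``Theorem \ref{thm-epsilon-abstract} tells us'' after passing to the realization, with no further argument given. Your write-up is more explicit about why the realization of $V_{g,n}^{(\epsilon_1,\epsilon_2)}$ equals $W_{g,n}$ and about the automorphism-factor compatibility, but this is just spelling out what the paper takes for granted.
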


Similar to the case in \S \ref{subsec-transf},
the above constructions can be represented by formal Gaussian integrals.
We already know that
\begin{equation*}
\begin{split}
&\exp\bigg(\sum_{2g-2+n>0}
\lambda^{2g-2}\eta^n\cdot U_{g,n}\bigg)\\
=&
\frac{1}{(2\pi\lambda^{2}\epsilon_1\kappa)^{\frac{1}{2}}}
\int \exp\biggl\{
\sum_{2g-2+n>0}\lambda^{2g-2}F_{g,n}\cdot
\frac{z^n}{n!}-\frac{\lambda^{-2}}{2\epsilon_1\kappa}(z-\eta)^2
\biggr\}dz
\end{split}
\end{equation*}
and
\begin{equation*}
\begin{split}
&\exp\bigg(\sum_{2g-2+n>0}
\lambda^{2g-2}\eta^n\cdot W_{g,n}\bigg)\\
=&
\frac{1}{(2\pi\lambda^{2}\epsilon_2\kappa)^{\frac{1}{2}}}
\int \exp\biggl\{
\sum_{2g-2+n>0}\lambda^{2g-2}U_{g,n}\cdot
\frac{z^n}{n!}-\frac{\lambda^{-2}}{2\epsilon_2\kappa}(z-\eta)^2
\biggr\}dz,
\end{split}
\end{equation*}
and Theorem \ref{thm-epsilon-real} gives us:

\begin{Theorem}
We have:
\begin{equation*}
\begin{split}
&\exp\bigg(\sum_{2g-2+n>0}
\lambda^{2g-2}\eta^n\cdot W_{g,n}\bigg)\\
=&
\frac{1}{(2\pi\lambda^{2}(\epsilon_1+\epsilon_2)\kappa)^{\frac{1}{2}}}
\int \exp\biggl\{
\sum_{2g-2+n>0}\lambda^{2g-2}F_{g,n}\cdot
\frac{z^n}{n!}-\frac{\lambda^{-2}}{2(\epsilon_1+\epsilon_2)\kappa}(z-\eta)^2
\biggr\}dz.
\end{split}
\end{equation*}
\end{Theorem}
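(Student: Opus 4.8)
The plan is to deduce the identity from Theorem \ref{thm-epsilon-real} together with the formal Gaussian integral representation of realizations, in exactly the manner in which the two displayed identities immediately preceding the statement were obtained. First I would observe that Theorem \ref{thm-epsilon-real} identifies $W_{g,n}$ with $n!$ times the realization of the abstract $n$-point function $\wcF_{g,n}$ attached to the \emph{same} vertex data $\{F_{g,n}\}$ but with the propagator $\kappa$ replaced by $(\epsilon_1+\epsilon_2)\kappa$; writing $\wF^{(\epsilon_1+\epsilon_2)\kappa}_{g,n}$ for this realization (so that, exactly parallel to $U_{g,n}=n!\cdot\wF^{\epsilon_1\kappa}_{g,n}$, we have $W_{g,n}=n!\cdot\wF^{(\epsilon_1+\epsilon_2)\kappa}_{g,n}$), the left-hand side of the desired formula is precisely the generating series governed by Theorem \ref{thm-transf-1} with $\kappa$ replaced by $(\epsilon_1+\epsilon_2)\kappa$ and $z$ renamed $\eta$. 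Thus the right-hand side that Theorem \ref{thm-transf-1} produces is exactly the right-hand side claimed here, and the theorem follows with essentially no further computation.

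Alternatively — and this is the route I would write out if one prefers to stay purely at the level of Gaussian integrals — I would substitute the first displayed identity, expressing $\exp\bigl(\sum\lambda^{2g-2}\eta^nU_{g,n}\bigr)$, into the second, which expresses $\exp\bigl(\sum\lambda^{2g-2}\eta^nW_{g,n}\bigr)$ in terms of the $U_{g,n}$. This yields an iterated formal Gaussian integral in two variables, with integrand $\exp\bigl\{\sum\lambda^{2g-2}F_{g,n}\tfrac{z^n}{n!}-\tfrac{\lambda^{-2}}{2\epsilon_1\kappa}(z-w)^2-\tfrac{\lambda^{-2}}{2\epsilon_2\kappa}(w-\eta)^2\bigr\}$ and prefactor $(2\pi\lambda^2\epsilon_1\kappa)^{-1/2}(2\pi\lambda^2\epsilon_2\kappa)^{-1/2}$. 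Performing the inner integration over $w$ — completing the square, so that $\tfrac{1}{\epsilon_1\kappa}(z-w)^2+\tfrac{1}{\epsilon_2\kappa}(w-\eta)^2=\tfrac{\epsilon_1+\epsilon_2}{\epsilon_1\epsilon_2\kappa}(w-w_0)^2+\tfrac{1}{(\epsilon_1+\epsilon_2)\kappa}(z-\eta)^2$ — the $w$-integral contributes $\bigl(2\pi\lambda^2\tfrac{\epsilon_1\epsilon_2}{\epsilon_1+\epsilon_2}\kappa\bigr)^{1/2}$, and the surviving prefactor and Gaussian weight collapse to exactly $(2\pi\lambda^2(\epsilon_1+\epsilon_2)\kappa)^{-1/2}$ and $\exp\bigl(-\tfrac{\lambda^{-2}}{2(\epsilon_1+\epsilon_2)\kappa}(z-\eta)^2\bigr)$, giving the statement. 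This is the familiar fact that the convolution of two Gaussians is a Gaussian whose variance is the sum of the variances, with $\lambda^2\epsilon_i\kappa$ in the role of the $i$-th variance.

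The only delicate point — and the place where I would expect to spend the bulk of the write-up if I followed the second route — is justifying the interchange of the two formal integrations and the completing-the-square substitution within the \emph{formal} Gaussian-integral framework of \S\ref{sec-pre-realization}, rather than for genuinely convergent integrals. This is, however, the same type of manipulation already validated in the proofs of Theorem \ref{thm-Gaussian}, Theorem \ref{thm-transf-1} and Theorem \ref{thm-transf-2}: one verifies the equality coefficient-by-coefficient in the expansion in $\lambda$ and in the formal variables $\eta,z$, and the underlying diagrammatic statement that makes this work is precisely Theorem \ref{thm-epsilon-real} — equivalently, the additivity $(\epsilon_1+\epsilon_2)^{|E(\Gamma)|}=\sum_{E\subset E(\Gamma)}\epsilon_1^{|E|}\epsilon_2^{|E(\Gamma)\setminus E|}$ extracted in the proof of Theorem \ref{thm-epsilon-abstract}. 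If one instead takes the first route, there is really no obstacle at all: the theorem is an immediate corollary of Theorem \ref{thm-epsilon-real} and Theorem \ref{thm-transf-1}, which is presumably why the paper states it right after them.
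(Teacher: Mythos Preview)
Your first route is exactly the paper's argument: the theorem is stated immediately after the two displayed Gaussian-integral identities with the phrase ``and Theorem \ref{thm-epsilon-real} gives us,'' meaning one simply recognizes via Theorem \ref{thm-epsilon-real} that $W_{g,n}$ is the realization with vertex data $\{F_{g,n}\}$ and propagator $(\epsilon_1+\epsilon_2)\kappa$, and then invokes Theorem \ref{thm-transf-1} with that propagator. Your alternative convolution-of-Gaussians computation is also correct and is in fact the calculation the paper carries out later, in the Remark following Theorem \ref{thm-grouplaw}, to give a second proof of $\cS_{\kappa_1}\circ\cS_{\kappa_2}=\cS_{\kappa_1+\kappa_2}$.
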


\subsection{A group action on the space of field theories}

Recall that in \S \ref{sec:Space} we have defined
the space of field theories $\cW$ together with
a family of Fourier-like transformations
\ben
\cS_\kappa:\quad\cW\to\cW
\een
on this space, and showed that
\ben
\cS_\kappa\circ\cS_{-\kappa}=\cS_{-\kappa}\circ\cS_{\kappa}=\cS_0= id_{\mathcal{W}},
\een
where $\kappa\in\cA$ is a smooth function (or a formal power series) in $t$.
In this subsection we will discuss the composition of two such transformations.

Given two propagators $\kappa_1,\kappa_2\in \cA$,
a priori we do not know that the composition of $\cS_{\kappa_1}$ and $\cS_{\kappa_2}$
is still of the same form,
i.e., $\cS_{\kappa_1}\circ\cS_{\kappa_2}=\cS_{\kappa'}$ for some $\kappa'\in \cA$.
But if this is true for every $\kappa_1,\kappa_2\in\cA$,
then this structure gives us a natural group structure on
the space of all such transformations $\{\cS_\kappa\}_{\kappa\in\cA}$.

Our main theorem in this subsection is the following:

\begin{Theorem}\label{thm-grouplaw}
For arbitrary two propagators $\kappa_1$ and $\kappa_2$,
we have:
\ben
\cS_{\kappa_1}\circ\cS_{\kappa_2}=\cS_{\kappa_2}\circ\cS_{\kappa_1}=\cS_{\kappa_1+\kappa_2}.
\een
I.e., we have $\{\cS_\kappa\}_{\kappa\in \cA}\cong \cA$ as abelian groups.
\end{Theorem}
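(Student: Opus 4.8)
The plan is to obtain the group law as an essentially immediate consequence of the collapse formula for Type-$(\epsilon_1,\epsilon_2)$ stable graphs, transported to realizations; the two already-established results Theorem~\ref{thm-epsilon-abstract} and Theorem~\ref{thm-epsilon-real} do the bulk of the work.

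First I would unwind the definitions. By Theorem~\ref{thm-transf-1}, the transform $\cS_\kappa$ sends a theory $\{F_{g,n}\}\in\cW$ to the theory with coefficients $\tF_{g,n}=n!\,\wF_{g,n}(t,\kappa)=n!\sum_{\Gamma\in\cG_{g,n}^c}\frac{\kappa^{|E(\Gamma)|}}{|\Aut(\Gamma)|}\prod_{v\in V(\Gamma)}F_{g_v,\val_v}$; this is exactly the realization, with all edge weights equal to $\kappa$, of the Fourier-like transform $\Phi_1$ of \S\ref{sec-epsilon} (equivalently, realizing the abstract propagator by the constant $1$, it is the realization of $\Phi_\kappa$). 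Now I would compute $\cS_{\kappa_1}\circ\cS_{\kappa_2}$ on $\{F_{g,n}\}$ directly: applying $\cS_{\kappa_2}$ produces vertex weights $U_{g,n}=n!\sum_{\Gamma}\frac{\kappa_2^{|E(\Gamma)|}}{|\Aut(\Gamma)|}\prod_v F_{g_v,\val_v}$, and then applying $\cS_{\kappa_1}$ to these produces $n!\sum_{\Gamma'}\frac{\kappa_1^{|E(\Gamma')|}}{|\Aut(\Gamma')|}\prod_v U_{g_v,\val_v}$. This is precisely the quantity $W_{g,n}$ of Theorem~\ref{thm-epsilon-real} under the identifications $\epsilon_1\kappa=\kappa_2$ and $\epsilon_2\kappa=\kappa_1$ (take the abstract propagator to be $1$, $\epsilon_1=\kappa_2$, $\epsilon_2=\kappa_1$; here $\kappa_i\in\cA$, so this is legitimate). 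Hence that theorem gives $W_{g,n}=n!\sum_{\Gamma\in\cG_{g,n}^c}\frac{(\kappa_1+\kappa_2)^{|E(\Gamma)|}}{|\Aut(\Gamma)|}\prod_v F_{g_v,\val_v}$, which is exactly the coefficient produced by $\cS_{\kappa_1+\kappa_2}$. Thus $\cS_{\kappa_1}\circ\cS_{\kappa_2}=\cS_{\kappa_1+\kappa_2}$; the reversed composition gives $\cS_{\kappa_2+\kappa_1}$, the same transform, proving commutativity (this also follows from the Corollary to Theorem~\ref{thm-epsilon-abstract}). The degenerate cases are trivial: if $\kappa_2=0$ then $\cS_{\kappa_2}=\id_\cW$ by the delta-function convention and the identity is tautological, and when $\kappa_1+\kappa_2=0$ it specializes to Theorem~\ref{thm-transf-3}. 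Finally, $\kappa\mapsto\cS_\kappa$ is a homomorphism $(\cA,+)\to(\{\cS_\kappa\}_{\kappa\in\cA},\circ)$; it is surjective by definition and injective because, for instance, the induced transformation on the $(0,4)$-coefficient is $F_{0,4}\mapsto F_{0,4}+3\kappa F_{0,3}^2$, so $\cS_\kappa=\id_\cW$ forces $\kappa=0$. This yields $\{\cS_\kappa\}_{\kappa\in\cA}\cong\cA$ as abelian groups.

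As an independent confirmation (in the spirit of the remark following \eqref{eqn:Fourier}), one may argue purely with formal Gaussian integrals. Writing $(\cS_\kappa Z)(y)=\frac{1}{(2\pi\lambda^2\kappa)^{1/2}}\int Z(x)\exp\bigl(-\frac{\lambda^{-2}}{2\kappa}(x-y)^2\bigr)\,dx$, the composite $(\cS_{\kappa_1}\cS_{\kappa_2}Z)(y)$ is a double integral; completing the square in the intermediate variable and performing that Gaussian integration convolves the two kernels into a single Gaussian of parameter $\kappa_1+\kappa_2$, while the normalization constants together with the Jacobian combine to $\frac{1}{(2\pi\lambda^2(\kappa_1+\kappa_2))^{1/2}}$ --- this is the Chapman--Kolmogorov/heat-semigroup identity. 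The only genuine obstacle in either route is routine bookkeeping: in the diagrammatic argument, checking that the iterated realization of $\cS_{\kappa_1}\circ\cS_{\kappa_2}$ really coincides with the $W_{g,n}$ of Theorem~\ref{thm-epsilon-real} (so that all the $n!$, $\val_v!$ and automorphism factors are accounted for); in the Gaussian-integral argument, justifying the change of variables at the level of formal power series in $\lambda$, which is the standard Feynman-diagram interpretation of such integrals already used implicitly throughout the paper.
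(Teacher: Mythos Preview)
Your proposal is correct and follows essentially the same route as the paper: the paper's proof is a one-line reduction to Theorem~\ref{thm-epsilon-real} via the substitution $\epsilon_1=\kappa_1/\kappa$, $\epsilon_2=\kappa_2/\kappa$, and your argument is the same reduction with the equivalent choice $\kappa=1$, $\epsilon_i=\kappa_j$. You additionally supply the injectivity of $\kappa\mapsto\cS_\kappa$ (which the paper leaves implicit) and sketch the Gaussian-convolution alternative that the paper records separately as a remark.
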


\begin{proof}
This theorem is a direct corollary of Theorem \ref{thm-epsilon-real}.
All we need to do is to take $\epsilon_1=\frac{\kappa_1}{\kappa}$
and $\epsilon_2=\frac{\kappa_2}{\kappa}$.
\end{proof}

\begin{Example}
Let $\{F_{g,n}\}\in \cW$,
and $\kappa_1,\kappa_2$ be two propagators.
Denote by $\{\tF_{g,n}\}=\cS_{\kappa_1}\big(\{F_{g,n}\}\big)$,
and $\{G_{g,n}\}=\cS_{\kappa_2}\big(\{\tF_{g,n}\}\big)$.
Then the following examples can be checked by direct computations:
\ben
G_{0,3}&=&\tF_{0,3}=F_{0,3};\\
G_{0,4}&=&\tF_{0,4}+3\kappa_2 \tF_{0,3}^2\\
&=&F_{0,4}+3\kappa_1F_{0,3}^2+3\kappa_2 F_{0,3}^2;\\
G_{1,1}&=&\tF_{1,1}+\half\kappa_2\tF_{0,3}\\
&=&F_{1,1}+\half(\kappa_1+\kappa_2)F_{0,3};\\
G_{1,2}&=&\tF_{1,2}+\kappa_2\tF_{1,1}\tF_{0,3}+\half\kappa_2\tF_{0,4}+\kappa_2^2\tF_{0,3}^2\\
&=&F_{1,2}+(\kappa_1+\kappa_2)F_{1,1}F_{0,3}+\half(\kappa_1+\kappa_2)F_{0,4}
+(\kappa_1+\kappa_2)^2F_{0,3}^2,\\
G_2&=&
\tF_{2,0}+\half\kappa_2\tF_{1,1}^2+\half \kappa_2\tF_{1,2}^2
+\half\kappa_2^2\tF_{1,1}\tF_{0,3}+\frac{1}{8}\kappa_2^2\tF_{0,4}+\frac{5}{24}\kappa_2^3\tF_{0,3}^2\\
&=&F_{2,0}+\half(\kappa_1+\kappa_2)F_{1,1}^2+\half (\kappa_1+\kappa_2)F_{1,2}^2
+\half(\kappa_1+\kappa_2)^2F_{1,1}F_{0,3}\\
&&+\frac{1}{8}(\kappa_1+\kappa_2)^2F_{0,4}+\frac{5}{24}(\kappa_1+\kappa_2)^3F_{0,3}^2.
\een
\end{Example}

\begin{Remark}
As in \S \ref{sec:Space},
we can also regard $\cS_{\kappa_1}$ and $\cS_{\kappa_2}$ as acting on the partition functions.
Then they are given by Fourier transforms of the type as in \eqref{eqn:Fourier}:
\ben
&& (\cS_{\kappa_i} Z)(y) = \frac{1}{(2\pi\lambda^{2}\kappa_i)^{\frac{1}{2}}}
\int Z(x) \cdot \exp\biggl(-\frac{\lambda^{-2}}{2\kappa_i}(x-y)^2
\biggr)dx,
\een
for $i=1,2$.
Therefore,
\ben
&& (\cS_{\kappa_2}(\cS_{\kappa_1} Z))(z) \\
& = & \frac{1}{(2\pi\lambda^{2}\kappa_2)^{\frac{1}{2}}}
\int (\cS_{\kappa_2}Z)(y) \cdot \exp\biggl(-\frac{\lambda^{-2}}{2\kappa_2}(y-z)^2  \biggr)dy\\
& = & \frac{1}{2\pi\lambda^{2}(\kappa_1\kappa_2)^{1/2}}
\int\int Z(x) \cdot \exp\biggl(-\frac{\lambda^{-2}}{2\kappa_1}(x-y)^2-\frac{\lambda^{-2}}{2\kappa_2}(y-z)^2  \biggr)dxdy\\
& = & \frac{1}{2\pi\lambda^{2}(\kappa_1\kappa_2)^{1/2}}
\int\int Z(x) \cdot \exp\biggl(-\frac{\lambda^{-2}(x-z)^2}{2(\kappa_1+\kappa_2)} \\
&& -\frac{\lambda^{-2}(\kappa_1+\kappa_2)(y-(\kappa_2 x+\kappa_1z)/(\kappa_1+\kappa_2))^2}{2\kappa_1\kappa_2 }  \biggr)dxdy\\
& = & \frac{1}{(2\pi\lambda^{2}(\kappa_1+\kappa_2))^{1/2} }
\int Z(x) \cdot \exp\biggl(-\frac{\lambda^{-2}}{2(\kappa_1+\kappa_2)}(x-z)^2   \biggr)dx \\
& = & (\cS_{\kappa_1+\kappa_2}Z)(z).
\een
This gives us another proof of Theorem \ref{thm-grouplaw}.
\end{Remark}

\section{Application: the Holomorphic Anomaly Equations}
\label{sec:HAE}

In this section,
we derive the holomorphic anomaly equations
as an application of the duality theory.
We will use the version of these equations in \cite{gkmw}.
In \cite[\S 5]{wz},
we have already interpreted the special case
$\kappa=-\frac{1}{2\sqrt{-1}}\big(Im (\tau)\big)^{-1}$,
where $\tau=\frac{\pd^2 F_0}{\pd t_0^2}$
and $F_0$ is the genus zero free energy.
We will see that our formalism can be applied to understand the general case
$\kappa=-\frac{1}{2\sqrt{-1}}\big(Im (\tau)\big)^{-1}+\cE$
where $\cE$ is a holomorphic function in $t$.

\subsection{Backgrounds about the holomorphic anomaly equations}

The holomorphic anomaly equations were introduced by
Bershadsky, Cecotti, Ooguri, and Vafa \cite{bcov1, bcov2}
in order to compute the the Gromov-Witten invariants of the
quintic Calabi-Yau threefolds.
These authors introduced a sequence of non-holomorphic free energy
$\cF_g(t,\bar{t})$,
and solved them recursively
using the following holomorphic anomaly equations
(cf. \cite[(3.6)]{bcov2}):
\be
\bar\partial_{\bar{i}}\cF_g=
\frac{1}{2}\bar{C}_{\bar{i}\bar{j}\bar{k}}e^{2K}G^{j\bar{j}}G^{k\bar{k}}
(D_j D_k\cF_{g-1}+\sum_{r=1}^{g-1}D_j\cF_r D_k\cF_{g-r}), \quad g\geq 2.
\ee
Then the free energy of genus $g$
can be obtained by
formally taking the following `holomorphic limit':
\ben
F_g(t)=\lim_{\bar{t}\to\infty}\cF_g(t,\bar{t}).
\een
The non-holomorphic free energies $\cF_g(t,\bar{t})$ are supposed
to have modularity,
while the holomorphic free energies $F_g(t)$ are not.
The ring of modular forms are often polynomial algebras with finitely many generators,
and some polynomiality properties of $\cF_g(t,\bar{t})$ are studied by
Yamaguchi-Yau \cite{yy}.

The non-holomorphic free energy are supposed to be closely related to some
geometric structures on the relevant moduli space according to \cite{bcov2}.
The genus zero free energy is known to be related to the special K\"ahler geometry
of the moduli space \cite{cdgp}, and the genus one free energy is
related to the $tt^*$-geometry \cite{cv, dub} and
the theory of analytic torsions \cite{bcov1, cfiv, cv2}.
The higher genus non-holomorphic free energies are supposed to be some
suitable non-holomorphic sections of a holomorphic line bundle
on the moduli space.

In the work \cite{wit},
Witten interpreted the holomorphic anomaly equations from the point of view
of geometric quantization of symplectic vector spaces and
background independence. Inspired by this work,
Aganagic, Bouchard and Klemm \cite{abk} obtained
the following general form for $\wF_g$:
\be
\wF_g(t, \bar{t}) = F_g(t) + \Gamma_g \biggl(\Delta^{IJ},
\pd_{I_1} \cdots \pd_{I_n}F_{r<g}(t)\biggr).
\ee
In  some special cases, the propagator
$\Delta^{IJ}$ is of the form $-(\tau -\bar{\tau})^{IJ}$
where $\tau = (\tau_{ij}) = \big(\frac{\pd^2F_0}{\pd t_i\pd t_j})$.
And these cases were interpreted for matrix models
using Eynard-Orantin topological recursion
by Eynard, Mari\~no and Orantin \cite{emo}.
They also represented the partition function as a
formal Gaussian integral \cite[(4.27)]{emo}
and presented the Feynman graphs and Feynman rules
for the terms that contribute to $\Gamma_g$.
Grimm, Klemm, Mari\~no, and Weiss used the method of direct integration
to solve the holomorphic anomaly equations in \cite{gkmw}.
They reformulated the holomorphic anomaly equation as a
quadratic recursion relation for the derivative of $\wF_g$
with respective to the propagators
$\Delta^{IJ}$ (cf. \cite[(7.50)]{gkmw}:
\be \label{eqn:HA}
\frac{\pd \wF_g}{\pd \Delta^{IJ}} =
\frac{1}{2}
D_I\pd_J\wF_{g-1} + \frac{1}{2}
\sum^{g-1}_{r=1}
\pd_I\wF_r\pd_J\wF_{g-r}.
\ee
In this work, the propagator can be taken to be of the form
\ben
\Delta^{IJ}=-\frac{1}{2\sqrt{-1}}\big(Im(\tau)^{-1}\big)^{-1}+\cE^{IJ},
\een
where $\cE$ is a holomorphic function.

Inspired by the physics literatures \cite{abk, ey, emo, gkmw},
the authors interpreted the special case $\cE=0$ of the holomorphic anomaly equation
as a particular realization of the abstract quantum field theory in \cite[\S 5]{wz}.
In order to obtain the holomorphic anomaly equation,
We need to realize the operators $K$, $\pd$, $\cD$
by the partial derivatives $\pd_\kappa$, $\pd_t$,
and the covariant derivative $D_t=\pd_t+n\kappa F_{0,3}$ respectively.
In this special case $\cE=0$,
the realization of the abstract QFT is given by
the Feynman rule in Example \ref{eg-HAE-special}.
In this case the propagator is $\kappa=\frac{1}{\bar{\tau}-\tau}$,
and we can easily check:
\ben
\frac{\pd}{\pd t}\kappa=\kappa^2\cdot F_{0,3}.
\een
Therefore in this special case the partial derivative $\pd_t$ is indeed a realization of the
edge-adding operator $\pd$.
For details, see \cite[\S 4, \S 5]{wz}.

In the rest of this section,
let us consider the general case
$\kappa=\frac{1}{\bar{\tau}-\tau}+\cE$,
where $\cE$ is a holomorphic function in $t$.
We will see that in this general case the Feynman rule we need
is not the one in Example \ref{eg-HAE-special} anymore.
We need to start from the Feynman rule for the dual abstract QFT.

\subsection{The realization of the dual abstract QFT}
\label{sec:KleZas}

In this subsection we formulate the Feynman rule of the dual abstract QFT
for the holomorphic anomaly equations,
inspired by physics literatures \cite{gkmw, kz}.

Let us first recall some results in these literatures.
In \cite{kz}, Klemm and Zaslow solved the holomorphic anomaly equation
at genus $2$ and $3$ directly. In their notations,
the solutions are given by the following recursive formulas
(cf.\cite[\S 3.5]{kz}):
\begin{flalign} \label{eqn:F2}
\begin{split}
F^{(2)}=&-\frac{1}{8}S_2^2F_{,4}^{(0)}
+\frac{1}{2}S_2F_{,2}^{(1)}
+\frac{5}{24}S_2^3(F_{,3}^{(0)})^2
- \frac{1}{2}S_2^2F_{,1}^{(1)}F_{,3}^{(0)}\\
+&\frac{1}{2}S_2(F_{,1}^{(1)})^2
+f^{(2)},
\end{split}&&
\end{flalign}
\begin{flalign}  \label{eqn:F3}
\begin{split}
F^{(3)}=&S_2F_{,1}^{(2)}F_{,1}^{(1)}
-\frac{1}{2}S_2^2F_{,1}^{(2)}F_{,3}^{(0)}
+\frac{1}{2}S_2F_{,2}^{(2)}
+\frac{1}{6}S_2^3(F_{,1}^{(1)})^3F_{,3}^{(0)} \\
- & \frac{1}{2}S_2^2F_{,2}^{(1)}(F_{,1}^{(1)})^2
-\frac{1}{2}S_2^4(F_{,1}^{(1)})^2(F_{,3}^{(0)})^2
+\frac{1}{4}S_2^3(F_{,1}^{(1)})^2F_{,4}^{(0)} \\
+& S_2^3F_{,2}^{(1)}F_{,1}^{(1)}F_{,3}^{(0)}
-\frac{1}{2}S_2^2F_{,3}^{(1)}F_{,1}^{(1)}
-\frac{1}{4}S_2^2(F_{,2}^{(1)})^2
+\frac{5}{8}S_2^5F_{,1}^{(1)}(F_{,3}^{(0)})^3 \\
-& \frac{2}{3}S_2^4F_{,1}^{(1)}F_{,4}^{(0)}F_{,3}^{(0)}
-\frac{5}{8}S_2^4F_{,2}^{(1)}(F_{,3}^{(0)})^2
+\frac{1}{4}S_2^3F_{,2}^{(1)}F_{,4}^{(0)}\\
+&\frac{5}{12}S_2^3F_{,3}^{(1)}F_{,3}^{(0)}
+  \frac{1}{8}S_2^3F_{,5}^{(0)}F_{,1}^{(1)}
-\frac{1}{8}S_2^2F_{,4}^{(1)}
-\frac{7}{48}S_2^4F_{,5}^{(0)}F_{,3}^{(0)}\\
+&\frac{25}{48}S_2^5F_{,4}^{(0)}(F_{,3}^{(0)})^2
- \frac{5}{16}S_2^6(F_{,3}^{(0)})^4
-\frac{1}{12}S_2^4(F_{,4}^{(0)})^2
+\frac{1}{48}S_2^3F_{,6}^{(0)}
+f^{(3)},
\end{split}&&
\end{flalign}
where $F^{(g)}$ is the non-holomorphic free energy of genus $g$;
$F_{,n}^{(g)}:=(D_z)^n F^{(g)}$ are the $n$-th covariant derivatives
of $F^{(g)}$; $f^{(g)}$ are the holomorphic ambiguity of genus $g$;
and $S_2$ is the propagator.
They also interpreted these two recursions as summations over
stable graphs of genus $2$ and $3$ respectively:
\be\label{eq-kz-fr}
\begin{split}
&F^{(2)}=-\sum_{\Gamma\in\cG_{2,0}^c\backslash\{V_2\}}\frac{1}{|\Aut(\Gamma)|}
\prod_{v\in V(\Gamma)}F_{,\val(v)}^{(g_v)}
\prod_{e\in E(\Gamma)}(-S_2)+f^{(2)},\\
&F^{(3)}=-\sum_{\Gamma\in\cG_{3,0}^c\backslash\{V_3\}}\frac{1}{|\Aut(\Gamma)|}
\prod_{v\in V(\Gamma)}F_{,\val(v)}^{(g_v)}
\prod_{e\in E(\Gamma)}(-S_2)+f^{(3)},\\
\end{split}
\ee
where $V_g$ ($g\geq 2$) is the stable graph consisting of one single vertex
of genus $g$.

Now let us rewrite the above equations in the following way:
we move $f^{(g)}$ to the left-hand-side,
and move $F^{(g)}$ to the right-hand-side;
and we also write these formulas as summations over dotted stable graphs
instead of ordinary stable graphs.
In this way, we obtain:

\ben
&&f^{(2)}=\sum_{\Gamma\in\cG_{2,0}^{\vee,c}}\frac{1}{|\Aut(\Gamma)|}
\prod_{v^\vee\in V^\vee(\Gamma)}F_{,\val(v^\vee)}^{(g_v^\vee)}
\prod_{e^\vee\in E^\vee(\Gamma)}(-S_2),\\
&&f^{(3)}=\sum_{\Gamma\in\cG_{3,0}^{\vee,c}}\frac{1}{|\Aut(\Gamma)|}
\prod_{v^\vee\in V^\vee(\Gamma)}F_{,\val(v^\vee)}^{(g_v^\vee)}
\prod_{e^\vee\in E^\vee(\Gamma)}(-S_2).\\
\een
Now These equations suggest us to interpret $\{F_{,n}^{(g)},-S_2\}$
as a realization of the dual abstract QFT.
This realization induces a realization of the abstract QFT,
and by our duality theorem (see Theorem \ref{thm-dual-realization}),
the weight of an ordinary stable vertex of genus $g$
is the holomorphic ambiguity $f^{(g)}$,
and the propagator is $S_2$.

Now we are able to formulate a mathematical construction for the holomorphic anomaly equations.
First, suppose we have a holomorphic function $\tau=\tau(t)$.
Denote $\tF_{0,3}(t)=F_{0,3}:=\frac{\pd}{\pd t}\tau$,
then $\tF_{0,3}(t)$ is also a holomorphic function in $t$.
In physics literatures,
$\tau$ and $\tF_{0,3}$ play the roles of the periods and Yukawa couplings respectively,
and they should be the second and third derivatives of a prepotential respectively,
see \cite{cdgp, gkmw}.
The imaginary part of $\tau$ gives us the K\"ahler potential of a special K\"ahler structure
of the moduli space \cite{str, fr}.

Define $\kappa$ to be the following function in $t$:
\be \label{eqn:Kappa-E}
\kappa:=-\frac{1}{2\sqrt{-1}}\big(Im(\tau)\big)^{-1}+\cE(t),
\ee
where $\cE(t)$ is a holomorphic function in $t$.
The function $\kappa(t)$ will play the role of propagator
(see \cite[(7.24)]{gkmw}).

Moreover,
we need the following data:
a function $\tF_{1,1}(t,\kappa)$
and a sequence of holomorphic functions $\{F_{g}\}_{g\geq 2}$.
Here $\tF_{1,1}(t,\kappa)$ is of the form
\ben
\tF_{1,1}=\half \tF_{0,3}\cdot\kappa+\pd_t f^{(1)}
\een
where $f^{(1)}$ is a holomorphic function
(see \cite[(7.27)]{gkmw}).
The functions $F_{g}$ ($g\geq 2$) plays the role
of the holomorphic ambiguity $f^{(g)}$.

Now let us construct a realization of the dual abstract QFT as follows.
\begin{itemize}
\item[(1)]
First,
we require the weight of a dotted trivalent stable vertex
of genus $0$ to be $\tF_{0,3}$,
and the weight of a dotted stable vertex
of genus $1$ and valence $1$ to be $\tF_{1,1}$.
\item[(2)]
The weight of a dotted internal edge is $-\kappa$.
\item[(3)]
If the weight of a dotted stable vertex of genus $g$ and valence $n$ is $\tF_{g,n}$,
then we define the weight of a dotted stable vertex of genus $g$ and valence $n+1$
to be:
\ben
\tF_{g,n+1}:=
D_t(\tF_{g,n})=\big(\pd_t-\frac{n}{2\sqrt{-1}}\cdot
\tF_{0,3}\cdot(Im(\tau))^{-1}\big)\tF_{g,n},
\een
where $D_t$ is the covariant derivative associated to the special K\"ahler structure.
\item[(4)]
For $g\geq 2$,
we define the weight of a dotted stable vertex of genus $g$ and valence $0$ to be
(eg. see the recursive definitions \eqref{eqn:F2} and \eqref{eqn:F3}):
\ben
\qquad
\tF_{g}=\tF_{g,0}:=-\sum_{\Gamma^\vee\in\cG_{g,0}^{^\vee,c}\backslash\{V_g^\vee\}}
\frac{1}{|\Aut(\Gamma^\vee)|}
\prod_{v^\vee\in V(\Gamma^\vee)}\tF_{g_{v^\vee},\val_{v^\vee}}
\prod_{e^\vee\in E(\Gamma^\vee)}(-\kappa)+F_g.
\een

\end{itemize}

The above procedure gives us a recursive definition of
a sequence of functions $\{\tF_{g,n}(t,\kappa)\}_{2g-2+n>0}$,
thus we indeed obtain a realization of the dual abstract QFT.
Following \cite[\S 7.4]{gkmw},
let us impose some other conditions on this realization.
First,
the connection $D_t$ can be split as
(see \cite[(7.41),(7.42)]{gkmw}):
\be\label{eq-holoconn}
D_t =\check{D}_t+n\kappa\cdot\tF_{0,3},
\ee
where $\check{D}_t =\pd_t-n\cE\tF_{0,3}$ is the holomorphic connection,
which maps a holomorphic section to a holomorphic section.
We also require
(see \cite[7.47]{gkmw}):
\be\label{eq-dt-propagator}
D_t(\kappa)=-\kappa^2\cdot \tF_{0,3}+\cE_4\cdot\tF_{0,3},
\ee
where $\cE_4$ holomorphic,
and
\be\label{eq-Dt-Leibniz}
\begin{split}
D_t(\omega_{\Gamma^\vee})=&
\sum_{v^\vee\in V^\vee(\Gamma^\vee)}D_t(\omega_v)\cdot
\prod_{v_1^\vee\not=v^\vee}\omega_{v_1}
\prod_{e^\vee\in E^\vee(\Gamma^\vee)}\omega_e\\
&+\sum_{e^\vee\in E^\vee(\Gamma^\vee)}D_t(\omega_e)\cdot
\prod_{e_1^\vee\not=e^\vee}\omega_{e_1}
\prod_{v^\vee\in V^\vee(\Gamma^\vee)}\omega_v
\end{split}
\ee
for every dotted stable graph $\Gamma^\vee$.

Now following \S\ref{sec:Indep} we know that,
if this realization satisfies the Independence Assumption,
then we automatically obtain the quadratic recursion relation
described in Theorem \ref{thm-realization-HAE}.
We will check the Independence Assumption in this case
in the next subsection.

\subsection{Checking the Independence Assumption}

Let us first briefly summarize what we have done in this section.

In physics literatures (eg. BCOV \cite{bcov2}),
the mathematical definition of the non-holomorphic free energies $F^{(g)}$ is not clear.
In \cite{kz},  Klemm and Zaslow solve the holomorphic anomaly equations
to obtain expressions and Feynman rules \eqref{eq-kz-fr}
for the solutions $F^{(g)}$  in the case of $g=2,3$.
For the convenience of the reader,
we have recalled their results in \eqref{eqn:F2} and in \eqref{eqn:F3} in \S \ref{sec:KleZas}.
In the previous subsection,
we have constructed our formulation of the holomorphic
anomaly equation based on the duality theory developed in this paper:
we have started with functions $\tau$, $\tF_{0,3}$, $\kappa$, $\tF_{1,1}$
together with a sequence of holomorphic functions $\{F_g\}_{g\geq 2}$,
and recursively defined a realization $\{\tF_{g,n}(t,\kappa)\}$
of the dual abstract QFT.
In order to obtain the holomorphic anomaly equations for $\tF_g=\tF_{g,0}$,
we only need to check the Independence Assumption.
We will do this in the present subsection.

Let $\{F_{g,n}\}_{2g-2+n>0}$ be the induced realization  of the abstract QFT.
We need to show that $F_{g,n}$ is independent of $\kappa$ for every $2g-2+n>0$.
Since $\kappa$ is non-holomorphic,
we only need to show that $F_{g,n}$ is holomorphic for every $2g-2+n>0$.

\begin{Example}
Let us check some examples of $F_{g,n}$.
First, we have $F_{0,3}=\tF_{0,3}$,
which is clearly holomorphic.

Now consider $F_{0,4}$.
We have:
\ben
F_{0,4}&=&
\tF_{0,4}-3\kappa \tF_{0,3}^2\\
&=&D_t\tF_{0,3}-3\kappa \tF_{0,3}\cdot\tF_{0,3}\\
&=&(D_t-3\kappa \tF_{0,3})\tF_{0,3}\\
&=&\check{D}_t \tF_{0,3},
\een
thus it is also holomorphic.

Now let us compute $F_{0,5}$.
First, we have:
\be\label{eq-f05}
F_{0,5}=\tF_{0,5}-10\kappa\tF_{0,3}\tF_{0,4}+15\kappa^2\tF_{0,3}^3,
\ee
where $\tF_{0,5}$ can be rewritten as:
\ben
\tF_{0,5}&=&D_t \tF_{0,4}\\
&=&D_t(F_{0,4}+3\kappa\cdot\tF_{0,3}^2)\\
&=&\check{D}_t F_{0,4}+4\kappa\tF_{0,3}\cdot F_{0,4}
+6\kappa \tF_{0,3}\tF_{0,4}+3\tF_{0,3}^2(-\kappa^2\tF_{0,3}+\cE_4\cdot \tF_{0,3})\\
&=&\check{D}_t F_{0,4}+10\kappa\tF_{0,3}\tF_{0,4}
-15\kappa^2\tF_{0,3}^3+3\cE_4\cdot\tF_{0,3}^3.
\een
Plug this into \eqref{eq-f05},
then we have:
\ben
F_{0,5}=\check{D}_t F_{0,4}+3\cE_4\cdot\tF_{0,3}^3.
\een
Since $F_{0,4}$ is holomorphic,
we know that $F_{0,5}$ is also holomorphic.

Similarly,
we can check:
\ben
F_{0,6}&=&\check{D}_t F_{0,5}+10\cE_4\tF_{0,3}^2\tF_{0,4}-30\kappa\cE_4\cdot \tF_{0,3}^4\\
&=&\check{D}_t F_{0,5}+10\cE_4\tF_{0,3}^2\cdot(\tF_{0,4}-3\kappa\tF_{0,3}^2).
\een
Since $F_{0,5}$ and $\tF_{0,4}-3\kappa\tF_{0,3}^2=F_{0,4}$ are both holomorphic,
we know that $F_{0,6}$ is also holomorphic.

Now let us see some examples of genus one.
By definition,
the function
\ben
F_{1,1}=\tF_{1,1}-\half\kappa\tF_{0,3}=\pd_t f^{(1)}
\een
is holomorphic.

Now we compute $F_{1,2}$.
First, we have
\be\label{eq-f12}
F_{1,2}=\tF_{1,2}-\half\kappa \tF_{0,4}-\kappa\tF_{0,3}\tF_{1,1}
+\kappa^2\tF_{0,3}^2,
\ee
where $\tF_{1,2}$ can be rewritten as:
\ben
\tF_{1,2}&=&D_t\tF_{1,1}\\
&=&D_t(F_{1,1}+\half\kappa\tF_{0,3})\\
&=&\check{D}_t F_{1,1}+\kappa\tF_{0,3}\cdot F_{1,1}
+\half\kappa\tF_{0,4}+\half\tF_{0,3}(-\kappa^2\tF_{0,3}+\cE_4\cdot \tF_{0,3})\\
&=&\check{D}_t F_{1,1}+\kappa\cdot \tF_{0,3}\tF_{1,1}
+\half\kappa\tF_{0,4}-\kappa^2\tF_{0,3}^2+\half\cE_4\cdot \tF_{0,3}^2,
\een
(see also \cite[(7.48)]{gkmw}).
Plug this into \eqref{eq-f12}, then we have:
\ben
F_{1,2}=\check{D}_t F_{1,1}+\half\cE_4\cdot \tF_{0,3}^2.
\een
Since $F_{1,1}$ is holomorphic,
we know that $F_{1,2}$ is also holomorphic.
\end{Example}

From the above computations,
one might see that:
\begin{Lemma}
\label{lem-holo}
Assume that $2g-2+n>0$,
then we have:
\ben
F_{g,n+1}=\check{D}_t F_{g,n}+
\cE_4\tF_{0,3}\cdot A_{g,n}.
\een
where $A_{g,n}$ is given by:
\ben
A_{g,n}=
\half F_{g-1,n+2}+\half
\sum_{\substack{g_1+g_2=g,n_1+n_2=n,\\n_1\geq 1, n_2\geq 2}}
\frac{n!}{(n_1-1)!\cdot(n_2-1)!}\cdot F_{g_1,n_1}\cdot F_{g_2,n_2}.
\een
\end{Lemma}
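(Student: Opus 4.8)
The plan is to prove Lemma \ref{lem-holo} by a direct computation starting from the explicit graph sum furnished by the realization of the duality theorem, Theorem \ref{thm-dual-realization}:
\[
F_{g,n}=n!\cdot\wF^\vee_{g,n}=n!\sum_{\Gamma^\vee\in\cG_{g,n}^{\vee,c}}\frac{1}{|\Aut(\Gamma^\vee)|}\,\omega_{\Gamma^\vee},\qquad
\omega_{\Gamma^\vee}=(-\kappa)^{|E^\vee(\Gamma^\vee)|}\prod_{v^\vee\in V^\vee(\Gamma^\vee)}\tF_{g_{v^\vee},n_{v^\vee}} .
\]
First I would apply the holomorphic connection $\check D_t$ to each weight $\omega_{\Gamma^\vee}$, using the Leibniz rule \eqref{eq-Dt-Leibniz}, the splitting $\check D_t=D_t-n\kappa\tF_{0,3}$ of \eqref{eq-holoconn} on a weight-$n$ section, the defining recursion $D_t\tF_{g_{v^\vee},n_{v^\vee}}=\tF_{g_{v^\vee},n_{v^\vee}+1}$, the propagator identity \eqref{eq-dt-propagator} in the form $D_t(-\kappa)=\kappa^2\tF_{0,3}-\cE_4\tF_{0,3}$, and the counting identity $\sum_{v^\vee}n_{v^\vee}=n+2|E^\vee(\Gamma^\vee)|$ (so that $\omega_{\Gamma^\vee}$ has weight $n$). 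The vertex contributions $\tF_{g_{v^\vee},n_{v^\vee}+1}$ together with the $\kappa^2\tF_{0,3}$-part of the edge contributions assemble exactly into $\widetilde{\pd^\vee}\omega_{\Gamma^\vee}$, the realization of the dual edge-adding operator, and the remaining pieces collapse to the clean per-graph identity
\[
\check D_t\,\omega_{\Gamma^\vee}=\widetilde{\pd^\vee}\,\omega_{\Gamma^\vee}+|E^\vee(\Gamma^\vee)|\,\frac{\cE_4\tF_{0,3}}{\kappa}\,\omega_{\Gamma^\vee}-n\kappa\tF_{0,3}\,\omega_{\Gamma^\vee}.
\]

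Next I would sum this over $\Gamma^\vee\in\cG_{g,n}^{\vee,c}$ with weight $n!/|\Aut(\Gamma^\vee)|$ and identify the three terms. Since $\cD^\vee=\pd^\vee+\gamma^\vee$ and $\gamma^\vee$ is realized on $\wcF^\vee_{g,n}$ by multiplication by $-n\kappa\tF_{0,3}$ (attaching a genus-zero trivalent dotted vertex of weight $\tF_{0,3}$ to an external edge creates a new dotted internal edge of weight $-\kappa$), Lemma \ref{lem-new-D} gives $\widetilde{\pd^\vee}\wF^\vee_{g,n}=(n+1)\wF^\vee_{g,n+1}+n\kappa\tF_{0,3}\wF^\vee_{g,n}$, hence $n!\,\widetilde{\pd^\vee}\wF^\vee_{g,n}=F_{g,n+1}+n\kappa\tF_{0,3}F_{g,n}$; the last term in the per-graph identity cancels the $n\kappa\tF_{0,3}F_{g,n}$ piece, leaving
\[
F_{g,n+1}=\check D_t F_{g,n}-\frac{\cE_4\tF_{0,3}}{\kappa}\cdot n!\sum_{\Gamma^\vee\in\cG_{g,n}^{\vee,c}}\frac{|E^\vee(\Gamma^\vee)|}{|\Aut(\Gamma^\vee)|}\,\omega_{\Gamma^\vee}.
\]
The final step is to evaluate the residual graph sum. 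Cutting a dotted internal edge removes one factor $(-\kappa)$ from the weight, so $|E^\vee(\Gamma^\vee)|\,\omega_{\Gamma^\vee}=-\kappa\,\omega_{K^\vee\Gamma^\vee}$, and therefore the sum equals $-\kappa$ times $n!$ times the realization of $K^\vee\wcF^\vee_{g,n}$. Applying Theorem \ref{thm-new-rec} and the realizations coming from Lemma \ref{lem-new-D} (so that $\cD^\vee\cD^\vee\wcF^\vee_{g-1,n}$ realizes to $\tfrac1{n!}F_{g-1,n+2}$ and $\cD^\vee\wcF^\vee_{g_i,n_i}$ to $\tfrac1{n_i!}F_{g_i,n_i+1}$), then reindexing the quadratic sum by $m_i:=n_i+1$ (so $m_1+m_2=n+2$), one finds that $n!$ times the realization of $K^\vee\wcF^\vee_{g,n}$ equals $A_{g,n}$, i.e. the residual sum equals $-\kappa A_{g,n}$. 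Substituting back yields $F_{g,n+1}=\check D_t F_{g,n}+\cE_4\tF_{0,3}A_{g,n}$. The low-valence and low-genus situations ($g\le 1$, or $n$ so small that the abstract recursion needs the degenerate conventions) are dispatched separately using the conventions recorded after Theorem \ref{thm-realization-HAE} and in \S\ref{sec:KleZas}.

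I expect the main obstacle to be purely bookkeeping: tracking the factorials and automorphism factors through the sums, handling the reindexing of the quadratic term so that the coefficient $n!/((n_1-1)!(n_2-1)!)$ in $A_{g,n}$ comes out correctly, and above all verifying the weight conventions — that $\kappa$ carries weight $-2$ and $\tF_{g,n}$ carries weight $n$, which is what makes $\check D_t$ act on $\omega_{\Gamma^\vee}$ by Leibniz with the correct weight shift $n$. A secondary care point is reading off the edge contributions of $\widetilde{\pd^\vee}$ and $\widetilde{\gamma^\vee}$ from the operator definitions: inserting a genus-zero trivalent vertex on a dotted internal edge replaces $-\kappa$ by $(-\kappa)^2\tF_{0,3}$, while attaching one to a dotted external edge multiplies by $-\kappa\tF_{0,3}$. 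Once these conventions are fixed, every remaining manipulation is mechanical.
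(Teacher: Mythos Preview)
Your proposal is correct and follows essentially the same route as the paper's proof. Both arguments compare $\check D_t$ with the realization of $\cD^\vee=\pd^\vee+\gamma^\vee$ (equivalently $\pd$, via Theorem~\ref{thm1}) acting on the dual graph sum for $F_{g,n}$, identify the discrepancy as $\cE_4\tF_{0,3}$ times the realization of $K^\vee$, and then evaluate $\widetilde{K}^\vee\wF^\vee_{g,n}$ by the dual quadratic recursion (Theorem~\ref{thm-new-rec}, which is the dual form of Theorem~\ref{original-rec-2} the paper invokes). Your version simply makes the per-graph Leibniz computation explicit before summing, whereas the paper works directly at the operator level; the bookkeeping concerns you flag (factorials, weights, the edge/vertex contributions to $\widetilde{\pd^\vee}$ and $\widetilde{\gamma^\vee}$) are exactly the points the paper leaves implicit.
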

\begin{proof}
We have $F_{g,n+1}=\tpd F_{g,n}$,
where $\tpd$ is the realization of the operator $\pd$.
Recall that by Theorem \ref{thm1} we have:
\ben
\pd=\cD^\vee=\pd^\vee+\gamma^\vee,
\een
where $\pd^\vee$ is realized by $D_t$ when acting on
the weights of dotted stable vertices (i.e., on $\tF_{g,n}$);
and by $\pd^\vee(\kappa^\vee)=(\kappa^\vee)^2\cdot\tF_{0,3}$
when acting on the weight of a dotted internal edge.
The operator $\gamma^\vee$ is realized by
multiplying by $n\kappa\cdot\tF_{0,3}$ when acting on $\omega_{\Gamma^\vee}$
for a dotted stable graph $\Gamma^\vee$ with $n$ external edges,
therefore by \eqref{eq-holoconn} and \eqref{eq-dt-propagator}
we know that the difference between $\check{D}_t$
and the realization of $\pd^\vee+\gamma^\vee$ equals:
\ben
\cE_4\cdot\tF_{0,3}\cdot \widetilde{K}^\vee (F_{g,n}),
\een
where $\widetilde{K}^\vee$ is the realization of the
dual edge-cutting operator $K^\vee$.
Recall that $F_{g,n}$ is the following summation over dotted stable graphs:
\ben
F_{g,n}=n!\cdot\sum_{\Gamma^\vee\in\cG_{g,n}^{\vee,c}}\frac{1}{|\Aut(\Gamma^\vee)|}
\prod_{v^\vee\in V^\vee(\Gamma^\vee)}\tF_{g_{v^\vee},\val_{v^\vee}}\cdot
\prod_{e^\vee\in E^\vee(\Gamma^\vee)}(-\kappa),
\een
thus by the dual version of Theorem \ref{original-rec-2}
we know that
\ben
\widetilde{K}^\vee\big(\frac{F_{g,n}}{n!}\big)=
\binom{n+2}{2}\cdot\frac{F_{g-1,n+2}}{(n+2)!}
+\frac{1}{2}\sum_{\substack{g_1+g_2=g\\n_1+n_2=n+2,\\n_1\geq 1,n_2\geq 1}}
\big(n_1\cdot \frac{F_{g_1,n_1}}{n_1!}\big)\big(n_2\cdot\frac{F_{g_2,n_2}}{n_2!}\big),
\een
which completes the proof.
\end{proof}

Notice that $\check{D}_t$ is a holomorphic connection,
and $A_{g,n}$ is a polynomial in
$\{F_{r,h}\}_{r<g}$ and $\{F_{g,m}\}_{m<n}$.
Since we already know that $F_{0,3}$, $F_{1,1}$ and $\{F_{g}\}_{g\geq 2}$
are all holomorphic,
by Lemma \ref{lem-holo} we easily obtain the following theorem
by induction:

\begin{Theorem}
The realization $\{\tF_{g,n}\}$ introduced in \S \ref{sec:KleZas}
satisfies the Independence Assumption.
\end{Theorem}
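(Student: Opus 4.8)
The plan is to reduce the Independence Assumption to the holomorphicity of the induced vertex weights $\{F_{g,n}\}$, and then to prove that holomorphicity by induction on $(g,n)$ using the structural identity of Lemma \ref{lem-holo}. Recall from Theorem \ref{thm-dual-realization} that the dual collection is
\[
F_{g,n}=n!\sum_{\Gamma^\vee\in\cG_{g,n}^{\vee,c}}\frac{1}{|\Aut(\Gamma^\vee)|}\,\omega_{\Gamma^\vee},
\qquad
\omega_{\Gamma^\vee}=(-\kappa)^{|E^\vee(\Gamma^\vee)|}\prod_{v^\vee\in V^\vee(\Gamma^\vee)}\tF_{g_{v^\vee},\val_{v^\vee}},
\]
and the assumption demands that this be independent of $\kappa$. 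First I would observe that every $\tF_{g,n}$ built by the recursion of \S\ref{sec:KleZas}, hence every $F_{g,n}$, lies in the ring $\cR$ of polynomials in $\kappa$ with holomorphic coefficients: the seeds $\tF_{0,3}$, $\tF_{1,1}=\half\tF_{0,3}\kappa+\pd_t f^{(1)}$, the $F_g$, and $\kappa,\cE,\cE_4$ all lie in $\cR$; the identity \eqref{eq-dt-propagator} together with $\pd_t\kappa=\tF_{0,3}(\kappa-\cE)^2+\pd_t\cE$ shows that $D_t$ preserves $\cR$; and $\cR$ is a ring. Since $\bar\pd_t\kappa=-\overline{\tF_{0,3}}\,(\kappa-\cE)^2\not\equiv 0$, an element of $\cR$ is $\kappa$-free if and only if it is holomorphic. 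Thus it suffices to prove that $F_{g,n}$ is holomorphic for every $2g-2+n>0$.

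\textbf{The structural identity.} The engine is Lemma \ref{lem-holo}: for $2g-2+n>0$,
\[
F_{g,n+1}=\check{D}_t F_{g,n}+\cE_4\,\tF_{0,3}\,A_{g,n},
\]
where $A_{g,n}$ is a fixed $\bQ$-linear combination of $F_{g-1,n+2}$ and of the products $F_{g_1,n_1}F_{g_2,n_2}$ with $g_1+g_2=g$, $n_1+n_2=n$, $n_1\ge 1$, $n_2\ge 2$ (terms outside the stable range dropped, $F_{-1,\cdot}$ read as $0$). I would invoke this lemma — which in the excerpt is derived from the realization $\pd=\cD^\vee=\pd^\vee+\gamma^\vee$ (Theorem \ref{thm1}), the split \eqref{eq-holoconn}, the propagator derivative \eqref{eq-dt-propagator}, the Leibniz rule \eqref{eq-Dt-Leibniz}, and the dual form of Theorem \ref{original-rec-2} for $\widetilde{K}^\vee F_{g,n}$ — and extract the three facts needed: $\check{D}_t$ is a holomorphic connection and so preserves holomorphicity; $\cE_4$ and $\tF_{0,3}$ are holomorphic; and every $F$ occurring in $A_{g,n}$ has either strictly smaller genus, or the same genus and strictly smaller valence (the only same-genus term is $F_{g,n_1}F_{0,n_2}$ with $n_2\ge 3$, forcing $n_1\le n-3<n$).

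\textbf{Base cases and induction.} Next I would record the base cases, all holomorphic: $F_{0,3}=\tF_{0,3}$; $F_{1,1}=\tF_{1,1}-\half\kappa\tF_{0,3}=\pd_t f^{(1)}$; and, for $g\ge 2$, $F_{g,0}=F_g$, since Theorem \ref{thm-dual-realization} writes $F_{g,0}=\tF_{g,0}+\sum_{\Gamma^\vee\neq V_g^\vee}\frac{1}{|\Aut(\Gamma^\vee)|}\omega_{\Gamma^\vee}$, which is exactly $F_g$ by step (4) of the construction (note that no connected genus-$g$ graph without external edges other than $V_g^\vee$ contains a genus-$g$ vertex, so step (4) is well-founded). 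Then I would run a double induction: an outer induction on $g\ge 0$ and, for each fixed $g$, an inner induction on $n$ starting at the minimal stable valence ($3,1,0$ for $g=0,1,\ge 2$). In $F_{g,n+1}=\check{D}_t F_{g,n}+\cE_4\tF_{0,3}A_{g,n}$, the term $F_{g,n}$ is holomorphic by the inner hypothesis, hence so is $\check{D}_t F_{g,n}$; $F_{g-1,n+2}$ is holomorphic by the outer hypothesis (the entire genus-$(g-1)$ row is already available); and each product in $A_{g,n}$ is holomorphic, by the outer hypothesis if both factors have lower genus and by the inner hypothesis otherwise. Hence $F_{g,n+1}$ is holomorphic, closing the induction; by the reduction step this yields $\kappa$-independence of every $F_{g,n}$, i.e. the Independence Assumption.

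\textbf{Main obstacle.} With Lemma \ref{lem-holo} in hand the remainder is a combinatorial induction whose only delicacy is that the recursion lowers genus while raising valence, so it must be organized so that the whole lower-genus row is established before the current genus is treated — legitimate, but worth stating explicitly. The real content, and the step I expect to be hardest, is Lemma \ref{lem-holo} itself: identifying the discrepancy between the realized edge-adding operator $\pd$ and the holomorphic connection $\check{D}_t$ as precisely $\cE_4\tF_{0,3}\,\widetilde{K}^\vee$ applied to $F_{g,n}$, which requires tracking the action of $\pd^\vee$ simultaneously on vertex weights (via $D_t$, split by \eqref{eq-holoconn}) and on edge weights (via \eqref{eq-dt-propagator}), combining it with the $\gamma^\vee$-contribution $n\kappa\tF_{0,3}$, and then rewriting $\widetilde{K}^\vee F_{g,n}$ as the explicit polynomial $A_{g,n}$ through the dual quadratic recursion.
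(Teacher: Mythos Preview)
Your proposal is correct and follows essentially the same approach as the paper: reduce the Independence Assumption to holomorphicity of the $F_{g,n}$ (the paper states this reduction in one line, you justify it more carefully via the ring $\cR$), then invoke Lemma~\ref{lem-holo} and run the double induction on $(g,n)$ with base cases $F_{0,3}$, $F_{1,1}$, and $F_{g,0}=F_g$. Your write-up is more explicit about the well-foundedness of the induction and the identification $F_{g,0}=F_g$, but the argument is the same.
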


This theorem automatically gives us:
\begin{Corollary}
Let $\{\tF_{g,n}\}$ be the realization introduced in \S\ref{sec:KleZas},
then we have the holomorphic anomaly equation:
\be
\pd_\kappa\wF_{g,n}=\frac{1}{2}\biggl(
D_t D_t\wF_{g-1,n}
+\sum_{\substack{g_1+g_2=g,\\n_1+n_2=n}}
D_t \wF_{g_1,n_1} \cdot D_t \wF_{g_2,n_2}\biggr),
\quad 2g-2+n>0,
\ee
where we denote $\wF_{g,n}:=\frac{1}{n!}\tF_{g,n}$.
In particular, by taking $n=0$ we have:
\be
\pd_\kappa\tF_g=\frac{1}{2}\biggl(D_t D_t\tF_{g-1}+
\sum_{r=1}^{g-1}D_t\tF_{r}\cdot D_t\tF_{g-r}\biggr),
\quad g\geq 2,
\ee
where we use the convention $D_t\tF_1:=\tF_{1,1}$.
\end{Corollary}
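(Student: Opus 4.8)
The plan is to deduce the statement as an immediate formal consequence of Theorem \ref{thm-realization-HAE} together with the theorem just proved, which asserts that the realization $\{\tF_{g,n}(t,\kappa),-\kappa\}$ of the dual abstract QFT constructed in \S \ref{sec:KleZas} satisfies the Independence Assumption. Since the hypotheses of Theorem \ref{thm-realization-HAE} are now verified, that theorem applies verbatim and yields, for every $2g-2+n>0$,
\[
\pd_\kappa\wF_{g,n}=\frac{1}{2}\biggl(\tD\tD\wF_{g-1,n}+\sum_{\substack{g_1+g_2=g,\\ n_1+n_2=n}}\tD\wF_{g_1,n_1}\,\tD\wF_{g_2,n_2}\biggr),
\]
with $\wF_{g,n}=\frac{1}{n!}\tF_{g,n}$ and with $\tD\tF_{g,n}$ read as $\tF_{g,n+1}$ and $\tD\tD\tF_{g-1,n}$ as $\tF_{g-1,n+2}$, following the conventions recorded in \S \ref{sec:Indep} and in Theorem \ref{thm-realization-HAE}.

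It then remains only to recall how $\tD$ is realized in the present setup. By Theorem \ref{thm1} the operator $\pd$ (whose realization is $\tD$) equals $\cD^\vee$, and by design in \S \ref{sec:KleZas} the realization of $\cD^\vee$ acting on the weight $\tF_{g,n}$ of a dotted stable vertex of genus $g$ and valence $n$ is the covariant derivative $D_t$: indeed step (3) of the construction defines $\tF_{g,n+1}$ to be $D_t(\tF_{g,n})$, consistently with the split $D_t=\check D_t+n\kappa\tF_{0,3}$ and with $\gamma^\vee$ being realized by multiplication by $n\kappa\tF_{0,3}$. Hence in the recursion above $\tD\tF_{g,n}$ is realized by $D_t\tF_{g,n}$ and, iterating, $\tD\tD\tF_{g-1,n}$ by $D_tD_t\tF_{g-1,n}$. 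Substituting these in produces exactly
\[
\pd_\kappa\wF_{g,n}=\frac{1}{2}\biggl(D_tD_t\wF_{g-1,n}+\sum_{\substack{g_1+g_2=g,\\ n_1+n_2=n}}D_t\wF_{g_1,n_1}\cdot D_t\wF_{g_2,n_2}\biggr),
\]
and specializing to $n=0$ gives the free-energy form, with the degenerate conventions $D_t\tF_1:=\tF_{1,1}$, $D_t\tF_{0,2}:=\tF_{0,3}$, $D_tD_t\tF_{0,1}:=\tF_{0,3}$ matching the weights chosen in \S \ref{sec:KleZas} for the genus-$1$ valence-$1$ and trivalent genus-$0$ dotted vertices, and matching the abstract conventions of Theorem \ref{thm-realization-HAE} under the realization.

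As for difficulty: essentially all the work has already been done. The only substantive input, the Independence Assumption, was established in the theorem immediately preceding this corollary by induction on $(g,n)$ built on Lemma \ref{lem-holo}; granting it, the present statement is purely formal, so there is no genuine obstacle here. The one point worth a line of care is the bookkeeping of the factorial normalizations in passing between $\tF_{g,n}$ and $\wF_{g,n}=\frac{1}{n!}\tF_{g,n}$ (e.g. $D_t\wF_{g,n}=(n+1)\wF_{g,n+1}$), but this is exactly the bookkeeping already carried out in Theorem \ref{thm-realization-HAE}, so nothing new arises.
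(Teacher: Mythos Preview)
Your proposal is correct and matches the paper's approach exactly: the paper offers no explicit proof beyond the phrase ``This theorem automatically gives us'' following the Independence Assumption theorem, and you have correctly unpacked that by invoking Theorem \ref{thm-realization-HAE} and identifying $\tD$ with $D_t$ via step (3) of the construction in \S\ref{sec:KleZas}. The detour through Theorem \ref{thm1} and the realization of $\cD^\vee$ is accurate but slightly more than needed, since $\tD\tF_{g,n}=\tF_{g,n+1}=D_t\tF_{g,n}$ follows directly from the notational convention in \S\ref{sec:Indep} together with the definition $\tF_{g,n+1}:=D_t\tF_{g,n}$.
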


\vspace{.3in}
{\em Acknowledgements}.
The second author is partly supported by NSFC grants 11661131005 and 11890662.

\end{document}